\def\bse{\begin{eqnarray*}}
\def\ese{\end{eqnarray*}}
\def\be{\begin{eqnarray}}
\def\ee{\end{eqnarray}}
\def\bq{\begin{equation}}
\def\eq{\end{equation}}
\DeclareMathOperator*{\argmax}{\arg\!\max}
\DeclareMathOperator*{\argmin}{\arg\!\min}
\newtheorem{theorem}{Theorem}
\newtheorem{lemma}[theorem]{Lemma}
\newtheorem{condition}{Condition}
\renewenvironment{proof}{\noindent{\bf Proof}\hspace*{1em}}{\qed\bigskip\\}
\newenvironment{proof-sketch}{\noindent{\bf Sketch of Proof}
  \hspace*{1em}}{\qed\bigskip\\}
\newenvironment{proof-idea}{\noindent{\bf Proof Idea}
  \hspace*{1em}}{\qed\bigskip\\}
\newenvironment{proof-of-lemma}[1][{}]{\noindent{\bf Proof of Lemma {#1}}
  \hspace*{1em}}{\qed\bigskip\\}
\newenvironment{proof-of-proposition}[1][{}]{\noindent{\bf
    Proof of Proposition {#1}}
  \hspace*{1em}}{\qed\bigskip\\}
\newenvironment{proof-of-theorem}[1][{}]{\noindent{\bf Proof of Theorem {#1}}
  \hspace*{1em}}{\qed\bigskip\\}
\newenvironment{inner-proof}{\noindent{\bf Proof}\hspace{1em}}{
  $\bigtriangledown$\medskip\\}
\newenvironment{proof-attempt}{\noindent{\bf Proof Attempt}
  \hspace*{1em}}{\qed\bigskip\\}
\newenvironment{remark}{\noindent{\bf Remark}
  \hspace*{1em}}{\bigskip}
\newcounter{example}
\newenvironment{example*}[1][]{
  \ifthenelse{\isempty{#1}}{%
    \noindent \textbf{Example:}\hspace*{.05em}
  }{%
    \noindent \textbf{Example} ({#1})\textbf{:}\hspace*{.05em}
  }
}{%
  $\clubsuit$ \bigskip
}
\title{Post-Selection Inference for the Cox Model with Interval-Censored Data}
  \author{Jianrui Zhang
  \\
    Department of Statistics and Probability, Michigan State University,\\
   Chenxi Li \\
    Department of Epidemiology and Biostatistics, Michigan State University,\\
   Haolei Weng\\
    Department of Statistics and Probability, Michigan State University
  }
\date{}
\begin{document}

\maketitle
\vspace{.25in}

\begin{abstract}
We develop a post-selection inference method for the Cox proportional hazards model with interval-censored data, which provides asymptotically valid p-values and confidence intervals conditional on the model selected by lasso. The method is based on a pivotal quantity that is shown to converge to a uniform distribution under local alternatives. The proof can be adapted to many other regression models, which is illustrated by the extension to generalized linear models and the Cox model with right-censored data. Our method involves estimation of the efficient information matrix, for which several approaches are proposed with proof of their consistency. Thorough simulation studies show that our method has satisfactory performance in samples of modest sizes. The utility of the method is illustrated via an application to an Alzheimer's disease study.
\end{abstract}

\section{Introduction}
\label{section.intro.intro}
Variable selection has drawn tremendous interest from statistical researchers and applied statisticians in the past thirty years. Traditional methods such as best subset selection, forward selection, backward elimination and stepwise regression, are discrete processes and so cause high variance to the selected model. By contrast, penalized variable selection, namely minimizing a penalized risk function with a sparsity inducing penalty, is a more stable variable selection framework, and the resulting parameter estimators' properties are easier to study \citep{fan2006statistical, fan2010selective}. Lasso, proposed by \citet{tibshirani1996regression}, is the most well-known penalized variable selection method and has been proven successful in many areas of application \citep{hastie2015statistical}. It considers a $L_{1}$ penalty function, which leads to both sparsity in parameter estimators and convexity of the penalized risk function, thus attaining variable selection in a short computation time.

Despite the good performance of lasso in variable selection, the asymptotic distribution of the lasso estimator is complicated even in linear regression \citep{fu2000asymptotics}, which becomes an obstacle to hypothesis testing and  confidence interval construction regarding regression parameters. To overcome this issue, one active line of research builds on debiased lasso estimators to obtain tractable inference procedures \citep{zhang2014confidence, van2014asymptotically, javanmard2014confidence}, which will not be the focus of this paper. Alternatively, to do lasso-based inference, one may think of using lasso to select variables and then performing inference using standard methods (e.g., likelihood method) for the selected model. This procedure brings other issues, one of which is that the target of inference could change with the selected model \citep{berk2013valid}. Besides, the $p$-values and confidence intervals from classical inference procedures (e.g., Wald tests) based on the selected model do not enjoy the classic frequentist property (see numerical results in \citet{taylor2015statistical, lee2016exact, taylor2018post}), as the stochastic aspect of model selection is not accounted for.

The aforementioned problem of inference after model selection is known as post-selection inference. Problems related to post-selection inference have been recognized and studied for a long time \citep{hotelling1940selection, olshen1973conditional, potscher1991effects, leeb2005model, leeb2006can}. In the past decade, significant progress has been made, and statistical inference methods accounting for various data-driven model selection procedures have been emerging rapidly. Despite the fast-growing literature in this area, the majority of the works focus on linear regression, generalized linear models (GLM), or the Cox model for right censored data \citep{berk2013valid, fithian2014optimal, lee2016exact, tibshirani2016exact, taylor2018post, bachoc2019valid, bachoc2020uniformly}. We refer to the review paper \citep{kuchibhotla2022post} and references therein for many other important works. In clinical and epidemiological studies of chronic diseases, investigators often want to know, among a large number of considered predictors, which are independently associated with a time-to-event outcome subject to interval censoring and quantify such associations. For example, with the data of the Alzheimer\textquoteright s Disease Neuroimaging Initiative (ADNI), researchers tried to develop clinical, imaging, genetic, and biochemical biomarkers as predictors of time to the onset of Alzheimer\textquoteright s disease (AD)  \citep[e.g.,][]{li2017prediction, spencer2019combined,li2020penalized, wu2020variable,du2022variable}. Since the ADNI participants were only examined periodically, the time at AD onset could not be observed exactly and was only known to lie between two consecutive examination times. How to identify the important biomarkers and assess the strength of their associations while accounting for the variable selection process and interval censoring is still an open statistical problem.

In this work, we develop a post-lasso inference procedure for the Cox model with interval-censored data under the framework of conditional inference \citep{fithian2014optimal,lee2016exact,taylor2018post}. We first describe some basic principles of the framework here, and then provide more details in Section \ref{review:psi}. Consider a regression problem with $p$ covariates. For a subset $M\subset\left\{ 1,2,\dots,p\right\}$ corresponding to indices of covariates which is regarded as a submodel, the target of the conditional inference is $\beta^M \in \mathbb{R}^{|M|}$, the population regression parameter of the submodel $M$, defined as the minimizer of a population risk function depending on the regression problem, e.g., the mean squared error for linear regression and the Kullback-Leibler divergence for generalized linear models. Let $\hat{M}\subset\left\{ 1,2,\dots,p\right\} $ be the submodel selected by a model selection procedure. The conditional inference aims to construct a confidence interval $C_{j}^{M}$ (or equivalently, conduct hypothesis testing) for $\beta_{j}^{M}$, conditioning on the selection event $\left\{ \hat{M}=M\right\} $. Namely, given a significance level $\alpha$, we hope to find an interval $C_{j}^{M}$ such that
\begin{align}
\label{psi:goal}
\text{P}\left(\beta_{j}^{M}\in C_{j}^{M}\mid\hat{M}=M\right)\ge 1-\alpha, \quad \forall j\in M
\end{align}
 at least asymptotically. Conditioning on the selection event can be justified by the Fisherian proposition of relevance, which requires appropriate conditioning of the hypothetical data samples to ensure relevance of the inference to actual data under study \citep{kuffner2018principled, kuchibhotla2022post}. Moreover,  \citet{lee2016exact} establishes that conditional coverage is a sensible criterion to consider in post-selection inference, by proving that conditional coverage implies the control of the (positive) false coverage-statement rate \citep{benjamini2005false,storey2003positive}.

 In the context of using the lasso as a model selection method, to achieve \eqref{psi:goal} it is computationally more efficient to further condition on the signs of the lasso estimator $\hat{\beta}$:
\begin{align}
\label{psi:goal2}
\text{P}\left(\beta_{j}^{M}\in C_{j}^{M}\mid\hat{M}=M, \hat{s}_{\hat{M}}=s_M\right)\ge 1-\alpha,
\end{align}
where $\hat{M}=\{j:\hat{\beta}_j\neq 0\}$ and $\hat{s}_{\hat{M}}\in \mathbb{R}^{|\hat{M}|}$ is a vector consisting of signs of the non-zeros $\{j\in \hat{M}: {\rm sign}(\hat{\beta}_j)\}$. It is straightforward to confirm that \eqref{psi:goal2} yields \eqref{psi:goal} by marginalizing over all possible sign lists $s_{M}$
associated with $M$. The exact post-selection conditional inference under the lasso has been rigorously studied by \citet{lee2016exact} for Gaussian linear models. See also \citet{markovic2017unifying, mccloskey2020hybrid} for generalizations to random design and non-Gaussian settings. \citet{taylor2018post} extended \citeauthor{lee2016exact}'s \citeyearpar{lee2016exact} method to generalized linear models and the Cox model for right-censored data, but they only provided heuristic justification for the methods and the justification was specific to only generalized linear models, where the risk function in the lasso estimation can be written as a sum of independent and identically distributed (i.i.d.) quantities.

In this paper, we extend \citeauthor{taylor2018post}'s \citeyearpar{taylor2018post} methods to the Cox model for interval-censored data.  We develop the post-lasso conditional inference method without the assumption that the lasso has screened successfully, i.e., the set $M$ does not necessarily include the indices of all the non-zero components of the underlying true regression coefficient $\beta^{\ast}$. This assumption was made by \citet{taylor2018post}. Our work has four contributions. First, we rigorously prove the asymptotic validity of our post-selection inference procedure  for the Cox model with interval-censored data. Second, we establish a general framework to prove the convergence in distribution conditional on the selection event under local alternatives. Our approach is based on Le Cam's third lemma and the asymptotic continuity of empirical process indexed by a Donsker class of functions \citep[Lemma 19.24,][]{van2000asymptotic}, which is different from previous works such as \citet{tian2018selective}. Our framework can be applied to a wide range of parametric and semiparametric models. Besides the Cox model under interval censoring, we apply our framework to GLM and the Cox model for right-censored data to rigorously prove the asymptotic validity of the post-selection inference methods proposed by \citet{taylor2018post}. Third, we establish the asymptotic normality and semiparametric efficiency of a one-step estimator of regression coefficients (analogous to the one in \citet{taylor2018post} to construct post-selection confidence intervals) in the Cox model with interval-censored data. Finally, compared with \citet{taylor2018post}, we use new methods for estimating the efficient information matrix, and prove their consistency.

The rest of the paper is organized as follows. We review the post-selection conditional inference in linear regression and generalized regression models (including GLM and the Cox model), in particular \cite{lee2016exact} and \cite{taylor2018post}, in Sections \ref{section.review.linear} and \ref{section.review.glm} respectively. In Section \ref{section.method}, we propose a one-step estimator for the regression coefficients of the selected model and an associated pivot function to construct post-selection confidence intervals. In Section \ref{information_est}, we give several methods for estimating the efficient information matrix, which is involved in the one-step estimator and the pivot function. The asymptotic results and the idea of our proof are presented in Section \ref{section.result}, with detailed proof deferred to the appendix. Section \ref{section.simulation} shows an extensive simulation study. We illustrate the method via an application to the ADNI data in Section \ref{section.real.data}.

We collect the notations used throughout the paper here for convenience. For a subset $M\subset \{1,2,\ldots, p\}$, a vector $v\in \mathbb{R}^{p}$ and a matrix $X\in \mathbb{R}^{n\times p}$, $|M|$ is the cardinality of $M$, $v_M\in \mathbb{R}^{|M|}$ denotes the subvector of $v$ comprising the elements with indices in $M$, $v_{-M}$ represents the subvector of $v$ with indices in the complement of $M$, $X_M\in \mathbb{R}^{n\times |M|}$ denotes the submatrix of $X$ with columns indexed by $M$, and $X_{-M}$ is the submatrix of $X$ with columns indexed by the complement of $M$. Similarly, for $A\subset \{1,\ldots, n\}$ and $B\subset\{1,\ldots, p\}$, $X_{A,B}$ represents the submatrix of $X$ whose rows and columns are indexed by $A$ and $B$ respectively. We use $\text{diag}(s)$ to denote the diagonal matrix whose diagonal elements are from the vector $s$, $\mathbf{1}$ for a vector of ones, $I_m$ for an $m\times m$ identity matrix, and $I(C)$ to represent the indicator function of the set $C$. For i.i.d. samples $\{Z_i\}_{i=1}^n$, we write $Pf$ for the expectation $Ef(Z_1)$ and abbreviate the average $\sum_{i=1}^nf(Z_i)/n$ to $\mathbb{P}_nf$. For a given vector $v=(v_1,\ldots, v_p)^T\in \mathbb{R}^p$, $v^{\otimes2}=vv^{\top}$, $\|v\|_{\infty}=\max_{i}|v_i|$ and $\|v\|_q=(\sum_{i=1}^p|v_i|^q)^{1/q}, q\in (0,\infty)$. In the context of GLM, $l(\beta)$ represents the log likelihood for a single subject evaluated at $\beta$. In semiparametric models, $\ell(\beta,\Lambda)$ denotes the log likelihood for a single subject evaluated at $(\beta, \Lambda)$. $\tilde{\ell}(\beta^{\ast})$ represents the efficient score function evaluated at true values $(\beta^{\ast}, \Lambda^{\ast})$, and $\mathcal{I}$ denotes the efficient information matrix for $\beta$ at $(\beta^{\ast}, \Lambda^{\ast})$.

\section{Review of Post-Selection Conditional Inference}
\label{review:psi}
\subsection{Review of post-selection conditional inference for linear regression}
\label{section.review.linear}
Consider the lasso for a linear regression model. Let $y \in\mathbb{R}^n$ denote the response vector, $X \in\mathbb{R}^{n\times p}$ denote the design matrix, and $\hat{\beta}$ be the lasso estimator,
\[
\hat{\beta}=\argmin_{\beta\in \mathbb{R}^p}\frac{1}{2}\|y-X\beta\|_{2}^{2}+\lambda\|\beta\|_{1}.
\]
Under a fixed design, \citet{lee2016exact} shows that the event that specifies the model and the signs of the selected variables by lasso
\[
\left\{ y\in\mathbb{R}^{n}:\hat{M}(y)=M,\hat{s}_{\hat{M}}(y)=s_M\right\}
\]
can be represented as a polyhedron $\left\{ y:Ay\le b\right\} $ where
\[
A=\begin{pmatrix}-\text{diag}(s_{M})X_{M}^{+}\\
\frac{1}{\lambda}X_{-M}^{\top}(I-P_{M})\\
-\frac{1}{\lambda}X_{-M}^{\top}(I-P_{M})
\end{pmatrix}\quad \mbox{and} \quad b=\begin{pmatrix}-\lambda\text{diag}(s_{M})(X_{M}^{\top}X_{M})^{-1}s_{M}\\
\mathbf{1}-X_{-M}^{\top}(X_{M}^{+})^{\top}s_{M}\\
\mathbf{1}+X_{-M}^{\top}(X_{M}^{+})^{\top}s_{M}
\end{pmatrix}.
\]
Here, $P_{M}=X_{M}(X_{M}^{\top}X_{M})^{-1}X_{M}^{\top}$ is the projection onto the column span of $X_{M}$ and $X_{M}^{+}=(X_{M}^{\top}X_{M})^{-1}X_{M}^{\top}$ is the pseudo-inverse of $X_{M}$.

Assume that $y\mid X\sim N(\mu,\Sigma)$, where $\mu=X\beta^{\ast}$ and $\Sigma$ is assumed known \citep[post-selection inference when $\Sigma$ is unknown was discussed in Section 8.1 of][]{lee2016exact}. The regression coefficient $\beta_j^{M}$ can be written in the form of $\beta_{j}^{M}=\gamma^{\top}\mu$ with $\gamma=(X_M^{+})^{\top} e_j$, where $e_j$ is the unit vector for the $j$th coordinate. So, a confidence interval for $\beta_{j}^{M}$ with a conditional coverage guarantee can be obtained from the distribution of $\gamma^{\top}y\mid Ay\le b$, which turns out to be closely related to a truncated Gaussian distribution. Specifically, letting $F_{\theta,\sigma^{2}}^{u,v}$ denote the cumulative distribution function (CDF) of $N(\theta,\sigma^{2})$ truncated to the interval $[u,v]$, that is,
\[
F_{\theta,\sigma^{2}}^{u,v}(x)=\frac{\Phi\left\{ (x-\theta)/\sigma\right\} -\Phi\left\{ (u-\theta)/\sigma\right\} }{\Phi\left\{ (v-\theta)/\sigma\right\} -\Phi\left\{ (u-\theta)/\sigma\right\} },
\]
where $\Phi$ is the CDF of $N(0,1)$, \cite{lee2016exact} shows that
\[
F_{\gamma^{\top}\mu,\gamma^{\top}\Sigma\gamma}^{\mathcal{V}^{-},\mathcal{V}^{+}}(\gamma^{\top}y)\mid Ay\le b\sim Unif(0,1),
\]
where
\begin{equation}\label{V-}
    \mathcal{V}^{-}=\max_{j:(Ac)_{j}<0}\frac{b_{j}-(Az)_{j}}{(Ac)_{j}},
\end{equation}
\begin{equation}\label{V+}
    \mathcal{V}^{+}=\min_{j:(Ac)_{j}>0}\frac{b_{j}-(Az)_{j}}{(Ac)_{j}},
\end{equation}
$z=(I_{n}-c\gamma^{\top})y$, and  $c=\Sigma\gamma(\gamma^{\top}\Sigma\gamma)^{-1}$. \citet{lee2016exact} further proves that $F_{\theta,\sigma^{2}}^{a,b}$ is monotone decreasing in $\theta$. Thus a confidence interval $[L,U]$ for $\gamma^{\top}\mu$ with a $1-\alpha$ conditional coverage can be obtained by solving
\[
F_{L,\gamma^{\top}\Sigma\gamma}^{\mathcal{V}^{-},\mathcal{V}^{+}}(\gamma^{\top}y)=1-\frac{\alpha}{2}\quad \mbox{and} \quad F_{U,\gamma^{\top}\Sigma\gamma}^{\mathcal{V}^{-},\mathcal{V}^{+}}(\gamma^{\top}y)=\frac{\alpha}{2}.
\]

\subsection{Review of post-selection conditional inference for generalized linear models and the Cox model}
\label{section.review.glm}
Suppose that the covariates $X_{i}$'s are i.i.d. draws from some distribution $F$ and given $X_i$'s, the response variables $y_{i}$'s independently drawn from some generalized linear model or the Cox model under right censoring, with the true regression parameter being denoted by $\beta^{\ast}$ and of dimension $p$. Let
\[
\hat{\beta}=\argmin_{\beta\in \mathbb{R}^p}-l_n(\beta)+\lambda\|\beta\|_{1}
\]
be the lasso solution, where $l_n(\beta)$ is the log (partial) likelihood for all subjects at $\beta$.
Let $\mathcal{I}$ be the (efficient) information matrix evaluated at $\beta^{\ast}$. We decompose $\mathcal{I}$ as
\[
\mathcal{I}=\begin{pmatrix}\mathcal{I}_{M,M} & \mathcal{I}_{M,-M}\\
\mathcal{I}_{M,-M}^{\top} & \mathcal{I}_{-M,-M}
\end{pmatrix}
\]
so that $\mathcal{I}_{M,M}$ is the information matrix for the submodel $M$ evaluated at $\beta_{M}^{\ast}$.

To perform inference for $\beta^{M}$ conditional on  the selection event $\left\{ \hat{M}=M,\hat{s}_{\hat{M}}=s_M\right\} $, \citet{taylor2018post} assumes that $M$ includes the indices of all the non-zero components of $\beta^{\ast}$ so that $\beta^{M}=\beta^{\ast}_M$, and they use a one-step estimator,
\[
\overline{\beta}^{M}=\hat{\beta}^{M}+\lambda I_{M}(\hat{\beta}^{M})^{-1}s_{M},
\]
where $\hat{\beta}^{M}$ is the solution for $\beta_{M}$ to the following equation,
\[
\frac{\partial l_n}{\partial\beta_{M}}\bigg|_{\beta=(\beta_{M},0)}=\lambda s_{M},
\]
and $I_{M}(\hat{\beta}^{M})$ is the observed information matrix for the submodel $M$ evaluated at $\hat{\beta}^{M}$. Note that
\[\hat{\beta}^{M}=\argmin_{\beta_{M}\in \mathbb{R}^{|M|}}-l_n(\beta_M,0)+\lambda s_{M}^{\top}\beta_{M}\]
and given the selection event $\left\{ \hat{M}=M,\hat{s}_{\hat{M}}=s_M\right\} $, $\hat{\beta}^{M}=\hat{\beta}_{M}$, the lasso estimator for $\beta_M$.

\citet{taylor2018post} argues heuristically that when $\lambda\propto n^{1/2}$,
\begin{equation}\label{asymp_norm_beta_M}
 \overline{\beta}^{M}-\beta^{\ast}_M\approx N(0,n^{-1}\mathcal{I}_{M,M}^{-1})
\end{equation}
and the selection event $\left\{ \hat{M}=M,\hat{s}_{\hat{M}}=s_M\right\} $ is asymptotically equivalent to
an ``active" constraint,
\begin{equation}\label{active_constr}
    \text{diag}(s_M)\left\{\overline{\beta}^{M}-n^{-1}\lambda\mathcal{I}_{M,M}^{-1}s_{M}\right\}>0,
\end{equation}
and an ``inactive" constraint in terms of a random vector that they claimed to be asymptotically independent of $\overline{\beta}^{M}-\beta^{\ast}_M$.

\citet{taylor2018post} also assumes a local alternative $\beta^{\ast}=n^{-1/2}\theta^{\ast}$ and defines $\overline{\theta}^M=n^{1/2}\overline{\beta}^{M}$. They rewrite \eqref{active_constr} as
\[
-\text{diag}(s_{M})\overline{\theta}^{M}<-(n^{-1/2}\lambda)\text{diag}(s_{M})\mathcal{I}_{M,M}^{-1}s_{M},
\] and claim that under the local alternative, \eqref{asymp_norm_beta_M} can be restated as
\[
\overline{\theta}^M=n^{1/2}\overline{\beta}^{M}\approx N(\theta^{\ast}_M,\mathcal{I}_{M,M}^{-1})
\]
and the ``inactive" constraint is asymptotically independent of $\overline{\beta}^{M}-\beta^{\ast}_{M}$.
Therefore, the conditional distribution
\[
\gamma^{\top}\overline{\theta}^M\mid \left\{ \hat{M}=M,\hat{s}_{\hat{M}}=s_M\right\}
\]
is (asymptotically) the same as the conditional distribution
\[
\gamma^{\top}\overline{\theta}^M\mid\left\{-\text{diag}(s_{M})\overline{\theta}^{M}<-(n^{-1/2}\lambda)\text{diag}(s_{M})\mathcal{I}_{M,M}^{-1}s_{M}\right\},
\] and it allows post-selection conditional inference for linear functionals of $\theta^{\ast}_M$, $\gamma^{\top}\theta^{\ast}_M$, and thus also for linear functionals of $\beta^{\ast}_{M}$.

According to Theorem 5.2 in \citet{lee2016exact}, \citet{taylor2018post} claims that
$
F_{\gamma^{\top}\theta^{\ast}_M,\gamma^{\top}\mathcal{I}_{M,M}^{-1}\gamma}^{\mathcal{V}^{-},\mathcal{V}^{+}}(\gamma^{\top}\overline{\theta}^{M})
$
is a conditional pivotal quantity satisfying
\[
F_{\gamma^{\top}\theta^{\ast}_M,\gamma^{\top}\mathcal{I}_{M,M}^{-1}\gamma}^{\mathcal{V}^{-},\mathcal{V}^{+}}(\gamma^{\top}\overline{\theta}^{M})\mid \left\{-\text{diag}(s_{M})\overline{\theta}^{M}<-(n^{-1/2}\lambda)\text{diag}(s_{M})\mathcal{I}_{M,M}^{-1}s_{M}\right\}\stackrel{d}{\approx} Unif(0,1),
\]
 where the truncation limits $\mathcal{V}^{-}$ and $\mathcal{V}^{+}$ are computed as in \eqref{V-} and \eqref{V+} with
\[
A=-\text{diag}(s_{M}),\quad b=-(n^{-1/2}\lambda)\text{diag}(s_{M})\mathcal{I}_{M,M}^{-1}s_{M},\quad c=\mathcal{I}_{M,M}^{-1}\gamma(\gamma^{\top}\mathcal{I}_{M,M}^{-1}\gamma)^{-1},\quad \mbox{and} \quad
z=(I_{|M|}-c\gamma^{\top})\overline{\theta}^M.
\] Based on this pivotal quantity, a post-selection confidence interval for $\gamma^{\top}\theta^{\ast}_M$ can be constructed the same way as in Section \ref{section.review.linear}.
Though $\mathcal{I}_{M,M}$ is unknown, \citet{taylor2018post} proposes using $I_{M}(\hat{\beta}_{M})/n$ as an estimator for it.

\section{Methodology}
\label{section.method}
\subsection{Set-up}
We consider the Cox proportional hazards model with interval-censored data. Suppose that we have a random sample of $n$ independent subjects. Denote the failure time of interest by $T$, and let $X$ be a $p$-dimensional vector of covariates under consideration. For each subject, there exists a random sequence of inspection times $\overrightarrow{U}=(U_{0},U_{1},\dots,U_{K},U_{K+1})$, where $U_{0}=0$, $U_{K+1}=\infty$ and $K$ is a random positive integer. Define $\overrightarrow{\Delta}=(\Delta_{0},\Delta_{1},\dots,\Delta_{K})$ where $\Delta_{k}=I(U_{k}<T\le U_{k+1})$ for $k=0,\dots,K$. The observed data consist of $\left\{ (\overrightarrow{U_{i}},\overrightarrow{\Delta_{i}},X_{i})\right\} _{i=1}^{n}$.

We assume that the inspection process is independent of the failure time conditional on the covariates, i.e.,  $T\bot(K,\overrightarrow{U})\mid X$. Also, we assume that the cumulative hazard function of $T$ conditional on $X$ takes the form
\[
\Lambda(t|X)=\Lambda(t)\exp(\beta^{\top}X),
\]
where $\Lambda(t)$ denotes an unknown baseline cumulative hazard function and $\beta$ is a $p$-dimensional vector of unknown regression coefficients.

\subsection{Post-lasso conditional inference}
\label{section.lasso.one.step}
Let $L_{i}$ and $R_{i}$ denote the two successive inspection times bracketing $T_{i}$. The log likelihood
of the observed data concerning $(\beta,\Lambda)$ is
\begin{equation}\label{loglikIC}
\sum_{i=1}^{n}\log\left[\exp\left\{ -\Lambda(L_{i})\exp(\beta^{\top}X_{i})\right\} -\exp\left\{ -\Lambda(R_{i})\exp(\beta^{\top}X_{i})\right\} \right].
\end{equation}
We use $\ell(\beta,\Lambda)$ to denote the log likelihood for a single subject and $\mathbb{P}_nf$ to denote the expectation of $f$ with respect to the empirical measure based on $n$ independent subjects. Then the log likelihood \eqref{loglikIC} can be written as $n\mathbb{P}_n\ell(\beta,\Lambda)$. The lasso estimator for $\beta$ is given by
\[
(\hat{\beta},\hat{\Lambda})=\argmax_{(\beta,\Lambda)}\mathbb{P}_n\ell(\beta,\Lambda)-\frac{\lambda}{n}\|\beta\|_{1}.
\]
As discussed in \cite{li2020adaptive}, the lasso estimation can be carried out via an EM algorithm.

Let $(\beta^{\ast},\Lambda^{\ast})$ denote the true values of $(\beta,\Lambda)$ and $\mathcal{I}$ denote the efficient information matrix for $\beta$ evaluated at $(\beta^{\ast},\Lambda^{\ast})$. We  decompose $\mathcal{I}$ as
\[
\mathcal{I}=\begin{pmatrix}\mathcal{I}_{M,M} & \mathcal{I}_{M,-M}\\
\mathcal{I}_{M,-M}^{\top} & \mathcal{I}_{-M,-M}
\end{pmatrix}
\]
such that $\mathcal{I}_{M,M}$ is the efficient information matrix for the submodel $M$ evaluated at $(\beta^{\ast},\Lambda^{\ast})$. Following \cite{taylor2018post}, we consider a one-step estimator based on the lasso estimator $\hat{\beta}$ to perform post-selection inference. Since the observed information matrix has no explicit form in the Cox model with interval-censored data, our one-step estimator takes the form of
\begin{equation} \label{eq.one.step.def.1}
\overline{\beta}^{M}=\hat{\beta}^{M}+\lambda\hat{I}_{M,n}^{-1}s_{M},
\end{equation}
where $\hat{I}_{M,n}=(\hat{I_{n}})_{M,M}$,  $\hat{I}_{n}/n$ is a consistent estimator of $\mathcal{I}$, $\hat{\beta}^{M}$ is the solution for $\beta_M$ to the equation,
\begin{equation}\label{submodel_lasso_eq}
 \frac{\partial}{\partial \beta_M}\mathbb{P}_n\ell((\beta_M,0),\hat{\Lambda}_{(\beta_M,0)})=\frac{\lambda}{n}s_M,
\end{equation}
and $\hat{\Lambda}_{\beta}  =\argmax_{\Lambda}\mathbb{P}_n\ell(\beta,\Lambda)$. In the sequel, we let $\hat{\beta}^M_{\textrm{full}}=(\hat{\beta}^M,0)$ and $\hat{\Lambda}^M=\hat{\Lambda}_{\hat{\beta}^M_{\textrm{full}}}$. Note that $\hat{\beta}^M=\hat{\beta}_M$, the lasso estimator for $\beta_M$, conditional on the selection event $\left\{ \hat{M}=M,\hat{s}_{\hat{M}}=s_M\right\}$, and thus we don't have to solve equation \eqref{submodel_lasso_eq} to get $\overline{\beta}^M$ in performing the post-selection conditional inference. Our estimator includes \citeauthor{taylor2018post}'s (\citeyear{taylor2018post}) as a special case, since in GLM, the scaled observed information matrix $I(\hat{\beta})/n$ is consistent for $\mathcal{I}$, which becomes the Fisher information matrix in GLM. Several methods for obtaining $\hat{I}_{n}$ are discussed in the following section, and the consistency of those estimators is proved in Section \ref{section.result.interval}.

Similar to Section \ref{section.review.glm}, we perform post-selection inference under a local alternative $\beta^{\ast}_n=n^{-1/2}\theta^{\ast}$\footnote{The subscript $n$ is added to the true parameter to clarify that we are considering local asymptotics.} where $\theta^{\ast}$ contains some non-zero elements. Let $\overline{\theta}^{M}=n^{1/2}\overline{\beta}^{M}$. We show in Section \ref{section.result.interval} that  \begin{equation}\label{pivot} F_{\gamma^{\top}\widetilde{\theta}^{M},\gamma^{\top}\mathcal{I}_{M,M}^{-1}\gamma}^{\mathcal{V}^{-},\mathcal{V}^{+}}(\gamma^{\top}\overline{\theta}^{M})\mid \left\{ \hat{M}=M,\hat{s}_{\hat{M}}=s_M\right\}\underset{\beta_{n}^{\ast}}{\overset{d}{\longrightsquigarrow}}Unif(0,1),
\end{equation}
where $\mathcal{V}^{-}$ and $\mathcal{V}^{+}$ are computed as in Section \ref{section.review.glm} except that $\mathcal{I}_{M,M}$ is estimated using the methods in Section \ref{information_est} instead of the observed information matrix. Here $\widetilde{\theta}^{M}$ is defined as
\begin{equation} \label{eq.new.target}
\widetilde{\theta}^{M}=\theta_{M}^{\ast}+\mathcal{I}_{M,M}^{-1}\mathcal{I}_{M,-M}\theta_{-M}^{\ast}.
\end{equation}
Then the post-selection confidence intervals for $\gamma^{\top}\widetilde{\theta}^{M}$ can be obtained the same way as before.

It is imperative to elucidate the connections between the true scaled regression coefficient $\theta^{\ast}_M=n^{1/2}\beta_{M,n}^{\ast}$ and our new inference target $\widetilde{\theta}^{M}$ before proceeding to the next subsection. When $M$ contain all the signal covariates, $\theta_{-M}^{\ast}=0$ and so $\widetilde{\theta}^{M}$ recovers exactly $\theta_{M}^{\ast}$. To tackle the case where some signal covariates are not included in $M$, we let $(\beta_{n}^{M},\Lambda_{n}^{M})$ be the minimizer of the population log likelihood under the local alternative $(\beta_{n}^{\ast},\Lambda^{\ast})$ restricted to the submodel $M$, namely,
\begin{equation} \label{eq.new.target.2}
(\beta_{n}^{M},\Lambda_{n}^{M})=\argmax_{(\beta_M,\Lambda)}P_{\beta_{n}^{\ast},\Lambda^{\ast}}\ell((\beta_M,0),\Lambda).
\end{equation}
Let $\theta_{n}^{M}=n^{1/2}\beta_{n}^{M}$ be the scaled minimizer and it is a common inference target to consider under model misspecification \citep[see][]{kuchibhotla2022post}. We show in Section \ref{section.interpretation} that $\widetilde{\theta}^{M}=\theta_{n}^{M}+o(1)$. This means that our method provides asymptotically valid confidence intervals for $\theta_{n}^{M}$ even when some true predictors are not selected, though this is uncommon in our simulation studies (see Section \ref{section.simulation}).

\subsection{Information estimation}\label{information_est}
\subsubsection{The profile likelihood method}
Recall that
$
\hat{\Lambda}_{\beta}  =\argmax_{\Lambda}\mathbb{P}_n\ell(\beta,\Lambda)$ and we let $l(\beta)=\ell(\beta,\hat{\Lambda}_{\beta})
$
so that $l$ is the log profile likelihood (PL) for one subject. \citet{murphy2000profile} proposes using a second-order numerical difference to estimate the efficient information matrix, namely,
\[
(\hat{I}_{n})_{i,j}=-\frac{n\mathbb{P}_n\left\{ l(\hat{\beta}+\epsilon_{n}e_{i}+\epsilon_{n}e_{j})-l(\hat{\beta}+\epsilon_{n}e_{i})-l(\hat{\beta}+\epsilon_{n}e_{j})+l(\hat{\beta})\right\} }{\epsilon_{n}^{2}},
\]
where $e_{i}$ is the $i$th unit vector in $\mathbb{R}^{p}$ and $\epsilon_{n}=O(n^{-1/2})$. The estimator
$\hat{\Lambda}_{\beta}$ can be calculated using the EM algorithm in \cite{li2020adaptive} with $\beta$ being fixed, and so  $\mathbb{P}_nl(\beta)=\mathbb{P}_n\ell(\beta,\hat{\Lambda}_{\beta})$ can be evaluated. \cite{murphy2000profile} proves the consistency
of $\hat{I}_{n}/n$ for $\mathcal{I}$ under mild regularity conditions. However, the PL method is computationally intensive compared to the methods discussed later, and is sensitive to the increment $\epsilon_{n}$ \citep{xu2014standard}. Hence, we do not include the PL method in our simulations and real data analysis.

\subsubsection{The least squares approach}
\label{section.LS}
 Let $(l_{1},u_{1}],\dots,(l_{m},u_{m}]$ denote the maximal intersections that characterize the support of the penalized nonparametric maximum likelihood estimator $\hat{\Lambda}$  \citep{li2020adaptive}. The quantity $\hat{\Lambda}$ can be represented as a linear combination of $I(t\ge u_{k})$ ($k=1,\dots,m$) \citep{li2020adaptive}. Inspired by \cite{huang2012least}, we first obtain a least squares (LS) estimator
\[
\hat{g}_{n}\in\argmin_{g\in\mathcal{G}_n^p}\mathbb{P}_{n}\bigg\|\ell_{\beta}(\hat{\beta},\hat{\Lambda})-\dot{\ell}_{\Lambda}(\hat{\beta},\hat{\Lambda})(g)\bigg\|_2^{2},
\]
where $\mathcal{G}_{n}$ is the linear space generated by $I(t\ge u_{k})$ ($k=1,\dots,m$), $\mathcal{G}_{n}^{p}$ is the space of $p$-dimensional vectors with components in $\mathcal{G}_{n}$, $\ell_{\beta}(\beta,\Lambda)$ is the score function of $\ell(\beta,\Lambda)$ with respect to $\beta$, and $\dot{\ell}_{\Lambda}(\beta,\Lambda)(f)$ is the Fisher score of the one-dimensional parametric submodel $\ell(\beta,\Lambda_{s,f}) $ evaluated at $s=0$ where $\Lambda_{0,f}=\Lambda$ and $\partial\Lambda_{s,f}/\partial s|_{s=0}=f$. Then we estimate the $n\mathcal{I}$ by
\[
\hat{I}_{n}=n\mathbb{P}_{n}\left[\left\{ \ell_{\beta}(\hat{\beta},\hat{\Lambda})-\dot{\ell}_{\Lambda}(\hat{\beta},\hat{\Lambda})(\hat{g}_{n})\right\} ^{\otimes2}\right],
\]
where $v^{\otimes2}=vv^{\top}$ for a vector $v$.
Following \citet{huang2012least}, it can be shown that an equivalent expression of $\hat{I}_n$ is
\[
\hat{I}_{n}=n(A_{11}-A_{12}A_{22}^{-}A_{21}),
\]
where $\tilde{g}=\left(I(t\ge u_{1}),I(t\ge u_{2}),\dots,I(t\ge u_{m})\right)^{\top}$ and
\begin{align*}
A_{11} & =\mathbb{P}_n\left\{ \ell_{\beta}(\hat{\beta},\hat{\Lambda})\right\} ^{\otimes2},\quad A_{12}=\mathbb{P}_n\left\{ \ell_{\beta}(\hat{\beta},\hat{\Lambda})\dot{\ell}_{\Lambda}^{\top}(\hat{\beta},\hat{\Lambda})(\tilde{g})\right\} ,\\
A_{21} & =A_{12}^{\top},\quad A_{22}=\mathbb{P}_n\left\{ \dot{\ell}_{\Lambda}(\hat{\beta},\hat{\Lambda})(\tilde{g})\right\} ^{\otimes2}.
\end{align*}
Here $A^{-}$ denotes the generalized inverse of $A$.

\subsubsection{The PRES method}
\label{section.PRES}
\citet{xu2014standard} studied standard error estimation for the semiparametric models fitted by the EM algorithm. They proposed  a so-called PRES method for estimating the efficient information matrix, which differentiates the score of the expected complete-data log likelihood using Richardson extrapolation with a profile technique. We adapt this method to our setting. Recall that when obtaining the lasso estimator via the EM algorithm in \citet{li2020adaptive}, $\beta$ is updated by maximizing
\[
Q(\beta,\hat{\Lambda}({\beta})\mid\tilde{\beta},\tilde{\Lambda})-\lambda\|\beta\|_{1},
\]
where $(\tilde{\beta},\tilde{\Lambda})$ is the current update of $(\beta,\Lambda)$, $Q$ denotes the expected complete-data log likelihood given $(\tilde{\beta},\tilde{\Lambda})$, and $\hat{\Lambda}({\beta})=\argmax_{\Lambda}Q(\beta,\Lambda\mid\tilde{\beta},\tilde{\Lambda})$. The quantity $Q(\beta,\Lambda\mid\tilde{\beta},\tilde{\Lambda})$ is an explicit function of $(\beta, \Lambda)$, and $\hat{\Lambda}({\beta})$ has an explicit form as well \citep{li2020adaptive}. Define
\[
S(\tilde{\beta})=\frac{\partial Q(\beta,\Lambda|\tilde{\beta},\hat{\Lambda}_{\tilde{\beta}})}{\partial\beta}\bigg|_{\beta=\tilde{\beta},\Lambda=\hat{\Lambda}_{\tilde{\beta}}}.
\]
Standard calculus and probabilistic arguments can show that \[S(\tilde{\beta})=n\mathbb{P}_n l_{\beta}(\tilde{\beta}),\]
the score of the profile log likelihood. Thus we numerically differentiate $S(\tilde{\beta})$ using Richardson extrapolation to estimate $-n\mathbb{P}_n l_{\beta\beta}(\hat{\beta})$, which is used as the estimator $\hat{I}_n$ for $n\mathcal{I}$. Specifically, let
\[
\tilde{\beta}_{1}^{(i)}=\hat{\beta}+\epsilon e_{i},\quad\tilde{\beta}_{2}^{(i)}=\hat{\beta}-\epsilon e_{i},\quad\tilde{\beta}_{3}^{(i)}=\hat{\beta}+2\epsilon e_{i},\quad \text{and}\quad \tilde{\beta}_{4}^{(i)}=\hat{\beta}-2\epsilon e_{i}
\]
for a small $\epsilon$. The $i$th row of $\hat{I}_{n}$ is calculated by
\[
(\hat{I}_{n})_{i,}=-\frac{S\left(\tilde{\beta}_{4}^{(i)}\right)-8\cdot S\left(\tilde{\beta}_{2}^{(i)}\right)+8S\left(\tilde{\beta}_{1}^{(i)}\right)-S\left(\tilde{\beta}_{3}^{(i)}\right)}{12\epsilon}.
\]
The PRES algorithm saves computation time and appears to be more stable to the increment $\epsilon$ compared to the PL method \citep{xu2014standard}.

\subsubsection{The sPRES method} \label{section.sPRES}
As discussed above, the $S(\tilde{\beta})$ in PRES is equal to the score of the profile likelihood evaluated at $\tilde{\beta}$. Note that
\[
n\mathbb{P}_nl_{\beta}(\tilde{\beta})=n\mathbb{P}_n\left\{ \ell_{\beta}(\beta,\hat{\Lambda}_{\beta})+\frac{\partial\hat{\Lambda}_{\beta}^{\top}}{\partial\beta}\frac{\partial\ell(\beta,\hat{\Lambda}_{\beta})}{\partial\hat{\Lambda}_{\beta}}\right\} \bigg|_{\beta=\tilde{\beta}}=n\mathbb{P}_n\ell_{\beta}(\beta,\hat{\Lambda}_{\beta})\bigg|_{\beta=\tilde{\beta}}=n\mathbb{P}_n\ell_{\beta}(\tilde{\beta},\hat{\Lambda}_{\tilde{\beta}}),
\]
where $\partial\ell(\beta,\hat{\Lambda}_{\beta})/\partial\hat{\Lambda}_{\beta}$ is the derivative of the log likelihood with respect to $\hat{\Lambda}_{\beta}$, which can be viewed as a vector of its increments over the maximum intersections \citep{li2020adaptive}. So we have $S(\tilde{\beta})=n\mathbb{P}_n \ell_{\beta}(\tilde{\beta},\hat{\Lambda}_{\tilde{\beta}})$.
Since $\mathbb{P}_n\ell_{\beta}(\tilde{\beta},\hat{\Lambda}_{\tilde{\beta}})$ has a closed form in terms of $(\tilde{\beta},\hat{\Lambda}_{\tilde{\beta}})$ under the Cox model with interval-censored data, we propose a so-called sPRES (simplified PRES) method for obtaining $\hat{I}_n$. Defining $\tilde{S}(\tilde{\beta})=n\mathbb{P}_n \ell_{\beta}(\tilde{\beta},\hat{\Lambda}_{\tilde{\beta}})$, the sPRES method
calculates the $i$th row of $\hat{I}_{n}$ by 
\[
(\hat{I}_{n})_{i,}=-\frac{\tilde{S}\left(\tilde{\beta}_{4}^{(i)}\right)-8\cdot \tilde{S}\left(\tilde{\beta}_{2}^{(i)}\right)+8\tilde{S}\left(\tilde{\beta}_{1}^{(i)}\right)-\tilde{S}\left(\tilde{\beta}_{3}^{(i)}\right)}{12\epsilon}.
\]

\section{Asymptotic Results}
\label{section.result}
\subsection{Idea of proof}
\label{section.idea}
To prove the asymptotic result \eqref{pivot}, we derive the asymptotic distribution of $\overline{\theta}^{M} = n^{1/2}\overline{\beta}^{M}$ conditional on the selection event $\left\{ \hat{M}=M,\hat{s}_{\hat{M}}=s_M\right\}$ under the local alternative $\beta_n^{\ast}=n^{-1/2}\theta^{\ast}$. In this subsection, we give an outline of the derivation with the formal results and detailed proofs relegated to Section \ref{section.result.interval} and the appendix respectively.  Same as \citet{taylor2018post}, we only consider the case where the dimension of  $\beta$, $p$, is fixed.

Following \citet{zeng2016maximum}, we adopt parametric submodels of $\Lambda$ defined by $d\Lambda_{\epsilon,h}=(1+\epsilon h)d\Lambda$, where $h$ runs through the space of bounded functions,  to obtain the efficient score function for $\beta$. Note that these submodels of $\Lambda$ are different from the ones used in \citet{huang2012least}. So the two sets of submodels used two different notations for the model parameter, namely $h$ and $g$.  Let $\Lambda^{\ast}$ denote the true $\Lambda$ and define $\beta^{\ast}=0$, the limit of $\beta_n^{\ast}$ as $n\to\infty$. Under certain regularity conditions, the least favorable direction $h^{\ast}$ \citep{murphy2000profile} at $(\beta^{\ast}, \Lambda^{\ast})$ exists \citep[][pp. 269]{zeng2016maximum}. Define
\[
\tilde{\ell}(\beta)=\ell_{\beta}(\beta,\Lambda^{\ast})-\ell_{\Lambda}(\beta,\Lambda^{\ast})(h^{\ast}),
\] where $\ell_{\Lambda}(\beta,\Lambda)(h)=\partial \ell(\beta,\Lambda_{\epsilon,h})/\partial \epsilon |_{\epsilon=0}$ (note that this nuisance score operator $\ell_{\Lambda}(\beta,\Lambda)$, used by \citet{zeng2016maximum}, is different from the one $\dot{\ell}_{\Lambda}(\beta,\Lambda)$ used in \citet{huang2012least}). Then $\tilde{\ell}(\beta^{\ast})$ is the efficient score function for $\beta$ evaluated at $(\beta^{\ast},\Lambda^{\ast})$. As a $p$-dimensional vector, $\tilde{\ell}(\beta)$ can be decomposed as
\[
\tilde{\ell}(\beta)=(\tilde{\ell}_{M}(\beta)^{\top}, \tilde{\ell}_{-M}(\beta)^{\top})^{\top},
\]
where $\tilde{\ell}_{M}(\beta)$ denotes the subvector of $\tilde{\ell}(\beta)$ corresponding to the submodel $M$. Under $P_{\beta^{\ast},\Lambda^{\ast}}$ and certain regularity conditions, the estimators $\hat{\beta}^M_{\textrm{full}}$ and $\hat{\Lambda}^M$ can be shown to satisfy
\begin{equation}\label{full_model_one_step_est_condition}
n^{1/2}\mathbb{P}_{n}\left\{ \ell_{\beta}(\hat{\beta}^{M}_{\textrm{full}},\hat{\Lambda}^{M})-\tilde{\ell}(\beta^{\ast})\right\} =-n^{1/2}\mathcal{I}\left(\hat{\beta}^{M}_{\textrm{full}}-\beta^{\ast}\right)+o_{P}(1)
\end{equation}
(see Lemma \ref{lemma.one.step.interval}), the one-step estimator $\overline{\beta}^{M}$ defined in equation (\ref{eq.one.step.def.1}) can be shown to satisfy
\begin{equation}\label{asymp_onestepest}
 n^{1/2}\left(\overline{\beta}^{M}-\beta_{M}^{\ast}\right)=n^{1/2}\mathcal{I}_{M,M}^{-1}\mathbb{P}_n\tilde{\ell}_{M}(\beta^{\ast})+o_{P}(1)
\end{equation}
(see Lemma \ref{lemma.one.step.interval}), and by the central limit theorem, we have
\begin{equation}\label{asymp_score}
n^{1/2}\mathbb{P}_n\tilde{\ell}(\beta^{\ast})\stackrel{d}{\longrightsquigarrow}N\left(0,\mathcal{I}\right).
\end{equation}

Now let us discuss how to prove the convergence of $\overline{\theta}^{M}$ in distribution conditional on $\left\{ \hat{M}=M,\hat{s}_{\hat{M}}=s_M\right\}$. Suppose we have a sequence of random vectors $Z_{n}$ such that $Z_{n}\stackrel{d}{\longrightsquigarrow}Z$. The convergence in distribution conditional on some event, taking the form of
\[
Z_{n}\mid A_{n}\stackrel{d}{\longrightsquigarrow}Z\mid A,
\]
cannot hold for any $A_{n}$ and $A$. However, if
\[
A_{n}=\left\{ Z_{n}\in B\right\} ,\ A=\left\{ Z\in B\right\} ,\ \mbox{and}\ P\left(Z\in B\right)>0,
\]
then $Z_{n}\mid A_{n}\stackrel{d}{\longrightsquigarrow}Z\mid A$ holds because
\[
\frac{\text{P}\left(Z_{n}\le t,Z_{n}\in B\right)}{\text{P}\left(Z_{n}\in B\right)}\longrightarrow\frac{\text{P}\left(Z\le t,Z\in B\right)}{\text{P}\left(Z\in B\right)},
\]
where the numerator and the denominator on the left converge to their counterparts on the right respectively.

In our case, the problem is slightly more complicated. Suppose we have two sequences of random vectors $Z_{n}$ and $W_{n}$ such that $(Z_{n},W_{n})\stackrel{d}{\longrightsquigarrow}(Z,W)$ and $Z$ and $W$ are independent. If $R_{n,1}=o_{P}(1)$ and $R_{n,2}=o_{P}(1)$, by similar arguments to the above, it can be shown that
\begin{equation}\label{cond_conv}
Z_{n}\mid \left\{ Z_{n}+R_{n,1}\in B,W_{n}+R_{n,2}\in D\right\} \stackrel{d}{\longrightsquigarrow}Z\mid \left\{ Z\in B\right\} .
\end{equation}
A formal statement and its proof are provided in the appendix (see Lemma \ref{lemma.idea}). Under \eqref{full_model_one_step_est_condition} and some regularity conditions, we can show
\begin{equation}\label{selection_event}
\left\{\hat{M}=M,\hat{s}_{\hat{M}}=s_M\right\}=\left\{ Z_{n}+R_{n,1}\in B,W_{n}+R_{n,2}\in D\right\} ,
\end{equation}
where  $R_{n,1}=o_{P}(1)$ and $R_{n,2}=o_{P}(1)$ under the local alternative $\beta_n^{\ast}=n^{-1/2}\theta^{\ast}$,  $Z_{n}=n^{1/2}\left(\overline{\beta}^{M}-\widetilde{\beta}_{n}^{M}\right)$, and $W_n=n^{1/2}\mathbb{P}_n\tilde{\ell}_{-M}(\beta_{n}^{\ast})-n^{1/2}\mathcal{I}_{-M,M}\left(\overline{\beta}^{M}-\widetilde{\beta}_{n}^{M}\right)$. Here $\widetilde{\beta}_{n}^{M}=n^{-1/2}\widetilde{\theta}^{M}$ and $\widetilde{\theta}^{M}$ is defined in (\ref{eq.new.target}). In fact, $Z_{n}+R_{n,1}\in B$ and $W_{n}+R_{n,2}\in D$ correspond respectively to the active and inactive constraints of the Karush-Kuhn-Tucker (KKT) conditions for the lasso problem,
\[\argmin_{\beta,\Lambda}-\mathbb{P}_n\ell(\beta,\Lambda)+\frac{\lambda}{n}\|\beta\|_{1}.\]
Thus, by \eqref{cond_conv}, we can establish the asymptotic distribution of $\overline{\theta}^M$ conditional on the selection event under the local alternative, if $(Z_{n},W_{n})\underset{\beta_{n}^{\ast}}{\overset{d}{\longrightsquigarrow}}(Z,W)$ for some $Z$ and $W$ that are independent of each other.

The asymptotic distribution of $(Z_{n},W_{n})$ under $\beta^{\ast}$ can be obtained from the asymptotic results \eqref{asymp_onestepest} and \eqref{asymp_score}.
If the model with $\Lambda$ fixed at $\Lambda^{\ast}$ is differentiable in quadratic mean at $\beta^{\ast}$, we can then apply Le Cam's third lemma to transform the asymptotics under $\beta^{\ast}$ to the asymptotics under the local alternative $P_{\beta_n^{\ast},\Lambda^{\ast}}$. In particular, the results \eqref{asymp_onestepest} and \eqref{asymp_score} will be transformed to
\begin{equation}\label{AN_onestepest_local}
n^{1/2}\left(\overline{\beta}^{M}-\widetilde{\beta}_{n}^{M}\right)  =n^{1/2}\mathcal{I}_{M,M}^{-1}\mathbb{P}_n\tilde{\ell}_{M}(\beta_{n}^{\ast})+o_{P}(1)
\end{equation}
and
\begin{equation}\label{AN_score_local}
n^{1/2}\mathbb{P}_n\tilde{\ell}(\beta_{n}^{\ast}) \underset{\beta_{n}^{\ast}}{\overset{d}{\longrightsquigarrow}}N\left(0,\mathcal{I}\right),
\end{equation} 
respectively. Also, for a random vector $Z_{n}$, if $Z_{n}\underset{\beta^{\ast}}{\overset{d}{\longrightsquigarrow}}0$, then $Z_{n}\underset{\beta_{n}^{\ast}}{\overset{d}{\longrightsquigarrow}}0$ by Le Cam's third lemma. A simple application of this is that, if $\hat{I}_{n}/n$ is a consistent estimator for the efficient information matrix $\mathcal{I}$ under the model $P_{\beta^{\ast},\Lambda^{\ast}}$, then $\hat{I}_{n}/n$ is also consistent for $\mathcal{I}$ under $P_{\beta_n^{\ast},\Lambda^{\ast}}$. This result will be also used in deriving  the asymptotic distribution of $\overline{\theta}^M$ conditional on $\left\{ \hat{M}=M,\hat{s}_{\hat{M}}=s_M\right\}$ under the local alternative.

\subsection{Results for the Cox model with interval-censored data}
\label{section.result.interval}
We first give the conditions to establish the asymptotic result \eqref{pivot} under a general semiparametric model $P_{\beta,\Lambda}$ for which the least favorable direction $h^{\ast}$ \citep{murphy2000profile} at $(\beta^{\ast},\Lambda^{\ast})$ exists.

\begin{condition}
\label{condition.ratio}
Under the local alternative $\beta_n^{\ast}=n^{-1/2}\theta^{\ast}$, \[
\liminf_{n\to\infty}P_{\beta_{n}^{\ast},\Lambda^{\ast}}\left(\hat{M}=M,\hat{s}_{\hat{M}}=s_M \right)>0.
\]
\end{condition}

\begin{condition}
\label{condition.lasso.semi}
The tuning parameter $\lambda$ satisfies $\lim_{n\to\infty}\lambda/n^{1/2}=C$ where $C$ is a non-negative constant.
\end{condition}

\begin{condition}
\label{condition.Hessian}
Under $P_{\beta^{\ast},\Lambda^{\ast}}$ where $\beta^{\ast}=0$, $\hat{I}_{n}/n$ is a consistent estimator of the efficient information matrix $\mathcal{I}$ introduced in Section \ref{section.lasso.one.step}.
\end{condition}

\begin{condition}
\label{condition.one.step.semi}
Under $P_{\beta^{\ast},\Lambda^{\ast}}$, the estimator $\hat{\beta}^{M}_{\textrm{full}}=(\hat{\beta}^M,0)$ defined in Section \ref{section.lasso.one.step}
satisfies
\begin{equation} \label{eq.lasso.taylor}
n^{1/2}\mathbb{P}_{n}\left\{ \ell_{\beta}(\hat{\beta}^{M}_{\textrm{full}},\hat{\Lambda}^{M})-\tilde{\ell}(\beta^{\ast})\right\} =-n^{1/2}\mathcal{I}\left(\hat{\beta}^{M}_{\textrm{full}}-\beta^{\ast}\right)+o_{P}(1),
\end{equation}
and the one-step estimator $\overline{\beta}^M$ satisfies
\begin{equation}\label{asymp_eff}
    n^{1/2}\left(\overline{\beta}^{M}-\beta_{M}^{\ast}\right)=n^{1/2}\mathcal{I}_{M,M}^{-1}\mathbb{P}_n\tilde{\ell}_{M}(\beta^{\ast})+o_{P}(1),
\end{equation}
where $\tilde{\ell}(\beta^{\ast})$ is the efficient score function introduced in Section \ref{section.idea}, and $\tilde{\ell}_{M}(\beta^{\ast})$ is the subvector of $\tilde{\ell}(\beta^{\ast})$ with indices in $M$.
In addition, under $P_{\beta^{\ast},\Lambda^{\ast}}$,
\[
n^{1/2}\mathbb{P}_n\tilde{\ell}(\beta^{\ast})\stackrel{d}{\longrightsquigarrow}N\left(0,\mathcal{I}\right).
\]
\end{condition}

\begin{condition}
\label{condition.Donsker.semi}
There exists a neighborhood $V_1$ of $(\beta^{\ast},\Lambda^{\ast})$ such that the class of functions $\mathcal{F}_{1}=\{\ell_{\beta}(\beta,\Lambda): (\beta,\Lambda)\in V_1\}$ is $P_{\beta^{\ast},\Lambda^{\ast}}$-Donsker with a square-integrable envelope function $F_{1}$. There also exists a neighborhood $V_2$ of $(\beta^{\ast},\Lambda^{\ast},h^{\ast})$ such that the class of functions $\mathcal{F}_{2}=\{\ell_{\Lambda}(\beta,\Lambda)(h): (\beta,\Lambda,h)\in V_2\}$ is $P_{\beta^{\ast},\Lambda^{\ast}}$-Donsker with a square-integrable envelope function $F_{2}$.

\end{condition}

Condition \ref{condition.ratio} assumes that we have a non-vanishing probability of selecting $\left\{ M,s_M\right\} $. In our setting, $\text{P}\left( \hat{M}=M,\hat{s}_{\hat{M}}=s_M \right)$ is the inverse of the selective likelihood ratio in \citet{tian2018selective}. So Condition \ref{condition.ratio} is equivalent to the condition that the selective likelihood ratio is bounded, which appears in Lemma 3 of \cite{tian2018selective}. If Condition \ref{condition.ratio} does not hold, it is meaningless to conduct selective inference on $\beta_{M,n}^{\ast}$.

Condition \ref{condition.lasso.semi} is needed to ensure that the lasso estimator $\hat{\beta}$ is $n^{1/2}$-consistent, which is in turn needed for \eqref{asymp_eff} to hold. The condition is also necessary to show \eqref{selection_event}. Condition \ref{condition.Hessian} requires a consistent estimator of the efficient information matrix to be used in computing the one-step estimator $\overline{\beta}^M$. If the model is differentiable in quadratic mean at $\beta^{\ast}$, by Le Cam's third lemma, Condition \ref{condition.Hessian} implies that $\hat{I}_{M,n}/n$ is consistent for $\mathcal{I}_{M,M}$ under the local alternative. The second part of Condition \ref{condition.one.step.semi} is a direct result of the central limit theorem. For the first part of Condition \ref{condition.one.step.semi}, equation (\ref{eq.lasso.taylor}) is a Taylor expansion condition for $\ell_{\beta}(\hat{\beta}^{M}_{\textrm{full}},\hat{\Lambda}^{M})-\tilde{\ell}(\beta^{\ast})$, while (\ref{asymp_eff}) is the asymptotic efficiency of the one-step estimator, as in Theorem 5.45 of \cite{van2000asymptotic} and Theorem 7.2 of \citet{van2002}. Noting that $\overline{\beta}^{M}=\hat{\beta}^{M}+\lambda\hat{I}_{M,n}^{-1}s_M=\hat{\beta}^{M}+(\hat{I}_{M,n}/n)^{-1}\mathbb{P}_{n}\ell_{\beta_{M}}(\hat{\beta}^{M}_{\textrm{full}},\hat{\Lambda}^{M})$ , (\ref{eq.lasso.taylor}) actually implies (\ref{asymp_eff}) as long as Condition \ref{condition.Hessian} holds and $\|\mathbb{P}_{n}\ell_{\beta_M}(\hat{\beta}^{M}_{\textrm{full}},\hat{\Lambda}^{M})\|_{2}=O_{P}(n^{-1/2})$, where the latter is implied by the central limit theorem for $n^{1/2}\mathbb{P}_{n}\tilde{\ell}_M(\beta^{\ast})$ and $\|\hat{\beta}^{M}-\beta^{\ast}_M\|=O_{P}(n^{-1/2})$ (shown in the proof of Lemma \ref{lemma.one.step.interval}) due to (\ref{eq.lasso.taylor}). As discussed in Section \ref{section.idea}, Condition \ref{condition.one.step.semi} actually implies \eqref{AN_onestepest_local} and \eqref{AN_score_local} under the local alternative. A formal proof is given in the appendix (Section \ref{section.one.step.local.semi}). Condition \ref{condition.Donsker.semi} is a technical condition for us to apply the asymptotic continuity of empirical process \citep[Lemma 19.24,][]{van2000asymptotic} in proving \eqref{pivot}. 

\begin{theorem}
\label{theorem.semi}
If Conditions \ref{condition.ratio}--\ref{condition.Donsker.semi} hold under a semiparametric model $P_{\beta,\Lambda}$ which is differentiable in quadratic mean at $\beta^{\ast}$ when fixing $\Lambda$ at $\Lambda^{\ast}$ and for which the least favorable direction $h^{\ast}$ \citep{murphy2000profile} at $(\beta^{\ast},\Lambda^{\ast})$ exists, then
\[
F_{\gamma^{\top}\widetilde{\theta}^{M},\gamma^{\top}\mathcal{I}_{M,M}^{-1}\gamma}^{\mathcal{V}^{-},\mathcal{V}^{+}}(\gamma^{\top}\overline{\theta}^M)\mid \left\{ \hat{M}=M,\hat{s}_{\hat{M}}=s_M\right\} \underset{\beta_{n}^{\ast}}{\overset{d}{\longrightsquigarrow}}Unif(0,1),
\]
where $\widetilde{\theta}^{M}$ is defined in \eqref{eq.new.target}, and the truncation limits $\mathcal{V}^{-}$ and $\mathcal{V}^{+}$ are computed as in \eqref{V-} and \eqref{V+} with
\[
A=-\text{diag}(s_{M}),\quad b=-(n^{-1/2}\lambda)\text{diag}(s_{M})\mathcal{I}_{M,M}^{-1}s_{M}, \quad c=\mathcal{I}_{M,M}^{-1}\gamma(\gamma^{\top}\mathcal{I}_{M,M}^{-1}\gamma)^{-1},\quad \mbox{and} \quad
z=(I_{|M|}-c\gamma^{\top})\overline{\theta}^M.
\]
\end{theorem}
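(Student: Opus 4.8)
The plan is to follow the roadmap of Section \ref{section.idea} and assemble the pieces in four stages: (i) upgrade the conditions from the null $P_{\beta^{\ast},\Lambda^{\ast}}$ to the local alternative $P_{\beta_n^{\ast},\Lambda^{\ast}}$ via Le Cam's third lemma; (ii) rewrite the selection event $\{\hat{M}=M,\hat{s}_{\hat{M}}=s_M\}$ in the form $\{Z_n+R_{n,1}\in B,\ W_n+R_{n,2}\in D\}$ of \eqref{selection_event}; (iii) establish the joint weak convergence $(Z_n,W_n)\overset{d}{\longrightsquigarrow}(Z,W)$ with $Z\perp W$ under the local alternative; and (iv) conclude using the conditional-convergence lemma (Lemma \ref{lemma.idea}) together with the truncated-Gaussian pivot of \citet{lee2016exact}.

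First I would verify differentiability in quadratic mean of the submodel $\beta\mapsto P_{\beta,\Lambda^{\ast}}$ at $\beta^{\ast}=0$ (hypothesized in the statement), so that the sequence $P_{\beta_n^{\ast},\Lambda^{\ast}}^{\,n}$ is contiguous to $P_{\beta^{\ast},\Lambda^{\ast}}^{\,n}$ with log-likelihood ratio expansion driven by the score $\ell_\beta(\beta^{\ast},\Lambda^{\ast})$. Applying Le Cam's third lemma to the joint limit of $\bigl(n^{1/2}\mathbb{P}_n\tilde\ell(\beta^{\ast}),\ \text{loglik ratio}\bigr)$ — whose covariance I can compute because $\tilde\ell(\beta^{\ast})$ is the efficient score, hence its covariance with the full score equals $\mathcal I$ after the projection — converts \eqref{asymp_score}/\eqref{asymp_eff} into \eqref{AN_score_local}/\eqref{AN_onestepest_local}; here the recentering by $\widetilde\beta_n^M=n^{-1/2}\widetilde\theta^M$ arises precisely because the mean shift is $\mathcal I_{M,M}^{-1}(\mathcal I_{M,M}\theta_M^{\ast}+\mathcal I_{M,-M}\theta_{-M}^{\ast})=\widetilde\theta^M$. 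The same lemma (in its ``$o_P(1)$ is preserved'' form noted in Section \ref{section.idea}) upgrades Condition \ref{condition.Hessian} so that $\hat I_{M,n}/n\to\mathcal I_{M,M}$ under the local alternative, and makes $R_{n,1},R_{n,2}=o_P(1)$ hold there. For step (ii), I would expand the KKT stationarity conditions for the lasso problem $\argmin_{\beta,\Lambda}-\mathbb{P}_n\ell(\beta,\Lambda)+\frac{\lambda}{n}\|\beta\|_1$: on $\{\hat M=M,\hat s_{\hat M}=s_M\}$ the active block reads $\mathrm{diag}(s_M)\hat\beta_M>0$ with $\mathbb{P}_n\ell_{\beta_M}(\hat\beta^M_{\textrm{full}},\hat\Lambda^M)=\tfrac{\lambda}{n}s_M$, and the inactive block reads $\|\mathbb{P}_n\ell_{\beta_{-M}}(\hat\beta^M_{\textrm{full}},\hat\Lambda^M)\|_\infty\le\tfrac{\lambda}{n}$; substituting \eqref{full_model_one_step_est_condition}, the definition \eqref{eq.one.step.def.1} of $\overline\beta^M$, and Condition \ref{condition.lasso.semi} ($\lambda/n^{1/2}\to C$) turns these into linear constraints on $Z_n=n^{1/2}(\overline\beta^M-\widetilde\beta_n^M)$ and on $W_n=n^{1/2}\mathbb{P}_n\tilde\ell_{-M}(\beta_n^{\ast})-n^{1/2}\mathcal I_{-M,M}(\overline\beta^M-\widetilde\beta_n^M)$, up to $o_P(1)$ error terms collected into $R_{n,1},R_{n,2}$; this gives the sets $B=\{\zeta:\mathrm{diag}(s_M)(\zeta+\widetilde\theta^M-C\,\mathcal I_{M,M}^{-1}s_M)>0\}$ — i.e. the active constraint $\mathrm{diag}(s_M)(\overline\theta^M-n^{-1/2}\lambda\mathcal I_{M,M}^{-1}s_M)>0$ — and $D$ the corresponding box constraint.

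For step (iii), independence of $Z$ and $W$ is the crux: by \eqref{AN_onestepest_local}, $Z_n=n^{1/2}\mathcal I_{M,M}^{-1}\mathbb{P}_n\tilde\ell_M(\beta_n^{\ast})+o_P(1)$, and by construction $W_n=n^{1/2}\mathbb{P}_n\tilde\ell_{-M}(\beta_n^{\ast})-n^{1/2}\mathcal I_{-M,M}\mathcal I_{M,M}^{-1}\mathbb{P}_n\tilde\ell_M(\beta_n^{\ast})+o_P(1)$, so $(Z_n,W_n)$ is asymptotically a fixed linear image of $n^{1/2}\mathbb{P}_n\tilde\ell(\beta_n^{\ast})\overset{d}{\longrightsquigarrow}N(0,\mathcal I)$; the joint limit $(Z,W)$ is Gaussian with $\mathrm{Cov}(Z,W)=\mathcal I_{M,M}^{-1}\bigl(\mathcal I_{M,-M}-\mathcal I_{M,M}(\mathcal I_{-M,M}\mathcal I_{M,M}^{-1})^\top\bigr)=0$, so $Z\sim N(0,\mathcal I_{M,M}^{-1})$ and $W$ are independent — this is the familiar efficient-score orthogonalization and is exactly why the inactive constraint can be treated as a nuisance event. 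I would discharge the remaining analytic work — that the $o_P(1)$ remainders in \eqref{selection_event} are genuinely negligible and that plugging the \emph{estimated} $\hat I_{M,n}/n$ for $\mathcal I_{M,M}$ into $\mathcal V^{\pm},c,z$ does not disturb the limit — using Condition \ref{condition.Donsker.semi} and the asymptotic equicontinuity of the empirical process over the Donsker classes $\mathcal F_1,\mathcal F_2$ (Lemma 19.24 of \citet{van2000asymptotic}), since $(\hat\beta^M_{\textrm{full}},\hat\Lambda^M)\to(\beta^{\ast},\Lambda^{\ast})$ in the relevant metric.

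Finally, applying Lemma \ref{lemma.idea} with this $(Z_n,W_n,R_{n,1},R_{n,2})$ yields
\[
n^{1/2}(\overline\beta^M-\widetilde\beta_n^M)\ \Big|\ \{\hat M=M,\hat s_{\hat M}=s_M\}\ \underset{\beta_n^{\ast}}{\overset{d}{\longrightsquigarrow}}\ N(0,\mathcal I_{M,M}^{-1})\ \Big|\ \{Z\in B\},
\]
equivalently $\overline\theta^M\mid\{\hat M=M,\hat s_{\hat M}=s_M\}\overset{d}{\longrightsquigarrow}N(\widetilde\theta^M,\mathcal I_{M,M}^{-1})\mid\{\mathrm{diag}(s_M)(\overline\theta^M-C\,\mathcal I_{M,M}^{-1}s_M)>0\}$; since this limiting law is the one-dimensional-in-$\gamma^\top(\cdot)$ truncated Gaussian that Theorem 5.2 of \citet{lee2016exact} shows is mapped by $F_{\gamma^\top\widetilde\theta^M,\gamma^\top\mathcal I_{M,M}^{-1}\gamma}^{\mathcal V^-,\mathcal V^+}$ onto $Unif(0,1)$ (with $A,b,c,z$ exactly as in the statement, and $\mathcal V^\pm$ from \eqref{V-}--\eqref{V+}), the continuous-mapping theorem applied to the (almost surely continuous) map $(\overline\theta^M,\hat I_{M,n}/n)\mapsto F_{\cdots}^{\cdots}(\gamma^\top\overline\theta^M)$ delivers the claim. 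I expect the main obstacle to be step (ii)/(iii) handled jointly: showing that, uniformly on the selection event, the KKT remainder terms feeding into $R_{n,1},R_{n,2}$ are $o_P(1)$ \emph{under the local alternative} — this requires both the contiguity argument (so that $o_P(1)$ under $P_{\beta^{\ast},\Lambda^{\ast}}$ transfers) and the Donsker/equicontinuity control of $\mathbb{P}_n\ell_\beta$ at the data-dependent point $(\hat\beta^M_{\textrm{full}},\hat\Lambda^M)$, and it is where Conditions \ref{condition.one.step.semi} and \ref{condition.Donsker.semi} must be combined most carefully.
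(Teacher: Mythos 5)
Your proposal is correct and follows essentially the same route as the paper: Le Cam's third lemma (together with the efficient-score identity $P\{\ell_\beta\tilde\ell^\top\}=P\{\tilde\ell^{\otimes 2}\}=\mathcal I$) to upgrade Conditions \ref{condition.Hessian}--\ref{condition.one.step.semi} from $P_{\beta^{\ast},\Lambda^{\ast}}$ to $P_{\beta_n^{\ast},\Lambda^{\ast}}$, the KKT active/inactive decomposition of the selection event into $\{Z_n+R_{n,1}\in B,\,W_n+R_{n,2}\in D\}$, asymptotic independence of $Z$ and $W$ via the block-orthogonalization of the score, and then Lemma \ref{lemma.idea} plus Theorem 5.2 of \citet{lee2016exact}. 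The paper organizes this as an explicit adaptation of the GLM argument in which $l_\beta$ is replaced by $\tilde\ell$ and the Taylor/equicontinuity steps are re-established via Condition \ref{condition.Donsker.semi} and the identity from Zeng et al.\ — exactly the ``combining Conditions \ref{condition.one.step.semi} and \ref{condition.Donsker.semi} carefully'' work you flag at the end — so the two arguments are substantively the same.
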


Theorem \ref{theorem.semi} provides a general result for semiparametric models without assuming that $M$ does not miss any significant variable. Thisis a big improvement of the asymptotic result conjectured in \cite{taylor2018post}. It is non-trivial for the lasso estimator to retain all the true predictors under the local alternative where $\beta_{n}^{\ast}$ vanishes at the convergence rate of the lasso itself, namely, $n^{-1/2}$ \citep{fu2000asymptotics}. Substituting $\widetilde{\theta}^{M}$ with the scaled minimizer of population log likelihood, $\theta_{n}^{M}$, defined in Section \ref{section.lasso.one.step}, and $\mathcal{I}_{M,M}$ with a consistent estimator still produces asymptotically valid confidence intervals, as long as the truncation limits $\mathcal{V}^{-}$ and $\mathcal{V}^{+}$ are well-separated with high probability. See Section 3.2 of \cite{tian2017asymptotics} for a detailed discussion on this issue.

We then verify the conditions of Theorem \ref{theorem.semi} under the Cox model with interval-censored data. Lemma \ref{lemma.one.step.interval} below shows that Condition \ref{condition.one.step.semi} holds under Conditions C1--C4 of \citet{li2020adaptive}, which are standard regularity conditions concerning the true parameters $(\beta^{\ast},\Lambda^{\ast})$, covariate vector $X$, and inspection times $\overrightarrow{U}$. These conditions are also a special case of the conditions in \cite{zeng2017maximum} in the context of the Cox model with interval-censored data. Condition \ref{condition.Donsker.semi} can be verified by similar arguments to \cite{zeng2017maximum}. Therefore, the result \eqref{pivot} holds under the Cox model with interval-censored data, which is formally stated in Theorem \ref{theorem.interval}.

\begin{lemma}
\label{lemma.one.step.interval}
Suppose $\lambda=O(n^{1/2})$. For any fixed $M$, the estimator $\hat{\beta}^{M}_{\textrm{full}}$ satisfies
\[
n^{1/2}\mathbb{P}_{n}\left\{ \ell_{\beta}(\hat{\beta}^{M}_{\textrm{full}},\hat{\Lambda}^{M})-\tilde{\ell}(\beta^{\ast})\right\} =-n^{1/2}\mathcal{I}\left(\hat{\beta}^{M}_{\textrm{full}}-\beta^{\ast}\right)+o_{P}(1),
\]
and the one-step estimator $\overline{\beta}^{M}=\hat{\beta}^{M}+\lambda\hat{I}_{M,n}^{-1}s_M$ satisfies
\[
n^{1/2}\left(\overline{\beta}^{M}-\beta_{M}^{\ast}\right)=n^{1/2}\mathcal{I}_{M,M}^{-1}\mathbb{P}_n\tilde{\ell}_{M}(\beta^{\ast})+o_{P}(1)
\]
under the model $P_{\beta^{\ast},\Lambda^{\ast}}$ and Conditions C1--C4 of \citet{li2020adaptive}.
\end{lemma}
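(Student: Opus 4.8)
The plan is to recognize $\hat\beta^M_{\textrm{full}}$ as (essentially) a penalized/constrained profile maximum likelihood estimator for the submodel $M$ and to run the standard efficient-score expansion for semiparametric $M$-estimators, exactly in the style of \citet{murphy2000profile}, \citet{zeng2016maximum}, and \citet{zeng2017maximum}, with the only new wrinkle being the $\lambda s_M$ term on the right-hand side of the defining equation \eqref{submodel_lasso_eq}. First I would record the first-order conditions: $\hat\beta^M_{\textrm{full}}=(\hat\beta^M,0)$ and $\hat\Lambda^M=\hat\Lambda_{\hat\beta^M_{\textrm{full}}}$ jointly satisfy $\mathbb{P}_n\ell_{\beta_M}(\hat\beta^M_{\textrm{full}},\hat\Lambda^M)=(\lambda/n)s_M$ and $\mathbb{P}_n\ell_\Lambda(\hat\beta^M_{\textrm{full}},\hat\Lambda^M)(h)=0$ for all admissible $h$ (the nuisance score equation being the defining property of $\hat\Lambda_\beta$). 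Since $\lambda=O(n^{1/2})$, the right-hand side is $O(n^{-1/2})$, so the usual arguments for rate of convergence go through unchanged: using Conditions C1--C4 of \citet{li2020adaptive}, $\|\hat\beta^M-\beta^\ast_M\|=O_P(n^{-1/2})$ and $\|\hat\Lambda^M-\Lambda^M_0\|=O_P(n^{-1/3})$ (or whatever the sieve/NPMLE rate is in that paper), where $\Lambda^M_0$ is the population profile limit. Because $\beta^\ast=0$ here, the limit of the submodel is $(\beta^\ast_M,\Lambda^\ast)$, so no model-misspecification bias appears at the limit — this is why the statement is clean.

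Next I would carry out the efficient-score linearization. Starting from $0=n^{1/2}\mathbb{P}_n\ell_{\beta}(\hat\beta^M_{\textrm{full}},\hat\Lambda^M)-n^{1/2}(\lambda/n)(s_M^\top,0)^\top$ combined with the nuisance score equation, one replaces $\ell_\beta$ by the efficient score $\tilde\ell=\ell_\beta-\ell_\Lambda(h^\ast)$ by subtracting off an appropriate multiple of the (vanishing) nuisance score; the key analytic input is Condition \ref{condition.Donsker.semi}, which via the asymptotic equicontinuity of the empirical process over the Donsker classes $\mathcal F_1,\mathcal F_2$ lets me write $n^{1/2}(\mathbb{P}_n-P)\{\ell_\beta(\hat\beta^M_{\textrm{full}},\hat\Lambda^M)-\tilde\ell(\beta^\ast)\}=o_P(1)$. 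The deterministic part $n^{1/2}P\{\ell_\beta(\hat\beta^M_{\textrm{full}},\hat\Lambda^M)-\tilde\ell(\beta^\ast)\}$ is then a smooth function of $(\hat\beta^M_{\textrm{full}}-\beta^\ast,\hat\Lambda^M-\Lambda^\ast)$ that Taylor-expands; the least-favorable direction $h^\ast$ is precisely chosen so that the derivative of $P\ell_\beta$ in the nuisance direction is annihilated, leaving $-\mathcal I(\hat\beta^M_{\textrm{full}}-\beta^\ast)+$ (second-order terms controlled by the product of the $\beta$- and $\Lambda$-rates, which is $o(n^{-1/2})$). This yields the first displayed identity \eqref{eq.lasso.taylor} of the lemma.

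For the second identity I would simply rearrange: by definition $\overline\beta^M=\hat\beta^M+\lambda\hat I_{M,n}^{-1}s_M=\hat\beta^M+(\hat I_{M,n}/n)^{-1}\mathbb{P}_n\ell_{\beta_M}(\hat\beta^M_{\textrm{full}},\hat\Lambda^M)$, using the first-order condition \eqref{submodel_lasso_eq}. Taking the $M$-block of \eqref{eq.lasso.taylor}, noting that $\beta^\ast_{-M}=0$ so the off-diagonal term $\mathcal I_{M,-M}(\hat\beta^M_{-M,\textrm{full}}-\beta^\ast_{-M})$ vanishes, gives $n^{1/2}\mathbb{P}_n\ell_{\beta_M}(\hat\beta^M_{\textrm{full}},\hat\Lambda^M)=n^{1/2}\mathbb{P}_n\tilde\ell_M(\beta^\ast)-n^{1/2}\mathcal I_{M,M}(\hat\beta^M-\beta^\ast_M)+o_P(1)$; multiplying by $(\hat I_{M,n}/n)^{-1}=\mathcal I_{M,M}^{-1}+o_P(1)$ (Condition \ref{condition.Hessian}) and adding $n^{1/2}(\hat\beta^M-\beta^\ast_M)$ cancels that term and produces $n^{1/2}(\overline\beta^M-\beta^\ast_M)=n^{1/2}\mathcal I_{M,M}^{-1}\mathbb{P}_n\tilde\ell_M(\beta^\ast)+o_P(1)$. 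The main obstacle is the first step — establishing the rate of convergence and the efficient-score expansion \eqref{eq.lasso.taylor} for the NPMLE-type estimator $(\hat\beta^M,\hat\Lambda^M)$ in the interval-censored Cox model with the extra $\lambda s_M$ perturbation; this requires carefully invoking the entropy/Donsker bounds and the existence and smoothness of $h^\ast$ from \citet{zeng2016maximum,zeng2017maximum} and checking that the $O(n^{-1/2})$ penalty term does not degrade any of those arguments, which it does not because it is of the same order as the score itself.
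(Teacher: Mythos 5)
Your overall strategy --- establish $n^{1/2}$-rate for $\hat\beta^M$ and a suitable $L^2$-rate for $\hat\Lambda^M$, then run the Murphy--van~der~Vaart efficient-score linearization using the nuisance score equation, Donsker/asymptotic-equicontinuity for the centered process, and a Taylor expansion on the deterministic remainder in which the least favorable direction kills the $\Lambda$-derivative --- is exactly the engine behind the paper's proof. The main structural difference is that the paper routes the argument through the profile score identity $l_\beta(\hat\beta^M_{\textrm{full}})=\ell_\beta(\hat\beta^M_{\textrm{full}},\hat\Lambda^M)+\ell_\Lambda(\hat\beta^M_{\textrm{full}},\hat\Lambda^M)(\dot\Lambda_{\hat\beta^M_{\textrm{full}}})$ and separately proves $\|\dot\Lambda_{\hat\beta^M_{\textrm{full}}}+h^\ast\|_{L_2(\Lambda^\ast)}\to 0$, whereas you plug in the fixed direction $h^\ast$ directly via $\mathbb{P}_n\ell_\Lambda(\hat\beta^M_{\textrm{full}},\hat\Lambda^M)(h^\ast)=0$. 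That is a legitimate simplification; it buys you one less convergence claim to establish, at the cost of leaning on Condition~\ref{condition.Donsker.semi} rather than the Donsker-class machinery of Zeng et al. Your treatment of the second displayed identity --- rearranging the one-step formula, taking the $M$-block, noting $\beta^\ast_{-M}=0$, and multiplying by the consistent information inverse --- matches the paper step for step.

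There are, however, two concrete errors in the write-up, plus one under-specified step. First, the line ``Starting from $0=n^{1/2}\mathbb{P}_n\ell_{\beta}(\hat\beta^M_{\textrm{full}},\hat\Lambda^M)-n^{1/2}(\lambda/n)(s_M^\top,0)^\top$'' is false: the estimator is defined by a stationarity condition \emph{only} in the $M$ coordinates, so $\mathbb{P}_n\ell_{\beta_{-M}}(\hat\beta^M_{\textrm{full}},\hat\Lambda^M)$ is not constrained to be zero (it is the ``inactive'' score, bounded in $\ell_\infty$ by $\lambda/n$ only on the selection event). Fortunately the first identity of the lemma is an asymptotic expansion of $n^{1/2}\mathbb{P}_n\ell_\beta(\hat\beta^M_{\textrm{full}},\hat\Lambda^M)$ in all $p$ coordinates and does \emph{not} need any score equation in the $-M$ block, so this false premise is not load-bearing --- but you should not invoke it. Second, the asymptotic-equicontinuity claim $n^{1/2}(\mathbb{P}_n-P)\{\ell_\beta(\hat\beta^M_{\textrm{full}},\hat\Lambda^M)-\tilde\ell(\beta^\ast)\}=o_P(1)$ is wrong as stated, since $\ell_\beta(\hat\beta^M_{\textrm{full}},\hat\Lambda^M)\to\ell_\beta(\beta^\ast,\Lambda^\ast)\neq\tilde\ell(\beta^\ast)$ in $L^2(P)$; you must first use $\mathbb{P}_n\ell_\Lambda(\hat\beta^M_{\textrm{full}},\hat\Lambda^M)(h^\ast)=0$ to subtract the nuisance score, and the equicontinuity is then applied to $\ell_\beta(\hat\beta^M_{\textrm{full}},\hat\Lambda^M)-\ell_\Lambda(\hat\beta^M_{\textrm{full}},\hat\Lambda^M)(h^\ast)-\tilde\ell(\beta^\ast)$, whose $L^2$ norm does go to zero. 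Your prose hints that you intend this, but the displayed formula would not survive as written. Finally, ``second-order terms controlled by the product of the $\beta$- and $\Lambda$-rates'' undersells the actual obstacle: the paper first shows $\mathbb{P}_n\ell(\hat\beta^M_{\textrm{full}},\hat\Lambda^M)-\mathbb{P}_n\ell(\beta^\ast,\Lambda^\ast)\ge-O_P(n^{-1})$ (possible because the penalty contributes only $O(n^{-1})$ when $\lambda=O(n^{1/2})$), and then adapts Lemma~A1 of Zeng et al.\ (2016) to get a Hellinger-type $L^2$ bound of order $O_P(n^{-2/3})$ for the $\Lambda$-part, which is what makes the Taylor remainder $O(n^{1/2}\|\hat\beta^M_{\textrm{full}}-\beta^\ast\|_2^2+n^{-1/6})=o_P(1)$. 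You would need to supply this bound explicitly; a generic ``$n^{-1/3}$ NPMLE rate'' is the right intuition but not yet a proof.
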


\begin{theorem}
\label{theorem.interval}
Suppose Conditions C1--C4 of \cite{li2020adaptive} as well
as Conditions \ref{condition.ratio}--\ref{condition.Hessian} hold under the Cox model with interval-censored data. Then
\[F_{\gamma^{\top}\widetilde{\theta}^{M},\gamma^{\top}\mathcal{I}_{M,M}^{-1}\gamma}^{\mathcal{V}^{-},\mathcal{V}^{+}}(\gamma^{\top}\overline{\theta}^M)\mid \left\{ \hat{M}=M,\hat{s}_{\hat{M}}=s_M\right\} \underset{\beta_{n}^{\ast}}{\overset{d}{\longrightsquigarrow}}Unif(0,1).
\]
\end{theorem}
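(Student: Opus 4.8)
The plan is to verify that Theorem \ref{theorem.interval} follows from Theorem \ref{theorem.semi} by checking that the general semiparametric conditions specialize correctly to the Cox model with interval-censored data. Since Theorem \ref{theorem.semi} already packages the hard analytic work, the proof is a sequence of verifications. First I would recall the structure of the Cox model under interval censoring: the model $P_{\beta,\Lambda}$ with log likelihood $\ell(\beta,\Lambda)$ as in \eqref{loglikIC}, and confirm that when $\Lambda$ is held fixed at $\Lambda^{\ast}$, the resulting parametric family indexed by $\beta$ is differentiable in quadratic mean at $\beta^{\ast}=0$. This is a routine smoothness check: the integrand $\exp\{-\Lambda^{\ast}(L)e^{\beta^{\top}X}\}-\exp\{-\Lambda^{\ast}(R)e^{\beta^{\top}X}\}$ is smooth in $\beta$ with derivatives dominated under Conditions C1--C4 of \citet{li2020adaptive} (boundedness of covariates, of $\Lambda^{\ast}$ on the support of the inspection times, and positivity of the likelihood), so the usual sufficient condition for DQM (continuity of the Fisher information in $\beta$, or a dominated-convergence argument on the score) applies.

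\textbf{Next} I would confirm existence of the least favorable direction $h^{\ast}$ at $(\beta^{\ast},\Lambda^{\ast})$. This is exactly the content invoked in \citet[pp.\ 269]{zeng2016maximum} and follows from Conditions C1--C4, which (as noted in the excerpt) are a special case of the regularity conditions in \citet{zeng2017maximum}; the efficient information matrix $\mathcal{I}$ is then well-defined and, by the same references, invertible, so that $\mathcal{I}_{M,M}$ is invertible for any fixed $M$. With DQM and $h^{\ast}$ in hand, it remains only to check Conditions \ref{condition.ratio}--\ref{condition.Donsker.semi}. Conditions \ref{condition.ratio}--\ref{condition.Hessian} are assumed directly in the hypothesis of Theorem \ref{theorem.interval}, so nothing is needed there. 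Condition \ref{condition.one.step.semi} is supplied verbatim by Lemma \ref{lemma.one.step.interval} (both the Taylor-expansion identity \eqref{eq.lasso.taylor}, the asymptotic-efficiency identity \eqref{asymp_eff}, and the central limit theorem $n^{1/2}\mathbb{P}_n\tilde{\ell}(\beta^{\ast})\rightsquigarrow N(0,\mathcal{I})$, the last being immediate once $\tilde{\ell}(\beta^{\ast})$ is square-integrable, which again is Conditions C1--C4). For Condition \ref{condition.Donsker.semi} I would argue, following \citet{zeng2017maximum}, that on a neighborhood of $(\beta^{\ast},\Lambda^{\ast})$ the maps $\beta\mapsto\ell_{\beta}(\beta,\Lambda)$ and $(\beta,\Lambda,h)\mapsto\ell_{\Lambda}(\beta,\Lambda)(h)$ are finite-dimensionally parametrized (the relevant $\Lambda$'s lie in a uniformly bounded class of monotone functions, which is Donsker, and composition/products with the smooth, bounded score components preserves the Donsker property) with square-integrable envelopes under Conditions C1--C4.

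\textbf{Having verified all hypotheses, the conclusion is a direct application of Theorem \ref{theorem.semi}}: the pivot $F_{\gamma^{\top}\widetilde{\theta}^{M},\gamma^{\top}\mathcal{I}_{M,M}^{-1}\gamma}^{\mathcal{V}^{-},\mathcal{V}^{+}}(\gamma^{\top}\overline{\theta}^M)$ conditional on $\{\hat{M}=M,\hat{s}_{\hat{M}}=s_M\}$ converges in distribution, under the local alternative $\beta_n^{\ast}=n^{-1/2}\theta^{\ast}$, to $Unif(0,1)$, with $\mathcal{V}^{\pm}$ computed from the stated $A,b,c,z$. \textbf{The main obstacle} I anticipate is Condition \ref{condition.Donsker.semi}: one must carefully identify the right function classes for $\ell_\beta$ and the nuisance score operator $\ell_\Lambda(\cdot)(\cdot)$, note that the penalized NPMLE $\hat\Lambda$ and the profile quantities $\hat\Lambda^M$ actually fall (eventually, with probability tending to one) into a uniformly bounded Donsker class of monotone step functions, and verify the envelope integrability — this is where the specifics of interval censoring (the two-term difference structure of \eqref{loglikIC} and the discrete support of $\hat\Lambda$) enter, and it is the one step that cannot simply be cited. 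The DQM check and the existence of $h^{\ast}$ are comparatively mechanical given \citet{zeng2016maximum, zeng2017maximum}, and everything else is already assembled in Lemma \ref{lemma.one.step.interval} and the hypotheses.
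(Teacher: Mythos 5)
Your proposal matches the paper's proof almost exactly: reduce to Theorem~\ref{theorem.semi} by checking DQM (with $\Lambda$ fixed at $\Lambda^{\ast}$), existence of $h^{\ast}$, Condition~\ref{condition.one.step.semi} via Lemma~\ref{lemma.one.step.interval}, and Condition~\ref{condition.Donsker.semi} via the Donsker/Glivenko--Cantelli arguments of \citet{zeng2017maximum}, with Conditions~\ref{condition.ratio}--\ref{condition.Hessian} assumed directly. The only thing the paper's proof section itself actually carries out in detail is the proof of Lemma~\ref{lemma.one.step.interval}, which you correctly treat as a separately established ingredient; the reduction logic you describe is exactly the one the paper relegates to the discussion in Section~\ref{section.result.interval}.
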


Theorem \ref{theorem.interval} assumes that we have a consistent estimator for $\mathcal{I}_{M,M}$. We now show that the methods discussed in Sections \ref{section.LS} -- \ref{section.sPRES} yield such consistent estimators under $P_{\beta^{\ast},\Lambda^{\ast}}$. Their consistency under the local alternative $P_{\beta_n^{\ast},\Lambda^{\ast}}$ is a direct result of Le Cam's third lemma, as discussed in Section \ref{section.idea}. Since both PRES and sPRES are numerical methods for estimating  $-\mathbb{P}_nl_{\beta\beta}(\hat{\beta})$, the second derivative of the profile log likelihood evaluated at $\hat{\beta}$, we only prove the consistency of $-\mathbb{P}_nl_{\beta\beta}(\hat{\beta})$. In Theorem \ref{theorem.LS}, we assume that the total variation of the least favorable direction $g^{\ast}$ over the union of the supports of the inspection times is bounded by a known constant $M$, and consequently $\hat{g}_{n}$ is searched for among all the $g$'s in $\mathcal{G}_n$ that have total variation bounded by $M$. This simplifies our proof to a great extent and will not affect the usage in practice, since $g^{\ast}$ has a bounded total variation as shown in the proof of Theorem \ref{theorem.LS} and we can always choose a very large $M$, say, $10^{8}$.

\begin{theorem}
\label{theorem.PRES}
Under $P_{\beta^{\ast},\Lambda^{\ast}}$ and Conditions C1-C4 of \citet{li2020adaptive}, the estimator
$
-\mathbb{P}_nl_{\beta\beta}(\hat{\beta})
$
is consistent for $\mathcal{I}$.
\end{theorem}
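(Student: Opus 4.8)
The plan is to establish that $-\mathbb{P}_n l_{\beta\beta}(\hat\beta)$ converges in probability to $\mathcal{I}$, the efficient information matrix, where $l(\beta)=\ell(\beta,\hat\Lambda_\beta)$ is the profile log likelihood for one subject. The backbone of the argument is the classical theory linking the curvature of the profile likelihood to the efficient information, as developed by \citet{murphy2000profile}: under regularity conditions, the profile log likelihood admits a second-order expansion $\mathbb{P}_n l(\beta) = \mathbb{P}_n l(\hat\beta) - \tfrac{1}{2}(\beta-\hat\beta)^\top \mathcal{I}(\beta-\hat\beta) + o_P(\|\beta-\hat\beta\|^2 + n^{-1})$ in a $n^{-1/2}$-neighborhood of $\hat\beta$, which is exactly what underlies the consistency of the numerical-difference estimator $\hat I_n/n$ in the PL method. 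So the first step is to invoke (or re-derive, using Conditions C1--C4 of \citet{li2020adaptive} and the existence of the least favorable direction $g^{\ast}$ for the Cox model with interval censoring) the ``no-bias'' and stochastic-equicontinuity properties of the profile likelihood: namely that $\beta\mapsto\mathbb{P}_n l(\beta)$ is asymptotically well-approximated by a quadratic with Hessian $-\mathcal{I}$.

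Second, I would make the passage from the quadratic expansion to the pointwise second derivative rigorous. The cleanest route is to write $l_{\beta\beta}(\beta)$ explicitly. Since $\hat\Lambda_\beta=\argmax_\Lambda \mathbb{P}_n\ell(\beta,\Lambda)$, the envelope theorem gives $l_\beta(\beta)=\ell_\beta(\beta,\hat\Lambda_\beta)$ (the nuisance-direction term vanishes at the maximizer, as already noted in the sPRES derivation), and differentiating once more,
\[
l_{\beta\beta}(\beta)=\ell_{\beta\beta}(\beta,\hat\Lambda_\beta)+\ell_{\beta\Lambda}(\beta,\hat\Lambda_\beta)\Big[\tfrac{\partial\hat\Lambda_\beta}{\partial\beta}\Big].
\]
Then one needs: (i) consistency of $\hat\beta$ for $\beta^\ast$ and of $\hat\Lambda_{\hat\beta}=\hat\Lambda$ for $\Lambda^\ast$ in the supremum norm, which follows from \citet{li2020adaptive}; (ii) that $\mathbb{P}_n$ of the relevant second-derivative functionals converges uniformly over a neighborhood to its population counterpart $P_{\beta^\ast,\Lambda^\ast}$, which I would get from a Glivenko--Cantelli argument — the needed classes are close relatives of the $P_{\beta^\ast,\Lambda^\ast}$-Donsker classes $\mathcal{F}_1,\mathcal{F}_2$ appearing in Condition \ref{condition.Donsker.semi} and their consistency is handled in \citet{zeng2017maximum}; and (iii) that the implicitly-defined derivative $\partial\hat\Lambda_\beta/\partial\beta$ converges to the least favorable direction object, so that $P_{\beta^\ast,\Lambda^\ast}l_{\beta\beta}(\beta^\ast) = -\mathcal{I}$. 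Step (iii) is where the least favorable direction enters: the population profile-likelihood Hessian equals minus the efficient information precisely because profiling out $\Lambda$ projects the score onto the orthocomplement of the nuisance tangent space, and $\partial\hat\Lambda_\beta/\partial\beta$ is the sample analogue of the least favorable direction $-g^\ast$.

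Combining, $-\mathbb{P}_n l_{\beta\beta}(\hat\beta) = -P_{\beta^\ast,\Lambda^\ast} l_{\beta\beta}(\beta^\ast) + o_P(1) = \mathcal{I} + o_P(1)$, which is the claim. I expect the main obstacle to be step (iii) — controlling the implicit derivative $\partial\hat\Lambda_\beta/\partial\beta$ of the profiled nuisance estimator and showing it converges (in a strong enough norm) to the least favorable direction, since $\hat\Lambda_\beta$ is a finite-dimensional NPMLE supported on the maximal intersections $(l_k,u_k]$ whose number $m$ grows with $n$. The standard device, following \citet{murphy2000profile} and \citet{xu2014standard}, is to avoid differentiating $\hat\Lambda_\beta$ altogether: instead work with the approximately least favorable submodel $\Lambda_\beta(\cdot) = \hat\Lambda(\cdot) - (\beta-\hat\beta)^\top \int_0^\cdot g^\ast\, d\hat\Lambda$ and show $\mathbb{P}_n\ell((\beta),\hat\Lambda_\beta) = \mathbb{P}_n\ell((\beta),\Lambda_\beta) + o_P(n^{-1}+\|\beta-\hat\beta\|^2)$ uniformly, reducing the second-derivative computation to an explicit parametric one whose Hessian is manifestly $-\mathcal{I}+o_P(1)$. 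This is the technical crux; everything else is bookkeeping with the Donsker/Glivenko--Cantelli machinery already invoked elsewhere in the paper.
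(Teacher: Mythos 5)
Your outline in the second paragraph is essentially the paper's proof: the paper uses the envelope argument $\mathbb{P}_n\ell_\Lambda(\hat\beta,\hat\Lambda)(\dot\Lambda_{\hat\beta})=0$ to write
$\mathbb{P}_n l_{\beta\beta}(\hat\beta)=\mathbb{P}_n\{\ell_{\beta\beta}(\hat\beta,\hat\Lambda)+\ell_{\beta\Lambda}(\hat\beta,\hat\Lambda)(\dot\Lambda_{\hat\beta})\}$,
then invokes Glivenko--Cantelli (via \citet{zeng2017maximum}), plugs in the limits, and finishes with the identity $-P\{\ell_{\beta\beta}-\ell_{\beta\Lambda}(h^\ast)\}=P\{\tilde\ell(\beta^\ast)^{\otimes2}\}=\mathcal{I}$ from \citet{zeng2016maximum}. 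The step you flag as the technical crux, controlling $\partial\hat\Lambda_\beta/\partial\beta$, is in fact handled head-on: the paper shows $\|\dot\Lambda_{\hat\beta}+h^\ast\|_{L_2(\Lambda^\ast)}=o_P(1)$ by adapting the argument for Theorem 3 of \citet{zeng2017maximum}, and this $L_2$-convergence plus the GC property of the class containing $\ell_{\beta\Lambda}(\hat\beta,\hat\Lambda)(\dot\Lambda_{\hat\beta})$ is exactly what licenses replacing $\dot\Lambda_{\hat\beta}$ with $-h^\ast$ in the limit.

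Two caveats about your write-up. First, the opening paragraph about the Murphy--van der Vaart quadratic expansion is not used anywhere in the paper's argument and is not needed; the proof goes directly at the second derivative. Second, your proposed fallback --- replacing $\hat\Lambda_\beta$ by an approximately least favorable submodel $\Lambda_\beta$ and showing $\mathbb{P}_n\ell(\beta,\hat\Lambda_\beta)=\mathbb{P}_n\ell(\beta,\Lambda_\beta)+o_P(n^{-1}+\|\beta-\hat\beta\|^2)$ --- does not yield the statement you need. That kind of second-order agreement of the \emph{functions} in a shrinking neighborhood is what underlies consistency of the \emph{numerical-difference} estimators (the PL method of \citet{murphy2000profile}, the PRES/sPRES Richardson extrapolations of \citet{xu2014standard}), but it says nothing about agreement of the pointwise \emph{second derivatives}, which is what Theorem \ref{theorem.PRES} asserts about $-\mathbb{P}_n l_{\beta\beta}(\hat\beta)$. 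So the fallback would have to be abandoned anyway; the direct control of $\dot\Lambda_{\hat\beta}$ is unavoidable here, and the paper supplies it.

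A small bookkeeping point: the envelope theorem $\mathbb{P}_n\ell_\Lambda(\beta,\hat\Lambda_\beta)(\cdot)=0$ holds only after averaging over $\mathbb{P}_n$; your displayed formula $l_{\beta\beta}(\beta)=\ell_{\beta\beta}(\beta,\hat\Lambda_\beta)+\ell_{\beta\Lambda}(\beta,\hat\Lambda_\beta)[\partial\hat\Lambda_\beta/\partial\beta]$ is not correct pointwise (it drops $\ell_{\Lambda\beta}(\dot\Lambda_\beta)$, $\ell_{\Lambda\Lambda}(\dot\Lambda_\beta,\dot\Lambda_\beta)$, and $\ell_\Lambda(\ddot\Lambda_\beta)$ terms), but it is correct after applying $\mathbb{P}_n$ and differentiating the reduced expression $\mathbb{P}_n\ell_\beta(\beta,\hat\Lambda_\beta)$, which is what the paper actually does.
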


\begin{theorem}
\label{theorem.LS}
Let $\mathcal{G}_{M,n}=\left\{ g\in\mathcal{G}_{n}:V(g)\le M\right\} $ for a large constant $M$ such that $V(g^{\ast})<M$, where $V(g)$ denotes the total variation of $g$ over the union of the supports of the inspection times and $g^{\ast}$ is the least favorable direction, and let \[
\hat{g}_{n}=\min_{g\in\mathcal{G}_{M,n}^{p}}\mathbb{P}_n\bigg\|\ell_{\beta}(\hat{\beta},\hat{\Lambda})-\dot{\ell}_{\Lambda}(\hat{\beta},\hat{\Lambda})(g)\bigg\|^{2}.
\] Suppose Conditions C1--C4 of \cite{li2020adaptive} hold. Then
\[
\mathbb{P}_n\left[\left\{ \ell_{\beta}(\hat{\beta},\hat{\Lambda})-\dot{\ell}_{\Lambda}(\hat{\beta},\hat{\Lambda})(\hat{g}_{n})\right\} ^{\otimes2}\right]
\]
is consistent for $\mathcal{I}$ under $P_{\beta^{\ast},\Lambda^{\ast}}$.
\end{theorem}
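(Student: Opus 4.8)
The plan is to adapt the sieve least-squares argument of \citet{huang2012least} to the interval-censored Cox likelihood. Throughout, write $\phi(\beta,\Lambda,g)=\{\ell_{\beta}(\beta,\Lambda)-\dot{\ell}_{\Lambda}(\beta,\Lambda)(g)\}^{\otimes2}$, so that the proposed estimator is $\hat{I}_{n}/n=\mathbb{P}_{n}\phi(\hat{\beta},\hat{\Lambda},\hat{g}_{n})$ and the target is $\mathcal{I}=P\,\phi(\beta^{\ast},\Lambda^{\ast},g^{\ast})$, with $g^{\ast}$ the least favorable direction in the additive parametrization, i.e. the residual characterization $\ell_{\beta}(\beta^{\ast},\Lambda^{\ast})-\dot{\ell}_{\Lambda}(\beta^{\ast},\Lambda^{\ast})(g^{\ast})=\tilde{\ell}(\beta^{\ast})$. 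A preliminary step is to confirm what the statement already anticipates: using the explicit form of $\dot{\ell}_{\Lambda}$ for the interval-censored Cox likelihood (a bounded linear functional of $g(L_{i}),g(R_{i})$) together with Conditions C1--C4 of \citet{li2020adaptive} (which keep $\Lambda^{\ast}$ bounded away from $0$ and $\infty$ on the observable range and $e^{\beta^{\ast\top}X}$ bounded), $g^{\ast}$ solves a well-posed integral equation whose solution has finite total variation, so $g^{\ast}\in\mathcal{G}_{M}:=\{g:V(g)\le M\}$ for the prescribed large $M$, and in particular the constraint $V(g)\le M$ imposed in the definition of $\hat{g}_{n}$ is harmless.

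The heart of the proof is to show $P\|\dot{\ell}_{\Lambda}(\beta^{\ast},\Lambda^{\ast})(\hat{g}_{n}-g^{\ast})\|^{2}\to0$ in probability, equivalently $\Psi(\hat{g}_{n})\to\inf\Psi$ where $\Psi(g)=P\|\ell_{\beta}(\beta^{\ast},\Lambda^{\ast})-\dot{\ell}_{\Lambda}(\beta^{\ast},\Lambda^{\ast})(g)\|^{2}$ (the identity $\Psi(g)-\inf\Psi=P\|\dot{\ell}_{\Lambda}(\beta^{\ast},\Lambda^{\ast})(g-g^{\ast})\|^{2}$ follows from the orthogonality defining $g^{\ast}$). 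For this I would run a standard argmin-consistency scheme. First, since $\{g:V(g)\le M,\ g(0)=0\}$ is a universal Donsker class on the compact observable range and $\dot{\ell}_{\Lambda}(\beta,\Lambda)(g)$ depends bilinearly and boundedly on $g$ and continuously on $(\beta,\Lambda)$ near $(\beta^{\ast},\Lambda^{\ast})$ (this is essentially $\mathcal{F}_{2}$ of Condition \ref{condition.Donsker.semi} in the $g$-parametrization), one obtains $\sup_{g\in\mathcal{G}_{M,n}^{p}}|\hat{\Psi}_{n}(g)-\Psi(g)|\to0$ in probability, where $\hat{\Psi}_{n}(g)=\mathbb{P}_{n}\|\ell_{\beta}(\hat{\beta},\hat{\Lambda})-\dot{\ell}_{\Lambda}(\hat{\beta},\hat{\Lambda})(g)\|^{2}$: replace $(\hat{\beta},\hat{\Lambda})$ by $(\beta^{\ast},\Lambda^{\ast})$ using their consistency from \citet{li2020adaptive} and the Lipschitz dependence, then replace $\mathbb{P}_{n}$ by $P$ via the Glivenko--Cantelli property of the associated squared class (with square-integrable envelope from $F_{1}$ and $|g|\le M$). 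Second, because the mesh of the maximal intersections $(l_{k},u_{k}]$ shrinks with $n$ under the design conditions, the sieve $\mathcal{G}_{M,n}$ contains step-function approximants $g_{n}$ with $V(g_{n})\le V(g^{\ast})<M$ and $\|g_{n}-g^{\ast}\|_{L_{2}(P)}\to0$, hence $\Psi(g_{n})\to\inf\Psi$. Combining optimality $\hat{\Psi}_{n}(\hat{g}_{n})\le\hat{\Psi}_{n}(g_{n})$ with the uniform approximation gives $\inf\Psi\le\Psi(\hat{g}_{n})\le\hat{\Psi}_{n}(\hat{g}_{n})+o_{P}(1)\le\hat{\Psi}_{n}(g_{n})+o_{P}(1)=\Psi(g_{n})+o_{P}(1)\to\inf\Psi$.

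Given this, I would finish by the decomposition $\hat{I}_{n}/n-\mathcal{I}=\{\mathbb{P}_{n}-P\}\phi(\hat{\beta},\hat{\Lambda},\hat{g}_{n})+P\{\phi(\hat{\beta},\hat{\Lambda},\hat{g}_{n})-\phi(\beta^{\ast},\Lambda^{\ast},\hat{g}_{n})\}+P\{\phi(\beta^{\ast},\Lambda^{\ast},\hat{g}_{n})-\phi(\beta^{\ast},\Lambda^{\ast},g^{\ast})\}$. The first term is $o_{P}(1)$ by the Glivenko--Cantelli property of $\{\phi(\beta,\Lambda,g):(\beta,\Lambda)\in V_{1},g\in\mathcal{G}_{M}\}$, which has an integrable envelope of order $(F_{1}+CM)^{2}$. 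The second is $o_{P}(1)$ by the uniform-in-$g$ Lipschitz continuity of $\phi$ in $(\beta,\Lambda)$ (from the explicit forms and Conditions C1--C4) together with $\hat{\beta}\to\beta^{\ast}$ and $\|\hat{\Lambda}-\Lambda^{\ast}\|_{\infty}\to0$. For the third, writing $u=\tilde{\ell}(\beta^{\ast})$ and $v=\dot{\ell}_{\Lambda}(\beta^{\ast},\Lambda^{\ast})(g^{\ast}-\hat{g}_{n})$ we have $\phi(\beta^{\ast},\Lambda^{\ast},\hat{g}_{n})-\phi(\beta^{\ast},\Lambda^{\ast},g^{\ast})=uv^{\top}+vu^{\top}+vv^{\top}$, and since $\mathrm{tr}\,P[vv^{\top}]=P\|v\|^{2}=\Psi(\hat{g}_{n})-\inf\Psi\to0$ while $P\|u\|^{2}=\mathrm{tr}\,\mathcal{I}<\infty$, an entrywise Cauchy--Schwarz bound yields $P\{\phi(\beta^{\ast},\Lambda^{\ast},\hat{g}_{n})-\phi(\beta^{\ast},\Lambda^{\ast},g^{\ast})\}\to0$. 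Hence $\hat{I}_{n}/n\to\mathcal{I}$ in probability.

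I expect the main obstacle to be the argmin-consistency step, specifically handling three issues simultaneously: the criterion carries the estimated nuisance pair $(\hat{\beta},\hat{\Lambda})$ inside both $\ell_{\beta}$ and $\dot{\ell}_{\Lambda}$, so the uniform convergence $\sup_{g}|\hat{\Psi}_{n}(g)-\Psi(g)|\to0$ must absorb this extra randomness; the feasible set $\mathcal{G}_{M,n}$ is itself random and finite-dimensional, so one must check that the shrinking-mesh sieve genuinely approximates $g^{\ast}$ in $L_{2}(P)$ while remaining inside the total-variation ball (this is where the design conditions enter crucially); and one must establish finite total variation of $g^{\ast}$ from the information operator. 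The Donsker/Glivenko--Cantelli bookkeeping for total-variation-bounded classes and the envelope integrability are, by comparison, routine.
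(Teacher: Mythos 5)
Your plan is essentially the same as the paper's: the paper reduces the statement to Theorem 9.1 of \citet{huang2012least} (an argmin-consistency result for sieve least-squares information estimators) and then verifies its hypotheses (a Glivenko--Cantelli condition, a uniform closeness condition for the nuisance scores in $(\hat{\beta},\hat{\Lambda})$, and the existence of approximants $g_n\in\mathcal{G}_{M,n}^p$ with $g_n\to g^{\ast}$), establishing the shrinking-mesh property of the maximal intersections via two dedicated lemmas and the boundedness of $V(g^{\ast})$ by an appeal to \citet{HuWe1997}. You instead re-derive the substance of HLS Theorem 9.1 in-line via the uniform-convergence-of-criterion-plus-sieve-approximation argument and close with the same Cauchy--Schwarz decomposition, but the ingredients, bookkeeping, and the crucial step you flag but do not fill in (that the random sieve approximates $g^{\ast}$ because the mesh of $\{u_1,\dots,u_m\}$ shrinks under the design conditions) match the paper's verification steps exactly; the paper also uses a Hellinger-rate bound from \citet{zeng2016maximum} where you invoke Lipschitz dependence on $(\hat{\beta},\hat{\Lambda})$, but these accomplish the same goal given that C1--C4 keep the denominators in $\dot{\ell}_{\Lambda}$ bounded away from zero.
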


 \subsection{Results under GLM and the Cox model with right-censored data}
\label{section.glm}
Theorem \ref{theorem.semi} can be adapted to GLM. Again let $l(\beta)$ denote the log likelihood for one subject, $l_{\beta}=\frac{\partial l}{\partial\beta}$ and $l_{\beta_{M}}=\frac{\partial l}{\partial\beta_{M}}$. Let $\mathcal{I}$ be the Fisher information matrix, and $\hat{I}_{n}/n$ be a consistent estimator of $\mathcal{I}$. Similar to equation (\ref{submodel_lasso_eq}), we consider an estimator $\hat{\beta}^{M}$ that solves the following equation,
\begin{equation}\label{eq.one.step.def.3}
 \frac{\partial}{\partial \beta_M}\mathbb{P}_nl((\beta_M,0))=\frac{\lambda}{n}s_M,
\end{equation}
and a corresponding one-step estimator
$
\overline{\beta}^{M}=\hat{\beta}^{M}+\lambda\hat{I}_{M,n}^{-1}s_{M}.
$
Let $\overline{\theta}^{M}=n^{1/2}\overline{\beta}^{M}$ as before. We now present Conditions \ref{condition.one.step.glm} and \ref{condition.Donsker.glm}, which replace Conditions \ref{condition.one.step.semi} and \ref{condition.Donsker.semi} respectively, and then give the asymptotic result for GLM.

\begin{condition}
\label{condition.one.step.glm}
Under $P_{\beta^{\ast}}$ where $\beta^{\ast}=0$, the  estimator for $\beta$, $\hat{\beta}^{M}_{\textrm{full}}=(\hat{\beta}^{M},0)$, satisfies
\begin{equation}\label{condition6-1_glm}
n^{1/2}\mathbb{P}_{n}\left\{ l_{\beta}(\hat{\beta}^{M}_{\textrm{full}})-l_{\beta}(\beta^{\ast})\right\} =-n^{1/2}\mathcal{I}\left(\hat{\beta}^{M}_{\textrm{full}}-\beta^{\ast}\right)+o_{P}(1),
\end{equation}
and the one-step estimator $\overline{\beta}^M$ satisfies
\[
n^{1/2}\left(\overline{\beta}^{M}-\beta_{M}^{\ast}\right)=n^{1/2}\mathcal{I}_{M,M}^{-1}\mathbb{P}_nl_{\beta_{M}}(\beta^{\ast})+o_{P}(1).
\]
In addition, under $P_{\beta^{\ast}}$,
\[
n^{1/2}\mathbb{P}_nl_{\beta}(\beta^{\ast})\stackrel{d}{\longrightsquigarrow}N\left(0,\mathcal{I}\right).
\]
\end{condition}

\begin{condition}
\label{condition.Donsker.glm}
There exists a neighborhood $V$ of $\beta^{\ast}$ such that the class of functions $
\mathcal{F}=\left\{ l_{\beta}(\tilde{\beta}):\tilde{\beta}\in V\right\} ,
$
is $P_{\beta^{\ast}}$-Donsker with a square-integrable envelope function.
\end{condition}

\begin{remark} 
The first part of Condition \ref{condition.one.step.glm} can be proved by standard asymptotic arguments as in Theorem 5.45 of \cite{van2000asymptotic}, and the second part is simply an application of the central limit theorem. Condition \ref{condition.Donsker.glm} can be easily verified for a specific GLM with explicit log likelihood function. In addition, we can simply use the observed information matrix $I(\hat{\beta})$ in place of $\hat{I}_{n}$ in Condition \ref{condition.Hessian}.
\end{remark}

\begin{theorem}
\label{theorem.GLM}
If Conditions \ref{condition.ratio}--\ref{condition.Hessian}, \ref{condition.one.step.glm}, and \ref{condition.Donsker.glm} hold under a specific GLM that is differentiable in quadratic mean at $\beta^{\ast}$, then
\[
F_{\gamma^{\top}\widetilde{\theta}^{M},\gamma^{\top}\mathcal{I}_{M,M}^{-1}\gamma}^{\mathcal{V}^{-},\mathcal{V}^{+}}(\gamma^{\top} \overline{\theta}^{M})\mid \left\{ \hat{M}=M,\hat{s}_{\hat{M}}=s_M\right\} \underset{\beta_{n}^{\ast}}{\overset{d}{\longrightsquigarrow}}Unif(0,1).
\]
\end{theorem}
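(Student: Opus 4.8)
The plan is to obtain Theorem \ref{theorem.GLM} as a direct specialization of Theorem \ref{theorem.semi}. A GLM is a \emph{parametric} model with no infinite-dimensional nuisance component, so under the obvious identifications the efficient score $\tilde{\ell}(\beta^{\ast})$ collapses to the ordinary score $l_{\beta}(\beta^{\ast})$, the efficient information matrix $\mathcal{I}$ becomes the Fisher information, and the least favorable direction together with the nuisance-score operators simply disappear. With these substitutions (replacing $\ell_{\beta}(\hat{\beta}^{M}_{\textrm{full}},\hat{\Lambda}^{M})$ by $l_{\beta}(\hat{\beta}^{M}_{\textrm{full}})$ and $\tilde{\ell}$ by $l_{\beta}$), Conditions \ref{condition.one.step.glm} and \ref{condition.Donsker.glm} are exactly Conditions \ref{condition.one.step.semi} and \ref{condition.Donsker.semi} specialized to the parametric case, while Conditions \ref{condition.ratio}--\ref{condition.Hessian} and differentiability in quadratic mean at $\beta^{\ast}$ are assumed verbatim. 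The first task is therefore to check that every step of the proof of Theorem \ref{theorem.semi} survives these substitutions; since the GLM case is strictly simpler (no nuisance parameter to profile out, no need for a Lemma \ref{lemma.one.step.interval}-type verification), this is essentially bookkeeping rather than new mathematics.

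Concretely, the key steps, in order, are the following. (i) From the first and third displays of Condition \ref{condition.one.step.glm} together with differentiability in quadratic mean, invoke Le Cam's third lemma to transfer the asymptotic linear expansion of $\overline{\beta}^{M}$ and the central limit theorem for $n^{1/2}\mathbb{P}_n l_{\beta}(\beta^{\ast})$ from $P_{\beta^{\ast}}$ to the local alternative $P_{\beta^{\ast}_n}$, yielding the GLM analogues of \eqref{AN_onestepest_local} and \eqref{AN_score_local}; by the same lemma, Condition \ref{condition.Hessian} gives $\hat{I}_{M,n}/n\to \mathcal{I}_{M,M}$ under $P_{\beta^{\ast}_n}$. (ii) Using the KKT conditions for the lasso \eqref{eq.one.step.def.3} together with Condition \ref{condition.lasso.semi} ($\lambda/n^{1/2}\to C$), recast the selection event as in \eqref{selection_event}, namely $\{\hat{M}=M,\hat{s}_{\hat{M}}=s_M\}=\{Z_n+R_{n,1}\in B,\,W_n+R_{n,2}\in D\}$ with $Z_n=n^{1/2}(\overline{\beta}^{M}-\widetilde{\beta}^{M}_n)$, $W_n=n^{1/2}\mathbb{P}_n l_{\beta_{-M}}(\beta^{\ast}_n)-n^{1/2}\mathcal{I}_{-M,M}(\overline{\beta}^{M}-\widetilde{\beta}^{M}_n)$, $B$ the half-space determined by the active constraint, $D$ the region determined by the inactive constraint, and $R_{n,1},R_{n,2}=o_{P}(1)$; the remainders are controlled exactly as in the proof of Theorem \ref{theorem.semi}, using the asymptotic equicontinuity of the empirical process indexed by $\mathcal{F}=\{l_{\beta}(\tilde{\beta}):\tilde{\beta}\in V\}$ (Condition \ref{condition.Donsker.glm}, via the asymptotic-continuity result of \citet{van2000asymptotic} used there) and the $n^{1/2}$-consistency of $\hat{\beta}^{M}$ and $\hat{\beta}$. (iii) Deduce from (i) that $(Z_n,W_n)\rightsquigarrow(Z,W)$ jointly Gaussian with $Z\sim N(0,\mathcal{I}_{M,M}^{-1})$ and $Z$ independent of $W$; independence holds because $W$ is, up to $o_{P}(1)$, the residual of the $(-M)$-block of the limiting Gaussian score after regressing out its $M$-block, so $\mathrm{Cov}(Z,W)=\mathcal{I}_{M,M}^{-1}\mathcal{I}_{M,-M}-\mathcal{I}_{M,M}^{-1}\mathcal{I}_{M,M}\mathcal{I}_{M,M}^{-1}\mathcal{I}_{M,-M}=0$. (iv) Apply the conditional-convergence lemma \eqref{cond_conv}/Lemma \ref{lemma.idea} — valid because Condition \ref{condition.ratio} forces the limiting event $\{Z\in B,W\in D\}$ (hence, by independence, $\{Z\in B\}$) to have positive probability — to obtain $Z_n\mid\{\hat{M}=M,\hat{s}_{\hat{M}}=s_M\}\rightsquigarrow Z\mid\{Z\in B\}$. (v) Finally, apply the polyhedral/truncated-Gaussian pivot result (Theorem 5.2 of \citet{lee2016exact}) with $A,b,c,z$ as in the statement to conclude that $F^{\mathcal{V}^{-},\mathcal{V}^{+}}_{\gamma^{\top}\widetilde{\theta}^{M},\gamma^{\top}\mathcal{I}_{M,M}^{-1}\gamma}(\gamma^{\top}\overline{\theta}^{M})$ conditional on the selection event converges to $Unif(0,1)$; continuity of this pivot in $(\mathcal{V}^{-},\mathcal{V}^{+})$ together with the consistency of $\hat{I}_{M,n}/n$ permits replacing the true $\mathcal{I}_{M,M}$ by its estimate without altering the limit, matching the computation in \eqref{pivot}.

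I expect the main obstacle to be step (ii): verifying that the remainder terms $R_{n,1},R_{n,2}$ produced when the exact lasso KKT conditions are rewritten in terms of $(Z_n,W_n)$ are $o_{P}(1)$ under the local alternative. This requires simultaneously marshalling the asymptotic equicontinuity of the empirical process (Condition \ref{condition.Donsker.glm}), the $n^{1/2}$-rate of the lasso-type estimators (which itself leans on Condition \ref{condition.lasso.semi}), the convergence $\lambda/n^{1/2}\to C$, and the Le Cam transfer from $P_{\beta^{\ast}}$ to $P_{\beta^{\ast}_n}$; none of these ingredients is individually difficult, but assembling them cleanly is where the genuine work lies. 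The only truly GLM-specific points are checking differentiability in quadratic mean (automatic for the standard one-parameter exponential-family GLMs) and verifying Condition \ref{condition.Donsker.glm} for the GLM at hand, which is routine given the explicit form of $l_{\beta}$.
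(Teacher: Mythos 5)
Your steps (i)--(v) reproduce, essentially section by section, the paper's direct proof of Theorem~\ref{theorem.GLM}: Le Cam transfer of the one-step expansion and the CLT to the local alternative, recasting the lasso KKT conditions as active and inactive constraints with $o_{P}(1)$ remainders, joint asymptotic normality with block-diagonal limit covariance (your computation in (iii) is correct), Lemma~\ref{lemma.idea}, and the truncated-Gaussian pivot of \citet{lee2016exact}; so the technical content matches. The framing, however, is inverted relative to the paper, and it is worth understanding why that matters. The paper proves Theorem~\ref{theorem.GLM} \emph{first}, and then proves Theorem~\ref{theorem.semi} by adapting the GLM argument (replacing $l_{\beta}$ by the efficient score $\tilde{\ell}$ and reverifying Le Cam's third lemma and the asymptotic-continuity steps in the semiparametric setting), whereas you propose the reverse: specialize Theorem~\ref{theorem.semi} to GLM. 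Two problems with taking that literally. First, Theorem~\ref{theorem.semi} presupposes a nuisance parameter $\Lambda$ for which a least favorable direction $h^{\ast}$ exists and a Donsker class $\mathcal{F}_2$ of nuisance-score functions (Condition~\ref{condition.Donsker.semi}); a parametric GLM does not literally instantiate that template, so "the conditions collapse" is a heuristic, not a formal specialization. Second, your appeal in step~(ii) to "the proof of Theorem~\ref{theorem.semi}" for controlling the remainders would be circular, since in the paper that proof is itself built on the GLM proof. Fortunately your detailed steps (i)--(v) are self-contained and match the paper's; just drop the "specialization of Theorem~\ref{theorem.semi}" narrative and present the argument as the direct GLM proof that it actually is.
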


We can also prove a similar result for the Cox model with right censored data. Due to the use of partial likelihood for estimating the regression coefficients and the associated asymptotic theory \citep{andersen1982cox}, the Cox model with right-censored data is similar to GLM in terms of the estimation and inference for the regression coefficients, which enables us to adapt the proof of Theorem \ref{theorem.GLM} to Theorem \ref{theorem.right} below.

\begin{theorem}
\label{theorem.right}
Suppose Conditions A--D in \citet{andersen1982cox} as well as Condition \ref{condition.ratio}-- \ref{condition.lasso.semi}  hold under the Cox model with right-censored data. Then
\[
F_{\gamma^{\top}\widetilde{\theta}^{M},\gamma^{\top}\mathcal{I}_{M,M}^{-1}\gamma}^{\mathcal{V}^{-},\mathcal{V}^{+}}(\gamma^{\top}\overline{\theta}^{M})\mid \left\{ \hat{M}=M,\hat{s}_{\hat{M}}=s_M\right\} \underset{\beta_{n}^{\ast}}{\overset{d}{\longrightsquigarrow}}Unif(0,1).
\]
\end{theorem}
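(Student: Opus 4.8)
\textbf{Proof plan for Theorem~\ref{theorem.right}.} The plan is to reproduce the proof of Theorem~\ref{theorem.GLM}, with the GLM log-likelihood replaced by the log partial likelihood $l_n(\beta)$ and the Fisher information replaced by the limiting information matrix $\mathcal{I}$ of the partial-likelihood score (which, for the right-censored Cox model, coincides with the efficient information for $\beta$). Write $U_n(\beta)=\partial l_n(\beta)/\partial\beta$, $\bar U_n(\beta)=n^{-1}U_n(\beta)$, and take $\hat I_n=I_n(\hat\beta)$ to be the observed partial-likelihood information, so the role of $\mathbb{P}_n\tilde\ell$ and $\mathbb{P}_n\ell_\beta$ in the general argument is played here by $\bar U_n$. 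Since Theorem~\ref{theorem.right} assumes only Conditions~\ref{condition.ratio} and \ref{condition.lasso.semi} from our list, the first task is to produce, from Conditions~A--D of \citet{andersen1982cox}, the analogues of Conditions~\ref{condition.Hessian}, \ref{condition.one.step.glm} and \ref{condition.Donsker.glm}. Three facts do this: (i) at $\beta^{\ast}=0$ the martingale central limit theorem gives $n^{1/2}\bar U_n(\beta^{\ast})\stackrel{d}{\longrightsquigarrow}N(0,\mathcal{I})$; (ii) the risk-set averages $S^{(0)}_n,S^{(1)}_n,S^{(2)}_n$ converge uniformly in probability on a neighborhood of $\beta^{\ast}$ to their deterministic limits, whence $-\partial\bar U_n(\beta)/\partial\beta\to\mathcal{I}$ uniformly there and
\[
n^{1/2}\bar U_n\!\left(\beta^{\ast}+n^{-1/2}t\right)=n^{1/2}\bar U_n(\beta^{\ast})-\mathcal{I}t+o_P(1)
\]
uniformly over $t$ in compacts; and (iii) $I_n(\hat\beta)/n\to\mathcal{I}$ in probability, hence $\hat I_{M,n}/n\to\mathcal{I}_{M,M}$. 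Fact (ii) is the replacement for the Donsker/equicontinuity step invoked via Lemma~19.24 of \citet{van2000asymptotic} in the GLM and semiparametric proofs, which is unavailable here because $U_n$ is not an average of i.i.d.\ summands.

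Granting (i)--(iii), I would next establish the Cox-partial-likelihood version of Condition~\ref{condition.one.step.glm}. Solving the penalized restricted score equation $\partial l_n((\beta_M,0))/\partial\beta_M=\lambda s_M$ with $\lambda=O(n^{1/2})$ (Condition~\ref{condition.lasso.semi}), the expansion in (ii) yields $\|\hat\beta^M-\beta^\ast_M\|=O_P(n^{-1/2})$ and $n^{1/2}(\hat\beta^M-\beta^\ast_M)=n^{1/2}\mathcal{I}_{M,M}^{-1}\{\bar U_n(\beta^\ast)\}_M+o_P(1)$; adding the correction $\overline\beta^M=\hat\beta^M+\lambda\hat I_{M,n}^{-1}s_M$ and using (iii) gives the right-censored analogue of \eqref{asymp_eff}, while (i) supplies the joint normality. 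Thus all hypotheses of the general argument are in force under $P_{\beta^{\ast},\Lambda^{\ast}}$ with $\beta^{\ast}=0$, with the DQM-in-$\beta$ requirement (with $\Lambda$ fixed at $\Lambda^\ast$) following from the regularity in A--D.

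The remainder then copies the proof of Theorems~\ref{theorem.semi}--\ref{theorem.GLM}: express the selection event $\{\hat M=M,\hat s_{\hat M}=s_M\}$ through the KKT conditions of $\argmin_\beta -\mathbb{P}_n l_n(\beta)+(\lambda/n)\|\beta\|_1$ in the form \eqref{selection_event}, with $Z_n=n^{1/2}(\overline\beta^M-\widetilde\beta_n^M)$ from the active constraint and $W_n=n^{1/2}\{\bar U_n(\beta_n^\ast)\}_{-M}-n^{1/2}\mathcal{I}_{-M,M}(\overline\beta^M-\widetilde\beta_n^M)$ from the inactive constraint, $R_{n,1},R_{n,2}=o_P(1)$ under the local alternative; show $(Z_n,W_n)\stackrel{d}{\longrightsquigarrow}(Z,W)$ with $Z$ and $W$ independent under $P_{\beta^\ast,\Lambda^\ast}$, the independence being the usual Gaussian partition (for $\xi\sim N(0,\mathcal{I})$ the vectors $\mathcal{I}_{M,M}^{-1}\xi_M$ and $\xi_{-M}-\mathcal{I}_{-M,M}\mathcal{I}_{M,M}^{-1}\xi_M$ are uncorrelated and jointly Gaussian, hence independent); transfer the limit to $\beta_n^\ast=n^{-1/2}\theta^\ast$ by Le Cam's third lemma (which also preserves consistency of $\hat I_{M,n}/n$ under the local alternative, as in Section~\ref{section.idea}); and finally apply Lemma~\ref{lemma.idea}/\eqref{cond_conv} to conclude that $\gamma^\top\overline\theta^M$ conditional on the selection event converges to the Gaussian truncated to $[\mathcal{V}^-,\mathcal{V}^+]$ with $\mathcal{V}^{\pm}$ as in \eqref{V-}--\eqref{V+}, so that $F^{\mathcal{V}^-,\mathcal{V}^+}_{\gamma^\top\widetilde\theta^M,\gamma^\top\mathcal{I}_{M,M}^{-1}\gamma}(\gamma^\top\overline\theta^M)\mid\{\hat M=M,\hat s_{\hat M}=s_M\}\underset{\beta_n^\ast}{\overset{d}{\longrightsquigarrow}}Unif(0,1)$ by the probability-integral transform. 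If one prefers to feed $\hat I_{M,n}/n$ into the truncation limits in place of $\mathcal{I}_{M,M}$, the same conclusion follows by Slutsky provided $\mathcal{V}^-<\mathcal{V}^+$ with probability tending to one, exactly as in the discussion after Theorem~\ref{theorem.semi}.

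\textbf{Main obstacle.} The one genuinely new point is fact (ii): because the partial-likelihood score is not an average of independent summands, the empirical-process tools underlying Conditions~\ref{condition.Donsker.semi}--\ref{condition.Donsker.glm} do not apply, and the crux of the adaptation is to recover the local quadratic expansion of $l_n$ and the joint asymptotic normality needed for the independence of $(Z,W)$ from the martingale central limit theorem together with the uniform-in-$\beta$ law of large numbers for the risk-set processes supplied by Andersen--Gill's Conditions~A--D. Once this translation is in place, every remaining step is routine bookkeeping identical to the GLM case.
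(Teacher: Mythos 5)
Your outline follows the paper's at a high level (circumvent the i.i.d./Donsker arguments via the uniform-in-$\beta$ LLN for the risk-set processes and Andersen--Gill's score expansion $n^{1/2}\mathcal{U}(\beta_n)-n^{1/2}\mathcal{U}(\beta^{\ast})=-n^{1/2}\mathcal{I}(\beta_n-\beta^{\ast})+o_P(1)$, then plug into the GLM machinery), and your Lemma-\ref{lemma.idea}/KKT bookkeeping is correct. But there is a genuine gap in the Le Cam's third lemma step. You assert that "fact (i) supplies the joint normality" and then "transfer the limit... by Le Cam's third lemma," treating this as routine once the marginal martingale CLT for $n^{1/2}\bar U_n(\beta^{\ast})$ is in hand. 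It is not: Le Cam's third lemma requires the \emph{joint} asymptotic normality of $(Z_n,W_n)$ with the log-likelihood ratio $\log(dP^n_{\beta_n^{\ast},\Lambda^{\ast}}/dP^n_{\beta^{\ast},\Lambda^{\ast}})$, and by DQM that log-ratio linearizes to $n^{1/2}\theta^{\ast\top}\mathbb{P}_n\ell_\beta(\beta^{\ast},\Lambda^{\ast})+\text{const}+o_P(1)$ — a sum of i.i.d. full-likelihood scores. Since $\bar U_n(\beta^{\ast})$ is not an i.i.d. average, the martingale CLT alone does not yield joint convergence with $\mathbb{P}_n\ell_\beta(\beta^{\ast},\Lambda^{\ast})$; you need a separate argument linking the partial-likelihood score to an i.i.d. sum.

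This is exactly what the paper's proof supplies and your plan omits: the asymptotic equivalence $n^{1/2}\mathcal{U}(\beta^{\ast})=n^{1/2}\mathbb{P}_{n}\tilde{\ell}(\beta^{\ast})+o_{P}(1)$, where $\tilde{\ell}(\beta^{\ast})$ is the semiparametric efficient score (equation \eqref{1st.thing.to.show}). The paper establishes it by writing $\mathcal{U}(\beta^{\ast})=\mathbb{P}_n\ell_\beta(\beta^{\ast},\hat{\Lambda}_{\beta^{\ast}})$ via the approximately least favorable submodel of \citet{murphy2000profile}, invoking the Donsker class on p.~458 of \citet{van2000asymptotic} and consistency of $\hat{\Lambda}_{\beta^{\ast}}$ (itself obtained from the Andersen--Gill theory for the Breslow estimator). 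So the Donsker/empirical-process machinery is not actually avoided here — it is precisely what furnishes the joint normality needed for Le Cam's third lemma. You correctly identified the non-i.i.d. structure of $U_n$ as an obstacle for the equicontinuity step, but missed that the same obstacle arises — and must be overcome via \eqref{1st.thing.to.show}, not the uniform LLN for $S^{(k)}_n$ — in transferring to the local alternative.
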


\section{Simulations}
\label{section.simulation}
Numerical experiments are conducted to evaluate the finite-sample performance of our method. Our simulation scenario is similar to that of \cite{li2020adaptive}. Covariates are generated from $N_{10}(0,\Sigma)$ where $\Sigma$ is the covariance matrix with pairwise covariance $\Sigma_{ij}=0.2^{|i-j|}$. Failure times are generated from the Cox proportional hazards model with the cumulative hazard function $\Lambda(t|X_{i})=\Lambda^{\ast}(t)\exp(\beta^{\top}X_{i})$, where $\Lambda^{\ast}(t)=(\eta t)^{\kappa}$ is the Weibull cumulative hazard with $\kappa=1.5$ and $\eta=0.5$. We set $\beta_{j}=0$ for $j=3,\dots,8$ and the rest to be nonzero. For the magnitude of $\beta$, we consider two settings: $\beta_{j}=1$ for $j=1,2,9,10$, representing a strong signal setting, and $\beta_{j}=0.5$ for $j=1,2,9,10$, representing a weak signal setting. There are three inspection times generated from $U_{1}\sim Unif(3.2,4.8)$, $U_{2}=U_{1}+Unif(1.5,2.5)$, $U_{3}=U_{2}+Unif(1.5,2.5)$ for $\beta=1$, and $U_{1}\sim Unif(2.4,3.2)$, $U_{2}=U_{1}+Unif(1.5,2.5)$, $U_{3}=U_{2}+Unif(1.5,2.5)$ for $\beta=0.5$. The proportions of right-censored subjects are $31\%$ and $27.6\%$ for $\beta=1$ and $\beta=0.5$ respectively. We use two sample sizes, $n=200$ and $n=400$. Tuning parameters are set as $\lambda=Cn^{1/2}$ or selected via AIC. The former choice corresponds to the theoretical results, while the latter may be preferred in practice. We set $C=8.5$ for $\beta=1$ and $C=14.5$ for $\beta=0.5$. Two hundred Monte Carlo runs are carried out under each setting. In all the simulations, the model selected by lasso contains all the signal covariates.

Recall that our method consists of two steps. The first step is to obtain the one-step estimator, and the second step is to compute the pivot function. The matrix $\hat{I}_{M,n}$ has been used once in each of the two steps, and thus we may use two different $\hat{I}_{M,n}$'s in these two steps respectively. This maintains the asymptotic results as long as both $\hat{I}_{M,n}/n$'s are consistent for the efficient information matrix $\mathcal{I}_{M,M}$. By simulations,  we found that the method using the least squares (LS) approach in the first step did not perform well in finite samples. Thus, we only report the results of simulations using PRES (or sPRES) in
the first step and the LS approach in the second step, abbreviated as ``PRES+LS'' (or ``sPRES+LS''), as well as simulations using PRES (or sPRES) in both steps, simply referred to as ``PRES'' (or ``sPRES'').

Table \ref{tab:1.fixed} and \ref{tab:05.fixed} respectively show the coverage of the proposed confidence interval for $\beta=1$ and $\beta=0.5$ under $\lambda=Cn^{1/2}$ conditional on the event that the model selected by lasso contains all the signal covariates. The nominal coverage is $0.95$. Both PRES+LS and sPRES+LS appear to be conservative but improve with sample size. This is because the LS approach gives a smaller estimate of the efficient information matrix than PRES and sPRES in the simulations. PRES and sPRES are more anti-conservative than their counterparts and already perform well when $n=200$. Following \cite{xu2014standard}, we consider the increments $\epsilon$ to be $10^{-2}$ as well as $10^{-5}$. To explore the robustness of PRES and sPRES to small increments, we also set $\epsilon=10^{-7}$. As can be seen from the tables, none of the methods is sensitive to the choice of $\epsilon$. The simulation results with $\lambda$ selected via AIC are given in the appendix. Though without theory support, the results are satisfactory in terms of empirical coverage and its improvement with the increase of sample size.

In addition, we draw the uniform Q-Q plots of the null $p$-values, whose distribution should be $Unif(0,1)$ asymptotically according to the theory. Figure \ref{fig:1.5.fixed} shows that the distribution of the null $p$-value converges to $Unif(0,1)$ when the sample size increases, for $\lambda=Cn^{1/2}$ and $\epsilon=10^{-5}$ under $\beta=1$ and $\beta=0.5$ respectively. The Q-Q plots in the other settings are given in the appendix.

\begin{table}
\centering
  \caption{Simulation results on the coverage of confidence intervals for $\beta=1, \lambda=Cn^{1/2}$}
  \begin{threeparttable}
    \begin{tabular}{cccccccccc}
    \hline
    \multirow{2}[4]{*}{Coverage} & \multicolumn{4}{c}{$n = 200$} &       & \multicolumn{4}{c}{$n = 400$} \\
\cmidrule{2-5} \cmidrule{7-10}
 & $\beta_1$ & $\beta_2$ & $\beta_9$ & $\beta_{10}$ &       & $\beta_1$ & $\beta_2$ & $\beta_9$ & $\beta_{10}$ \\ \hline
    \multicolumn{4}{l}{\textit{PRES+LS}}      &       &       &       &       &  \\
    $\epsilon=10^{-2}$ & 0.99 & 0.995 & 0.99 & 0.99 &       & 0.965 & 0.975 & 0.95 & 0.975 \\
    $\epsilon=10^{-5}$ & 0.985 & 0.995 & 0.99 & 0.985 &       & 0.965 & 0.975 & 0.95 & 0.975 \\
    $\epsilon=10^{-7}$ & 0.985 & 0.995 & 0.99 & 0.985 &       & 0.965 & 0.975 & 0.95 & 0.975 \\[5pt]
    \multicolumn{4}{l}{\textit{sPRES+LS}} &       &       &       &       &  \\
    $\epsilon=10^{-2}$ & 0.99 & 0.995 & 0.99 & 0.99 &       & 0.965 & 0.975 & 0.95 & 0.975 \\
    $\epsilon=10^{-5}$ & 0.99 & 0.995 & 0.99 & 0.99 &      & 0.965 & 0.975 & 0.95 & 0.975 \\
    $\epsilon=10^{-7}$ & 0.99 & 0.995 & 0.99 & 0.99 &       & 0.965 & 0.975 & 0.95 & 0.975 \\[5pt]
    \multicolumn{4}{l}{\textit{PRES}} &       &       &       &       &  \\
    $\epsilon=10^{-2}$ & 0.96 & 0.955 & 0.935 & 0.945 &       & 0.95 & 0.94 & 0.92 & 0.935 \\
    $\epsilon=10^{-5}$ & 0.96 & 0.955 & 0.925 & 0.945 &       & 0.955 & 0.94 & 0.925 & 0.935 \\
    $\epsilon=10^{-7}$ & 0.965 & 0.955 & 0.925 & 0.945 &       & 0.955 & 0.94 & 0.925 & 0.935 \\[5pt]
    \multicolumn{4}{l}{\textit{sPRES}} &       &       &       &       &  \\
    $\epsilon=10^{-2}$ & 0.96 & 0.955 & 0.935 & 0.945 &        & 0.95 & 0.94 & 0.92 & 0.935 \\
    $\epsilon=10^{-5}$ & 0.96 & 0.955 & 0.935 & 0.945 &        & 0.955 & 0.94 & 0.925 & 0.935 \\
    $\epsilon=10^{-7}$ & 0.96 & 0.955 & 0.93 & 0.945 &        & 0.955 & 0.94 & 0.925 & 0.935 \\
    \hline
    \end{tabular}%
  \end{threeparttable}
  \label{tab:1.fixed}%
\end{table}

\begin{table}
\centering
  \caption{Simulation results on the coverage of confidence intervals for $\beta=0.5, \lambda=Cn^{1/2}$}
  \begin{threeparttable}
    \begin{tabular}{cccccccccc}
    \hline
    \multirow{2}[4]{*}{Coverage} & \multicolumn{4}{c}{$n = 200$} &       & \multicolumn{4}{c}{$n = 400$} \\
\cmidrule{2-5} \cmidrule{7-10}
 & $\beta_1$ & $\beta_2$ & $\beta_9$ & $\beta_{10}$ &       & $\beta_1$ & $\beta_2$ & $\beta_9$ & $\beta_{10}$ \\ \hline
    \multicolumn{4}{l}{\textit{PRES+LS}}      &       &       &       &       &  \\
    $\epsilon=10^{-2}$ & 0.955 & 0.965 & 0.975 & 0.98 &       & 0.975 & 0.955 & 0.94 & 0.97 \\
    $\epsilon=10^{-5}$ & 0.955 & 0.965 & 0.975 & 0.98 &       & 0.975 & 0.955 & 0.93 & 0.97 \\
    $\epsilon=10^{-7}$ & 0.955 & 0.965 & 0.975 & 0.98 &       & 0.97 & 0.955 & 0.93 & 0.97 \\[5pt]
    \multicolumn{4}{l}{\textit{sPRES+LS}} &       &       &       &       &  \\
    $\epsilon=10^{-2}$ & 0.955 & 0.965 & 0.975 & 0.98 &       & 0.975 & 0.955 & 0.94 & 0.97 \\
    $\epsilon=10^{-5}$ & 0.96 & 0.965 & 0.975 & 0.98 &      & 0.98 & 0.955 & 0.945 & 0.97 \\
    $\epsilon=10^{-7}$ & 0.96 & 0.97 & 0.975 & 0.98 &       & 0.98 & 0.955 & 0.945 & 0.97 \\[5pt]
    \multicolumn{4}{l}{\textit{PRES}} &       &       &       &       &  \\
    $\epsilon=10^{-2}$ & 0.945 & 0.94 & 0.96 & 0.965 &       & 0.94 & 0.96 & 0.93 & 0.965 \\
    $\epsilon=10^{-5}$ & 0.935 & 0.925 & 0.965 & 0.96 &      & 0.94 & 0.95 & 0.93 & 0.965 \\
    $\epsilon=10^{-7}$ & 0.935 & 0.925 & 0.96 & 0.965 &       & 0.93 & 0.95 & 0.93 & 0.965 \\[5pt]
    \multicolumn{4}{l}{\textit{sPRES}} &       &       &       &       &  \\
    $\epsilon=10^{-2}$ & 0.945 & 0.94 & 0.96 & 0.965  &        & 0.94 & 0.96 & 0.93 & 0.965 \\
    $\epsilon=10^{-5}$ & 0.945 & 0.945 & 0.96 & 0.97  &       & 0.945 & 0.96 & 0.935 & 0.97 \\
    $\epsilon=10^{-7}$ & 0.945 & 0.945 & 0.965 & 0.96  &        & 0.945 & 0.96 & 0.935 & 0.965 \\
    \hline
    \end{tabular}%
  \end{threeparttable}
  \label{tab:05.fixed}%
\end{table}


\begin{figure}[htb]
\centering
  \subfloat[$n=200$]{%
    \includegraphics[scale = 0.8]{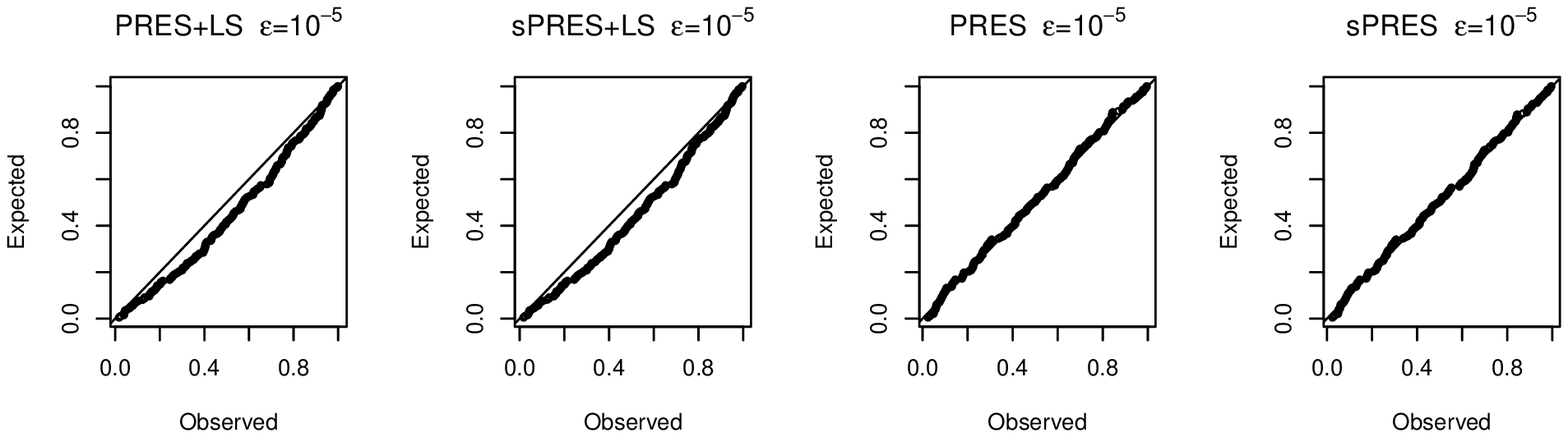}}\\
  \subfloat[$n=400$]{%
    \includegraphics[scale = 0.8]{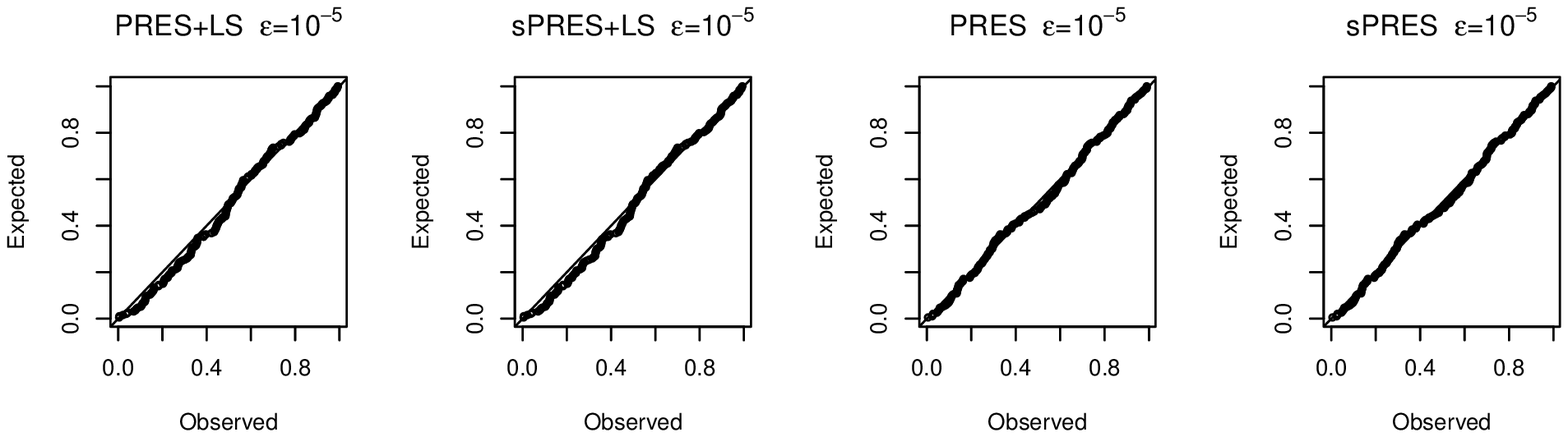}}
  \caption{Uniform Q-Q plots of the null $p$-values for $\lambda=Cn^{1/2}$ and $\epsilon=10^{-5}$ under $\beta=1$.}\label{fig:1.5.fixed}
\end{figure}

\begin{figure}[htb]
\centering
  \subfloat[$n=200$]{%
    \includegraphics[scale = 0.8]{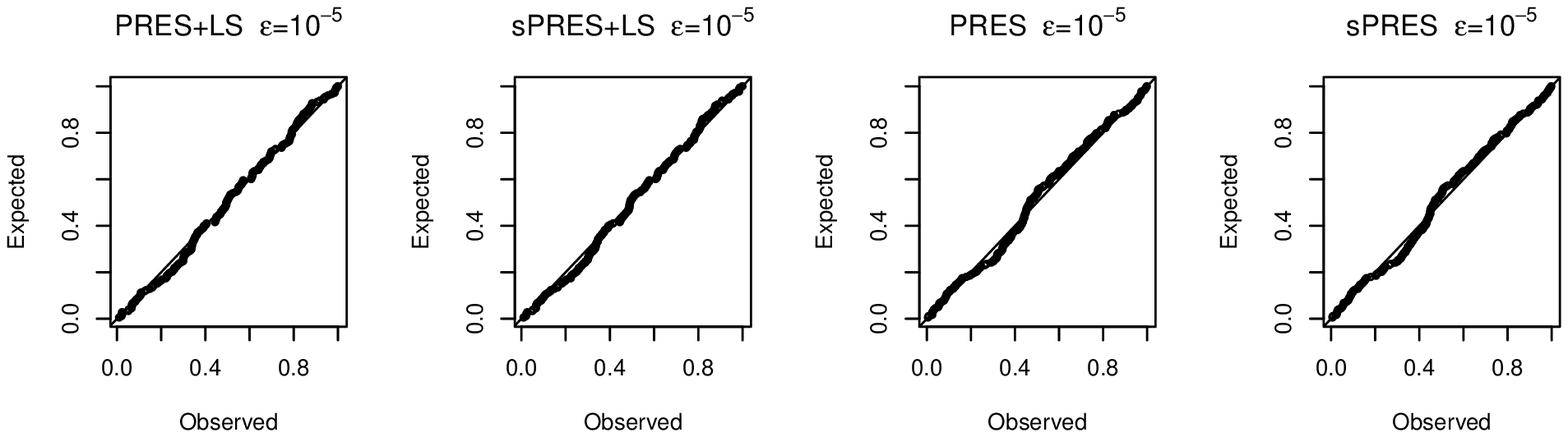}}\\
  \subfloat[$n=400$]{%
    \includegraphics[scale = 0.8]{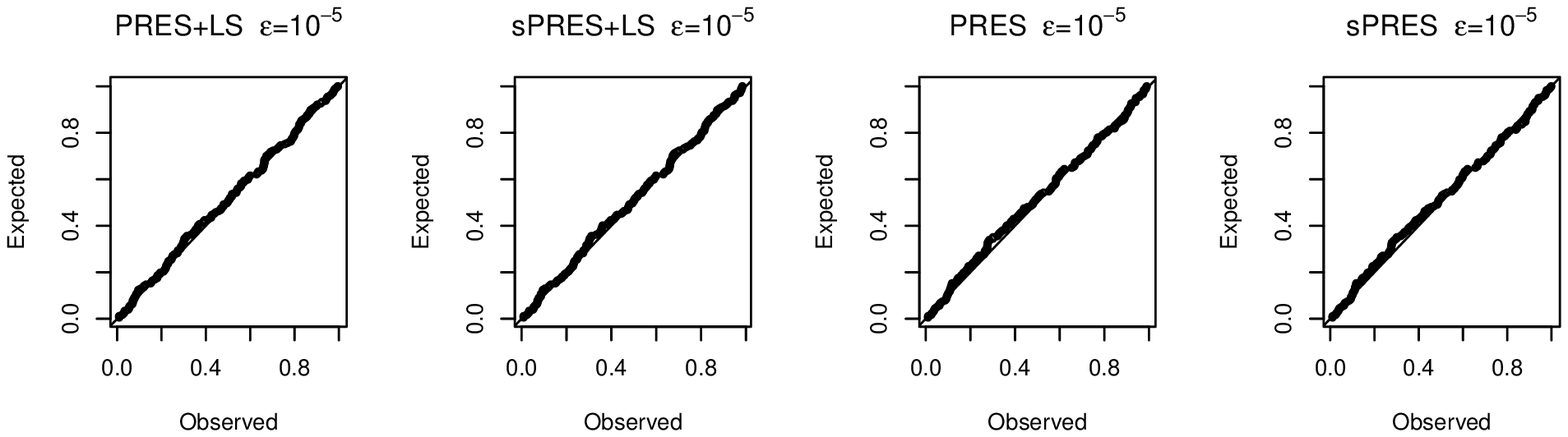}}
  \caption{Uniform Q-Q plots of the null $p$-values for $\lambda=Cn^{1/2}$ and $\epsilon=10^{-5}$ under $\beta=0.5$.}\label{fig:05.5.fixed}
\end{figure}

\section{Application}
\label{section.real.data}
To illustrate the utility of our method, we apply our method to the ADNI data mentioned in Section \ref{section.intro.intro}. ADNI is a multi-phase (ADNI1, ADNI GO, ADNI2, and ADNI3) longitudinal study with the primary goal of testing whether neuroimaging, biochemical, and genetic biomarkers, as well as clinical and
neuropsychological assessment, can be combined to measure the progression of mild
cognitive impairment (MCI) and early Alzheimer’s disease.
See \url{www.adni-info.org} for up-to-date information. Similar to \citet{li2017prediction,li2020penalized}, we aim to identify significant predictors of time from baseline visit to AD onset in ADNI1 participants who had MCI at study entry. Due to the periodic assessment of the participants' cognition status in the ADNI study, the response is interval-censored. Some ADNI subjects were diagnosed as AD at a study exam but classified as MCI at a later one. Considering that the AD development is an irreversible process, we regard all the diagnosis results of AD before a MCI as misclassifications, and correct them to MCI. Therefore, our analytic cohort consists of the 381 ADNI1 participants who were diagnosed as MCI in at least one study exam and had a baseline stage of MCI or AD.

We choose the covariates for the survival regression from the 24 covariates considered in \citet{li2020penalized}.
Different from \citet{li2020penalized}, we do not consider Alzheimer\textquoteright s Disease Assessment Scale scores of 11 items (ADAS11), delayed word recall score in ADAS (ADASQ4), the total number of words that were forgotten in the RAVLT delayed memory test (RAVLT.f) as well as whole brain volume (WholeBrain) in our analysis, because all of them have strong correlations (larger than 0.7) with at least one of the other covariates and none of these four covariates were selected by any variable selection method in \citet{li2020penalized}. Thus, we consider 20 covariates in the survival regression. They are gender (1 for male and 0 for female), marital status (MaritalStatus, 1 for married and 0 otherwise), baseline age, years of receiving education (PTEDUCAT), apolipoprotein E $\epsilon$4 (APOE$\epsilon$4) genotype, Alzheimer\textquoteright s Disease Assessment Scale scores of 13 items (ADAS13), Clinical Dementia Rating Scale Sum of Boxes score (CDRSB), mini-mental state examination score (MMSE), Rey auditory verbal learning test score of immediate recall (RAVLT.i), learning ability (RAVLT.l), the percentage of words that were forgotten in the RAVLT delayed memory test (RAVLT.perc.f), digit symbol substitution test score (DIGITSCOR), trails B score (TRABSCOR), functional assessment questionnaire score (FAQ), and volumes of ventricles, hippocampus, entorhinal, fusiform gyrus (Fusiform), and middle temporal gyrus (MidTemp), as well as intracerebral volume (ICV). We then remove the subjects with missing values in the 20 covariates, resulting in a sample size of 300. To apply our method, we use the values of the covariates at the baseline examination.

The results of our method using sPRES and $\epsilon=10^{-5}$ are summarized in Table \ref{tab:adni.sPRES+LS.1e-5.aic}. We do not choose PRES+LS or sPRES+LS here since their null $p$
-values in simulations are excessively conservative when the tuning parameter $\lambda$ is chosen via AIC. By contrast, PRES and sPRES display a minor tendency toward conservatism and more power under these circumstances (See Figure \ref{fig:1.2.aic}--\ref{fig:05.7.aic} in Section \ref{section.additional.sim}). Replacing sPRES with PRES or using the other increments investigated in this paper has limited impact on the results, and does not alter the significance of covariates. MaritalStatus, PTEDUCAT and TRABSCOR were not selected by lasso with the penalty parameter chosen based on AIC. So these covariates are not included in the further inference. However, only APOE$\epsilon$4, RAVLT.i, FAQ and MidTemp are significantly associated with the time to AD at the 0.05 level according to the post-lasso inference, which have also been consistently identified as significant predictors for conversion time from MCI to AD in previous studies utilizing the Cox model \citep{li2020penalized, du2022variable}. Figures \ref{fig:adni.p.value} and \ref{fig:adni.CI} compare respectively the $p$-values and confidence intervals between our method and the naive method, i.e., fitting a Cox model with the selected covariates. Same as the data analysis result in \citet{taylor2018post}, the naive method leads to a much smaller $p$-value most of the times, and the post-selection inference method results in wider confidence intervals.

\begin{table}

  \centering
  \caption{Results of the analysis of the ADNI dataset, where sPRES and $\epsilon=10^{-5}$ are used.}
  \begin{threeparttable}
    {\begin{tabular}{cccccc}
   \hline

    Covariate & Lasso & One-step & Low & Upp & $p$-value\\ \hline
    %
    Gender & -0.182733 & -0.183227 & -0.677409 & 1.229962 & 0.964657 \\
    MaritalStatus & - & - & - & - & -\\
    Age & -0.036592 & -0.036614 & -0.070081 & 0.041117 & 0.329114 \\
    PTEDUCAT & - & - & - & - & -\\
    APOE$\epsilon$4 & 0.341211 & 0.341332 & 0.053582 & 0.605883 & 0.022514 \\
    ADAS13 & 0.030229 & 0.030222 & -0.033369 & 0.079605 & 0.315359 \\
    CDRSB & 0.057736 & 0.057773 & -1.139812 & 0.409555 & 0.692216 \\
    MMSE & -0.065720 & -0.065763 & -0.189209 & 0.145961 & 0.528018 \\
    RAVLT.i & -0.061683 & -0.061722 & -0.094256 & -0.026042 & 0.001870 \\
    RAVLT.l & 0.066113 & 0.066246 & -0.107643 & 0.170389 & 0.423770 \\
    RAVLT.perc.f & 0.004058 & 0.004061 & -0.009655 & 0.011334 & 0.555652 \\
    DIGITSCOR & -0.007800 & -0.007803 & -0.055722 & 0.044346 & 0.759399 \\
    TRABSCOR & - & - & - & - & -\\
    FAQ & 0.056492 & 0.056502 & 0.005882 & 0.173402 & 0.031311 \\
    Ventricles &0.000001 & 0.000001 & -0.000114 & 0.000013 & 0.329765 \\
    Hippocampus & -0.000132 & -0.000132 & -0.000852 & 0.000425 & 0.542652 \\
    Entorhinal  & -0.000207 & -0.000207 & -0.000573 & 0.000403 & 0.482845 \\
    Fusiform & -0.000027 & -0.000027 & -0.000352 & 0.000474 & 0.929476 \\
    MidTemp   & -0.000157 & -0.000157 & -0.000300 & -0.000063 & 0.001516 \\
    ICV & 0.000002 & 0.000002 & -0.000001 & 0.000016 & 0.157096 \\ [5pt]
   \hline
    \end{tabular}}%
    \begin{tablenotes}%
    \item Lasso, the lasso estimate of the regression coefficient; One-step, the coefficient estimate from the proposed one-step estimator; Low, the lower endpoint of the post-lasso confidence interval; Upp, the upper endpoint of the post-lasso confidence interval; $p$-value, the two-sided $p$-value for testing $H_0: \beta_j^{\ast}=0$ $(j\in\hat{M})$ based on \eqref{pivot}.
  \end{tablenotes}%
  \end{threeparttable}
\label{tab:adni.sPRES+LS.1e-5.aic}
\end{table}

\begin{figure}
\centering
\includegraphics[scale=0.9]{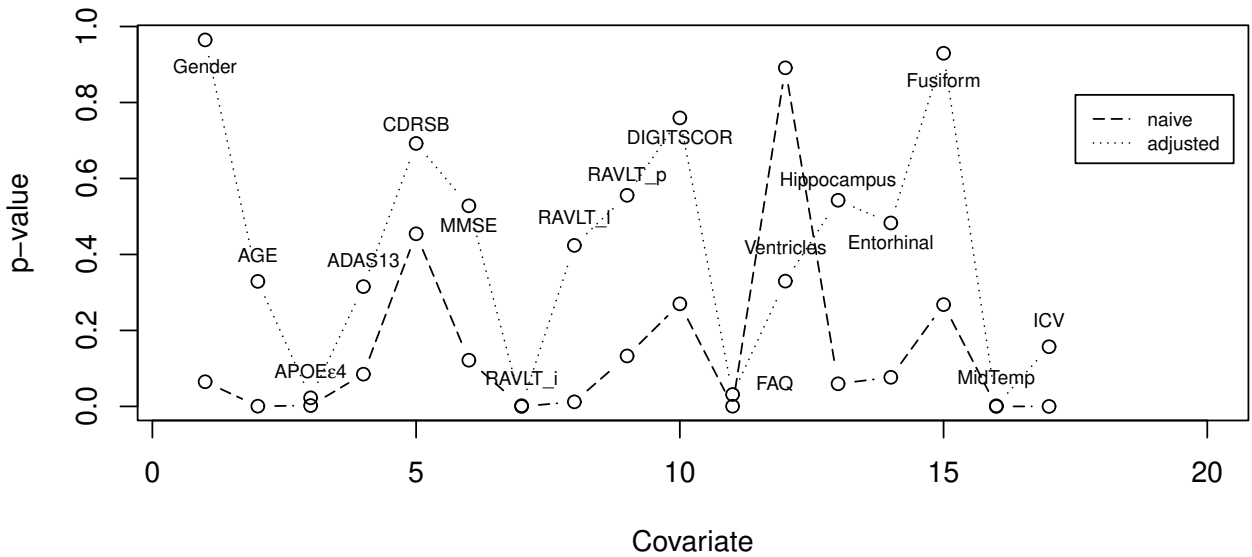}
\caption{Comparison of the $p$-values between the naive method and the method adjusting for the variable selection process in the analysis of the ADNI dataset.} \label{fig:adni.p.value}
\end{figure}

\begin{figure}
\centering
\includegraphics[scale=0.9]{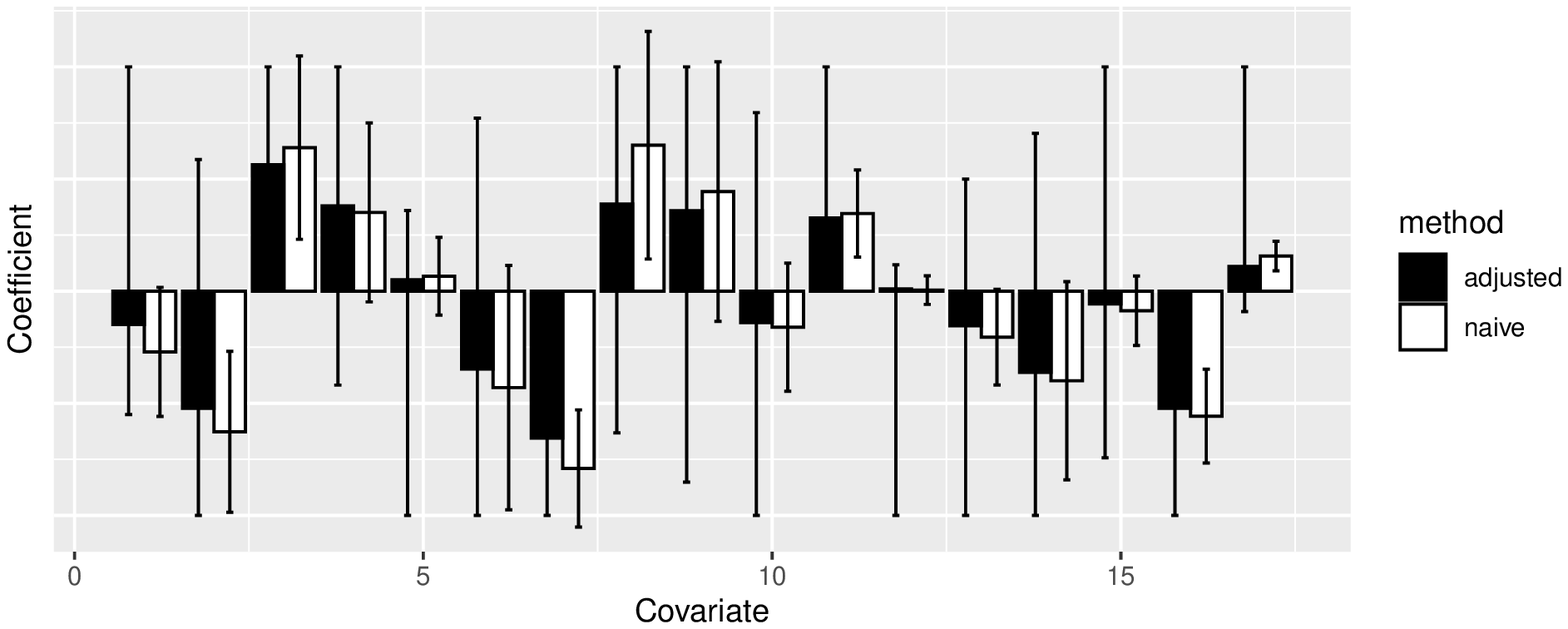}
\caption{Comparison of the confidence intervals between the naive method and the method adjusting for the variable selection process in the analysis of the ADNI dataset. The vertical lines represent the confidence intervals, and the bars represent the coefficient estimates, both of which are scaled appropriately for each covariate so that all the within-covariate differences are apparent in the same figure. } \label{fig:adni.CI}
\end{figure}

\section{Discussion}

We extend \citeauthor{taylor2018post}'s \citeyearpar{taylor2018post} post-selection conditional inference method to the Cox model with interval-censored data. A major contribution of this article is that it provides a rigorous and general framework to establish asymptotic theory of post-selection conditional inference, without assuming that the selected model includes all the signal covariates. Besides the Cox model with interval-censored data, we establish the asymptotic results for generalized linear models and the Cox model with right-censored data using the framework. We believe that our framework can be applied to many other parametric and semiparametric models, e.g., \cite{zeng2016maximum,zeng2017maximum,mao2017semiparametric,gao2019semiparametric,zeng2021maximum,gu2022maximum}, as long as the semiparametric efficiency of the one-step estimator, namely, a similar condition to Condition \ref{condition.one.step.semi}, can be verified. The verification would be similar to establishing the semiparametric efficiency of the nonparametric maximum likelihood estimator, as revealed in the proof of Lemma \ref{lemma.one.step.interval}.

This article considers post-selection conditional inference for the Cox model with interval-censored data under the traditional ``fixed $p$ large $n$'' setup. High-dimensional selective inference is a topic requiring more efforts. \citet{tian2017asymptotics} proposed a post-selection inference procedure similar to \citeauthor{lee2016exact}'s \citeyearpar{lee2016exact} for high-dimensional linear models and established the associated asymptotic results. They also developed a test for the global null hypothesis in high-dimensional generalized linear models. Nevertheless, selective inference in high-dimensional survival models has not been explored. It is noteworthy that our work is based on Le Cam's third lemma, which is restricted to the low-dimensional setting. Even
though \citet{jankova2018semiparametric} considered extending Le Cam's lemmas to high-dimensional models, it is still highly nontrivial to adapt their work to the Cox model and to construct a semiparametric efficient one-step estimator.

Unconditional inference on high-dimensional survival models is also a more challenging problem. Much work has been done on the models with right-censored data. For instance, \citet{zhong2015tests} investigated inference on the high-dimensional additive hazards model.
\citet{fang2017testing} studied inference on the high-dimensional Cox model. \citet{ning2017general} provided a general framework for inference on high dimensional models, which includes the additive hazards model as an illustrative example. \citet{chai2019inference} filled the gap for the accelerated failure time model. However, inference on high-dimensional survival models with interval-censored data has not been studied. It is worthwhile to work in this direction as the research results will facilitate genetic association studies of chronic diseases such as AD, diabetes and dental caries.

\section*{Declaration of the use of generative AI and AI-assisted technologies}

During the preparation of this work the authors used ChatGPT 3.5 in order to polish the manuscript for linguistic expression. After using this service the authors reviewed and edited the content as necessary and take full responsibility for the content of the publication.

\section*{Acknowledgements}

Jianrui Zhang and Dr. Li were supported in part by National Institutes of Health (R56AG075803). Dr. Weng was partially supported by the National Science Foundation (DMS-1915099). The data used in preparation of this article were obtained from the Alzheimer’s Disease Neuroimaging Initiative (ADNI) database (\url{adni.loni.ucla.edu}). As such, the investigators within the ADNI contributed to the design and implementation of ADNI and/or provided data but did not participate in the analysis or writing of this report. A complete listing of ADNI investigators can be found at \url{https://adni.loni.usc.edu/wp-content/uploads/how_to_apply/ADNI_Acknowledgement_List.pdf}.


\FloatBarrier

\section{Appendix}
\subsection{The order of the proofs}
It is easier to prove the asymptotic result for GLMs than for the general semiparametric model considered in Theorem \ref{theorem.semi}, and the proof for the latter can be built on that for the former. Therefore, we first prove the result for GLMs (Theorem \ref{theorem.GLM}), followed by the proof for the general semiparametric model (Theorem \ref{theorem.semi}). The proof of the asymptotic result for the Cox model with interval-censored data (Theorem \ref{theorem.interval}) builds upon Theorem \ref{theorem.semi} and thus is presented after it, along with the proof for the consistency of the proposed information estimators under interval censoring (Theorems \ref{theorem.PRES} and \ref{theorem.LS}). Then we give the proof for the Cox model with right censored data (Theorem \ref{theorem.right}), by adapting the proof of Theorem \ref{theorem.GLM}. Finally, We present two theorems regarding $\widetilde{\theta}^{M}=\theta_{M,n}^{M}+o(1)$.

\subsection{Proof of Theorem \ref{theorem.GLM}}
We first provide and prove the formal statement of \eqref{cond_conv} mentioned in Section \ref{section.idea}. Then we divide the proof of Theorem \ref{theorem.GLM} into several parts: deriving the asymptotic distributions of the one-step estimator and the score function under the local alternative, tackling the KKT conditions for the lasso, writing the active and inactive constraints in the KKT conditions into desired forms respectively, deducing the asymptotic independence between the active and the inactive constraints, and concluding the proof.

\begin{lemma}
\label{lemma.idea}
Suppose we have two sequences of random vectors $Z_{n}$ and $W_{n}$ such that $(Z_{n},W_{n})\stackrel{d}{\longrightsquigarrow}(Z,W)$. If $(Z,W)$ has a continuous joint distribution, $Z$ and $W$ are independent, $R_{n,1}=o_{P}(1)$, $R_{n,2}=o_{P}(1)$, and
\[
\liminf_{n\to\infty}\text{P}\left(Z_{n}+R_{n,1}\in B,W_{n}+R_{n,2}\in D\right)>0,
\]
then
\[
Z_{n}\mid \left\{ Z_{n}+R_{n,1}\in B,W_{n}+R_{n,2}\in D\right\} \stackrel{d}{\longrightsquigarrow}Z\mid \left\{ Z\in B\right\} .
\]
\end{lemma}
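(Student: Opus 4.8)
The plan is to reduce the claim to the portmanteau theorem after absorbing the $o_P(1)$ perturbations via Slutsky's lemma, in the spirit of the heuristic in Section \ref{section.idea}. The first step is to upgrade the joint weak convergence of $(Z_n,W_n)$ to
\[
(Z_n,\ Z_n+R_{n,1},\ W_n+R_{n,2})\ \overset{d}{\longrightsquigarrow}\ (Z,Z,W).
\]
Indeed, since $R_{n,1}=o_P(1)$ and $R_{n,2}=o_P(1)$, the joint convergence $(Z_n,W_n)\overset{d}{\longrightsquigarrow}(Z,W)$ yields $(Z_n,W_n,R_{n,1},R_{n,2})\overset{d}{\longrightsquigarrow}(Z,W,0,0)$ (convergence in probability to a constant combines with convergence in distribution), and applying the continuous map $(z,w,r_1,r_2)\mapsto(z,\,z+r_1,\,w+r_2)$ gives the display by the continuous mapping theorem.

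Next, for fixed $t$ I would express the CDF of $Z_n$ conditional on the event $\{Z_n+R_{n,1}\in B,\ W_n+R_{n,2}\in D\}$ as the ratio
\[
\frac{\text{P}\!\left(Z_n\le t,\ Z_n+R_{n,1}\in B,\ W_n+R_{n,2}\in D\right)}{\text{P}\!\left(Z_n+R_{n,1}\in B,\ W_n+R_{n,2}\in D\right)},
\]
and treat numerator and denominator separately. Each is the probability that $(Z_n,\,Z_n+R_{n,1},\,W_n+R_{n,2})$ lands in a product set imposing a coordinatewise upper bound $t$ on the first block, membership in $B$ on the second, and membership in $D$ on the third (the bound being dropped in the denominator). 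Provided $B$ and $D$ are continuity sets of the laws of $Z$ and $W$ respectively, and using that the atomless marginal of $Z$ assigns zero probability to the hyperplanes $\{z_j=t_j\}$, the topological boundary of each such product set carries zero mass under the law of $(Z,Z,W)$; the portmanteau theorem then gives
\[
\text{P}\!\left(Z_n+R_{n,1}\in B,\ W_n+R_{n,2}\in D\right)\ \to\ \text{P}(Z\in B)\,\text{P}(W\in D)
\]
and
\[
\text{P}\!\left(Z_n\le t,\ Z_n+R_{n,1}\in B,\ W_n+R_{n,2}\in D\right)\ \to\ \text{P}(Z\le t,\ Z\in B)\,\text{P}(W\in D),
\]
both factorizations using the independence of $Z$ and $W$. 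In the intended application this continuity-set requirement is harmless: $B$ and $D$ are the boxes/orthants carved out by the active and inactive KKT inequalities and the limiting laws are nondegenerate Gaussians, so $\text{P}(Z\in\partial B)=\text{P}(W\in\partial D)=0$.

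Finally I would combine the two limits. The hypothesis $\liminf_n\text{P}(Z_n+R_{n,1}\in B,\,W_n+R_{n,2}\in D)>0$ forces $\text{P}(Z\in B)\,\text{P}(W\in D)>0$, so $\text{P}(Z\in B)>0$ and the limit law $Z\mid\{Z\in B\}$ is well defined; dividing the numerator limit by the denominator limit, the common factor $\text{P}(W\in D)$ cancels and the ratio tends to $\text{P}(Z\le t,\,Z\in B)/\text{P}(Z\in B)=\text{P}(Z\le t\mid Z\in B)$. Since the marginal of $Z$ is atomless this limiting CDF is continuous in $t$, so pointwise convergence for every $t$ is convergence in distribution, and $Z_n\mid\{Z_n+R_{n,1}\in B,\,W_n+R_{n,2}\in D\}\overset{d}{\longrightsquigarrow}Z\mid\{Z\in B\}$ follows. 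The step I expect to need the most care is this boundary condition in the portmanteau argument — one must ensure $\text{P}(Z\in\partial B)=\text{P}(W\in\partial D)=0$, since for a general Borel $B$ the conditional law need not converge — together with the (hypothesized but essential) use of the independence of the limits $Z$ and $W$ rather than merely the individual weak limits.
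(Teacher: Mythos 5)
Your proof is correct, and it tracks the paper's argument closely but takes a slightly different technical route. The paper applies Slutsky only to the pair $(Z_n+R_{n,1},\,W_n+R_{n,2})\overset{d}{\longrightsquigarrow}(Z,W)$ and then passes from $\text{P}(Z_n+R_{n,1}\le t,\ldots)$ to $\text{P}(Z_n\le t,\ldots)$ by an explicit $\epsilon$-$\delta$ estimate: it shows $\text{P}(Z_n\le t,\,Z_n+R_{n,1}>t)\to 0$ and $\text{P}(Z_n>t,\,Z_n+R_{n,1}\le t)\to 0$ by bounding by $\text{P}(R_{n,1}>\delta)+\text{P}(t-\delta\le Z_n\le t)$ and sending $\delta\to 0$ via continuity of the marginal of $Z$. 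You instead lift to the triple $(Z_n,\,Z_n+R_{n,1},\,W_n+R_{n,2})\overset{d}{\longrightsquigarrow}(Z,Z,W)$ via the continuous mapping theorem and apply portmanteau once to the product set $(-\infty,t]\times B\times D$. These two mechanisms accomplish exactly the same thing — the paper's $\epsilon$-$\delta$ argument is precisely what absorbs the contribution of the hyperplane $\{z=t\}$, which in your version is folded into the continuity-set verification for $(Z,Z,W)$ on the product set. One point where your write-up is actually more careful than the paper's: both proofs tacitly require $\text{P}(Z\in\partial B)=\text{P}(W\in\partial D)=0$ when invoking weak convergence on the sets $B$ and $D$ (the phrase ``continuous joint distribution'' alone does not suffice for an arbitrary Borel $B$), and you flag this explicitly while the paper leaves it implicit; as you note, this is harmless in the intended application where $B,D$ are polyhedra and the limits are nondegenerate Gaussians. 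Your factoring of the limits as $\text{P}(Z\le t,Z\in B)\,\text{P}(W\in D)$ and $\text{P}(Z\in B)\,\text{P}(W\in D)$ so that $\text{P}(W\in D)$ cancels matches the role independence plays in the paper, and the final appeal to continuity of the limiting CDF in $t$ is the standard step both versions need.
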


\begin{proof-of-lemma}[\ref{lemma.idea}]
By Slutsky's lemma, $(Z_{n}+R_{n,1},W_{n}+R_{n,2})\stackrel{d}{\longrightsquigarrow}(Z,W)$.
Since $(Z,W)$ has a continuous joint distribution,
\[
\text{P}\left(Z\in B,W\in D\right)=\lim_{n\to\infty}\text{P}\left(Z_{n}+R_{n,1}\in B,W_{n}+R_{n,2}\in D\right)>0,
\]
and $Z$ has a continuous marginal distribution. Due to the independence of $Z$ and $W$, it suffices to show that for all $t$,
\[
\text{P}\left(Z_{n}\le t,Z_{n}+R_{n,1}\in B,W_{n}+R_{n,2}\in D\right)\longrightarrow \text{P}\left(Z\le t,Z\in B,W\in D\right)
\]
and
\[
\text{P}\left(Z_{n}+R_{n,1}\in B,W_{n}+R_{n,2}\in D\right)\longrightarrow \text{P}\left(Z\in B,W\in D\right).
\]
The latter one is obvious. For the former, note that
\[
\text{P}\left(Z_{n}+R_{n,1}\le t,Z_{n}+R_{n,1}\in B,W_{n}+R_{n,2}\in D\right)\longrightarrow \text{P}\left(Z\le t,Z\in B,W\in D\right),
\]
and thus it suffices to show that
\[
\text{P}\left(Z_{n}\le t,Z_{n}+R_{n,1}>t\right)\longrightarrow0\quad \mbox{and}\quad \text{P}\left(Z_{n}>t,Z_{n}+R_{n,1}\le t\right)\longrightarrow0.
\]
Because for all $\delta>0$,
\begin{align*}
\text{P}\left(Z_{n}\le t,Z_{n}+R_{n,1}>t\right) & =\text{P}\left(t-R_{n,1}<Z_{n}\le t\right)=\text{P}\left(t-R_{n,1}<Z_{n}\le t,R_{n,1}>\delta\right)+\text{P}\left(t-R_{n,1}<Z_{n}\le t,R_{n,1}\le\delta\right)\\
 & \le\text{P}\left(R_{n,1}>\delta\right)+\text{P}\left(t-\delta\le Z_{n}\le t\right)\longrightarrow\text{P}\left(t-\delta\le Z\le t\right),
\end{align*}
letting $\delta\to0$, we obtain
\[
\text{P}\left(Z_{n}\le t,Z_{n}+R_{n,1}>t\right)\longrightarrow0.
\]
By similar arguments, $\text{P}\left(Z_{n}>t,Z_{n}+R_{n,1}\le t\right)\longrightarrow0$.
\end{proof-of-lemma}

\subsubsection{The one-step estimator and the score function under the local alternative}
\label{section.one.step.local}
We are going to show asymptotic results analogous to \eqref{AN_onestepest_local} and \eqref{AN_score_local} in the case of GLM. By Condition \ref{condition.one.step.glm} and the differentiability of the GLM in quadratic mean at $\beta^{\ast}$, we have
\begin{equation}\label{condition.LeCam3}
\begin{pmatrix}n^{1/2}\left(\overline{\beta}^{M}-\beta_M^{\ast}\right)\\
n^{1/2}\mathcal{I}_{M,M}^{-1}\mathbb{P}_nl_{\beta_{M}}(\beta^{\ast})\\
\log\frac{dP_{\beta_{n}^{\ast}}^{n}}{dP_{\beta^{\ast}}^{n}}
\end{pmatrix}\underset{\beta^{\ast}}{\overset{d}{\longrightsquigarrow}}N\left(\begin{pmatrix}0\\
0\\
-\frac{1}{2}\theta^{\ast}{}^{\top}\mathcal{I}\theta^{\ast}
\end{pmatrix},\begin{pmatrix}\mathcal{I}_{M,M}^{-1} & \mathcal{I}_{M,M}^{-1} & \widetilde{\theta}^{M}\\
\mathcal{I}_{M,M}^{-1} & \mathcal{I}_{M,M}^{-1} & \widetilde{\theta}^{M}\\
\widetilde{\theta}^{M\top} & \widetilde{\theta}^{M\top} & \theta^{\ast}{}^{\top}\mathcal{I}\theta^{\ast}
\end{pmatrix}\right).
\end{equation}
By Le Cam's third lemma,
\begin{equation}\label{distr_of_1step_and_score}
\begin{pmatrix}n^{1/2}\left(\overline{\beta}^{M}-\beta_M^{\ast}\right)\\
n^{1/2}\mathcal{I}_{M,M}^{-1}\mathbb{P}_nl_{\beta_{M}}(\beta^{\ast})
\end{pmatrix}\underset{\beta_{n}^{\ast}}{\overset{d}{\longrightsquigarrow}}N\left(\begin{pmatrix}\widetilde{\theta}^{M}\\
\widetilde{\theta}^{M}
\end{pmatrix},\begin{pmatrix}\mathcal{I}_{M,M}^{-1} & \mathcal{I}_{M,M}^{-1}\\
\mathcal{I}_{M,M}^{-1} & \mathcal{I}_{M,M}^{-1}
\end{pmatrix}\right),
\end{equation}
which implies that under the local alternative,
\[
n^{1/2}\left(\overline{\beta}^{M}-0\right)=n^{1/2}\mathcal{I}_{M,M}^{-1}\mathbb{P}_nl_{\beta_{M}}(\beta^{\ast})+o_{P}(1).
\]
Therefore,
\[
n^{1/2}\left(\overline{\beta}^{M}-\widetilde{\beta}_{n}^{M}\right)=n^{1/2}\mathcal{I}_{M,M}^{-1}\mathbb{P}_nl_{\beta_{M}}(\beta^{\ast})-\widetilde{\theta}^{M}+o_{P}(1),
\]
under the local alternative.

We now prove that under the local alternative,
\begin{equation}\label{onestep.local.asymp}
n^{1/2}\left(\overline{\beta}^{M}-\widetilde{\beta}_{n}^{M}\right)=n^{1/2}\mathcal{I}_{M,M}^{-1}\mathbb{P}_nl_{\beta_{M}}(\beta_{n}^{\ast})+o_{P}(1).
\end{equation}
It suffices to show that
\[
n^{1/2}\mathbb{P}_n\left\{ l_{\beta_{M}}(\beta^{\ast})-l_{\beta_{M}}(\beta_{n}^{\ast})\right\} =\mathcal{I}_{M,M}\widetilde{\theta}^{M}+o_{P}(1)
\]
under the local alternative. By Le Cam's third lemma, it is sufficient to prove the above under $P_{\beta^{\ast}}$. Because of Condition \ref{condition.Donsker.glm}, we have by Lemma 19.24 of \citet{van2000asymptotic} and the Taylor expansion that,
\begin{align*}
n^{1/2}\mathbb{P}_{n}\left\{ l_{\beta_{M}}(\beta^{\ast})-l_{\beta_{M}}(\beta_{n}^{\ast})\right\}  & =n^{1/2}P\left\{ l_{\beta_{M}}(\beta^{\ast})-l_{\beta_{M}}(\beta_{n}^{\ast})\right\} +o_{P}(1)\\
 & =n^{1/2}Pl_{\beta_{M}\beta_{M}}(\beta^{\ast})\cdot(\beta_{M}^{\ast}-\beta_{M,n}^{\ast})+n^{1/2}Pl_{\beta_{M}\beta_{-M}}(\beta^{\ast})\cdot(\beta_{-M}^{\ast}-\beta_{-M,n}^{\ast})\\
 & \quad\quad+o_{P}\left(\|n^{1/2}(\beta_{n}^{\ast}-\beta^{\ast})\|_{2}+1\right)\\
 & =\mathcal{I}_{M,M}\theta_{M}^{\ast}+\mathcal{I}_{M,-M}\theta_{-M}^{\ast}+o_{P}(1)\\
 & =\mathcal{I}_{M,M}\widetilde{\theta}^{M}+o_{P}(1).
\end{align*}

Similar arguments can show that
\begin{equation}\label{score.local.asymp}
n^{1/2}\mathbb{P}_nl_{\beta}(\beta_{n}^{\ast})\underset{\beta_{n}^{\ast}}{\overset{d}{\longrightsquigarrow}}N\left(0,\mathcal{I}\right).
\end{equation}

\subsubsection{Tackle the KKT conditions}
\label{section.stationarity}
Note that the estimator $\hat{\beta}^{M}_{\textrm{full}}$ is equal to the lasso estimator $\hat{\beta}$ conditional on the selection event $\left\{ \hat{M}=M,\hat{s}_{\hat{M}}=s_{M}\right\}$. Thus, by similar arguments as in the proof of Lemma 4.1 in \cite{lee2016exact} and as in \cite{taylor2018post}, the event $\left\{ \hat{M}=M,\hat{s}_{\hat{M}}=s_{M}\right\}$ occurs
if and only if $\text{sign}(\hat{\beta}^{M}) =s_{M},$ and
\begin{equation}\label{inactive.constraint}
\bigg\|\frac{\partial l_n}{\partial\beta_{-M}}\bigg|_{\beta=\hat{\beta}^{M}_{\textrm{full}}}\bigg\|_{\infty}  \le\lambda.
\end{equation}
Recall that the one-step estimator is
\begin{align*}
\overline{\beta}^{M} & =\hat{\beta}^{M}+\lambda\hat{I}_{M,n}^{-1}s_{M},
\end{align*}
Writing the active constraint $\text{sign}(\hat{\beta}^{M}) =s_{M}$ as $\text{diag}(s_{M})\hat{\beta}^{M}>0$ and plugging in the one-step estimator, the constraint becomes
\[
\text{diag}(s_{M})\left(\overline{\beta}^{M}-\lambda\hat{I}_{M,n}^{-1}s_{M}\right)>0,
\]
namely,
\begin{equation}\label{active.constraint}
-\text{diag}(s_{M})\overline{\beta}^{M}<-\lambda\text{diag}(s_{M})\hat{I}_{M,n}^{-1}s_{M}.
\end{equation}

\subsubsection{Active constraint}
\label{section.active}
Multiplying both sides of the active constraint \eqref{active.constraint} by $n^{1/2}$, we obtain
\[
-n^{1/2}\text{diag}(s_{M})\overline{\beta}^{M}\le-\lambda n^{1/2}\text{diag}(s_{M})\hat{I}_{M,n}^{-1}s_{M}.
\]
This can be equivalently written as
\[
-n^{1/2}\text{diag}(s_{M})\overline{\beta}^{M}\le-\lambda n^{-1/2}\text{diag}(s_{M})\mathcal{I}_{M,M}^{-1}s_{M}+R_{M,1},
\]
where
\[
R_{M,1}=\lambda n^{-1/2}\text{diag}(s_{M})\left(\mathcal{I}_{M,M}^{-1}-n\hat{I}_{M,n}^{-1}\right)s_{M}.
\]
Due to Conditions \ref{condition.lasso.semi} and \ref{condition.Hessian}, it is plain that $R_{M,1}=o_{P}(1)$. Let $b=-C\text{diag}(s_{M})\mathcal{I}_{M,M}^{-1}s_{M}$. Since $\lambda n^{-1/2} = C+o(1)$, we may write the active constraint as
\[
-n^{1/2}\text{diag}(s_{M})\left(\overline{\beta}^{M}-\widetilde{\beta}_{n}^{M}\right)\le b+\text{diag}(s_{M})\widetilde{\theta}^{M}+o_{P}(1),
\]
where we have used $\widetilde{\theta}^{M}=n^{1/2}\widetilde{\beta}_{n}^{M}$. Note that $n^{1/2}\left(\overline{\beta}^{M}-\widetilde{\beta}_{n}^{M}\right)$ is asymptotically normal, while $s_{M}$, $\widetilde{\theta}^{M}$ and $b$ are fixed. So the active constraint can be written into the form of $Z_{n}+R_{n,1}\in B$ as in Lemma \ref{lemma.idea}, where $Z_{n}=n^{1/2}\left(\overline{\beta}^{M}-\widetilde{\beta}_{n}^{M}\right)$.

\subsubsection{Inactive constraint}
\label{section.inactive}
The inactive constraint \eqref{inactive.constraint} can be written as
\[
\bigg\|n^{1/2}\mathbb{P}_nl_{\beta_{-M}}(\hat{\beta}^{M}_{\textrm{full}})\bigg\|_{\infty}\le n^{-1/2}\lambda.
\]

Because of Condition \ref{condition.Donsker.glm}, we have by Lemma 19.24 of \citet{van2000asymptotic} and the Taylor expansion that,  under $P_{\beta^{\ast}}$,
\begin{align}
n^{1/2}\mathbb{P}_n\left\{ l_{\beta_{-M}}(\beta_{n}^{\ast})-l_{\beta_{-M}}(\beta^{\ast})\right\}  & =n^{1/2}P \left\{ l_{\beta_{-M}}(\beta_{n}^{\ast})-l_{\beta_{-M}}(\beta^{\ast})\right\} +o_{P}(1) \nonumber \\
 & =-n^{1/2}\mathcal{I}_{-M,M}(\beta_{M,n}^{\ast}-\beta_{M}^{\ast})-n^{1/2}\mathcal{I}_{-M,-M}(\beta_{-M,n}^{\ast}-\beta_{-M}^{\ast})+o_{P}\left(\|n^{1/2}(\beta_{n}^{\ast}-\beta^{\ast})\|_{2}+1\right) \nonumber \\
 & =-n^{1/2}\mathcal{I}_{-M,M}(\beta_{M,n}^{\ast}-\beta_{M}^{\ast})-n^{1/2}\mathcal{I}_{-M,-M}(\beta_{-M,n}^{\ast}-\beta_{-M}^{\ast})+o_{P}(1). \label{intermediate_result1}
\end{align}

Combining \eqref{intermediate_result1} and \eqref{condition6-1_glm} in Condition \ref{condition.one.step.glm}, we have
\[
n^{1/2}\mathbb{P}_n\left\{ l_{\beta_{-M}}(\hat{\beta}^{M})-l_{\beta_{-M}}(\beta_{n}^{\ast})\right\} =-n^{1/2}\mathcal{I}_{-M,M}(\hat{\beta}^{M}-\beta_{M,n}^{\ast})-n^{1/2}\mathcal{I}_{-M,-M}(0-\beta_{-M,n}^{\ast})+o_{P}(1),
\]
which also holds under the local alternative.

By Conditions \ref{condition.lasso.semi} and \ref{condition.Hessian},
\begin{align*}
-n^{1/2}\mathcal{I}_{-M,M}(\hat{\beta}^{M}-\beta_{M,n}^{\ast}) & -n^{1/2}\mathcal{I}_{-M,-M}(0-\beta_{-M,n}^{\ast})\\
 & =-n^{1/2}\mathcal{I}_{-M,M}(\hat{\beta}^{M}-\overline{\beta}^{M})-n^{1/2}\mathcal{I}_{-M,M}(\overline{\beta}^{M}-\widetilde{\beta}_{n}^{M})\\
 & \quad-n^{1/2}(\mathcal{I}_{-M,M}\widetilde{\beta}_{n}^{M}-\mathcal{I}_{-M,M}\beta_{M,n}^{\ast}-\mathcal{I}_{-M,-M}\beta_{-M,n}^{\ast})\\
 & =-n^{1/2}\mathcal{I}_{-M,M}\left(-\lambda\hat{I}_{M,n}^{-1}s_{M}\right)-n^{1/2}\mathcal{I}_{-M,M}(\overline{\beta}^{M}-\widetilde{\beta}_{n}^{M})\\
 & \quad-n^{1/2}(\mathcal{I}_{-M,M}\beta_{M,n}^{\ast}+\mathcal{I}_{-M,M}\mathcal{I}_{M,M}^{-1}\mathcal{I}_{M,-M}\beta_{-M,n}^{\ast}-\mathcal{I}_{-M,M}\beta_{M,n}^{\ast}-\mathcal{I}_{-M,-M}\beta_{-M,n}^{\ast})\\
 & =(\lambda n^{-1/2})\cdot\mathcal{I}_{-M,M}\left(\hat{I}_{M,n}/n\right)^{-1}s_{M}-n^{1/2}\mathcal{I}_{-M,M}(\overline{\beta}^{M}-\widetilde{\beta}_{n}^{M})\\
 & \quad-(\mathcal{I}_{-M,M}\mathcal{I}_{M,M}^{-1}\mathcal{I}_{M,-M}-\mathcal{I}_{-M,-M})\theta_{-M,n}^{\ast}\\
 & =C\mathcal{I}_{-M,M}\mathcal{I}_{M,M}^{-1}s_{M}-n^{1/2}\mathcal{I}_{-M,M}(\overline{\beta}^{M}-\widetilde{\beta}_{n}^{M})\\
 & \quad-(\mathcal{I}_{-M,M}\mathcal{I}_{M,M}^{-1}\mathcal{I}_{M,-M}-\mathcal{I}_{-M,-M})\theta_{-M}^{\ast}+o_{P}(1).
\end{align*}
Thus,
\begin{align*}
n^{1/2}\mathbb{P}_{n}l_{\beta_{-M}}(\hat{\beta}^{M}_{\textrm{full}}) & =n^{1/2}\mathbb{P}_{n}l_{\beta_{-M}}(\beta_{n}^{\ast})-n^{1/2}\mathcal{I}_{-M,M}(\overline{\beta}^{M}-\widetilde{\beta}_{n}^{M})+C\mathcal{I}_{-M,M}\mathcal{I}_{M,M}^{-1}s_{M}\\
 & \quad-(\mathcal{I}_{-M,M}\mathcal{I}_{M,M}^{-1}\mathcal{I}_{M,-M}-\mathcal{I}_{-M,-M})\theta_{-M}^{\ast}+o_{P}(1).
\end{align*}
Then the inactive constraint can be written as
\begin{align*}
\bigg\| n^{1/2}\mathbb{P}_{n}l_{\beta_{-M}}(\beta_{n}^{\ast}) & -n^{1/2}\mathcal{I}_{-M,M}(\overline{\beta}^{M}-\widetilde{\beta}_{n}^{M})+C\mathcal{I}_{-M,M}\mathcal{I}_{M,M}^{-1}s_{M}\\
 & \quad-(\mathcal{I}_{-M,M}\mathcal{I}_{M,M}^{-1}\mathcal{I}_{M,-M}-\mathcal{I}_{-M,-M})\theta_{-M}^{\ast}+o_{P}(1)\bigg\|_{\infty}\le n^{-1/2}\lambda,
\end{align*}
Note that $C\mathcal{I}_{-M,M}\mathcal{I}_{M,M}^{-1}s_{M}$ and $(\mathcal{I}_{-M,M}\mathcal{I}_{M,M}^{-1}\mathcal{I}_{M,-M}-\mathcal{I}_{-M,-M})\theta_{-M}^{\ast}$ are constant vectors, so the inactive constraint takes the form of $W_{n}+R_{n,2}\in D$ as in Lemma  \ref{lemma.idea}, where $W_{n}=n^{1/2}\mathbb{P}_nl_{\beta_{-M}}(\beta_{n}^{\ast})-n^{1/2}\mathcal{I}_{-M,M}(\overline{\beta}^{M}-\widetilde{\beta}_{n}^{M})$.

\subsubsection{Asymptotic independence}\label{section.asympindep}
Now we prove the asymptotic independence between the active constraint and the inactive constraint under the local alternative. By \eqref{onestep.local.asymp} and \eqref{score.local.asymp},
\begin{align*}
 & \begin{pmatrix}n^{1/2}(\overline{\beta}^{M}-\widetilde{\beta}_{n}^{M})\\
n^{1/2}\mathbb{P}_{n}l_{\beta_{-M}}(\beta_{n}^{\ast})-n^{1/2}\mathcal{I}_{-M,M}(\overline{\beta}^{M}-\widetilde{\beta}_{n}^{M})
\end{pmatrix}\\
 & \quad\quad\quad\quad\quad\quad\quad=\begin{pmatrix}\mathcal{I}_{M,M}^{-1} & 0\\
-\mathcal{I}_{-M,M}\mathcal{I}_{M,M}^{-1} & I
\end{pmatrix}\begin{pmatrix}\mathbb{P}_{n}l_{\beta_{M}}(\beta_{n}^{\ast})\\
\mathbb{P}_{n}l_{\beta_{-M}}(\beta_{n}^{\ast})
\end{pmatrix}+o_{P}(1)\\
 & \quad\quad\quad\quad\quad\quad\quad\underset{\beta_{n}^{\ast}}{\overset{d}{\longrightsquigarrow}}N\left(0,\begin{pmatrix}\mathcal{I}_{M,M}^{-1} & 0\\
-\mathcal{I}_{-M,M}\mathcal{I}_{M,M}^{-1} & I
\end{pmatrix}\begin{pmatrix}\mathcal{I}_{M,M} & \mathcal{I}_{M,-M}\\
\mathcal{I}_{M,-M}^{\top} & \mathcal{I}_{-M,-M}
\end{pmatrix}\begin{pmatrix}\mathcal{I}_{M,M}^{-1} & -\mathcal{I}_{M,M}^{-1}\mathcal{I}_{M,-M}\\
0 & I
\end{pmatrix}\right)\\
 & \quad\quad\quad\quad\quad\quad\quad\stackrel{d}{=}N\left(0,\begin{pmatrix}\mathcal{I}_{M,M}^{-1} & 0\\
0 & -\mathcal{I}_{-M,M}\mathcal{I}_{M,M}^{-1}\mathcal{I}_{M,-M}+\mathcal{I}_{-M,-M}
\end{pmatrix}\right).
\end{align*}
\subsubsection{Conclude the proof}
Combining the results of Sections \ref{section.one.step.local}-- \ref{section.asympindep},  we have by Lemma \ref{lemma.idea} that
\[
n^{1/2}(\overline{\beta}^{M}-\widetilde{\beta}_{n}^{M})\mid\left\{ \hat{M}=M,\hat{s}_{\hat{M}}=s_M\right\} \underset{\beta_{n}^{\ast}}{\overset{d}{\longrightsquigarrow}} Z\mid\left\{ -\text{diag}(s_{M})Z\le-C\text{diag}(s_{M})\mathcal{I}_{M,M}^{-1}s_{M}+\text{diag}(s_{M})\widetilde{\theta}^{M}\right\} ,
\]
where $Z\sim N\left(0,\mathcal{I}_{M,M}^{-1}\right)$. Let $\tilde{Z}=Z+n^{1/2}\widetilde{\beta}_{n}^{M}=Z+\widetilde{\theta}^{M}$. Then $\tilde{Z}\sim N\left(\widetilde{\theta}^{M},\mathcal{I}_{M,M}^{-1}\right)$
and the above result implies
\[
n^{1/2}\overline{\beta}^{M}\mid\left\{ \hat{M}=M,\hat{s}_{\hat{M}}=s_{M}\right\} \underset{\beta_{n}^{\ast}}{\overset{d}{\longrightsquigarrow}} \tilde{Z}\mid\left\{ -\text{diag}(s_{M})\tilde{Z}\le-C\text{diag}(s_{M})\mathcal{I}_{M,M}^{-1}s_{M}\right\} .
\]
By Theorem 5.2 of \citet{lee2016exact}, we have
\[
F_{\gamma^{\top}\widetilde{\theta}^{M},\gamma^{\top}\mathcal{I}_{M,M}^{-1}\gamma}^{\mathcal{V}^{-},\mathcal{V}^{+}}\left(\gamma^{\top}\tilde{Z}\right)\mid\left\{ -\text{diag}(s_{M})\tilde{Z}\le-C\text{diag}(s_{M})\mathcal{I}_{M,M}^{-1}s_{M}\right\} \sim Unif(0,1),
\]
where $\mathcal{V}^{-}$ and $\mathcal{V}^{+}$ are computed as in \eqref{V-} and \eqref{V+} with
\[
A=-\text{diag}(s_{M}),\quad b=-(n^{-1/2}\lambda)\text{diag}(s_{M})\mathcal{I}_{M,M}^{-1}s_{M},\quad c=\mathcal{I}_{M,M}^{-1}\gamma(\gamma^{\top}\mathcal{I}_{M,M}^{-1}\gamma)^{-1},\quad \mbox{and} \quad
z=(I_{|M|}-c\gamma^{\top})\overline{\theta}^M.
\]
So
\[
F_{\gamma^{\top}\widetilde{\theta}^{M},\gamma^{\top}\mathcal{I}_{M,M}^{-1}\gamma}^{\mathcal{V}^{-},\mathcal{V}^{+}}\left(\gamma^{\top}\overline{\theta}^{M}\right)\mid\left\{ \hat{M}=M,\hat{s}_{\hat{M}}=s_{M}\right\} \underset{\beta_{n}^{\ast}}{\overset{d}{\longrightsquigarrow}} Unif(0,1).
\]

\subsection{Proof of Theorem \ref{theorem.semi}}
Our proof of Theorem \ref{theorem.GLM} can be adapted to prove Theorem \ref{theorem.semi}. One big change is that $l_{\beta}$ should be replaced with the efficient score function $\tilde{\ell}$. As a result, the condition for Le Cam's third lemma, applied in Section \ref{section.one.step.local}, needs to be re-verified, and the asymptotic continuity \citep[Lemma 19.24,][]{van2000asymptotic} and Taylor expansion arguments, used in Sections \ref{section.one.step.local} and \ref{section.inactive}, must be established afresh under the setting of Theorem \ref{theorem.semi}. It is plain to see that if the result \eqref{distr_of_1step_and_score} of Le Cam's third lemma and the asymptotic continuity and Taylor expansion arguments in Sections \ref{section.one.step.local} and \ref{section.inactive} can be re-established under the considered semiparametric model, which will be done below, our proof of Theorem \ref{theorem.GLM} still works for Theorem \ref{theorem.semi} after $l_{\beta}$ is replaced by $\tilde{l}$.

\subsubsection{Modify the arguments in Section \ref{section.one.step.local}}
\label{section.one.step.local.semi}
To re-verify the condition of Le Cam's third lemma, consider a one-dimensional submodel $t\mapsto P_{\beta^{\ast}+t\theta^{\ast},\Lambda^\ast}$, which becomes  $P_{\beta_{n}^{\ast},\Lambda^\ast}$ when $t=n^{-1/2}$. The score function of this submodel is $\theta^{\ast\top}\ell_{\beta}(\beta^{\ast},\Lambda^{\ast})$. Because the model $P_{\beta,\Lambda^\ast}$ is differentiable in quadratic mean at $\beta^{\ast}$, we have by Lemma 25.14 of \cite{van2000asymptotic} that
\[
\log\frac{dP_{\beta_{n}^{\ast},\Lambda^\ast}^{n}}{dP_{\beta^{\ast},\Lambda^\ast}^{n}}=n^{1/2}\theta^{\ast\top}\mathbb{P}_{n}\ell_{\beta}(\beta^{\ast},\Lambda^{\ast})-\frac{1}{2}\theta^{\ast\top}P\ell_{\beta}(\beta^{\ast},\Lambda^{\ast})^{\otimes2}\theta^{\ast}+o_{P}(1).
\]
Together with Condition \ref{condition.one.step.semi} , it implies that
\[
\begin{pmatrix}n^{1/2}\left(\overline{\beta}^{M}-\beta_{M}^{\ast}\right)\\
n^{1/2}\mathcal{I}_{M,M}^{-1}\mathbb{P}_{n}\tilde{\ell}_{M}(\beta^{\ast})\\
\log\frac{dP_{\beta_{n}^{\ast},\Lambda^\ast}^{n}}{dP_{\beta^{\ast},\Lambda^\ast}^{n}}
\end{pmatrix}\underset{\beta^{\ast}}{\overset{d}{\longrightsquigarrow}}N\left(\begin{pmatrix}0\\
0\\
-\frac{1}{2}\theta^{\ast\top}P\ell_{\beta}(\beta^{\ast},\Lambda^{\ast})^{\otimes2}\theta^{\ast}
\end{pmatrix},\begin{pmatrix}\mathcal{I}_{M,M}^{-1} & \mathcal{I}_{M,M}^{-1} & \widetilde{\theta}^{M}\\
\mathcal{I}_{M,M}^{-1} & \mathcal{I}_{M,M}^{-1} & \widetilde{\theta}^{M}\\
\widetilde{\theta}^{M\top} & \widetilde{\theta}^{M\top} & \theta^{\ast\top}P\ell_{\beta}(\beta^{\ast},\Lambda^{\ast})^{\otimes2}\theta^{\ast}
\end{pmatrix}\right),
\]
 where we used a property of the efficient score function $\tilde{\ell}(\beta^{\ast})$ that
$P\left\{ \ell_{\beta}(\beta^{\ast},\Lambda^{\ast})\tilde{\ell}(\beta^{\ast})^{\top}\right\} =P\left\{ \tilde{\ell}(\beta^{\ast})^{\otimes2}\right\} =\mathcal{I}.$

By Le Cam's third lemma,
\[
\begin{pmatrix}n^{1/2}\left(\overline{\beta}^{M}-\beta_{M}^{\ast}\right)\\
n^{1/2}\mathcal{I}_{M,M}^{-1}\mathbb{P}_{n}\tilde{\ell}_{M}(\beta^{\ast})
\end{pmatrix}\underset{\beta_n^{\ast}}{\overset{d}{\longrightsquigarrow}}N\left(\begin{pmatrix}\widetilde{\theta}^{M}\\
\widetilde{\theta}^{M}
\end{pmatrix},\begin{pmatrix}\mathcal{I}_{M,M}^{-1} & \mathcal{I}_{M,M}^{-1}\\
\mathcal{I}_{M,M}^{-1} & \mathcal{I}_{M,M}^{-1}
\end{pmatrix}\right),
\]
which is the same as the result \eqref{distr_of_1step_and_score} for GLM.

It remains to show that under $P_{\beta^{\ast},\Lambda^\ast}$,
\begin{equation}\label{asympcont_effscore}
    n^{1/2}\mathbb{P}_n\left\{ \tilde{\ell}(\beta^{\ast})-\tilde{\ell}(\beta_{n}^{\ast})\right\} =\mathcal{I}\theta^{\ast}+o_{P}(1),
\end{equation}
as (\ref{asympcont_effscore}) implies
\[
n^{1/2}\mathbb{P}_{n}\left\{ \tilde{\ell}_{M}(\beta^{\ast})-\tilde{\ell}_{M}(\beta_{n}^{\ast})\right\} =\mathcal{I}_{M,M}\theta_{M}^{\ast}+\mathcal{I}_{M,-M}\theta_{-M}^{\ast}+o_{P}(1)=\mathcal{I}_{M,M}\widetilde{\theta}^{M}+o_{P}(1).
\]

Note that the derivatives of  $\ell_{\beta}(\beta^{\ast}+t\theta^{\ast},\Lambda^{\ast})$ and $\ell_{\Lambda}(\beta^{\ast}+t\theta^{\ast},\Lambda^{\ast})(h^{\ast})$ with respect to $t$ are $\ell_{\beta\beta}(\beta^{\ast}+t\theta^{\ast},\Lambda^{\ast})\theta^{\ast}$ and $\ell_{\Lambda\beta}(\beta^{\ast}+t\theta^{\ast},\Lambda^{\ast})(h^{\ast})\theta^{\ast}$ respectively, where $\ell_{\beta\beta}$ and $\ell_{\Lambda\beta}$ are defined as in \citet{zeng2016maximum}, that is, $\ell_{\beta\beta}(\beta,\Lambda)$ is the second derivative of $\ell(\beta,\Lambda)$ with respect to $\beta$, and $\ell_{\Lambda\beta}(\beta,\Lambda)(h)$ is the derivative of $\ell_{\Lambda}(h)$ with respect to $\beta$. Because of Condition \ref{condition.Donsker.semi}, we have by Lemma 19.24 of \citet{van2000asymptotic}  and the Taylor expansion that
\begin{align*}
n^{1/2}\mathbb{P}_n\left\{ \tilde{\ell}(\beta^{\ast})-\tilde{\ell}(\beta_{n}^{\ast})\right\}  & =n^{1/2}\mathbb{P}_n\left\{ \ell_{\beta}(\beta^{\ast},\Lambda^{\ast})-\ell_{\beta}(\beta_{n}^{\ast},\Lambda^{\ast})\right\}  -n^{1/2}\mathbb{P}_n\left\{ \ell_{\Lambda}(\beta^{\ast},\Lambda^{\ast})(h^{\ast})-\ell_{\Lambda}(\beta_{n}^{\ast},\Lambda^{\ast})(h^{\ast})\right\}  \\
 & =n^{1/2}P\left\{ \ell_{\beta}(\beta^{\ast},\Lambda^{\ast})-\ell_{\beta}(\beta_{n}^{\ast},\Lambda^{\ast})\right\}  -n^{1/2}P\left\{ \ell_{\Lambda}(\beta^{\ast},\Lambda^{\ast})(h^{\ast})-\ell_{\Lambda}(\beta_{n}^{\ast},\Lambda^{\ast})(h^{\ast})\right\}  +o_{P}(1)\\
 & =-P\left\{ \ell_{\beta\beta}-\ell_{\Lambda\beta}(h^{\ast})\right\} \theta^{\ast}+o_{P}(1)\\
 & =P\left[\left\{ \ell_{\beta}(\beta^{\ast},\Lambda^{\ast})-\ell_{\Lambda}(\beta^{\ast},\Lambda^{\ast})(h^{\ast})\right\} ^{\otimes2}\right]\theta^{\ast}+o_{P}(1)\\
 & =\mathcal{I}\theta^{\ast}+o_{P}(1).
\end{align*}
where $\ell_{\beta\beta}$ and $\ell_{\Lambda\beta}$ are evaluated at $(\beta^{\ast},\Lambda^{\ast})$ and we have used an identity
\[
P\left\{ \ell_{\Lambda}(\beta^{\ast},\Lambda^{\ast})(h^{\ast})\ell_{\beta}(\beta^{\ast},\Lambda^{\ast})^{\top}\right\} =P\left[\left\{ \ell_{\Lambda}(\beta^{\ast},\Lambda^{\ast})(h^{\ast})\right\} ^{\otimes2}\right]
\]
from the proof of Theorem 2 in \cite{zeng2016maximum}. The result \eqref{asympcont_effscore} then follows.

\subsubsection{Modify the arguments in Section \ref{section.inactive}}
In order to prove that
\begin{align*}
\bigg\| n^{1/2}\mathbb{P}_{n}\tilde{\ell}_{-M}(\beta_{n}^{\ast}) & -n^{1/2}\mathcal{I}_{-M,M}(\overline{\beta}^{M}-\widetilde{\beta}_{n}^{M})+C\mathcal{I}_{-M,M}\mathcal{I}_{M,M}^{-1}s_{M}\\
 & \quad-(\mathcal{I}_{-M,M}\mathcal{I}_{M,M}^{-1}\mathcal{I}_{M,-M}-\mathcal{I}_{-M,-M})\theta_{-M}^{\ast}+o_{P}(1)\bigg\|_{\infty}\le n^{-1/2}\lambda,
\end{align*}
it suffices to show that
\[
n^{1/2}\mathbb{P}_n\left\{ \ell_{\beta}(\hat{\beta}^{M}_{\textrm{full}},\hat{\Lambda}^{M})-\tilde{\ell}(\beta_{n}^{\ast})\right\} =-n^{1/2}\mathcal{I}(\hat{\beta}^{M}_{\textrm{full}}-\beta_{n}^{\ast})+o_{P}(1).
\]
 By Condition \ref{condition.one.step.semi},
\[
n^{1/2}\mathbb{P}_n\left\{ \ell_{\beta}(\hat{\beta}^{M}_{\textrm{full}},\hat{\Lambda}^{M})-\tilde{\ell}(\beta^{\ast})\right\} =-n^{1/2}\mathcal{I}(\hat{\beta}^{M}_{\textrm{full}}-\beta^{\ast})+o_{P}(1).
\]
As shown in Section \ref{section.one.step.local.semi},
\[
n^{1/2}\mathbb{P}_n\left\{ \tilde{\ell}(\beta_{n}^{\ast})-\tilde{\ell}(\beta^{\ast})\right\} =-n^{1/2}\mathcal{I}(\beta_{n}^{\ast}-\beta^{\ast})+o_{P}(1).
\]
The above two equations imply
\[
n^{1/2}\mathbb{P}_n\left\{ \ell_{\beta}(\hat{\beta}^{M}_{\textrm{full}},\hat{\Lambda}^{M})-\tilde{\ell}(\beta_{n}^{\ast})\right\} =-n^{1/2}\mathcal{I}(\hat{\beta}^{M}_{\textrm{full}}-\beta_{n}^{\ast})+o_{P}(1).
\]

\subsection{Proof of Theorem \ref{theorem.interval}}
As discussed in Section \ref{section.result.interval}, it suffices to prove Lemma \ref{lemma.one.step.interval} in order to prove Theorem \ref{theorem.interval}. The proof is given below.

\begin{proof-of-lemma}[\ref{lemma.one.step.interval}]
 By the definition of $\hat{\beta}^{M}$ and the concavity of $\ell(\beta,\Lambda)$ \citep[shown in the proof of Theorem 1 in][]{li2020adaptive}, we have
\[(\hat{\beta}^{M}_{\textrm{full}},\hat{\Lambda}^{M})=\argmax_{(\beta,\Lambda):\beta_{-M}=0}\mathbb{P}_{n}\ell(\beta,\Lambda)-\frac{\lambda}{n}s_M^{\top}\beta_M.
\]
Then following the proof of Theorem 1 in \cite{li2020adaptive}, it is easy to show that $\|\hat{\beta}^{M}-\beta^{\ast}\|_{2}=O_{P}(n^{-1/2})$.
Since
\[
\mathbb{P}_{n}\ell(\hat{\beta}^{M}_{\textrm{full}},\hat{\Lambda}^{M})-\mathbb{P}_{n}\ell(\beta^{\ast},\Lambda^{\ast})\ge\frac{\lambda}{n}s_{M}^{\top}\left\{ \hat{\beta}^{M}-\beta^{\ast}_M\right\} \ge-\frac{\lambda}{n}\|\beta^{\ast}_M-\hat{\beta}^{M}\|_{1}\ge-\frac{\lambda}{n}\sqrt{p}\|\beta^{\ast}_M-\hat{\beta}^{M}\|_{2}=-O_{P}(n^{-1}),
\]
it follows that $\sup_{t\in[\zeta,\tau]}\bigg|\hat{\Lambda}^{M}(t)-\Lambda^{\ast}(t)\bigg|\longrightarrow0$ a.s. by adapting the proof of Theorem 1 in \cite{zeng2017maximum}. Furthermore, $\bigg\|\dot{\Lambda}_{\hat{\beta}^{M}_{\textrm{full}}}+h^{\ast}\bigg\|_{L_2(\Lambda^{\ast})}=o_{P}(1)$ following similar arguments to the counterpart in the proof of Theorem 3 in \cite{zeng2017maximum}, where $\dot{\Lambda}_{\tilde{\beta}}=\partial(d\hat{\Lambda}_{\beta})/\partial\beta\big|_{\beta=\tilde{\beta}} (d\hat{\Lambda}_{\tilde{\beta}})^{-1} $ and $\|h\|_{L_2(\Lambda^{\ast})}^2=\int_{\zeta}^{\tau}h^{2}(t)d\Lambda^{\ast}(t)$ as in \cite{zeng2017maximum}.

We now prove
\begin{equation} \label{eq.lasso.taylor.2}
n^{1/2}\mathbb{P}_{n}\left\{ \ell_{\beta}(\hat{\beta}^{M}_{\textrm{full}},\hat{\Lambda}^{M})-\tilde{\ell}(\beta^{\ast})\right\} =-n^{1/2}\mathcal{I}\left(\hat{\beta}^{M}_{\textrm{full}}-\beta^{\ast}\right)+o_{P}(1).
\end{equation}
Recall that
\[
\mathbb{G}_nl_{\beta}(\hat{\beta}^{M}_{\textrm{full}})=\mathbb{G}_n\left\{ \ell_{\beta}(\hat{\beta}^{M}_{\textrm{full}},\hat{\Lambda}^{M})+\ell_{\Lambda}(\hat{\beta}^{M}_{\textrm{full}},\hat{\Lambda}^{M})(\dot{\Lambda}_{\hat{\beta}^{M}_{\textrm{full}}})\right\} .
\]
By the arguments in the proof of Lemma 2 of \cite{zeng2017maximum},
$\ell_{\beta}(\hat{\beta}^{M}_{\textrm{full}},\hat{\Lambda}^{M})$ and $\ell_{\Lambda}(\hat{\beta}^{M}_{\textrm{full}},\hat{\Lambda}^{M})(\dot{\Lambda}_{\hat{\beta}^{M}_{\textrm{full}}})$ belong to two Donsker classes respectively.
So \[
\mathbb{G}_n\ell_{\Lambda}(\hat{\beta}^{M}_{\textrm{full}},\hat{\Lambda}^{M})(\dot{\Lambda}_{\hat{\beta}^{M}_{\textrm{full}}})=\mathbb{G}_n\ell_{\Lambda}(\beta^{\ast},\Lambda^{\ast})(-h^{\ast})+o_{P}(1)=\mathbb{G}_n\ell_{\Lambda}(\hat{\beta}^{M}_{\textrm{full}},\hat{\Lambda}^{M})(-h^{\ast})+o_{P}(1)
\]
by Lemma 19.24 of \citet{van2000asymptotic}. Note that
\begin{align*}
\mathbb{G}_n\ell_{\beta}(\hat{\beta}^{M}_{\textrm{full}},\hat{\Lambda}^{M}) & =n^{1/2}\mathbb{P}_n\ell_{\beta}(\hat{\beta}^{M}_{\textrm{full}},\hat{\Lambda}^{M})-n^{1/2}\left\{ P\ell_{\beta}(\hat{\beta}^{M}_{\textrm{full}},\hat{\Lambda}^{M})-P\ell_{\beta}(\beta^{\ast},\Lambda^{\ast})\right\}
\end{align*}
and
\begin{align*}
\mathbb{G}_n\ell_{\Lambda}(\hat{\beta}^{M}_{\textrm{full}},\hat{\Lambda}^{M})(-h^{\ast}) & =-n^{1/2}\left\{ P\ell_{\Lambda}(\hat{\beta}^{M}_{\textrm{full}},\hat{\Lambda}^{M})(-h^{\ast})-P\ell_{\Lambda}(\beta^{\ast},\Lambda^{\ast})(-h^{\ast})\right\} .
\end{align*}
We have shown before that
\[
\mathbb{P}_n\ell(\hat{\beta}^{M}_{\textrm{full}},\hat{\Lambda}^{M})-\mathbb{P}_n\ell(\beta^{\ast},\Lambda^{\ast})\ge -O_{P}(n^{-1}).
\]
This result enables us to adapt the proof of Lemma A1 in \cite{zeng2016maximum} to show that
\[
E\left(\sum_{k=1}^K\left[\hat{\Lambda}^{M}(U_k)\exp(\hat{\beta}^{M\top}_{\textrm{full}}X)-\Lambda^{\ast}(U_k)\exp({\beta^{\ast}}^{\top}X)\right]^2\right)=O_p(n^{-2/3}).
\]
Then, as argued in the proof of Theorem 2 of \cite{zeng2016maximum}, the second-order terms of the Taylor expansions of $-n^{1/2}\left\{ P\ell_{\beta}(\hat{\beta}^{M}_{\textrm{full}},\hat{\Lambda}^{M})-P\ell_{\beta}(\beta^{\ast},\Lambda^{\ast})\right\}$ and $-n^{1/2}\left\{ P\ell_{\Lambda}(\hat{\beta}^{M}_{\textrm{full}},\hat{\Lambda}^{M})(-h^{\ast})-P\ell_{\Lambda}(\beta^{\ast},\Lambda^{\ast})(-h^{\ast})\right\}$  around $(\beta^{\ast},\Lambda^{\ast})$ are bounded by
\[
    n^{1/2}O\left\{\|\hat{\beta}^{M}_{\textrm{full}}-\beta^{\ast}\|_2^2+E\left(\sum_{k=1}^K\left[\hat{\Lambda}^{M}(U_k)\exp(\hat{\beta}^{M\top}_{\textrm{full}}X)-\Lambda^{\ast}(U_k)\exp({\beta^{\ast}}^{\top}X)\right]^2\right)\right\}
    =O(n^{1/2}\|\hat{\beta}^{M}_{\textrm{full}}-\beta^{\ast}\|_2^2+n^{-1/6})
    =o_{P}(1).
\]
So
\begin{align*}
\mathbb{G}_n\ell_{\beta}(\hat{\beta}^{M}_{\textrm{full}},\hat{\Lambda}^{M}) & = n^{1/2}\mathbb{P}_n\ell_{\beta}(\hat{\beta}^{M}_{\textrm{full}},\hat{\Lambda}^{M})-n^{1/2}\left[P\ell_{\beta\beta}(\hat{\beta}^{M}_{\textrm{full}}-\beta^{\ast})+P\ell_{\beta\Lambda}\left\{ d(\hat{\Lambda}^{M}-\Lambda^{\ast})/d\Lambda^{\ast}\right\} \right] +o_{P}(1)
\end{align*}
and
\begin{align*}
\mathbb{G}_n\ell_{\Lambda}(\hat{\beta}^{M}_{\textrm{full}},\hat{\Lambda}^{M})(\dot{\Lambda}_{\hat{\beta}^{M}_{\textrm{full}}}) & =-n^{1/2}\left[P\ell_{\Lambda\beta}(-h^{\ast})(\hat{\beta}^{M}_{\textrm{full}}-\beta^{\ast})+P\ell_{\Lambda\Lambda}\left\{ -h^{\ast},d(\hat{\Lambda}^{M}-\Lambda^{\ast})/d\Lambda^{\ast}\right\} \right] +o_{P}(1).
\end{align*}
Here $\ell_{\beta\Lambda}(d(\hat{\Lambda}^{M}-\Lambda^{\ast})/d\Lambda^{\ast})$ is the derivative of $\ell_{\beta}(\beta,\Lambda)$ along the submodel $d\Lambda_{\epsilon,h}=(1+\epsilon h)d\Lambda$ defined in Section \ref{section.idea} with $h=d(\hat{\Lambda}^{M}-\Lambda^{\ast})/d\Lambda^{\ast}$, which is equivalent to the submodel $\Lambda_{\epsilon}=\Lambda^{\ast}+\epsilon(\hat{\Lambda}^{M}-\Lambda^{\ast})$, $\ell_{\Lambda\Lambda}(h,d(\hat{\Lambda}^{M}-\Lambda^{\ast})/d\Lambda^{\ast})$ is the derivative of $\ell_{\Lambda}(h)$ along the submodel $\Lambda_{\epsilon}=\Lambda^{\ast}+\epsilon(\hat{\Lambda}^{M}-\Lambda^{\ast})$, and $\ell_{\beta\beta},\ell_{\beta\Lambda},\ell_{\Lambda\beta}$ and $\ell_{\Lambda\Lambda}$ are evaluated at $(\beta^{\ast},\Lambda^{\ast})$. Thus we have
\begin{align*}
\mathbb{G}_nl_{\beta}(\hat{\beta}^{M}_{\textrm{full}}) & =n^{1/2}\mathbb{P}_n\ell_{\beta}(\hat{\beta}^{M}_{\textrm{full}},\hat{\Lambda}^{M})+n^{1/2}\mathcal{I}(\hat{\beta}^{M}_{\textrm{full}}-\beta^{\ast}) +o_{P}(1).
\end{align*}
By Lemma 19.24 of \citet{van2000asymptotic},
\begin{align*}
\mathbb{G}_{n}l_{\beta}(\hat{\beta}^{M}_{\textrm{full}}) & =\mathbb{G}_{n}\left\{ \ell_{\beta}(\hat{\beta}^{M}_{\textrm{full}},\hat{\Lambda}^{M})+\ell_{\Lambda}(\hat{\beta}^{M}_{\textrm{full}},\hat{\Lambda}^{M})(\dot{\Lambda}_{\hat{\beta}^{M}_{\textrm{full}}})\right\} \\
 & =\mathbb{G}_{n}\left\{ \ell_{\beta}(\beta^{\ast},\Lambda^{\ast})-\ell_{\Lambda}(\beta^{\ast},\Lambda^{\ast})(h^{\ast})\right\} +o_{P}(1)\\
 & =\mathbb{G}_{n}\tilde{\ell}(\beta^{\ast})+o_{P}(1),
\end{align*}
Then (\ref{eq.lasso.taylor.2}) follows from $\mathbb{G}_{n}=n^{1/2}(\mathbb{P}_{n}-P)$ and $P\tilde{\ell}(\beta^{\ast})=0$.

In addition, (\ref{eq.lasso.taylor.2}) implies that
\[
n^{1/2}\mathbb{P}_{n}\ell_{\beta}(\hat{\beta}^{M}_{\textrm{full}},\hat{\Lambda}^{M})=\mathbb{G}_{n}\tilde{\ell}(\beta^{\ast})-n^{1/2}\mathcal{I}(\hat{\beta}^{M}_{\textrm{full}}-\beta^{\ast})+o_{P}(1),
\]
Since $\mathbb{G}_{n}\tilde{\ell}(\beta^{\ast})\stackrel{d}{\longrightsquigarrow}N\left(0,\mathcal{I}\right)$ due to the central limit theorem and $\|\hat{\beta}^{M}_{\textrm{full}}-\beta^{\ast}\|_{2}=O_{P}(n^{-1/2})$, we have $\|n^{1/2}\mathbb{P}_{n}\ell_{\beta}(\hat{\beta}^{M}_{\textrm{full}},\hat{\Lambda}^{M})\|_{2}=O_P(1)$. Furthermore, by (\ref{eq.lasso.taylor.2}) again,
\[
n^{1/2}\mathcal{I}(\hat{\beta}^{M}_{\textrm{full}}-\beta^{\ast})=-n^{1/2}\mathbb{P}_{n}\ell_{\beta}(\hat{\beta}^{M}_{\textrm{full}},\hat{\Lambda}^{M})+\mathbb{G}_{n}\tilde{\ell}(\beta^{\ast})+o_{P}(1).
\]
Note that $\hat{\beta}^{M}_{\textrm{full},-M}=\beta_{-M}^{\ast}=0$ and
\[
\overline{\beta}^{M}=\hat{\beta}^{M}+(\hat{I}_{M,n}/n)^{-1}\mathbb{P}_{n}l_{\beta_{M}}(\hat{\beta}^{M}_{\textrm{full}})=\hat{\beta}^{M}+(\hat{I}_{M,n}/n)^{-1}\mathbb{P}_{n}\ell_{\beta_{M}}(\hat{\beta}^{M}_{\textrm{full}},\hat{\Lambda}^{M})
\]
So,
\begin{align*}
n^{1/2}\mathcal{I}_{M,M}(\overline{\beta}^{M}-\beta_{M}^{\ast}) & =n^{1/2}\mathcal{I}_{M,M}(\overline{\beta}^{M}-\hat{\beta}^{M})+n^{1/2}\mathcal{I}_{M,M}(\hat{\beta}^{M}-\beta_{M}^{\ast})\\
 & =n^{1/2}\mathcal{I}_{M,M}(\hat{I}_{M,n}/n)^{-1}\mathbb{P}_{n}\ell_{\beta_{M}}(\hat{\beta}^{M}_{\textrm{full}},\hat{\Lambda}^{M})-n^{1/2}\mathbb{P}_{n}\ell_{\beta_{M}}(\hat{\beta}^{M}_{\textrm{full}},\hat{\Lambda}^{M})+\mathbb{G}_{n}\tilde{\ell}_{M}(\beta^{\ast})+o_{P}(1)\\
 & =\mathcal{I}_{M,M}\{(\hat{I}_{M,n}/n)^{-1}-\mathcal{I}_{M,M}^{-1}\}n^{1/2}\mathbb{P}_{n}\ell_{\beta_{M}}(\hat{\beta}^{M}_{\textrm{full}},\hat{\Lambda}^{M})+\mathbb{G}_{n}\tilde{\ell}_{M}(\beta^{\ast})+o_{P}(1)\\
 & =\mathbb{G}_{n}\tilde{\ell}_{M}(\beta^{\ast})+o_{P}(1).
\end{align*}
\end{proof-of-lemma}

\subsection{Proof of Theorem \ref{theorem.PRES}}
Similar to the proof of Lemma \ref{lemma.one.step.interval}, it is easy to show that $\|\hat{\beta}-\beta^{\ast}\|_{2}=O_{P}(n^{-1/2})$. Further more, \[
\mathbb{P}_{n}\ell(\hat{\beta},\hat{\Lambda})-\mathbb{P}_{n}\ell(\beta^{\ast},\Lambda^{\ast})\ge\frac{\lambda}{n}\|\hat{\beta}\|_{1}-\frac{\lambda}{n}\|\beta^{\ast}\|_{1}\ge-\frac{\lambda}{n}\|\beta^{\ast}-\hat{\beta}\|_{1}\ge-\frac{\lambda}{n}\sqrt{p}\|\beta^{\ast}-\hat{\beta}\|_{2}=-O_{P}(n^{-1}).
\]
So similar arguments to the proof of Lemma \ref{lemma.one.step.interval} lead to $\sup_{t\in[\zeta,\tau]}\bigg|\hat{\Lambda}(t)-\Lambda^{\ast}(t)\bigg|\longrightarrow0$ a.s. and $\bigg\|\dot{\Lambda}_{\hat{\beta}}+h^{\ast}\bigg\|_{L_2(\Lambda^{\ast})}=o_{P}(1)$.

Note that
\[
l_{\beta}(\hat{\beta})=\ell_{\beta}(\hat{\beta},\hat{\Lambda})+\ell_{\Lambda}(\hat{\beta},\hat{\Lambda})(\dot{\Lambda}_{\hat{\beta}})
\]
and
\[
\mathbb{P}_n\ell_{\Lambda}(\hat{\beta},\hat{\Lambda})(\dot{\Lambda}_{\hat{\beta}})=0.
\]
So
\[
\mathbb{P}_nl_{\beta\beta}(\hat{\beta})=\mathbb{P}_n\left\{ \ell_{\beta\beta}(\hat{\beta},\hat{\Lambda})+\ell_{\beta\Lambda}(\hat{\beta},\hat{\Lambda})(\dot{\Lambda}_{\hat{\beta}})\right\} .
\]
By similar arguments to the proof of Lemma 1 in \citet{zeng2017maximum}, it can be shown that $\ell_{\beta\beta}(\hat{\beta},\hat{\Lambda})$
and $\ell_{\beta\Lambda}(\hat{\beta},\hat{\Lambda})(\dot{\Lambda}_{\hat{\beta}})$ belong to two Glivenko-Cantelli classes respectively. Therefore,
\begin{align*}
-\mathbb{P}_nl_{\beta\beta}(\hat{\beta}) & =-P\left\{ \ell_{\beta\beta}(\hat{\beta},\hat{\Lambda})+\ell_{\beta\Lambda}(\hat{\beta},\hat{\Lambda})(\dot{\Lambda}_{\hat{\beta}})\right\} +o_{P}(1)\\
 & =-P\left\{ \ell_{\beta\beta}(\beta^{\ast},\Lambda^{\ast})-\ell_{\beta\Lambda}(\beta^{\ast},\Lambda^{\ast})(h^{\ast})\right\} +o_{P}(1)\\
 & =P\left[\left\{ \ell_{\beta}(\beta^{\ast},\Lambda^{\ast})-\ell_{\Lambda}(\beta^{\ast},\Lambda^{\ast})(h^{\ast})\right\} ^{\otimes2}\right]+o_{P}(1)\\
 & =\mathcal{I}+o_{P}(1).
\end{align*}
Here we have used an identity
\[
P\left\{ \ell_{\beta}(\beta^{\ast},\Lambda^{\ast})\ell_{\Lambda}(\beta^{\ast},\Lambda^{\ast})(h^{\ast})^{\top}\right\} =P\left[\left\{ \ell_{\Lambda}(\beta^{\ast},\Lambda^{\ast})(h^{\ast})\right\} ^{\otimes2}\right]
\]
from the proof of Theorem 2 in \cite{zeng2016maximum}.

\subsection{Proof of Theorem \ref{theorem.LS}}
Following similar arguments to the proof of Theorem 4.1 in \citet{HuWe1997}, we can show that the least favorable direction $g^{\ast}$ exists and has a bounded total variation over $[\zeta,\tau]$, the union of the support of the inspection times. Then,
by Theorem 9.1 in \citet{huang2012least}, we only need to verify their conditions (A1) and (A2) as well as the existence of a sequence of $g_n\in\mathcal{G}_{M,n}^p$ converging to $g^{\ast}$ in probability
to prove the consistency of the least squares information estimator.

Since $\ell_{\beta}(\hat{\beta},\hat{\Lambda})$ and $\dot{\ell}_{\Lambda}(\hat{\beta},\hat{\Lambda})(g)$ belong
to two Glivenko--Cantelli classes respectively, where $g\in\mathcal{G}_{M,n}^{p}$, $\left\{ \ell_{\beta}(\hat{\beta},\hat{\Lambda})-\dot{\ell}_{\Lambda}(\hat{\beta},\hat{\Lambda})(g)\right\} ^{\otimes2}$ also belongs to a Glivenko--Cantelli class due to the uniform boundedness of the classes $\{\ell_{\beta}(\beta,\Lambda):(\beta,\Lambda)\in V_1\}$ and $\{\dot{\ell}_{\Lambda}(\beta,\Lambda):(\beta,\Lambda)\in V_1\ \mbox{and}\ g\in\mathcal{G}_M\}$, where $\mathcal{G}_M=\{g:g(0)=0\ \mbox{and}\ V(g)\le M\}$. So Condition (A1) in \citet{huang2012least} is satisfied.

For Condition (A2), we only need to show that $\sup_{g\in\mathcal{G}_{M,n}}P\left\{ \ell_{\Lambda}(\hat{\beta},\hat{\Lambda})(g)-\dot{\ell}_{\Lambda}(\beta^{\ast},\Lambda^{\ast})(g)\right\} ^{2}=o_{P}(1)$, because the rest are direct results of the dominated convergence theorem. Let
$$
q(\beta,\Lambda;t,X)=\Lambda(t)\exp\left(\beta^{\top}X\right)$$ and $$Q(\beta,\Lambda;t,X)=\exp\left\{ -q(t,X;\beta,\Lambda)\right\}.
$$
Then for the $i$th subject, its log likelihood is
\[
\ell(\beta,\Lambda;X,L,R)=\log\left\{ Q(L,X;\beta,\Lambda)-Q(R,X;\beta,\Lambda)\right\}
\]
and the score function with respect to $\Lambda$ along the direction $g$ is
\begin{align*}
\dot{\ell}_{\Lambda}(\beta,\Lambda;L,R,X)(g) & =-\frac{Q(\beta,\Lambda;L,X)\cdot q(\beta,g;L,X)-Q(\beta,\Lambda;R,X)\cdot q(\beta,g;R,X)}{Q(\beta,\Lambda;L,X)-Q(\beta,\Lambda;R,X)}.
\end{align*}
Since $g$ has bounded variation, the covariate $X$ is bounded, $\|\hat{\beta}-\beta^{\ast}\|_2=o_{P}(1)$, and $\sup_{t\in[\zeta,\tau]}|\hat{\Lambda}(t)-\Lambda^{\ast}(t)|\to0$ almost surely, it suffices to show that
\begin{align*}
P|Q(\hat{\beta},\hat{\Lambda};L,X)-Q(\beta^{\ast},\Lambda^{\ast};L,X)|^{2} & =o_{P}(1),\quad P|Q(\hat{\beta},\hat{\Lambda};R,X)-Q(\beta^{\ast},\Lambda^{\ast};R,X)|^{2}=o_{P}(1),\\
\sup_{g\in\mathcal{G}_{M,n}}P|q(\hat{\beta},g;L,X)-q(\beta^{\ast},g;L,X)|^{2} & =o_{P}(1),\quad\sup_{g\in\mathcal{G}_{M,n}}P|q(\hat{\beta},g;R,X)-q(\beta^{\ast},g;R,X)|^{2}=o_{P}(1).
\end{align*}
In the proof of Theorem \ref{theorem.PRES}, we showed that
\[
\mathbb{P}_{n}\ell(\hat{\beta},\hat{\Lambda})-\mathbb{P}_{n}\ell(\beta^{\ast},\Lambda^{\ast})\ge O_{P}(n^{-1}).
\]
So, by similar arguments to the proof of Lemma A1 in \citet{zeng2016maximum}, the Hellinger distance between $(\hat{\beta},\hat{\Lambda})$ and $(\beta^{\ast},\Lambda^{\ast})$ can be shown to be $O_{P}(n^{-1/3})$. Then, following again the proof of Lemma A1 in \cite{zeng2016maximum}, we can show that
\[
\mathbb{E}\left\{ \sum_{k=1}^{K}|Q(\hat{\beta},\hat{\Lambda};U_{k},X)-Q(\beta^{\ast},\Lambda^{\ast};U_{k},X)|^{2}\right\} =O_{P}(n^{-2/3})=o_{P}(1),
\]
which implies
\begin{align*}
P|Q(\hat{\beta},\hat{\Lambda};L,X)-Q(\beta^{\ast},\Lambda^{\ast};L,X)|^{2} &=o_{P}(1)\quad \mbox{and}\quad P|Q(\hat{\beta},\hat{\Lambda};L,X)-Q(\beta^{\ast},\Lambda^{\ast};R,X)|^{2}=o_{P}(1),
\end{align*}
since both $L$ and $R$ belong to the set of inspections times $\{U_{1},U_{2},\dots,U_{K}\}.$ Furthermore, by the mean value theorem, the result $\hat{\beta}-\beta^{\ast}=o_{P}(1)$ implies
\begin{align*}
\sup_{g\in\mathcal{G}_{M,n}}P|q(\hat{\beta},g;L,X)-q(\beta^{\ast},g;L,X)|^{2} & =o_{P}(1)\quad \mbox{and}\quad \sup_{g\in\mathcal{G}_{M,n}}P|q(\hat{\beta},g;R,X)-q(\beta^{\ast},g;R,X)|^{2}=o_{P}(1),
\end{align*}
since $g$ has bounded variation and the covariate $X$ is bounded. So Condition (A2) is satisfied.

To apply Theorem 9.1 in \citet{huang2012least}, it remains to show that there exists $g_{n}\in\mathcal{G}_{M,n}^{p}$ converging to $g^{\ast}$ in probability. This is equivalent to showing that each component converges. Thus we assume $g^{\ast}$ is of one-dimension without loss of generality. That is, we need to show that there exists a sequence of $g_{n}$'s  such that
\[
P\left[\left\{ g_{n}(L)-g^{\ast}(L)\right\} ^{2}+\left\{ g_{n}(R)-g^{\ast}(R)\right\} ^{2}\right]=o_{P}(1),
\]
where the norm used is inherited from Section 9.4.1 of \citet{huang2012least}. Let $u_{0}=0$ and
\[
g_{n}(t)=\sum_{k=1}^{m}\left\{ g^{\ast}(u_{k})-g^{\ast}(u_{k-1})\right\} I(t\ge u_{k}),
\]
which is a step function that only changes its value at $u_{1},u_{2},\dots,u_{m}$ and satisfies $g_{n}(u_{k})=g^{\ast}(u_{k})$, $(k=1,\dots,m)$. We prove a stronger result,
\[
\sup_{t\in[\zeta,\tau]}|g_{n}(t)-g^{\ast}(t)|=o_{P}(1).
\]
Due to our definition of $g_{n}$, this is equivalent to showing that
\[
\sup_{t\in[\zeta,u_{1}]}|g^{\ast}(t)-g^{\ast}(\zeta)|=o_{P}(1)\quad\mbox{and}\quad \max_{k\in\left\{ 1,2,\dots,m\right\} }\sup_{t\in[u_{k},u_{k+1}]}|g^{\ast}(t)-g^{\ast}(u_{k})|=o_{P}(1),
\]
where $u_{m+1}=\tau$. As $g^{\ast}$ is continuously differentiable on $[\zeta,\tau]$ \citep{HuWe1997}, $g^{\ast\prime}$, the derivative of $g^{\ast}$,  is bounded on $[\zeta,\tau]$. By the mean value theorem,
\begin{align*}
\sup_{t\in[\zeta,u_{1}]}|g^{\ast}(t)-g^{\ast}(\zeta)| & \le\|g^{\ast\prime}\|_{L^{\infty}[\zeta,\tau]}^{2}\cdot(u_{1}-\zeta),\\
\max_{k\in\left\{ 1,2,\dots,m\right\} }\sup_{t\in[u_{k},u_{k+1}]}|g^{\ast}(t)-g^{\ast}(u_{k})| & \le\|g^{\ast\prime}\|_{L^{\infty}[\zeta,\tau]}^{2}\cdot\max_{k\in\left\{ 1,2,\dots,m\right\} }(u_{k+1}-u_{k}).
\end{align*}
We consider a partition $t_{j}=\zeta+j(\tau-\zeta)/N$, $(j=0,1,\dots,N)$. If $\left\{ u_{1},\dots,u_{m}\right\} \bigcap[t_{j-1},t_{j}]\neq\emptyset$ for all $j=1,\dots,N$, $u_{1}-\zeta\le(\tau-\zeta)/N$ and $u_{k+1}-u_{k}\le2(\tau-\zeta)/N$ for all $k=1,\dots,m$, and so
\begin{align*}
\sup_{t\in[\zeta,u_{1}]}|g^{\ast}(t)-g^{\ast}(\zeta)| & \le\|g^{\ast\prime}\|_{L^{\infty}[\zeta,\tau]}^{2}\cdot\frac{\tau-\zeta}{N},\\
\max_{k\in\left\{ 1,2,\dots,m\right\} }\sup_{t\in[u_{k},u_{k+1}]}|g^{\ast}(t)-g^{\ast}(u_{k})| & \le\|g^{\ast\prime}\|_{L^{\infty}[\zeta,\tau]}^{2}\cdot\frac{2(\tau-\zeta)}{N}.
\end{align*}
By Lemma \ref{lemma.inspection.2} below, which builds on Lemma \ref{lemma.inspection.1}, as $n\to\infty$,
\[
\text{P}\left(\left\{ u_{1},\dots,u_{m}\right\} \bigcap[t_{j-1},t_{j}]\neq\emptyset\quad\text{for all }j=1,\dots,N\right)\longrightarrow1.
\]
Therefore
\begin{align*}
\text{P}\left\{ \sup_{t\in[\zeta,u_{1}]}|g^{\ast}(t)-g^{\ast}(\zeta)|>\|g^{\ast\prime}\|_{L^{\infty}[\zeta,\tau]}^{2}\cdot\frac{\tau-\zeta}{N}\right\}  & \longrightarrow0,\\
\text{P}\left\{ \max_{k\in\left\{ 1,2,\dots,m\right\} }\sup_{t\in[u_{k},u_{k+1}]}|g^{\ast}(t)-g^{\ast}(u_{k})|>\|g^{\ast\prime}\|_{L^{\infty}[\zeta,\tau]}^{2}\cdot\frac{2(\tau-\zeta)}{N}\right\}  & \longrightarrow0.
\end{align*}
Letting $N\to\infty$, we get the desired result.

\begin{lemma}
\label{lemma.inspection.1}
For any interval $[\zeta_{0},\tau_{0}]\subset[\zeta,\tau]$ where $\zeta_{0}<\tau_{0}$,
\[
\text{P}\left(L\in[\zeta_{0},\tau_{0}]\right)>0,\quad\text{P}\left(R\in[\zeta_{0},\tau_{0}]\right)>0,
\]
where $L$ and $R$ are the last inspection time before failure time $T$ and first inspection time after $T$ respectively.
\end{lemma}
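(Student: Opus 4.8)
\noindent\emph{Proof proposal.} The plan is to exhibit, for a well-chosen inspection-time index, an explicit positive-probability event on which $L$ (resp.\ $R$) lands in $[\zeta_0,\tau_0]$. The two ingredients are the conditional independence $T\perp(K,\overrightarrow{U})\mid X$ and the structural facts built into Conditions C1--C4 of \citet{li2020adaptive}: $\Lambda^{\ast}$ is continuous and strictly increasing on $[\zeta,\tau]$, with $[\zeta,\tau]$ the union of the supports of $U_1,\dots,U_K$, $K$ bounded, the $U_k$ strictly increasing (with $U_0=0$, $U_{K+1}=\infty$), and $X$ bounded. As a first step I would replace $[\zeta_0,\tau_0]$ by a closed interior subinterval $[\zeta_0',\tau_0']\subset(\zeta_0,\tau_0)\subseteq(\zeta,\tau)$ — nonempty since $\zeta_0<\tau_0$ — because monotonicity of probability reduces the claim to this case, and working strictly inside $(\zeta,\tau)$ keeps all the monotonicity inequalities below strict.

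For $L$, fix any index $k\ge1$. On the event $\{K\ge k,\ U_k\in[\zeta_0',\tau_0'],\ U_k<T\le U_{k+1}\}$ one has $\Delta_k=1$, hence $L=U_k\in[\zeta_0',\tau_0']$; conditioning on $(K,\overrightarrow{U},X)$ and using the conditional independence gives
\[
\text{P}\bigl(L\in[\zeta_0,\tau_0]\bigr)\ \ge\ E\Bigl[\,I\bigl(K\ge k,\ U_k\in[\zeta_0',\tau_0']\bigr)\,\bigl\{S(U_k\mid X)-S(U_{k+1}\mid X)\bigr\}\,\Bigr],
\]
where $S(t\mid X)=\exp\{-\Lambda^{\ast}(t)e^{{\beta^{\ast}}^{\top}X}\}$ and $S(0\mid X)=1$. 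On the indicated event $U_k\in(\zeta,\tau)$ and $U_k<U_{k+1}$, so strict monotonicity of $\Lambda^{\ast}$ on $[\zeta,\tau]$ (using $\Lambda^{\ast}(U_{k+1})\ge\Lambda^{\ast}(\tau)>\Lambda^{\ast}(U_k)$ in the case $U_{k+1}>\tau$) makes the bracketed factor a.s.\ strictly positive there; thus the bound is positive provided $\text{P}(K\ge k,\ U_k\in[\zeta_0',\tau_0'])>0$ for some $k$. Such a $k$ exists because, picking any $t\in(\zeta_0',\tau_0')\subseteq[\zeta,\tau]$, membership of $t$ in the union of the supports of $U_1,\dots,U_K$ yields an index $j$ with $\text{P}(K\ge j,\ |U_j-t|<\varepsilon)>0$ for every $\varepsilon>0$; taking $\varepsilon$ small enough that $(t-\varepsilon,t+\varepsilon)\subseteq[\zeta_0',\tau_0']$ gives the claim. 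The case of $R$ is handled identically using the event $\{K\ge k+1,\ U_{k+1}\in[\zeta_0',\tau_0'],\ U_k<T\le U_{k+1}\}$ (on which $R=U_{k+1}$) for $k\ge0$; the bracketed factor $S(U_k\mid X)-S(U_{k+1}\mid X)$ is again a.s.\ positive since $U_{k+1}\in(\zeta,\tau)$ and $U_k<U_{k+1}$ (for $k=0$ it equals $1-S(U_1\mid X)>0$).

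The only delicate points — and the only places Conditions C1--C4 genuinely enter — are (i) guaranteeing $S(U_k\mid X)-S(U_{k+1}\mid X)>0$ almost surely on the conditioning event, which is exactly why one first passes to an interior subinterval and invokes strict monotonicity of $\Lambda^{\ast}$, and (ii) converting ``$[\zeta,\tau]$ is the union of the supports of the inspection times'' into the existence of a single index $j$ with $\text{P}(K\ge j,\ U_j\in(t-\varepsilon,t+\varepsilon))>0$, which relies on boundedness of $K$ so that the union is over finitely many coordinates. Neither step requires new machinery, so the lemma should follow with only the bookkeeping above.
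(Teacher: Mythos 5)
Your proof is correct, and it rests on the same three ingredients as the paper's: the conditional independence $T\perp(K,\overrightarrow U)\mid X$, the support condition ``$[\zeta,\tau]$ is the union of the supports of $U_1,\dots,U_K$,'' and strict monotonicity of $\Lambda^{\ast}$ on $[\zeta,\tau]$. The route is different, though. The paper splits $[\zeta_0,\tau_0]$ deterministically at its midpoint, argues $L\in[\zeta_0,\tau_0]$ whenever some inspection time lands in the left half and $T$ lands in the right half, and then conditions only on $X$ to factor the probability into a product of two terms — the ``$T$ in the right half'' term bounded below uniformly in $X$ (using boundedness of $X$), and the ``some $U_k$ in the left half'' term positive by the support condition. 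You instead fix a single index $k$, condition on the entire inspection process $(K,\overrightarrow U,X)$, and compute $\text{P}(U_k<T\le U_{k+1}\mid K,\overrightarrow U,X)=S(U_k\mid X)-S(U_{k+1}\mid X)$; since $U_{k+1}-U_k$ can be arbitrarily small there is no uniform lower bound here, so you correctly substitute the weaker but sufficient fact that this quantity is almost surely strictly positive on the conditioning event, combined with a pigeonhole/boundedness-of-$K$ step to extract a fixed $k$ with $\text{P}(K\ge k,\,U_k\in[\zeta_0',\tau_0'])>0$. Both arguments succeed and are of comparable length; yours pins down $L=U_k$ exactly (rather than merely trapping $L$ between two cut points) and avoids the need for a uniform lower bound in $X$, while the paper's midpoint-split factorization is somewhat cleaner to state because the two positive-probability events decouple at the level of $X$ alone.
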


\begin{proof}
We first show that $\text{P}\left(L\in[\zeta_{0},\tau_{0}]\right)>0$. Consider the inspection times $U_{1},\dots,U_{K}$.  If $T\in\left(\frac{\zeta_{0}+\tau_{0}}{2},\tau_{0}\right]$ and there exists $U_{k}$ such that $U_{k}\in\left[\zeta_{0},\frac{\zeta_{0}+\tau_{0}}{2}\right]$,
then we must have $L\in[\zeta_{0},\tau_{0}]$. Since the failure time and the inspection times are independent given the covariates, it suffices to show that
\[
\int\text{P}\left(\left\{ U_{1},\dots,U_{K}\right\} \bigcap\left[\zeta_{0},\frac{\zeta_{0}+\tau_{0}}{2}\right]\neq\emptyset\mid X\right)\cdot\text{P}\left(T\in\left(\frac{\zeta_{0}+\tau_{0}}{2},\tau_{0}\right]\mid X\right)\cdot p_{X}(x)dx>0.
\]
As $\Lambda^{\ast}$ is strictly increasing on $[\zeta,\tau]$ and the components of $X$ are bounded, there exists $\epsilon>0$ such that
\[
\text{P}\left(T\in\left(\frac{\zeta_{0}+\tau_{0}}{2},\tau_{0}\right]\mid X\right)=\exp\left\{ -\Lambda^{\ast}\left(\frac{\zeta_{0}+\tau_{0}}{2}\right)\exp(X^{\top}\beta^{\ast})\right\} -\exp\left\{ -\Lambda^{\ast}(\tau_{0})\exp(X^{\top}\beta^{\ast})\right\} \ge\epsilon.
\]
In addition,
\[
\int\text{P}\left(\left\{ U_{1},\dots,U_{K}\right\} \bigcap\left[\zeta_{0},\frac{\zeta_{0}+\tau_{0}}{2}\right]\neq\emptyset\mid X\right)p_{X}(x)dx=\text{P}\left(\left\{ U_{1},\dots,U_{K}\right\} \bigcap\left[\zeta_{0},\frac{\zeta_{0}+\tau_{0}}{2}\right]\neq\emptyset\right)>0,
\]
where the last inequality comes from the condition that the union of the support of $(U_{1},\dots,U_{K})$ is $[\zeta,\tau]$. Thus,
\[
\text{P}\left(L\in[\zeta_{0},\tau_{0}]\right)\ge\epsilon\cdot\text{P}\left(\left\{ U_{1},\dots,U_{K}\right\} \bigcap\left[\zeta_{0},\frac{\zeta_{0}+\tau_{0}}{2}\right]\neq\emptyset\right)>0.
\]
By similar arguments, we can show $\text{P}\left(R\in[\zeta_{0},\tau_{0}]\right)>0$.
\end{proof}

\begin{lemma}
\label{lemma.inspection.2}
For any interval $[\zeta_{0},\tau_{0}]\subset[\zeta,\tau]$ where $\zeta_{0}<\tau_{0}$, as $n\to\infty$,
\[
\text{P}\left(\left\{ u_{1},\dots,u_{m}\right\} \bigcap[\zeta_{0},\tau_{0}]\neq\emptyset\right)\longrightarrow1.
\]
\end{lemma}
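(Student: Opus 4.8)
The plan is to derive Lemma \ref{lemma.inspection.2} from Lemma \ref{lemma.inspection.1} together with a purely combinatorial observation about how the maximal intersections sit among the left endpoints $L_1,\dots,L_n$ and right endpoints $R_1,\dots,R_n$ of the bracketing intervals $(L_i,R_i]$. Fix any $\xi$ with $\zeta_0<\xi<\tau_0$. Since $[\zeta_0,\xi]$ and $[\xi,\tau_0]$ are non-degenerate subintervals of $[\zeta,\tau]$, Lemma \ref{lemma.inspection.1} gives $p:=\min\{\text{P}(L\in[\zeta_0,\xi]),\,\text{P}(R\in[\xi,\tau_0])\}>0$. Let $\mathcal{E}_n$ be the event that some subject has its left endpoint in $[\zeta_0,\xi]$ and some (possibly different) subject has its right endpoint in $[\xi,\tau_0]$. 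Because the $n$ subjects are i.i.d., a union bound yields $\text{P}(\mathcal{E}_n^c)\le(1-\text{P}(L\in[\zeta_0,\xi]))^n+(1-\text{P}(R\in[\xi,\tau_0]))^n\le 2(1-p)^n\to 0$, so $\text{P}(\mathcal{E}_n)\to 1$; it remains to show that, almost surely, $\mathcal{E}_n$ forces some $u_k$ into $[\zeta_0,\tau_0]$.

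The combinatorial step I would prove is: \emph{if $a$ is one of the $L_i$, $b$ is one of the $R_j$, and $a<b$, then some maximal intersection $(l_k,u_k]$ satisfies $a\le l_k<u_k\le b$.} Recall that the maximal intersections are precisely the intervals $(l_k,u_k]$ for which $l_k$ is a left endpoint immediately followed, in the pooled sorted list of all endpoints, by a right endpoint $u_k$ (the classical Peto/Turnbull description of the innermost intervals). To prove the step, set $E=(\{L_1,\dots,L_n\}\cup\{R_1,\dots,R_n\})\cap[a,b]$, a finite set containing $a$ and $b$; under the continuity assumptions in Conditions C1--C4, almost surely no point of $[\zeta,\tau]$ is simultaneously a left and a right endpoint and the endpoints are distinct. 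Writing $E=\{e_1<\dots<e_q\}$ with $e_1=a$ a left endpoint and $e_q=b$ a right endpoint, let $l=\max\{s:e_s\text{ is a left endpoint}\}$; then $l<q$, $e_{l+1}$ is a right endpoint, and no endpoint of any subject lies strictly between $e_l$ and $e_{l+1}$ (such a point would lie in $[a,b]$, hence in $E$, contradicting consecutiveness in $E$). Thus $(e_l,e_{l+1}]$ is one of the maximal intersections, and $a=e_1\le e_l<e_{l+1}\le e_q=b$.

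To conclude, on $\mathcal{E}_n$ pick $L_i\in[\zeta_0,\xi]$ and $R_j\in[\xi,\tau_0]$; almost surely $L_i\le\xi\le R_j$ with $L_i<R_j$, so the combinatorial step produces a maximal intersection with $\zeta_0\le L_i<u_k\le R_j\le\tau_0$, i.e.\ $u_k\in[\zeta_0,\tau_0]$. Hence $\text{P}(\{u_1,\dots,u_m\}\cap[\zeta_0,\tau_0]\ne\emptyset)\ge\text{P}(\mathcal{E}_n)\to 1$. The one point that needs care is nailing down the exact description of the maximal intersections (the tie-breaking conventions in the Turnbull construction); everything else is bookkeeping, since the i.i.d.\ structure makes $\text{P}(\mathcal{E}_n)\to 1$ immediate once Lemma \ref{lemma.inspection.1} is in hand. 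As a sanity check and alternative route that bypasses Lemma \ref{lemma.inspection.1}, one can instead use $\sup_{t\in[\zeta,\tau]}|\hat\Lambda(t)-\Lambda^{\ast}(t)|\to 0$ almost surely (established earlier in the appendix) together with the strict monotonicity of $\Lambda^{\ast}$: if $[\zeta_0,\tau_0]$ contained no $u_k$, then $\hat\Lambda$, being a nonnegative combination of the $I(t\ge u_k)$, would be constant on $[\zeta_0,\tau_0]$, contradicting $\hat\Lambda(\tau_0)-\hat\Lambda(\zeta_0)\to\Lambda^{\ast}(\tau_0)-\Lambda^{\ast}(\zeta_0)>0$.
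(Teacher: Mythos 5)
Your proof is correct and takes essentially the same route as the paper: split $[\zeta_0,\tau_0]$ at an interior point (the paper uses the midpoint, you allow arbitrary $\xi$), apply Lemma \ref{lemma.inspection.1} to each piece to get the probability of the split event tending to one, and observe that an $L_i$ in the left half together with an $R_j$ in the right half forces a maximal intersection inside $[\zeta_0,\tau_0]$. You add genuine value by actually proving the combinatorial step --- that a left endpoint $a$ strictly below a right endpoint $b$ forces a maximal intersection $(l_k,u_k]$ with $a\le l_k<u_k\le b$ --- which the paper simply asserts without argument; your "last left endpoint before the first right endpoint" construction inside $E\cap[a,b]$ is exactly the right way to pin this down. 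Your sanity-check alternative, deducing the claim from the almost sure uniform consistency $\sup_{t\in[\zeta,\tau]}|\hat{\Lambda}(t)-\Lambda^{\ast}(t)|\to 0$ combined with strict monotonicity of $\Lambda^{\ast}$ and the fact that $\hat{\Lambda}$ is a step function jumping only at the $u_k$, is a genuinely different and arguably cleaner argument: it bypasses both Lemma \ref{lemma.inspection.1} and the Turnbull-interval combinatorics entirely, at the cost of resting on the earlier consistency result rather than being self-contained.
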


\begin{proof}
If there exist $i$ and $j$ $(1\le i,j\le n)$ such that
\[
L_{i}\in\left[\zeta_{0},\frac{\zeta_{0}+\tau_{0}}{2}\right]\quad\mbox{and}\quad R_{j}\in\left[\frac{\zeta_{0}+\tau_{0}}{2},\tau_{0}\right],
\]
then there must be at least one maximal intersection in $[\zeta_0,\tau_0]$. So it suffices to show that
\[
\text{P}\left(\left\{ L_{1},\dots,L_{n}\right\} \bigcap\left[\zeta_{0},\frac{\zeta_{0}+\tau_{0}}{2}\right]\neq\emptyset\right)\longrightarrow1\quad\mbox{and}\quad\text{P}\left(\left\{ R_{1},\dots,R_{n}\right\} \bigcap\left[\frac{\zeta_{0}+\tau_{0}}{2},\tau_{0}\right]\neq\emptyset\right)\longrightarrow1.
\]
This is a direct result of Lemma \ref{lemma.inspection.1}.
\end{proof}

\subsection{Proof of Theorem \ref{theorem.right}}
Following \cite{andersen1982cox}, we assume without loss of generality that the failure times lie in the time interval $[0,1]$ and use $N_{i}(s)$ and $Y_{i}(s)$ to denote the counting process and at-risk process of the $i$th subject respectively. Let
\begin{align*}
\mathcal{C}(\beta) & =\frac{1}{n}\sum_{i=1}^{n}\int_{0}^{1}X_{i}^{\top}\beta-\log\left\{ \sum_{j=1}^{n}Y_{j}(s)e^{X_{j}^{\top}\beta}\right\} dN_i(s),\\
\mathcal{U}(\beta) & =\frac{\partial\mathcal{C}}{\partial\beta}=\frac{1}{n}\sum_{i=1}^{n}\int_{0}^{1}X_{i}-\frac{S^{(1)}(s,\beta)}{S^{(0)}(s,\beta)}dN_{i}(s),\\
I(\beta) & =\frac{\partial\mathcal{U}}{\partial\beta}=\frac{1}{n}\sum_{i=1}^{n}\int_{0}^{1}\frac{S^{(2)}(s,\beta)}{S^{(0)}(s,\beta)}-\left\{ \frac{S^{(1)}(s,\beta)}{S^{(0)}(s,\beta)}\right\} ^{\otimes2}dN_{i}(s),
\end{align*}
where
\[
S^{(k)}(t,\beta)=\frac{1}{n}\sum_{i=1}^{n}X_{i}^{\otimes k}Y_{i}(t)e^{X_{i}^{\top}\beta}\quad (k=0,1,2).
\]
Here $\mathcal{C}(\beta)$ is the logarithm of the Cox partial likelihood divided by $n$.

From the proof of Theorem 3.2 of \cite{andersen1982cox}, we have
\[
n^{1/2}\mathcal{U}(\beta^{\ast})\stackrel{d}{\longrightsquigarrow} N(0,\mathcal{I}),
\]
which implies that $\mathcal{U}(\beta^{\ast})$ plays the same role as the  score function $\mathbb{P}_{n}l_{\beta}(\beta^{\ast})$ does in GLM. So we may adapt the proof of Theorem \ref{theorem.GLM} and re-check its conditions under the Cox model with right censored data to prove Theorem \ref{theorem.right}. For any (random) sequence $\beta_{n}$ such that $\|\beta_n-\beta^{\ast}\|_2=O_P(n^{-1/2})$, by equation (2.5) in \cite{andersen1982cox},
\[
n^{1/2}\mathcal{U}(\beta_{n})-n^{1/2}\mathcal{U}(\beta^{\ast})=n^{1/2}I(\tilde{\beta}_{n})\cdot(\beta_{n}-\beta^{\ast}),
\]
where $\tilde{\beta}_{n}$ is on the line segment between $\beta_{n}$ and $\beta^{\ast}$. Since $\beta_{n}$ converges to $\beta^{\ast}$, $\tilde{\beta}_{n}$  converges to $\beta^{\ast}$. Thus, by the proof of Theorem 3.2 in \cite{andersen1982cox}, $I(\tilde{\beta}_{n})\stackrel{p}{\longrightarrow}-\mathcal{I}$. As a result,
\begin{equation} \label{eq.lasso.taylor.3}
n^{1/2}\mathcal{U}(\beta_{n})-n^{1/2}\mathcal{U}(\beta^{\ast})=-n^{1/2}\mathcal{I}(\beta_{n}-\beta^{\ast})+o_{P}(1).
\end{equation}
This enables us to circumvent the asymptotic continuity and Taylor expansion arguments in Section \ref{section.one.step.local} and Section \ref{section.inactive}. What remains to be re-verified is the condition \eqref{condition.LeCam3} for Le Cam's third lemma and Condition \ref{condition.one.step.glm}. To verify them, it suffices to prove
\begin{equation}\label{1st.thing.to.show}
n^{1/2}\mathcal{U}(\beta^{\ast})=n^{1/2}\mathbb{P}_{n}\tilde{\ell}(\beta^{\ast})+o_{P}(1)
\end{equation}
and Lemma \ref{lemma.one.step.right} below.

To show \eqref{1st.thing.to.show}, we leverage the approximately least favorable submodel of the Cox model with right censored data in Section 3 of \cite{murphy2000profile}. Suppose that for each $(\beta,\Lambda)$, there exists a map $t\mapsto\Lambda_{t}(\beta,\Lambda)$ from a fixed neighborhood of $\beta$ into the parameter set for $\Lambda$ such that the map $t\mapsto l(t,\beta,\Lambda)(Z)$ defined by $l(t,\beta,\Lambda)(Z)=\ell\left(t,\Lambda_{t}(\beta,\Lambda)\right)(Z)$ is twice continuously differentiable for all $Z$, where $Z$ denotes the observed data. Denote its derivative by $\dot{l}(t,\beta,\Lambda)(Z)$. The submodel with parameters $\left(t,\Lambda_{t}(\beta,\Lambda)\right)$ should pass through $(\beta,\Lambda)$ at $t=\beta$, i.e., $\Lambda_{\beta}(\beta,\Lambda)=\Lambda$ for every $(\beta,\Lambda)$. Another requirement for the submodel to be approximately least favorable is that $\dot{l}(\beta^{\ast},\beta^{\ast},\Lambda^{\ast})$ is identical to the efficient score function $\tilde{\ell}(\beta^{\ast})$. Note that
\[
\mathcal{U}(\beta^{\ast})=\mathbb{P}_n\dot{l}(\beta^{\ast},\beta^{\ast},\hat{\Lambda}_{\beta^{\ast}})=\mathbb{P}_n\ell_{\beta}(\beta^{\ast},\hat{\Lambda}_{\beta^{\ast}}),
\]
where $\hat{\Lambda}_{\beta^{\ast}}$ is the maximum likelihood estimator for $\Lambda$ fixing $\beta$ at $\beta^{\ast}$.
So equation \eqref{1st.thing.to.show} is equivalent to
\[
n^{1/2}\mathbb{P}_{n}\dot{l}(\beta^{\ast},\beta^{\ast},\hat{\Lambda}_{\beta^{\ast}})=n^{1/2}\mathbb{P}_{n}\tilde{\ell}(\beta^{\ast})+o_{P}(1).
\]
On page 458 of \citet{van2000asymptotic}, it was shown that $\{\dot{l}(t,\beta,\Lambda)\}$ is a Donsker class. If $\hat{\Lambda}_{\beta^{\ast}}=\Lambda^{\ast}+o_{P}(1)$, by Lemma 19.24 of \citet{van2000asymptotic} \citep[see also equation (13) in][] {murphy2000profile}, we have
\[
\mathbb{G}_{n}\dot{l}(\beta^{\ast},\beta^{\ast},\hat{\Lambda}_{\beta^{\ast}})=\mathbb{G}_{n}\tilde{\ell}(\beta^{\ast})+o_{P}(1).
\]
Furthermore,
$
P\dot{l}(\beta^{\ast},\beta^{\ast},\hat{\Lambda}_{\beta^{\ast}})=0
$
for the Cox model with right-censored data \citep[see page 459 of][]{murphy2000profile}.  So, to prove \eqref{1st.thing.to.show},  it remains to show the consistency of $\hat{\Lambda}_{\beta^{\ast}}$. Similar to equation (2.8) in \cite{andersen1982cox},
\begin{align*}
n^{1/2}\left\{ \hat{\Lambda}_{\beta^{\ast}}(t)-\Lambda^{\ast}(t)\right\}  & =n^{1/2}\left[\int_{0}^{t}\frac{d\overline{N}(s)}{nS^{(0)}(s,\beta^{\ast})}-\int_{0}^{t}\lambda^{\ast}(t)I\left\{ \sum_{i=1}^{n}Y_{i}(s)>0\right\} ds\right]\\
 & \quad+n^{1/2}\left[\int_{0}^{t}\lambda^{\ast}(t)I\left\{ \sum_{i=1}^{n}Y_{i}(s)>0\right\} ds-\Lambda^{\ast}(t)\right],
\end{align*}
where $\overline{N}(s)=\sum_{i=1}^{n}N_{i}(s)$. As discussed in Theorem 3.4 of \cite{andersen1982cox}, the second term is asymptotically negligible, while the first term converges weakly to a Gaussian process. So the consistency follows.

To finish the proof, it remains to verify that Condition \ref{condition.one.step.glm} with $\mathbb{P}_nl_{\beta}(\cdot)$ and
 $\mathbb{P}_nl_{\beta_{M}}(\cdot)$ replaced by $\mathcal{U}(\cdot)$ and $\mathcal{U}_M(\cdot)$ respectively holds for the Cox model with right-censored data, which can be implied by \eqref{1st.thing.to.show} and the lemma below.

\begin{lemma}
\label{lemma.one.step.right}
For any fixed $M$, consider the estimator for $\beta_M$,
\[
\hat{\beta}^{M}=\argmin_{\beta_{M}}-\mathcal{C}(\beta_{M},0)+\frac{\lambda}{n}s_{M}^{\top}\beta_{M},
\]
where $\lambda=O(n^{1/2})$, and the one-step estimator
\[
\overline{\beta}^{M}=\hat{\beta}^{M}+(\hat{I}_{M,n}/n)^{-1}\mathcal{U}_{M}(\hat{\beta}^M_{\textrm{full}}),
\]
where $\hat{\beta}^{M}_{\textrm{full}}=(\hat{\beta}^M,0)$, $\hat{I}_{M,n}=(\hat{I_{n}})_{M,M}$ and $\hat{I}_{n}/n$ is a consistent estimator of $\mathcal{I}$. Then under the model $P_{\beta^{\ast},\Lambda^{\ast}}$ and Conditions A - D in \cite{andersen1982cox},
\begin{align}
n^{1/2}\mathcal{U}(\hat{\beta}^{M}_{\textrm{full}})-n^{1/2}\mathcal{U}(\beta^{\ast}) & =-n^{1/2}\mathcal{I}(\hat{\beta}^{M}_{\textrm{full}}-\beta^{\ast})+o_{P}(1),\label{eq.one.step.right.1}\\
n^{1/2}\left(\overline{\beta}^{M}-\beta_{M}^{\ast}\right) & =n^{1/2}\mathcal{I}_{M,M}^{-1}\mathcal{U}_{M}(\beta^{\ast})+o_{P}(1). \label{eq.one.step.right.2}
\end{align}
\end{lemma}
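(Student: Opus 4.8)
The plan is to reduce everything to the expansion \eqref{eq.lasso.taylor.3}, which already carries out the analytic heavy lifting for the Cox partial likelihood, so that the only genuinely new work is a $n^{1/2}$-consistency statement and some bookkeeping. The one prerequisite not yet in place is that $\hat{\beta}^{M}_{\textrm{full}}=(\hat{\beta}^{M},0)$ satisfies $\|\hat{\beta}^{M}_{\textrm{full}}-\beta^{\ast}\|_{2}=O_{P}(n^{-1/2})$, which is precisely what \eqref{eq.lasso.taylor.3} requires of its argument. First, then, I would establish this $n^{1/2}$-consistency. Writing $f(\beta_{M})=\mathcal{C}(\beta_{M},0)$, by \cite{andersen1982cox} the function $f$ is concave, its gradient at $\beta^{\ast}_{M}=0$ equals $\mathcal{U}_{M}(\beta^{\ast})$ with $n^{1/2}\mathcal{U}_{M}(\beta^{\ast})=O_{P}(1)$, and its Hessian at $\beta^{\ast}_{M}$ converges in probability to $-\mathcal{I}_{M,M}\prec 0$. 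The optimality of $\hat{\beta}^{M}$ gives the basic inequality $f(\hat{\beta}^{M})-f(\beta^{\ast}_{M})\ge -(\lambda/n)\sqrt{|M|}\,\|\hat{\beta}^{M}-\beta^{\ast}_{M}\|_{2}$; combining this with a quadratic upper bound for $f(\hat{\beta}^{M})-f(\beta^{\ast}_{M})$ coming from concavity (the same argument as in the proof of Theorem 1 of \cite{li2020adaptive} and of Lemma \ref{lemma.one.step.interval}), and using $\lambda=O(n^{1/2})$, yields $\|\hat{\beta}^{M}-\beta^{\ast}_{M}\|_{2}=O_{P}(n^{-1/2})$, hence $\|\hat{\beta}^{M}_{\textrm{full}}-\beta^{\ast}\|_{2}=O_{P}(n^{-1/2})$.

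Second, I would invoke \eqref{eq.lasso.taylor.3} with $\beta_{n}=\hat{\beta}^{M}_{\textrm{full}}$. Since this sequence is $n^{1/2}$-consistent by the first step, \eqref{eq.lasso.taylor.3} applies verbatim and delivers \eqref{eq.one.step.right.1} immediately, with no extra argument.

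Third, for \eqref{eq.one.step.right.2}, I would take the $M$-block of \eqref{eq.one.step.right.1}; because $\beta^{\ast}_{-M}=0$ and the $-M$ coordinates of $\hat{\beta}^{M}_{\textrm{full}}$ vanish, the $M$-block reads $n^{1/2}\mathcal{U}_{M}(\hat{\beta}^{M}_{\textrm{full}})=n^{1/2}\mathcal{U}_{M}(\beta^{\ast})-n^{1/2}\mathcal{I}_{M,M}(\hat{\beta}^{M}-\beta^{\ast}_{M})+o_{P}(1)$. Substituting this into $\overline{\beta}^{M}=\hat{\beta}^{M}+(\hat{I}_{M,n}/n)^{-1}\mathcal{U}_{M}(\hat{\beta}^{M}_{\textrm{full}})$, scaling by $n^{1/2}$, and using that $(\hat{I}_{M,n}/n)^{-1}\to\mathcal{I}_{M,M}^{-1}$ in probability (from the assumed consistency of $\hat{I}_{n}/n$ for $\mathcal{I}$, together with invertibility of $\mathcal{I}_{M,M}$) while $n^{1/2}\mathcal{U}_{M}(\beta^{\ast})=O_{P}(1)$ and $n^{1/2}(\hat{\beta}^{M}-\beta^{\ast}_{M})=O_{P}(1)$, the two $n^{1/2}(\hat{\beta}^{M}-\beta^{\ast}_{M})$ terms cancel and I am left with $n^{1/2}(\overline{\beta}^{M}-\beta^{\ast}_{M})=n^{1/2}\mathcal{I}_{M,M}^{-1}\mathcal{U}_{M}(\beta^{\ast})+o_{P}(1)$, which is \eqref{eq.one.step.right.2}.

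The main obstacle is the first step, the $n^{1/2}$-consistency of $\hat{\beta}^{M}$: the subtlety is controlling the global behaviour of the concave objective so that $\hat{\beta}^{M}$ cannot sit far from $\beta^{\ast}_{M}$ before the local quadratic lower bound takes effect. This is routine and essentially identical to the corresponding step for interval-censored data in Lemma \ref{lemma.one.step.interval}, so no new ideas are needed; Steps 2 and 3 are purely algebraic once \eqref{eq.lasso.taylor.3} and the consistency of $\hat{I}_{M,n}/n$ are granted.
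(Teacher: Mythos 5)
Your proposal is correct and follows essentially the same route as the paper: establish $\sqrt{n}$-consistency of $\hat{\beta}^{M}$, apply the Taylor expansion identity \eqref{eq.lasso.taylor.3} to obtain \eqref{eq.one.step.right.1}, then plug the $M$-block into the definition of $\overline{\beta}^{M}$ and use consistency of $\hat{I}_{M,n}/n$ to obtain \eqref{eq.one.step.right.2}. The only cosmetic differences are that the paper establishes $\sqrt{n}$-consistency by invoking Zhang and Lu (2007) rather than re-running the concavity argument, and the paper's algebra in the last step premultiplies by $\mathcal{I}_{M,M}$ before cancelling, whereas you cancel directly (your phrase ``the two terms cancel'' should really read ``cancel up to $o_P(1)$'' since the factor $(\hat{I}_{M,n}/n)^{-1}\mathcal{I}_{M,M}$ is only $I+o_P(1)$, but you already note the needed $O_P(1)$ bounds, so the argument is sound).
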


\begin{proof}
First of all, it can be shown that $\|\hat{\beta}^{M}_{\textrm{full}}-\beta^{\ast}\|_{2}=O_{P}(n^{-1/2})$ by a simple modification of the proof of Theorem 1 in \cite{zhang2007adaptive}. Then (\ref{eq.one.step.right.1}) is a direct result of (\ref{eq.lasso.taylor.3}) by replacing $\beta_n$ with $\hat{\beta}^{M}_{\textrm{full}}$. So by equation (\ref{eq.lasso.taylor.3}),
\[
n^{1/2}\mathcal{U}(\hat{\beta}^{M}_{\textrm{full}})=n^{1/2}\mathcal{U}(\beta^{\ast})+n^{1/2}\mathcal{I}(\hat{\beta}^{M}_{\textrm{full}}-\beta^{\ast})+o_{P}(1),
\]
Since $n^{1/2}\mathcal{U}(\beta^{\ast})\stackrel{d}{\longrightsquigarrow} N(0,\mathcal{I})$ and $\|\hat{\beta}^{M}_{\textrm{full}}-\beta^{\ast}\|_{2}=O_{P}(n^{-1/2})$, we have $\|n^{1/2}\mathcal{U}(\hat{\beta}^{M}_{\textrm{full}})\|_{2}=O_{P}(1)$. Furthermore,
\begin{align*}
n^{1/2}\mathcal{I}(\hat{\beta}^{M}_{\textrm{full}}-\beta^{\ast})&=n^{1/2}\mathcal{U}(\beta^{\ast})-n^{1/2}\mathcal{U}(\hat{\beta}^{M})+o_{P}(1).
\end{align*}
Recall that $\hat{\beta}_{\textrm{full},-M}^{M}=\beta_{-M}^{\ast}=0$, $\hat{I}_{n}/n$ is a consistent estimator for the efficient information matrix $\mathcal{I}$ and
$
\overline{\beta}^{M}=\hat{\beta}_{M}+(\hat{I}_{M,n}/n)^{-1}\mathcal{U}_{M}(\hat{\beta}^{M}_{\textrm{full}}).
$
So
\begin{align*}
n^{1/2}\mathcal{I}_{M,M}(\overline{\beta}^{M}-\beta_{M}^{\ast}) & =n^{1/2}\mathcal{I}_{M,M}(\overline{\beta}^{M}-\hat{\beta}^{M})+n^{1/2}\mathcal{I}_{M,M}(\hat{\beta}^{M}-\beta_{M}^{\ast})\\
 & =n^{1/2}\mathcal{I}_{M,M}\left(\hat{I}_{M,n}/n\right)^{-1}\mathcal{U}_{M}(\hat{\beta}^{M}_{\textrm{full}})-n^{1/2}\mathcal{U}_{M}(\hat{\beta}^{M}_{\textrm{full}})+n^{1/2}\mathcal{U}_{M}(\beta^{\ast})+o_{P}(1)\\
 & =\mathcal{I}_{M,M}\{(\hat{I}_{M,n}/n)^{-1}-\mathcal{I}_{M,M}^{-1}\}n^{1/2}\mathcal{U}_{M}(\hat{\beta}^{M}_{\textrm{full}})+n^{1/2}\mathcal{U}_{M}(\beta^{\ast})+o_{P}(1)\\
 & =n^{1/2}\mathcal{U}_{M}(\beta^{\ast})+o_{P}(1).
\end{align*}
\end{proof}

\subsection{Interpretation of $\widetilde{\theta}^{M}$}
\label{section.interpretation}
We now provide the two theorems on $\widetilde{\theta}^{M}=\theta_{n}^{M}+o(1)$, corresponding to GLM and the Cox model with interval-censored data respectively.

\begin{theorem} \label{theorem.glm.interpret}
Consider a specific GLM that is differentiable in quadratic mean at $\beta^{\ast}$. Let $\beta_{n}^{\ast}=n^{-1/2}\theta^{\ast}$ be the local alternative, $\widetilde{\beta}_{n}^{M}=\beta_{M,n}^{\ast}+\mathcal{I}_{M,M}^{-1}\mathcal{I}_{M,-M}\beta_{-M,n}^{\ast}$ be the new inference target, and
\[
\beta_{n}^{M}=\argmax_{\beta_{M}}P_{\beta_{n}^{\ast}}l(\beta_{M},0)
\]
be the minimizer of the population log likelihood under
the local alternative $P_{\beta_{n}^{\ast}}$ restricted to the submodel $M$. Suppose the covariate vector $X$ is almost surely bounded, the covariance matrix of $X$ is positive definite, and $\beta^{\ast}$ belongs to a known compact set. Then
\begin{equation} \label{eq.glm.interpret}
\|\beta_{n}^{M}-\widetilde{\beta}_{n}^{M}\|_{2}=o(n^{-1/2}).
\end{equation}
\end{theorem}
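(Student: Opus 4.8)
The plan is to characterise $\beta_{n}^{M}$ through its first-order optimality condition and expand it about the null value $\beta^{\ast}=0$. Because the GLM log likelihood is concave in the linear predictor and $E[X_{M}X_{M}^{\top}]$ is positive definite (a consequence of $\mathrm{Cov}(X)$ being positive definite, which rules out a deterministic linear relation among the covariates in $M$), the map $\beta_{M}\mapsto P_{\beta_{n}^{\ast}}l((\beta_{M},0))$ is strictly concave and, by a standard coercivity argument using the boundedness of $X$, has for $\beta_{n}^{\ast}$ in a fixed neighbourhood of $0$ a unique interior maximiser, namely $\beta_{n}^{M}$; it therefore solves the stationarity equation $\Psi(\beta_{n}^{M},\beta_{n}^{\ast})=0$, where I write $\Psi(b,\eta)=P_{\eta}\{l_{\beta_{M}}((b,0))\}$ and $P_{\eta}$ for expectation under the GLM with regression parameter $\eta$. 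A preliminary step is to check that $\beta_{n}^{M}\to0$ as $n\to\infty$, which follows from $\beta_{n}^{\ast}\to0$ together with the continuity and uniqueness of the maximiser.

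Next I would Taylor-expand $\Psi$ about $(0,0)$. Since $X$ is bounded and $\beta^{\ast}$ ranges over a compact set, the GLM densities are smooth in the parameter with integrable derivatives, so $\Psi$ is twice continuously differentiable near $(0,0)$ with derivatives bounded on a fixed neighbourhood. Using $\Psi(0,0)=P_{0}\{l_{\beta_{M}}(0)\}=0$, the information identity $\partial_{b}\Psi(0,0)=P_{0}\{l_{\beta_{M}\beta_{M}}(0)\}=-\mathcal{I}_{M,M}$, and the fact that differentiating an expectation in the underlying parameter multiplies by the full score, $\partial_{\eta}\Psi(0,0)=P_{0}\{l_{\beta_{M}}(0)\,l_{\beta}(0)^{\top}\}=(\mathcal{I}_{M,M}\ \ \mathcal{I}_{M,-M})$, the expansion reads
\[
0=-\mathcal{I}_{M,M}\beta_{n}^{M}+\mathcal{I}_{M,M}\beta_{M,n}^{\ast}+\mathcal{I}_{M,-M}\beta_{-M,n}^{\ast}+O\left(\|\beta_{n}^{M}\|_{2}^{2}+\|\beta_{n}^{\ast}\|_{2}^{2}\right).
\]
From this I first deduce $\|\beta_{n}^{M}\|_{2}=O(\|\beta_{n}^{\ast}\|_{2})$: since $\|\beta_{n}^{M}\|_{2}\to0$, the term $O(\|\beta_{n}^{M}\|_{2}^{2})$ can be absorbed into the left-hand side for $n$ large (as $\mathcal{I}_{M,M}$ is nonsingular), leaving $\|\beta_{n}^{M}\|_{2}=O(\|\beta_{n}^{\ast}\|_{2})=O(n^{-1/2})$. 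Reinserting this bound makes the remainder $O(n^{-1})$, whence $\beta_{n}^{M}=\beta_{M,n}^{\ast}+\mathcal{I}_{M,M}^{-1}\mathcal{I}_{M,-M}\beta_{-M,n}^{\ast}+O(n^{-1})=\widetilde{\beta}_{n}^{M}+O(n^{-1})$, which is exactly \eqref{eq.glm.interpret}.

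The main technical obstacle is the analytic bookkeeping underlying the second paragraph: verifying that $\eta\mapsto P_{\eta}f$ is $C^{2}$ with its second-order remainder controlled uniformly on a fixed neighbourhood of $0$, for $f$ ranging over the relevant score and Hessian functions, and justifying the interchange of differentiation and integration. Under boundedness of $X$ and compactness of the parameter set this is routine for any specific GLM, and it is precisely where the stated regularity assumptions enter; the remaining Taylor and algebra argument identifying the linear coefficient with $\widetilde{\beta}_{n}^{M}$ is mechanical.
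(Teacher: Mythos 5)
Your argument is correct, and it takes a genuinely different route from the paper's. The paper proceeds in two stages: it first establishes consistency $\beta_{n}^{M}\to\beta_{M}^{\ast}$ via a uniform-convergence-plus-well-separated-maximum argument (Corollary 3.2.3(i) of van der Vaart and Wellner, 1996), and then, starting from the identity $0=n^{1/2}P_{\beta_{n}^{\ast}}l_{\beta_{M}}(\beta_{n}^{M})$, splits it into two pieces, handles one piece by an explicit likelihood-ratio expansion, and handles the other by a pathwise Taylor expansion of the score that is transferred from $P_{\beta^{\ast}}$ to $P_{\beta_{n}^{\ast}}$ using Le Cam's third lemma and then integrated. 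You instead treat the estimating equation as the zero-set of a deterministic two-variable map $\Psi(b,\eta)=P_{\eta}\{l_{\beta_{M}}((b,0))\}$ and perform a single bivariate Taylor expansion of $\Psi$ about $(0,0)$; the coefficients $\partial_{b}\Psi(0,0)=-\mathcal{I}_{M,M}$ and $\partial_{\eta}\Psi(0,0)=(\mathcal{I}_{M,M}\ \ \mathcal{I}_{M,-M})$ you compute match the paper's identities, and your bootstrap argument from $\beta_{n}^{M}=o(1)$ to $\beta_{n}^{M}=O(n^{-1/2})$ mirrors the paper's logic. Your route is cleaner in two respects: it avoids the somewhat delicate step of promoting a random $o_{P}(1)$ remainder under $P_{\beta^{\ast}}$ into an $o(1)$ bound on its expectation under $P_{\beta_{n}^{\ast}}$ (which the paper achieves via Le Cam's third lemma plus implicit uniform integrability), and it delivers the sharper remainder $O(n^{-1})$ rather than merely $o(n^{-1/2})$. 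What your write-up treats more briefly than the paper is the consistency step (you invoke coercivity, uniqueness, and continuity, whereas the paper spells out the uniform convergence of $\mathbb{M}_{n}$ to $\mathbb{M}$); both, however, ultimately rest on the same assumptions (bounded $X$, positive-definite covariance, compact parameter set), so this is a difference in exposition rather than substance.
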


\begin{theorem} \label{theorem.semi.interpret}
Consider the Cox model with interval-censored data. Let $\beta_{n}^{\ast}=n^{-1/2}\theta^{\ast}$ be the local alternative, $\widetilde{\beta}_{n}^{M}=\beta_{M,n}^{\ast}+\mathcal{I}_{M,M}^{-1}\mathcal{I}_{M,-M}\beta_{-M,n}^{\ast}$ be our new inference target, and
\[
(\beta_{n}^{M},\Lambda_{n}^{M})=\argmax_{(\beta_{M},\Lambda)}P_{\beta_{n}^{\ast},\Lambda^{\ast}}\ell((\beta_{M},0),\Lambda)
\]
be the minimizer of the population log likelihood under the local alternative $P_{\beta_{n}^{\ast},\Lambda^{\ast}}$ restricted to the submodel $M$. Suppose Conditions C1--C4 of \cite{li2020adaptive} hold. Then
\begin{equation} \label{eq.semi.interpret}
\|\beta_{n}^{M}-\widetilde{\beta}_{n}^{M}\|_{2}=o(n^{-1/2}).
\end{equation}
\end{theorem}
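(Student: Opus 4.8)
\begin{proof-idea}
The plan is to prove Theorem~\ref{theorem.semi.interpret} (the semiparametric counterpart of Theorem~\ref{theorem.glm.interpret}) by realizing $\beta_n^M$ as the solution of a finite-dimensional \emph{profiled} estimating equation and expanding it around the null $\beta^\ast = 0$, $\Lambda^\ast$; the two derivative matrices that appear turn out to be blocks of the efficient information $\mathcal{I}$. First I would record the rate $\|\beta_n^M\|_2 = O(n^{-1/2})$ and $\|\Lambda_n^M - \Lambda^\ast\|_{L_2(\Lambda^\ast)} = O(n^{-1/2})$. Since $(\beta_n^M,\Lambda_n^M)$ maximizes $(\beta_M,\Lambda)\mapsto P_{\beta_n^\ast,\Lambda^\ast}\ell((\beta_M,0),\Lambda)$, and at $\beta_n^\ast = 0$ this objective is uniquely maximized at $(0,\Lambda^\ast)$ by the Kullback--Leibler property, the identifiability/smoothness arguments used for the (penalized) NPMLE in \cite{zeng2016maximum,zeng2017maximum} and in the proof of Lemma~\ref{lemma.one.step.interval} (continuity of the inner $\Lambda$-maximizer, nondegeneracy of $\mathcal{I}_{M,M}$ under Conditions C1--C4 of \cite{li2020adaptive}) show that $(\beta_n^M,\Lambda_n^M)$ depends Lipschitz-continuously on the true parameter near $0$; with $\beta_n^\ast = n^{-1/2}\theta^\ast$ this gives the stated rates.

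Next I would profile out $\Lambda$. For a fixed $\beta_M$ and a fixed true parameter $\beta_0$, let $\Lambda^\dagger(\beta_M,\beta_0) = \argmax_\Lambda P_{\beta_0,\Lambda^\ast}\ell((\beta_M,0),\Lambda)$ and set $\psi(\beta_M,\beta_0) = P_{\beta_0,\Lambda^\ast}\ell_{\beta_M}((\beta_M,0),\Lambda^\dagger(\beta_M,\beta_0))$. By the envelope theorem (the $\Lambda$-score of the objective vanishes at $\Lambda^\dagger$), $\psi$ is the gradient of the profiled population objective, so $\psi(\beta_n^M,\beta_n^\ast) = 0$ and $\psi(0,0) = 0$. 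The crux is to evaluate the two partial derivatives of $\psi$ at $(0,0)$. Differentiating the $\Lambda$-score equation $P_{\beta_0,\Lambda^\ast}\ell_\Lambda((\beta_M,0),\Lambda^\dagger)(h) = 0$ with respect to $\beta_M$ and to $\beta_0$, and using the nuisance-information identity $P\{\ell_{\Lambda\Lambda}(h,g)\} = -P\{\dot\ell_\Lambda(h)\dot\ell_\Lambda(g)^\top\}$ together with the defining orthogonality of the least favorable direction, gives $\partial_{\beta_M}\Lambda^\dagger(0,0) = -h_M^\ast$ (the $M$-block of $h^\ast$, consistent with $\dot\Lambda_{\hat\beta^M_{\textrm{full}}}\to -h^\ast$ in the proof of Lemma~\ref{lemma.one.step.interval}) and $\partial_{\beta_0}\Lambda^\dagger(0,0) = h^\ast$. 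Feeding these into $\psi$ by the chain rule, together with the identities already established in the proof of Theorem~\ref{theorem.PRES}, namely $-P\{\ell_{\beta\beta}-\ell_{\beta\Lambda}(h^\ast)\} = P[\{\ell_\beta-\ell_\Lambda(h^\ast)\}^{\otimes2}] = \mathcal{I}$ and the projection identity $P\{\ell_\beta\,\tilde\ell(\beta^\ast)^\top\} = P\{\tilde\ell(\beta^\ast)^{\otimes2}\} = \mathcal{I}$, yields
\[
\frac{\partial\psi}{\partial\beta_M}(0,0) = -\mathcal{I}_{M,M}, \qquad \frac{\partial\psi}{\partial\beta_0}(0,0) = \bigl(\mathcal{I}_{M,M},\ \mathcal{I}_{M,-M}\bigr).
\]

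To conclude, a first-order Taylor expansion of $\psi$ at $(0,0)$ combined with $\psi(\beta_n^M,\beta_n^\ast) = 0$ and the rates above gives
\[
-\mathcal{I}_{M,M}\beta_n^M + \mathcal{I}_{M,M}\beta_{M,n}^\ast + \mathcal{I}_{M,-M}\beta_{-M,n}^\ast = O\!\bigl(\|\beta_n^M\|_2^2 + \|\beta_n^\ast\|_2^2 + \|\Lambda_n^M-\Lambda^\ast\|_{L_2(\Lambda^\ast)}^2\bigr) = O(n^{-1}),
\]
so, since $\mathcal{I}_{M,M}$ is invertible, $\beta_n^M = \beta_{M,n}^\ast + \mathcal{I}_{M,M}^{-1}\mathcal{I}_{M,-M}\beta_{-M,n}^\ast + O(n^{-1}) = \widetilde\beta_n^M + o(n^{-1/2})$, which is \eqref{eq.semi.interpret}.

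The main obstacle is the profiling step: making the envelope/implicit-function argument rigorous in the presence of the infinite-dimensional nuisance $\Lambda$, i.e.\ establishing the differentiability of $\beta_0\mapsto\Lambda^\dagger(\beta_M,\beta_0)$ and bounding the Taylor remainder by $O(n^{-1})$. This is handled exactly as the second-order (no-bias) analysis in \cite{zeng2016maximum,zeng2017maximum} and in the proof of Lemma~\ref{lemma.one.step.interval}, where a Hellinger-type rate (of order $n^{-1/3}$) already controls the second-order Taylor terms; here the relevant perturbations are of order $n^{-1/2}$, so their squares are $O(n^{-1}) = o(n^{-1/2})$, which is all that is needed.
\end{proof-idea}
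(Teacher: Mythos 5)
Your three-stage skeleton (consistency, $O(n^{-1/2})$ rate, Taylor expansion of a score equation around the null with remainders controlled by the squared rate) matches the paper's proof exactly. Your final step frames the argument as a profiled one-dimensional estimating equation $\psi(\beta_M,\beta_0)=P_{\beta_0,\Lambda^\ast}\ell_{\beta_M}((\beta_M,0),\Lambda^\dagger(\beta_M,\beta_0))$ plus an implicit-function-theorem expansion with Jacobian blocks $-\mathcal{I}_{M,M}$ and $(\mathcal{I}_{M,M},\mathcal{I}_{M,-M})$; the paper instead works directly with the two score equations at $(\beta_n^M,\Lambda_n^M)$, subtracts the $\ell_\Lambda(h_M^\ast)$ equation from the $\ell_{\beta_M}$ equation to form the (approximate) efficient score, and Taylor-expands that combination. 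These two routes are the same computation in different clothes: profiling out $\Lambda$ via the envelope theorem is algebraically equivalent to subtracting the $\Lambda$-score equation, and your claimed Jacobian formulas are exactly the identities $-P\{\ell_{\beta\beta}-\ell_{\beta\Lambda}(h^\ast)\}=\mathcal{I}$ (Theorem \ref{theorem.PRES}) and $P\{\tilde\ell(\beta^\ast)\ell_\beta^\top\}=\mathcal{I}$ that the paper invokes.

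The genuine gap is the rate step, which you dispatch with the assertion that $(\beta_n^M,\Lambda_n^M)$ ``depends Lipschitz-continuously on the true parameter near $0$.'' In an infinite-dimensional nuisance space this is not a consequence of the consistency/identifiability arguments you cite, and the paper treats it as a nontrivial obstacle: the remark following the proof explicitly notes that since $\mathbb{M}_n-\mathbb{M}=P_{\beta_n^\ast,\Lambda^\ast}\ell-P_{\beta^\ast,\Lambda^\ast}\ell$ is a population (not empirical) process, neither the uniform nor the bracketing entropy integral can be used to bound its modulus of continuity, so the standard rate theorems do not apply off the shelf. The paper instead constrains $p(\beta_M,\Lambda)/p(\beta_M^\ast,\Lambda^\ast)\in[1/2,2]$ and bounds $n^{1/2}|(\mathbb{M}_n-\mathbb{M})(\beta_M,\Lambda)-(\mathbb{M}_n-\mathbb{M})(\beta_M^\ast,\Lambda^\ast)|$ by the Hellinger distance via $|\log t|\le 2|t-1|$ and the total-variation--Hellinger inequality, then applies Theorem 3.4.1 of van der Vaart and Wellner with $\delta_n=n^{-1/2}$, $r_n=n^{1/2}$ to obtain the Hellinger rate $O(n^{-1/2})$ (and from that, via Lemma A1 of Zeng et al., the needed $O(n^{-1})$ bound on the squared increments). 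Your proposal would need this argument, or a rigorous Fréchet-differentiability/implicit-function argument quantifying the Hessian operator's invertibility in the appropriate norm, to justify the $O(n^{-1/2})$ rate that your second-order remainder bound crucially relies on; as written, ``Lipschitz continuity'' is an assertion rather than a proof. The consistency step is similarly compressed relative to the paper, which needs the $m(\beta_M,\Lambda)=\log\{(e^{\ell(\beta_M,\Lambda)}+e^{\ell(\beta_M^\ast,\Lambda^\ast)})/2\}$ device from Zeng et al. to handle the unboundedness of $\ell$ and the lack of compactness in $\Lambda$, but this is more of an omission of known machinery than a conceptual gap.
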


Since $\tilde{\theta}^{M}=n^{1/2}\widetilde{\beta}_{n}^{M}$ and $\theta_{n}^{M}=n^{1/2}\beta_{n}^{M}$, Theorem \ref{theorem.glm.interpret} and \ref{theorem.semi.interpret} imply that $\theta_{n}^{M}=\tilde{\theta}^{M}+o(1)$. Proof of Theorem \ref{theorem.glm.interpret} and \ref{theorem.semi.interpret} is given below, which is similar to regular asymptotic arguments, namely, verifying consistency, establishing rate of convergence and finally deriving limiting distribution \citep[page 278 of][]{wellner2013weak}. We discuss the possible generalizations of Theorem \ref{theorem.semi.interpret} to the Cox model with right-censored data and other semiparametric models in Section \ref{section.semi.interpret}.

\subsubsection{Proof of Theorem \ref{theorem.glm.interpret}}
Since we only consider the submodel $M$, to simplify our notations, we use $l(\beta_{M})$ to represent $l(\beta)$ for all those $\beta$ such that $\beta_{-M}=0$ (e.g., $l(\widetilde{\beta}_{n}^{M})$ is used to denote $l(\widetilde{\beta}_{n}^{M},0)$) and conduct similar abbreviations for $l_{\beta}(\beta)$ and $l_{\beta_{M}}(\beta)$.

We first show that $\|\beta_{n}^{M}-\widetilde{\beta}_{n}^{M}\|_{2}=o(1)$,
i.e., the consistency of $\beta_{n}^{M}$. Since $\widetilde{\beta}_{n}^{M}=O(n^{-1/2})$, it is suffice to show $\beta_{n}^{M}=o(1)$. Let $\mu$ denote the dominating measure on the sample space of the data. Note that
\begin{align*}
dP_{\beta_{n}^{\ast}}-dP_{\beta^{\ast}} & =\left\{e^{l(\beta_{n}^{\ast})}-e^{l(\beta^{\ast})}\right\}d\mu=e^{l(\beta^{\ast})}\left\{ e^{l(\beta_{n}^{\ast})-l(\beta^{\ast})}-1\right\}d\mu \\
 & =\left[e^{l_{\beta}(\beta^{\ast})^{\top}(\beta_{n}^{\ast}-\beta^{\ast})+o\left\{ l_{\beta}(\beta^{\ast})^{\top}(\beta_{n}^{\ast}-\beta^{\ast})\right\} }-1\right]dP_{\beta^{\ast}}\\
 & =\left[l_{\beta}(\beta^{\ast})^{\top}(\beta_{n}^{\ast}-\beta^{\ast})+o\left\{ l_{\beta}(\beta^{\ast})^{\top}(\beta_{n}^{\ast}-\beta^{\ast})\right\} \right]dP_{\beta^{\ast}},
\end{align*}
which implies that
\begin{equation} \label{eq.glm.interpret.expand}
 n^{1/2}\left\{ P_{\beta_{n}^{\ast}}l(\beta_{M})-P_{\beta^{\ast}}l(\beta_{M})\right\} =n^{1/2}\int l(\beta_{M})\left(dP_{\beta_{n}^{\ast}}-dP_{\beta^{\ast}}\right)=P_{\beta^{\ast}}\left\{ l(\beta_{M})l_{\beta}(\beta^{\ast})^{\top}\right\} \theta^{\ast}+\int o\left\{ l(\beta_{M})l_{\beta}(\beta^{\ast})^{\top}\right\} \theta^{\ast}dP_{\beta^{\ast}}.
\end{equation}
Due to the boundedness of covariate vector $X$ and response $Y$, for all bounded $\beta_{M}$, $l_{\beta}(\beta^{\ast})$ and $l(\beta_{M})$ are bounded almost surely. So equation (\ref{eq.glm.interpret.expand}) implies that the remainder term
\[
\int o\left\{ l(\beta_{M})l_{\beta}(\beta^{\ast})^{\top}\right\} \theta^{\ast}dP_{\beta^{\ast}}
\]
converges to $0$ uniformly for all bounded $\beta_{M}$. Thus,
\[
n^{1/2}\left\{ P_{\beta_{n}^{\ast}}l(\beta_{M})-P_{\beta^{\ast}}l(\beta_{M})\right\} =n^{1/2}\int l(\beta_{M})\left(dP_{\beta_{n}^{\ast}}-dP_{\beta^{\ast}}\right)=P_{\beta^{\ast}}\left\{ l(\beta_{M})l_{\beta}(\beta^{\ast})^{\top}\right\} \theta^{\ast}+o(1),
\]
which implies that $P_{\beta_{n}^{\ast}}l(\beta_{M})$ converges to $P_{\beta^{\ast}}l(\beta_{M})$ uniformly for all bounded $\beta_{M}$. We let $\mathbb{M}_{n}(\beta_{M})=P_{\beta_{n}^{\ast}}l(\beta_{M})$ and $\mathbb{M}(\beta_{M})=P_{\beta^{\ast}}l(\beta_{M})$. Since the covariance matrix of $X$ is positive definite, $\beta_{M}^{\ast}$ is the unique maximizer of $\mathbb{M}(\beta_{M})$. Furthermore, $P_{\beta^{\ast}}l(\beta_{M})$ is continuous in $\beta_{M}$, and so $\beta_{M}^{\ast}$ is a well-separated maximum of $\mathbb{M}(\beta_{M})$ \citep[page 46 of][]{van2000asymptotic}. Applying Corollary 3.2.3(i) on \cite{wellner2013weak}, we see that $\|\beta_{n}^{M}-\beta_{M}^{\ast}\|_{2}=o(1)$. This is our desired result since $\beta_{M}^{\ast}=0$.

To derive (\ref{eq.glm.interpret}), we note that
\begin{align} \label{eq.glm.interpret.1}
0=n^{1/2}P_{\beta_{n}^{\ast}}l_{\beta_{M}}(\beta_{n}^{M}) & =n^{1/2}\left\{ P_{\beta_{n}^{\ast}}l_{\beta_{M}}(\beta_{M}^{\ast})-P_{\beta^{\ast}}l_{\beta_{M}}(\beta_{M}^{\ast})\right\} +n^{1/2}P_{\beta_{n}^{\ast}}\left\{ l_{\beta_{M}}(\beta_{n}^{M})-l_{\beta_{M}}(\beta_{M}^{\ast})\right\} .
\end{align}
By similar arguments to those below equation (\ref{eq.glm.interpret.expand}), the first term on the right-hand side is
\begin{align} \label{eq.glm.interpret.2}
n^{1/2}\left\{ P_{\beta_{n}^{\ast}}l_{\beta_{M}}(\beta_{M}^{\ast})-P_{\beta^{\ast}}l_{\beta_{M}}(\beta_{M}^{\ast})\right\}  & =P_{\beta^{\ast}}\left\{ l_{\beta_{M}}(\beta_{M}^{\ast})l_{\beta}(\beta^{\ast})^{\top}\right\} \theta^{\ast}+o(1)\nonumber \\
 & =n^{1/2}\left(\mathcal{I}_{M,M}\beta_{M,n}^{\ast}+\mathcal{I}_{M,-M}\beta_{-M,n}^{\ast}\right)+o(1)=n^{1/2}\mathcal{I}_{M,M}\widetilde{\beta}_{n}^{M}+o(1).
\end{align}
For the second term on the right-hand side, by Taylor expansion, we know that under $P_{\beta^{\ast}}$,
\begin{equation} \label{eq.glm.interpret.3}
\|\beta_{n}^{M}-\beta_{M}^{\ast}\|_{2}^{-1}\left\{ l_{\beta_{M}}(\beta_{n}^{M})-l_{\beta_{M}}(\beta_{M}^{\ast})-l_{\beta_{M}\beta_{M}}(\beta^{\ast})(\beta_{n}^{M}-\beta_{M}^{\ast})\right\} =o_{P}(1).
\end{equation}
So by Le Cam's third lemma, the left-hand side in (\ref{eq.glm.interpret.3}) is also $o_{P}(1)$ under $P_{\beta_{n}^{\ast}}$. Thus,
\[
\|\beta_{n}^{M}-\beta_{M}^{\ast}\|_{2}^{-1}P_{\beta_{n}^{\ast}}\left\{ l_{\beta_{M}}(\beta_{n}^{M})-l_{\beta_{M}}(\beta_{M}^{\ast})-l_{\beta_{M}\beta_{M}}(\beta^{\ast})(\beta_{n}^{M}-\beta_{M}^{\ast})\right\} =o(1),
\]
which implies that
\begin{equation} \label{eq.glm.interpret.4}
\|\beta_{n}^{M}-\beta_{M}^{\ast}\|_{2}^{-1}P_{\beta_{n}^{\ast}}\left\{ l_{\beta_{M}}(\beta_{n}^{M})-l_{\beta_{M}}(\beta_{M}^{\ast})\right\} =\|\beta_{n}^{M}-\beta_{M}^{\ast}\|_{2}^{-1}P_{\beta_{n}^{\ast}}l_{\beta_{M}\beta_{M}}(\beta^{\ast})(\beta_{n}^{M}-\beta_{M}^{\ast})+o(1).
\end{equation}
In addition, for the $j$th column of $l_{\beta_{M}\beta_{M}}(\beta^{\ast})$,
denoted as $l_{\beta_{M}\beta_{M},j}(\beta^{\ast})$ ($j=1,\dots,p$),
we have by similar arguments to those below \eqref{eq.glm.interpret.expand} that
\begin{equation} \label{eq.glm.interpret.5}
P_{\beta_{n}^{\ast}}l_{\beta_{M}\beta_{M},j}(\beta^{\ast})-P_{\beta^{\ast}}l_{\beta_{M}\beta_{M},j}(\beta^{\ast})=P_{\beta^{\ast}}\left\{ l_{\beta_{M}\beta_{M},j}(\beta^{\ast})l_{\beta}(\beta^{\ast})^{\top}\right\} (\beta_{n}^{\ast}-\beta_{M}^{\ast})+o(1)=o(1).
\end{equation}
Combining (\ref{eq.glm.interpret.4}) and (\ref{eq.glm.interpret.5}), we obtain
\begin{equation} \label{eq.glm.interpret.6}
\|\beta_{n}^{M}-\beta_{M}^{\ast}\|_{2}^{-1}P_{\beta_{n}^{\ast}}\left\{ l_{\beta_{M}}(\beta_{n}^{M})-l_{\beta_{M}}(\beta_{M}^{\ast})\right\} =\|\beta_{n}^{M}-\beta_{M}^{\ast}\|_{2}^{-1}P_{\beta^{\ast}}l_{\beta_{M}\beta_{M}}(\beta^{\ast})(\beta_{n}^{M}-\beta_{M}^{\ast})+o(1).
\end{equation}
Putting (\ref{eq.glm.interpret.1}), (\ref{eq.glm.interpret.2}) and (\ref{eq.glm.interpret.6}) together and recalling that $\beta_{M}^{\ast}=0$, we have
\[
n^{1/2}\mathcal{I}_{M,M}\widetilde{\beta}_{n}^{M}+o(1)=n^{1/2}\mathcal{I}_{M,M}\beta_{n}^{M}+o(n^{1/2}\|\beta_{n}^{M}\|_{2})
\]
which implies that $\beta_{n}^{M}=O(n^{-1/2})$ and thus $n^{1/2}(\beta_{n}^{M}-\widetilde{\beta}_{n}^{M})=o(1)$ as $\mathcal{I}_{M,M}$ is positive definite.
\\ \\
\begin{remark}
To provide more insights into $\widetilde{\theta}^{M}$ and $\widetilde{\beta}_{n}^{M}$, we show that in the Gaussian linear model $Y=X\beta^{\ast}+\epsilon$,
\[
\widetilde{\beta}^{M}=\beta_{M}^{\ast}+\mathcal{I}_{M,M}^{-1}\mathcal{I}_{M,-M}\beta_{-M}^{\ast},\quad\text{and }\beta^{M}=\argmin_{b^{M}}E\|Y-X_{M}b^{M}\|_{2}^{2}
\]
are exactly the same for both the fixed design case and the random design case (assuming that the covariates have mean $0$), which indicates that our $\widetilde{\beta}_{n}^{M}$ is exactly the same as the inference target $\beta^{M}$ in equation (1.2) of \cite{lee2016exact}. Here $Y$ is the response, $\beta^{\ast}$ is the true regression coefficient,
and $\epsilon$ is the Gaussian random error. The information matrix $\mathcal{I}=X^{\top}X$ in fixed design case and $\mathcal{I}=\text{cov}(X)$ in random design case. For either case, applying the KKT condition to
\[
\beta^{M}=\argmin_{b^{M}}E\|Y-X_{M}b^{M}\|_{2}^{2}
\]
leads to
\[
\mathcal{I}_{M,M}\beta^{M}=\mathcal{I}_{M,M}\beta_{M}^{\ast}+\mathcal{I}_{M,-M}\beta_{-M}^{\ast},
\]
which implies that $\widetilde{\beta}^{M}=\beta^{M}$.
\end{remark}

\subsubsection{Proof of Theorem \ref{theorem.semi.interpret}} \label{section.semi.interpret}

Similar to the proof of Theorem \ref{theorem.glm.interpret}, we use $\ell(\beta_{M},\Lambda)$ to represent $\ell\left\{ (\beta_{M}^{\top},0^{\top})^{\top},\Lambda\right\} $
for all those $\beta$'s such that $\beta_{-M}=0$ and conduct similar abbreviations of $\ell_{\beta}(\beta,\Lambda)$, $\ell_{\Lambda}(\beta,\Lambda)$, etc.

Our first step is to show $\|\beta_{n}^{M}-\widetilde{\beta}_{n}^{M}\|_{2}=o(1)$. Since $\widetilde{\beta}_{n}^{M}=O(n^{-1/2})$ and $\beta_{M}^{\ast}=0$, it is suffice to show $\|\beta_{n}^{M}-\beta_{M}^{\ast}\|_{2}=o(1)$. Following \cite{zeng2017maximum}, let
\[
m(\beta_{M},\Lambda)=\log\frac{e^{\ell(\beta_{M},\Lambda)}+e^{\ell(\beta_{M}^{\ast},\Lambda^{\ast})}}{2}.
\]
We have
\[
P_{\beta_{n}^{\ast},\Lambda^{\ast}}m(\beta_{M},\Lambda)-P_{\beta^{\ast},\Lambda^{\ast}}m(\beta_{M},\Lambda)=\int m(\beta_{M},\Lambda)\left(dP_{\beta_{n}^{\ast},\Lambda^{\ast}}-dP_{\beta^{\ast},\Lambda^{\ast}}\right)=\int m(\beta_{M},\Lambda)\left(e^{\ell(\beta_{n}^{\ast},\Lambda^{\ast})}-e^{\ell(\beta^{\ast},\Lambda^{\ast})}\right)d\mu,
\]
where $\mu$ is the dominating measure on the sample space of the data. Note that the log-likelihood for one subject
\[
\ell(\beta_{M},\Lambda)=\log\left[\exp\left\{ -\Lambda(L)\exp(\beta_{M}^{\top}X_{M})\right\} -\exp\left\{ -\Lambda(R)\exp(\beta_{M}^{\top}X_{M})\right\} \right]
\]
is always nonpositive, where $L$ and $R$ are the inspection times bracketing the failture time and $X_{M}$ is the covariate vector in the submodel $M$. In addition, $m(\beta_{M},\Lambda)$ is also bounded below by $\log\left\{ e^{\ell(\beta_{M}^{\ast},\Lambda^{\ast})}/2\right\} $, which is in turn bounded below by a constant. Also, $\ell(\beta_{n}^{\ast},\Lambda^{\ast})$ and $\ell(\beta^{\ast},\Lambda^{\ast})$ are bounded, and $\ell(\beta_{n}^{\ast},\Lambda^{\ast})$ converges to $\ell(\beta^{\ast},\Lambda^{\ast})$. Thus, $P_{\beta_{n}^{\ast},\Lambda^{\ast}}m(\beta_{M},\Lambda)$ converges uniformly to $P_{\beta^{\ast},\Lambda^{\ast}}m(\beta_{M},\Lambda)$ for all $\Lambda$ and bounded $\beta_{M}$ as $n\to\infty$. Then $\|\beta_{n}^{M}-\beta_{M}^{\ast}\|_{2}=o(1)$ and $\sup_{t\in[\zeta,\tau]}|\Lambda_{n}^{M}(t)-\Lambda^{\ast}(t)|\to0$ come from adapting proof of Theorem 1 in \cite{zeng2017maximum} by replacing all $\mathbb{P}_{n}$ with $P_{\beta_{n}^{\ast},\Lambda^{\ast}}$.

Now our second step is to establish the rate of convergence of $(\beta_{n}^{M},\Lambda_{n}^{M})$ to $(\beta_{M}^{\ast},\Lambda^{\ast})$. To simplify our notations, let $p(\beta_{M},\Lambda)=e^{\ell(\beta_{M},\Lambda)}$. Since $(\beta_{n}^{M},\Lambda_{n}^{M})$ is consistent for $(\beta_{M}^{\ast},\Lambda^{\ast})$, we can assume that for sufficiently large $n$, we have
\[
\frac{1}{2}\le\frac{p(\beta_{n}^{M},\Lambda_{n}^{M})}{p(\beta_{M}^{\ast},\Lambda^{\ast})}\le2.
\]
Note that
\begin{align*}
dP_{\beta_{n}^{\ast},\Lambda^{\ast}}-dP_{\beta^{\ast},\Lambda^{\ast}} & =\left\{e^{\ell(\beta_{n}^{\ast},\Lambda^{\ast})}-e^{\ell(\beta^{\ast},\Lambda^{\ast})}\right\}d\mu=e^{\ell(\beta^{\ast},\Lambda^{\ast})}\left\{ e^{\ell(\beta_{n}^{\ast},\Lambda^{\ast})-\ell(\beta^{\ast},\Lambda^{\ast})}-1\right\}d\mu \\
 & =\left[e^{\ell_{\beta}(\beta^{\ast},\Lambda^{\ast})^{\top}(\beta_{n}^{\ast}-\beta^{\ast})+o\left\{ \ell_{\beta}(\beta^{\ast},\Lambda^{\ast})^{\top}(\beta_{n}^{\ast}-\beta^{\ast})\right\} }-1\right]dP_{\beta^{\ast},\Lambda^{\ast}}\\
 & =\left[\ell_{\beta}(\beta^{\ast},\Lambda^{\ast})^{\top}(\beta_{n}^{\ast}-\beta^{\ast})+o\left\{ \ell_{\beta}(\beta^{\ast},\Lambda^{\ast})^{\top}(\beta_{n}^{\ast}-\beta^{\ast})\right\} \right]dP_{\beta^{\ast},\Lambda^{\ast}}.
\end{align*}
For those $(\beta_{M},\Lambda)$ such that $1/2\le p(\beta_{M},\Lambda)/p(\beta_{M}^{\ast},\Lambda^{\ast})\le2$, by similar arguments to those below (\ref{eq.glm.interpret.expand}),
\[
n^{1/2}\left\{ P_{\beta_{n}^{\ast},\Lambda^{\ast}}\ell(\beta_{M},\Lambda)-P_{\beta^{\ast},\Lambda^{\ast}}\ell(\beta_{M},\Lambda)\right\} =n^{1/2}\int\ell(\beta_{M},\Lambda)\left(dP_{\beta_{n}^{\ast},\Lambda^{\ast}}-dP_{\beta^{\ast},\Lambda^{\ast}}\right)=P_{\beta^{\ast},\Lambda^{\ast}}\left\{ \ell(\beta_{M},\Lambda)\ell_{\beta}(\beta^{\ast},\Lambda^{\ast})^{\top}\right\} \theta^{\ast}+o(1).
\]
Let $\mathbb{M}_{n}(\beta_{M},\Lambda)=P_{\beta_{n}^{\ast},\Lambda^{\ast}}\ell(\beta_{M},\Lambda)$
and $\mathbb{M}(\beta_{M},\Lambda)=P_{\beta^{\ast},\Lambda^{\ast}}\ell(\beta_{M},\Lambda)$. For any sufficiently small $\delta>0$, we consider
\[
(\breve{\beta}_{n},\breve{\Lambda}_{n})=\argmax_{(\beta,\Lambda):H\left\{(\beta_M,\Lambda),(\beta_{M}^{\ast},\Lambda^{\ast})\right\} \le\delta,1/2\le p(\beta_{M},\Lambda)/p(\beta_{M}^{\ast},\Lambda^{\ast})\le2}n^{1/2}|(\mathbb{M}_{n}-\mathbb{M})(\beta_{M},\Lambda)-(\mathbb{M}_{n}-\mathbb{M})(\beta_{M}^{\ast},\Lambda^{\ast})|,
\]
where
\[
H\left\{ (\beta_M,\Lambda),(\beta_{M}^{\ast},\Lambda^{\ast})\right\} =\left[\int\left\{ \sqrt{p(\beta_M,\Lambda)/p(\beta_{M}^{\ast},\Lambda^{\ast})}-1\right\} ^{2}dP_{\beta^{\ast},\Lambda^{\ast}}\right]^{1/2}
\]
is the Hellinger distance. Then
\begin{align*}
n^{1/2}|(\mathbb{M}_{n}-\mathbb{M})(\breve{\beta}_{M,n},\breve{\Lambda}_{n})-(\mathbb{M}_{n}-\mathbb{M})(\beta_{M}^{\ast},\Lambda^{\ast})| & =n^{1/2}\bigg|P_{\beta_{n}^{\ast},\Lambda^{\ast}}\ell(\breve{\beta}_{M,n},\breve{\Lambda}_{n})-P_{\beta^{\ast},\Lambda^{\ast}}\ell(\breve{\beta}_{M,n},\breve{\Lambda}_{n})\\
 & \quad\quad\quad\quad-P_{\beta_{n}^{\ast},\Lambda^{\ast}}\ell(\beta_{M}^{\ast},\Lambda^{\ast})+P_{\beta^{\ast},\Lambda^{\ast}}\ell(\beta_{M}^{\ast},\Lambda^{\ast})\bigg|\\
 & =\bigg|P_{\beta^{\ast},\Lambda^{\ast}}\left[\left\{ \ell(\breve{\beta}_{M,n},\breve{\Lambda}_{n})-\ell(\beta_{M}^{\ast},\Lambda^{\ast})\right\} \ell_{\beta}(\beta^{\ast},\Lambda^{\ast})^{\top}\right]\theta^{\ast}\bigg|+o(1).
\end{align*}
Due to boundedness of $\ell_{\beta}(\beta^{\ast},\Lambda^{\ast})$,
\begin{align*}
\bigg|P_{\beta^{\ast},\Lambda^{\ast}}\left[\left\{ \ell(\breve{\beta}_{M,n},\breve{\Lambda}_{n})-\ell(\beta_{M}^{\ast},\Lambda^{\ast})\right\} \ell_{\beta}(\beta^{\ast},\Lambda^{\ast})^{\top}\right]\theta^{\ast}\bigg| & \lesssim P_{\beta^{\ast},\Lambda^{\ast}}|\ell(\breve{\beta}_{M,n},\breve{\Lambda}_{n})-\ell(\beta_{M}^{\ast},\Lambda^{\ast})|\\
 & =\int\bigg|\log\left\{ p(\breve{\beta}_{M,n},\breve{\Lambda}_{n})/p(\beta_{M}^{\ast},\Lambda^{\ast})\right\} \bigg|dP_{\beta^{\ast},\Lambda^{\ast}}\\
 & \lesssim\int\bigg|p(\breve{\beta}_{M,n},\breve{\Lambda}_{n})/p(\beta_{M}^{\ast},\Lambda^{\ast})-1\bigg|dP_{\beta^{\ast},\Lambda^{\ast}}\\
 & \lesssim H\left\{ (\breve{\beta}_{M,n},\breve{\Lambda}_{n}),(\beta_{M}^{\ast},\Lambda^{\ast})\right\} \\
 & \le \delta,
\end{align*}
where the second inequality comes from $|\log t|\le2|t-1|$ for all $t\in[1/2,2]$, and the third inequality is a result that the total variation distance is bounded above by the Hellinger distance \citep[Page 212 of][]{van2000asymptotic}. In addition, by Lemma 1.3 of \citet{geer2000empirical},
\[
\mathbb{M}(\beta_{M},\Lambda)-\mathbb{M}(\beta_{M}^{\ast},\Lambda^{\ast})\lesssim -H^2\left\{ (\beta_M,\Lambda),(\beta_{M}^{\ast},\Lambda^{\ast})\right\}
\]

The above results in conjunction with the ``consistency" of $(\beta^{M}_{n},\Lambda^{M}_{n})$ and the fact that $(\beta^{M}_{n},\Lambda^{M}_{n})$ maximizes $\mathbb{M}_n(\beta_M,\Lambda)$ imply that all the conditions of
 Theorem 3.4.1 in \cite{wellner2013weak} are satisfied when the $r_n$ and $\delta_n$ therein are chosen to be $n^{1/2}$ and $n^{-1/2}$ respectively. Therefore,
\[
H\left\{ (\beta_{n}^{M},\Lambda_{n}^{M}),(\beta_{M}^{\ast},\Lambda^{\ast})\right\} =O(n^{-1/2}).
\]
Then by similar arguments to the proof of Lemma A1 in \cite{zeng2016maximum}, we have
\[
E_{\beta^{\ast},\Lambda^{\ast}}\left(\sum_{k=1}^K\left[\Lambda^{M}_n(U_k)\exp(\beta_{n}^{M\top}X_M)-\Lambda^{\ast}(U_k)\exp(\beta_M^{\ast\top}X_M)\right]^2\right)=O(n^{-1}).
\]

The final step is to derive $\|\beta_{n}^{M}-\widetilde{\beta}_{n}^{M}\|_{2}=o(n^{-1/2}).$ By similar arguments to the proof of Theorem 2 in \cite{zeng2016maximum}, we have
\begin{align}
&n^{1/2}\left\{P_{\beta_{n}^{\ast},\Lambda^{\ast}}\ell_{\beta_{M}}  (\beta_{n}^{M},\Lambda_{n}^{M})-P_{\beta^{\ast},\Lambda^{\ast}}\ell_{\beta_{M}}  (\beta_{n}^{M},\Lambda_{n}^{M})\right\} \nonumber\\
=&-n^{1/2}P_{\beta^{\ast},\Lambda^{\ast}}\ell_{\beta_{M}}  (\beta_{n}^{M},\Lambda_{n}^{M}) \nonumber\\
=&-n^{1/2}P_{\beta^{\ast},\Lambda^{\ast}}\left[ \ell_{\beta_{M}\beta_{M}}(\beta_{n}^{M}-\beta_{M}^{\ast})+\ell_{\beta_{M}\Lambda}\left\{ d(\Lambda_{n}^{M}-\Lambda^{\ast})/d\Lambda^{\ast}\right\} \right] \nonumber\\
&+n^{1/2}O\left\{\|\beta_{n}^{M}-\beta_{M}^{\ast}\|^2+E_{\beta^{\ast},\Lambda^{\ast}}\left(\sum_{k=1}^K\left[\Lambda^{M}_n(U_k)\exp(\beta_{n}^{M\top}X_M)-\Lambda^{\ast}(U_k)\exp(\beta_M^{\ast\top}X_M)\right]^2\right)\right\} \nonumber\\
=&-n^{1/2}P_{\beta^{\ast},\Lambda^{\ast}}\left[ \ell_{\beta_{M}\beta_{M}}(\beta_{n}^{M}-\beta_{M}^{\ast})+\ell_{\beta_{M}\Lambda}\left\{ d(\Lambda_{n}^{M}-\Lambda^{\ast})/d\Lambda^{\ast}\right\} \right]+O\left(n^{1/2}\|\beta_{n}^{M}-\beta_{M}^{\ast}\|^2+n^{-1/2}\right) \label{eq.semi.interpret.a}
\end{align}
and
\begin{align}
&n^{1/2}\left\{P_{\beta_{n}^{\ast},\Lambda^{\ast}}\ell_{\Lambda}  (\beta_{n}^{M},\Lambda_{n}^{M})(h_{M}^{\ast})-P_{\beta^{\ast},\Lambda^{\ast}}\ell_{\Lambda}  (\beta_{n}^{M},\Lambda_{n}^{M})(h_{M}^{\ast})\right\} \nonumber\\
=&-n^{1/2}P_{\beta^{\ast},\Lambda^{\ast}}\ell_{\Lambda}  (\beta_{n}^{M},\Lambda_{n}^{M})(h_{M}^{\ast}) \nonumber\\
=&-n^{1/2}P_{\beta^{\ast},\Lambda^{\ast}}\left[ \ell_{\Lambda\beta_{M}}(h_{M}^{\ast})(\beta_{n}^{M}-\beta_{M}^{\ast})+\ell_{\Lambda\Lambda}\left(h_{M}^{\ast}, d(\Lambda_{n}^{M}-\Lambda^{\ast})/d\Lambda^{\ast}\right) \right] \nonumber\\
&+n^{1/2}O\left\{\|\beta_{n}^{M}-\beta_{M}^{\ast}\|^2+E_{\beta^{\ast},\Lambda^{\ast}}\left(\sum_{k=1}^K\left[\Lambda^{M}_n(U_k)\exp(\beta_{n}^{M\top}X_M)-\Lambda^{\ast}(U_k)\exp(\beta_M^{\ast\top}X_M)\right]^2\right)\right\} \nonumber\\
=&-n^{1/2}P_{\beta^{\ast},\Lambda^{\ast}}\left[ \ell_{\Lambda\beta_{M}}(h_{M}^{\ast})(\beta_{n}^{M}-\beta_{M}^{\ast})+\ell_{\Lambda\Lambda}\left(h_{M}^{\ast}, d(\Lambda_{n}^{M}-\Lambda^{\ast})/d\Lambda^{\ast}\right) \right]+O\left(n^{1/2}\|\beta_{n}^{M}-\beta_{M}^{\ast}\|^2+n^{-1/2}\right). \label{eq.semi.interpret.b}
\end{align}
Subtracting \eqref{eq.semi.interpret.b} from \eqref{eq.semi.interpret.a} yields
\begin{equation}\label{asymp_rep_beta_n_M}
 n^{1/2}(P_{\beta_{n}^{\ast},\Lambda^{\ast}}-P_{\beta^{\ast},\Lambda^{\ast}})\{\ell_{\beta_{M}}  (\beta_{n}^{M},\Lambda_{n}^{M})-\ell_{\Lambda}  (\beta_{n}^{M},\Lambda_{n}^{M})(h_{M}^{\ast})\}=n^{1/2}\mathcal{I}_{M,M}(\beta_{n}^{M}-\beta_{M}^{\ast})+O\left(n^{1/2}\|\beta_{n}^{M}-\beta_{M}^{\ast}\|^2+n^{-1/2}\right).
\end{equation}
Since $\mathcal{I}_{M,M}$ is positive definite, equation \eqref{asymp_rep_beta_n_M} entails $\beta_{n}^{M}-\beta_{M}^{\ast}=O(n^{-1/2})$ and thus further yields
\begin{equation}\label{asymp_rep_beta_n_M_o1}
 n^{1/2}(P_{\beta_{n}^{\ast},\Lambda^{\ast}}-P_{\beta^{\ast},\Lambda^{\ast}})\{\ell_{\beta_{M}}  (\beta_{n}^{M},\Lambda_{n}^{M})-\ell_{\Lambda}  (\beta_{n}^{M},\Lambda_{n}^{M})(h_{M}^{\ast})\}=n^{1/2}\mathcal{I}_{M,M}(\beta_{n}^{M}-\beta_{M}^{\ast})+o(1).
\end{equation}

Next, by similar arguments to the proof of Theorem \ref{theorem.glm.interpret}, we have
\begin{align}
n^{1/2}\left\{ P_{\beta_{n}^{\ast},\Lambda^{\ast}}\ell_{\beta_{M}}(\beta_{M}^{\ast},\Lambda^{\ast})-P_{\beta^{\ast},\Lambda^{\ast}}\ell_{\beta_{M}}(\beta_{M}^{\ast},\Lambda^{\ast})\right\}  & =n^{1/2}P_{\beta^{\ast},\Lambda^{\ast}}\left\{ \ell_{\beta_{M}}(\beta_{M}^{\ast},\Lambda^{\ast})\ell_{\beta}(\beta^{\ast},\Lambda^{\ast})^{\top}\right\} \beta_{n}^{\ast}+o(1),\label{eq.semi.interpret.3}\\
n^{1/2}\left\{ P_{\beta_{n}^{\ast},\Lambda^{\ast}}\ell_{\Lambda}(\beta_{M}^{\ast},\Lambda^{\ast})(h_{M}^{\ast})-P_{\beta^{\ast},\Lambda^{\ast}}\ell_{\Lambda}(\beta_{M}^{\ast},\Lambda^{\ast})(h_{M}^{\ast})\right\}  & =n^{1/2}P_{\beta^{\ast},\Lambda^{\ast}}\left\{ \ell_{\Lambda}(\beta_{M}^{\ast},\Lambda^{\ast})(h_{M}^{\ast})\ell_{\beta}(\beta^{\ast},\Lambda^{\ast})^{\top}\right\} \beta_{n}^{\ast}+o(1). \label{eq.semi.interpret.4}
\end{align}
Subtracting (\ref{eq.semi.interpret.4}) from (\ref{eq.semi.interpret.3}) gives us
\begin{align}
&n^{1/2}(P_{\beta_{n}^{\ast},\Lambda^{\ast}}-P_{\beta^{\ast},\Lambda^{\ast}})\{\ell_{\beta_{M}}  (\beta_{M}^{\ast},\Lambda^{\ast})-\ell_{\Lambda}  (\beta_{M}^{\ast},\Lambda^{\ast})(h_{M}^{\ast})\} \nonumber \\
 =& n^{1/2}P_{\beta^{\ast},\Lambda^{\ast}}\left\{ \tilde{\ell}_{M}(\beta_{M}^{\ast})\ell_{\beta}(\beta^{\ast},\Lambda^{\ast})^{\top}\right\} \beta_{n}^{\ast}+o(1)\nonumber \\
 =& n^{1/2}\left(\mathcal{I}_{M,M}\beta_{M,n}^{\ast}+\mathcal{I}_{M,-M}\beta_{-M,n}^{\ast}\right)+o(1)=n^{1/2}\mathcal{I}_{M,M}\widetilde{\beta}_{n}^{M}+o(1), \label{eq.semi.interpret.5}
\end{align}
where we have used a property of efficient score function $\tilde{\ell}(\beta^{\ast})$
that $P_{\beta^{\ast},\Lambda^{\ast}}\left\{ \tilde{\ell}(\beta^{\ast})\ell_{\beta}(\beta^{\ast},\Lambda^{\ast})^{\top}\right\} =P_{\beta^{\ast},\Lambda^{\ast}}\left\{ \tilde{\ell}(\beta^{\ast})^{\otimes2}\right\} =\mathcal{I}$.

Further subtracting \eqref{eq.semi.interpret.5} from \eqref{asymp_rep_beta_n_M_o1} yields
\begin{align}
&n^{1/2}(P_{\beta_{n}^{\ast},\Lambda^{\ast}}-P_{\beta^{\ast},\Lambda^{\ast}})\{\ell_{\beta_{M}}  (\beta_{n}^{M},\Lambda_{n}^{M})-\ell_{\Lambda}  (\beta_{n}^{M},\Lambda_{n}^{M})(h_{M}^{\ast})-\ell_{\beta_{M}}  (\beta_{M}^{\ast},\Lambda^{\ast})+\ell_{\Lambda}  (\beta_{M}^{\ast},\Lambda^{\ast})(h_{M}^{\ast})\} \nonumber \\
=&n^{1/2}\mathcal{I}_{M,M}(\beta_{n}^{M}-\widetilde{\beta}_{n}^{M})+o(1) \label{distance_bt_two_betas}.
\end{align}
By similar arguments to those below \eqref{eq.glm.interpret.expand} and the ``consistency" of $(\beta_{n}^{M},\Lambda_{n}^{M})$, equation \eqref{distance_bt_two_betas} implies
\[
n^{1/2}\mathcal{I}_{M,M}(\widetilde{\beta}_{n}^{M}-\beta_{n}^{M})=o(1),
\]
which in turn implies $\|\widetilde{\beta}_{n}^{M}-\beta_{n}^{M}\|_{2}=o(n^{-1/2})$
as $\mathcal{I}_{M,M}$ is positive definite.
\\ \\
\begin{remark}
The rate of convergence of $(\beta_{n}^{M},\Lambda_{n}^{M})$ to $(\beta^{\ast},\Lambda^{\ast})$ is $O(n^{-1/2})$ instead of the classic semiparametric rate $O(n^{-1/3})$ in the Cox model with interval-censored data \citep[][Page 548]{huang1996efficient}, possibly because   $(\beta_{n}^{M},\Lambda_{n}^{M})$ stems from $P_{\beta_{n}^{\ast},\Lambda^{\ast}}$ instead of $\mathbb{P}_{n}$. Also due to this reason, neither the uniform entropy integral nor the bracketing entropy integral can be used to bound the modulus of continuity of the processes $\sqrt{n}(\mathbb{M}_n-\mathbb{M})(\beta_M,\Lambda)$ . Our proof takes advantage of the fact
that the Kullback--Leibler divergence, total variation distance and Hellinger distance are all special cases of $f$-divergence \citep{sason2016f}, while enforcing the upper and lower bounds on $p(\beta_M,\Lambda)/p(\beta_{M}^{\ast},\Lambda^{\ast})$ is inspired by the proof of Theorem
3.4.4 in \cite{wellner2013weak}.

Our proof of rate of convergence and the arguments of ``deriving the limiting distribution'', i.e., showing $\|\widetilde{\beta}_{n}^{M}-\beta_{n}^{M}\|_{2}=o(n^{-1/2})$, are general and can be applied to other semiparametric models, including the Cox model with right-censored data.

The proof of ``consistency'', namely $\|\beta_{n}^{M}-\beta_{M}^{\ast}\|_{2}=o(1)$,
appears to be more tricky than showing the same result in GLM. One reason is that the log likelihood $\ell(\beta,\Lambda)$ is unbounded. Here we follow \cite{zeng2017maximum} to use
\[
m(\beta,\Lambda)=\log\left\{\frac{e^{\ell(\beta,\Lambda)}+e^{\ell(\beta^{\ast},\Lambda^{\ast})}}{2}\right\}.
\]
See also \cite{huang1996efficient} on how the unboundedness issue in the Cox model with interval-censored can be tackled differently from the approach of \cite{zeng2017maximum}. The other issue is the difficulty of proving that $(\beta^{\ast},\Lambda^{\ast})$ is a ``well-separated" maximum of $\mathbb{M}(\beta,\Lambda)$ in order to use Corollary 3.2.3 (i) of \citet{wellner2013weak}. In the GLM case, this is easy to show under the assumption that $\beta^{\ast}$ belongs to a known compact set \citep[see Page 46 of][]{van2000asymptotic}. However, the compactness of the space of $\Lambda$ is nontrivial, as the unit ball in $C[\zeta,\tau]$ (the space of continuous functions defined over $[\zeta,\tau]$) relative to the $L_{\infty}$ norm (metric of uniform convergence) is not compact \citep[][Theorem 6.5]{brezis2011functional}. One way to bypass the difficulty is following  \cite{zeng2017maximum} to apply Helly\textquoteright s selection theorem \citep[][Lemma 2.5]{van2000asymptotic} to show the pointwise convergence of the nonparametric maximum likelihood estimator for $\Lambda$ and then strengthens the result to the uniform convergence given that $\Lambda^{\ast}(t)$ is continuous. This is what we did.
 Another way is to take advantage of vague topology and the Banach--Alaoglu theorem \citep[][Theorem 3.16]{brezis2011functional}, as done in \citet{kiefer1956consistency}, \citet{van1992existence} and \citet{huang1996efficient}.  For proving $\|\beta_{n}^{M}-\beta_{M}^{\ast}\|_{2}=o(1)$ in the Cox model with right-censored data, we can benefit from the explicit formulation of the profile likelihood and adapt our proof for GLM. In general, proving the ``consistency" for a semiparametric model is a case-by-case task,  but we believe it is closely related to proving the consistency of maximum likelihood estimator.

\end{remark}

\FloatBarrier
\subsection{Additional simulation results} \label{section.additional.sim}
\begin{table}
\centering
  \caption{Simulation results on the coverage of confidence intervals for $\beta=1$, AIC}
  \begin{threeparttable}
    \begin{tabular}{cccccccccc}
    \hline
    \multirow{2}[4]{*}{Coverage} & \multicolumn{4}{c}{$n = 200$} &       & \multicolumn{4}{c}{$n = 400$} \\
\cmidrule{2-5} \cmidrule{7-10}
 & $\beta_1$ & $\beta_2$ & $\beta_9$ & $\beta_{10}$ &       & $\beta_1$ & $\beta_2$ & $\beta_9$ & $\beta_{10}$ \\ \hline
    \multicolumn{4}{l}{\textit{PRES+LS}}      &       &       &       &       &  \\
    $\epsilon=10^{-2}$ & 0.99 & 0.98 & 0.975 & 0.975 &       & 0.965 & 0.955 & 0.97 & 0.97 \\
    $\epsilon=10^{-5}$ & 0.995 & 0.98 & 0.97 & 0.975 &       & 0.97 & 0.945 & 0.97 & 0.965 \\
    $\epsilon=10^{-7}$ & 0.99 & 0.98 & 0.97 & 0.975 &       & 0.965 & 0.945 & 0.965 & 0.955 \\[5pt]
    \multicolumn{4}{l}{\textit{sPRES+LS}} &       &       &       &       &  \\
    $\epsilon=10^{-2}$ & 0.995 & 0.98 & 0.975 & 0.975 &      & 0.97 & 0.955 & 0.975 & 0.97 \\
    $\epsilon=10^{-5}$ & 0.995 & 0.98 & 0.975 & 0.975 &      & 0.97 & 0.955 & 0.975 & 0.97 \\
    $\epsilon=10^{-7}$ & 0.995 & 0.98 & 0.975 & 0.975 &       & 0.97 & 0.955 & 0.97 & 0.97 \\[5pt]
    \multicolumn{4}{l}{\textit{PRES}} &       &       &       &       &  \\
    $\epsilon=10^{-2}$ & 0.955 & 0.924 & 0.889 & 0.934 &       & 0.93 & 0.91 & 0.955 & 0.945 \\
    $\epsilon=10^{-5}$ & 0.958 & 0.927 & 0.901 & 0.937 &       & 0.949 & 0.933 & 0.964 & 0.959 \\
    $\epsilon=10^{-7}$ & 0.967 & 0.918 & 0.896 & 0.94 &       & 0.958 & 0.942 & 0.974 & 0.948 \\[5pt]
    \multicolumn{4}{l}{\textit{sPRES}} &       &       &       &       &  \\
    $\epsilon=10^{-2}$ & 0.94 & 0.91 & 0.88 & 0.925 &        & 0.92 & 0.905 & 0.95 & 0.945 \\
    $\epsilon=10^{-5}$ & 0.94 & 0.92 & 0.88 & 0.925 &        & 0.92 & 0.905 & 0.945 & 0.945 \\
    $\epsilon=10^{-7}$ & 0.939 & 0.909 & 0.873 & 0.924 &        & 0.925 & 0.915 & 0.95 & 0.94 \\
    \hline
    \end{tabular}%
    \begin{tablenotes}%
    \item For $n=200$, $1$, $9$ and $17$ estimates from PRES with $\epsilon=10^{-2}$, $\epsilon=10^{-5}$ and $\epsilon=10^{-7}$ and $1$ and $3$ estimates from sPRES with $\epsilon=10^{-2}$ and $\epsilon=10^{-7}$ are not positive definite respectively out of the $200$ replications. For $n=400$, $5$ and $9$ estimates from PRES with $\epsilon=10^{-5}$ and $\epsilon=10^{-7}$ and $1$ estimate from sPRES with $\epsilon=10^{-7}$  are not positive definite respectively out of the $200$ replications. The corresponding replications are removed from the calculation of coverage of the post-selection confidence intervals based on the PRES and sPRES methods. All numbers are rounded to three demical places.
  \end{tablenotes}%
  \end{threeparttable}
  \label{tab:1.aic}%
\end{table}

\begin{table}
\centering
  \caption{Simulation results on the coverage of confidence intervals for $\beta=0.5$, AIC}
  \begin{threeparttable}
    \begin{tabular}{cccccccccc}
    \hline
    \multirow{2}[4]{*}{Coverage} & \multicolumn{4}{c}{$n = 200$} &       & \multicolumn{4}{c}{$n = 400$} \\
\cmidrule{2-5} \cmidrule{7-10}
 & $\beta_1$ & $\beta_2$ & $\beta_9$ & $\beta_{10}$ &       & $\beta_1$ & $\beta_2$ & $\beta_9$ & $\beta_{10}$ \\ \hline
    \multicolumn{4}{l}{\textit{PRES+LS}}      &       &       &       &       &  \\
    $\epsilon=10^{-2}$ & 0.96 & 0.955 & 0.97 & 0.98 &       & 0.96 & 0.97 & 0.955 & 0.975 \\
    $\epsilon=10^{-5}$ & 0.96 & 0.955 & 0.97 & 0.975 &        & 0.955 & 0.965 & 0.945 & 0.975 \\
    $\epsilon=10^{-7}$ & 0.96 & 0.955 & 0.965 & 0.975 &        & 0.955 & 0.965 & 0.945 & 0.975 \\[5pt]
    \multicolumn{4}{l}{\textit{sPRES+LS}} &       &       &       &       &  \\
    $\epsilon=10^{-2}$ & 0.96 & 0.955 & 0.97 & 0.98 &      & 0.96 & 0.97 & 0.955 & 0.975 \\
    $\epsilon=10^{-5}$ & 0.955 & 0.955 & 0.97 & 0.98 &      & 0.96 & 0.97 & 0.96 & 0.965 \\
    $\epsilon=10^{-7}$ & 0.955 & 0.955 & 0.965 & 0.98 &        & 0.96 & 0.97 & 0.955 & 0.965 \\[5pt]
    \multicolumn{4}{l}{\textit{PRES}} &       &       &       &       &  \\
    $\epsilon=10^{-2}$ & 0.905 & 0.89 & 0.94 & 0.94 &       & 0.935 & 0.945 & 0.925 & 0.955 \\
    $\epsilon=10^{-5}$ & 0.895 & 0.9 & 0.94 & 0.94 &      & 0.935 & 0.945 & 0.925 & 0.955 \\
    $\epsilon=10^{-7}$ & 0.895 & 0.9 & 0.935 & 0.94 &        & 0.945 & 0.945 & 0.93 & 0.955 \\[5pt]
    \multicolumn{4}{l}{\textit{sPRES}} &       &       &       &       &  \\
    $\epsilon=10^{-2}$ & 0.905 & 0.89 & 0.94 & 0.94 &         & 0.935 & 0.945 & 0.925 & 0.955 \\
    $\epsilon=10^{-5}$ & 0.905 & 0.895 & 0.94 & 0.935 &       & 0.935 & 0.945 & 0.915 & 0.955 \\
    $\epsilon=10^{-7}$ & 0.905 & 0.895 & 0.93 & 0.935 &         & 0.935 & 0.945 & 0.91 & 0.955 \\
    \hline
    \end{tabular}%
  \end{threeparttable}
  \label{tab:05.aic}%
\end{table}

\begin{figure}[htb]
\centering
  \subfloat[$n=200$]{%
    \includegraphics[scale = 0.6]{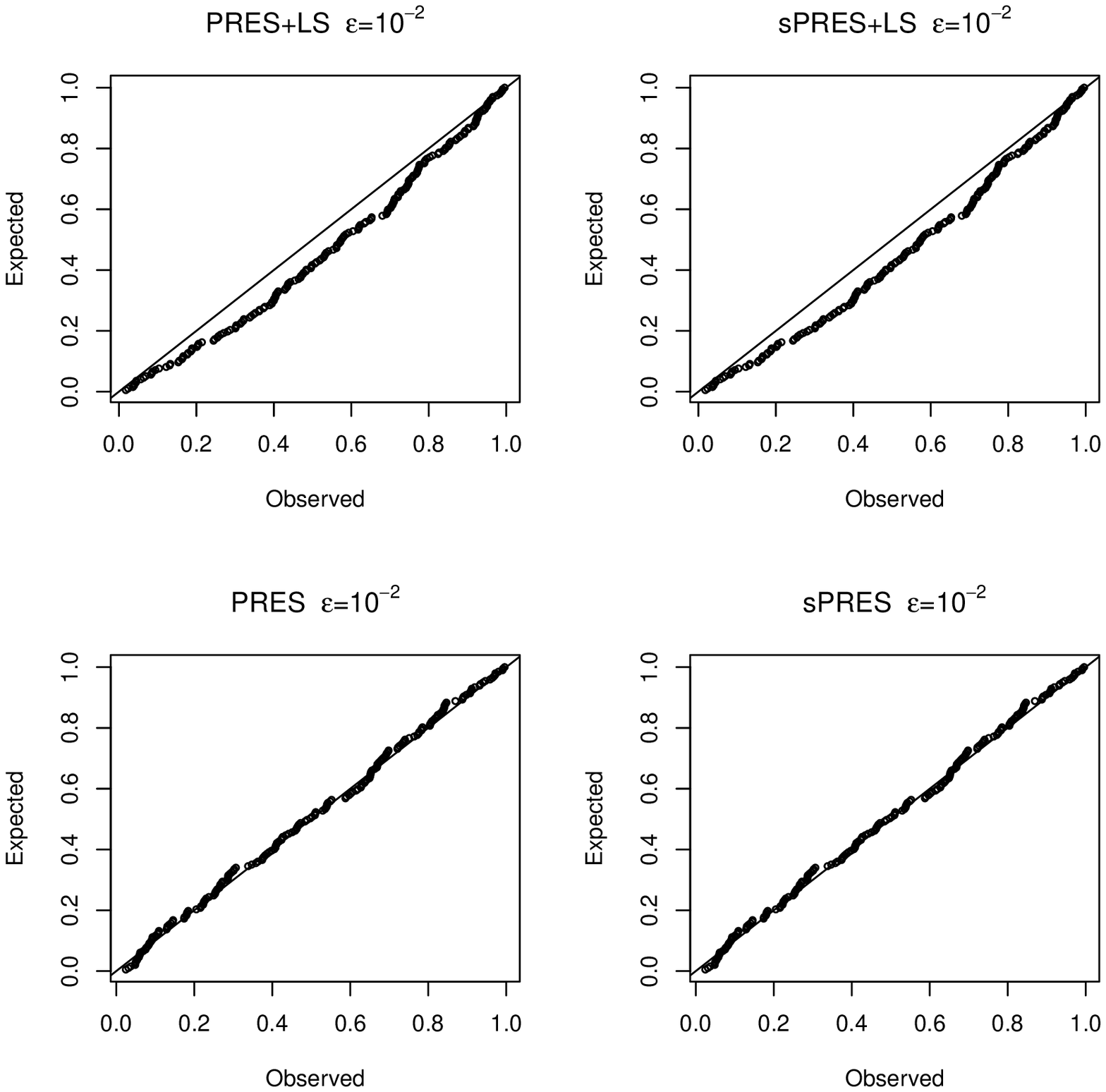}}\\
  \subfloat[$n=400$]{%
    \includegraphics[scale = 0.6]{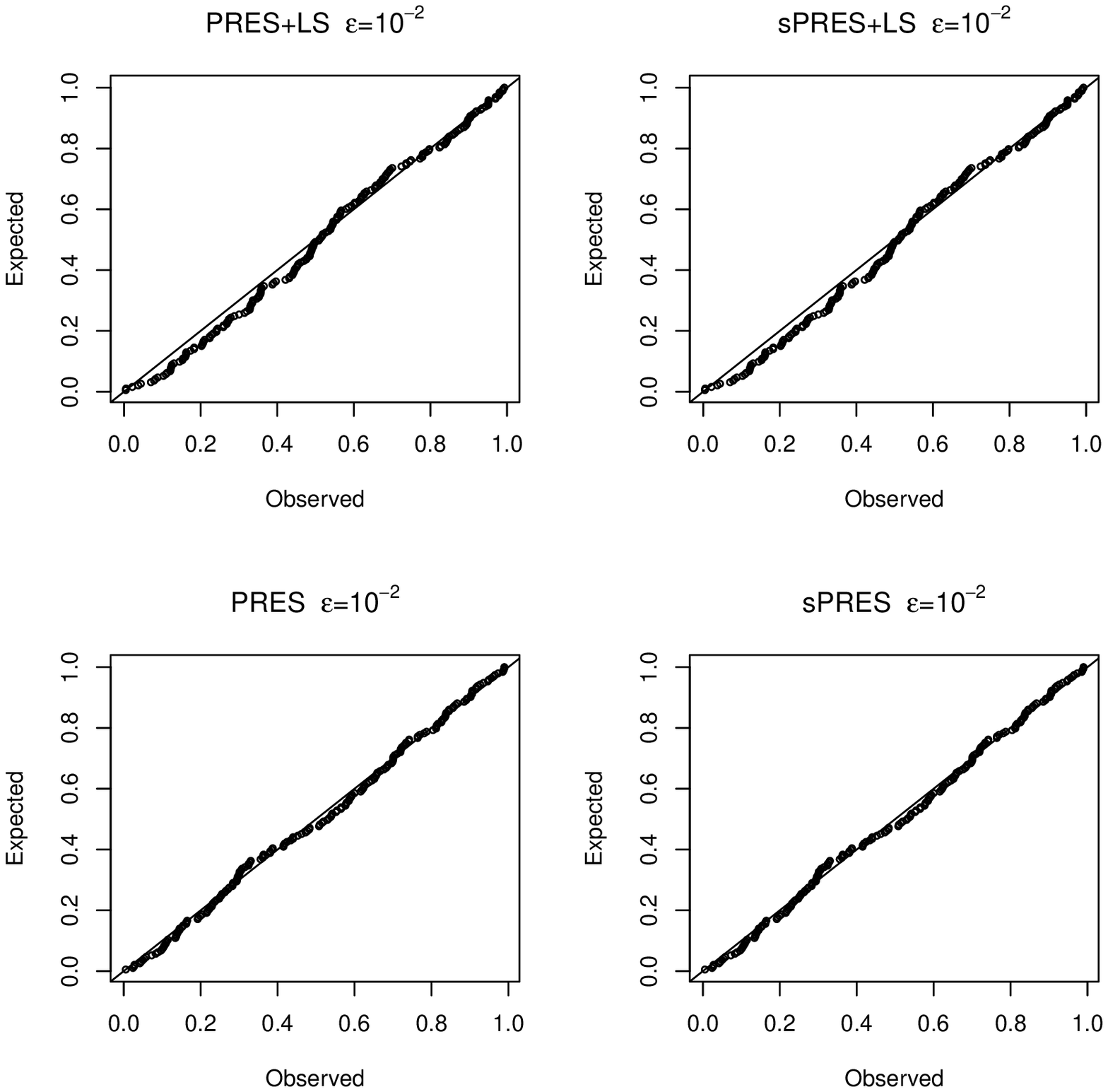}}
  \caption{Uniform Q-Q plots of the null $p$-values for $\lambda=Cn^{1/2}$ and $\epsilon=10^{-2}$ under $\beta=1$.}\label{fig:1.2.fixed}
\end{figure}

\begin{figure}[htb]
\centering
  \subfloat[$n=200$]{%
    \includegraphics[scale = 0.6]{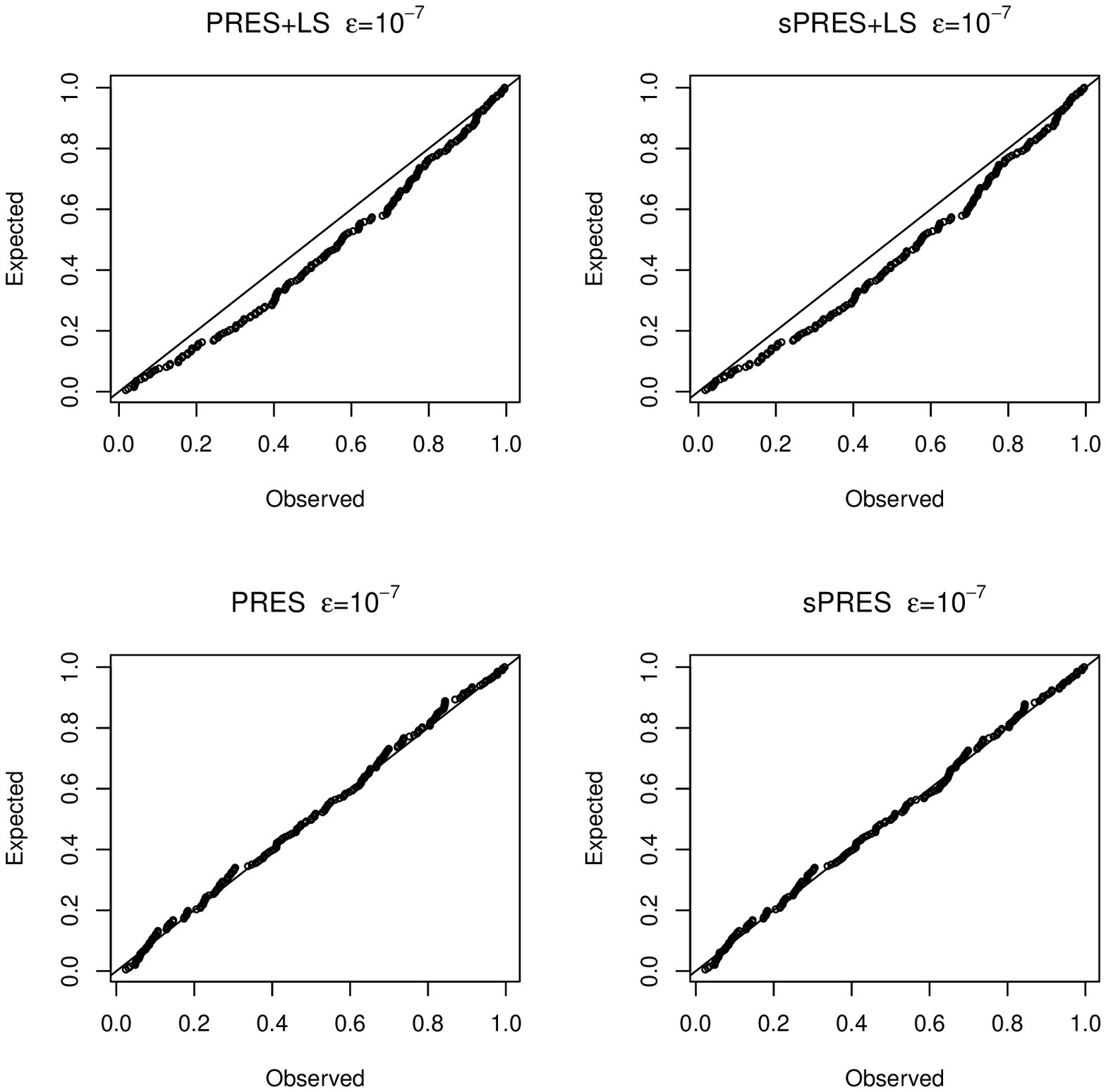}}\\
  \subfloat[$n=400$]{%
    \includegraphics[scale = 0.6]{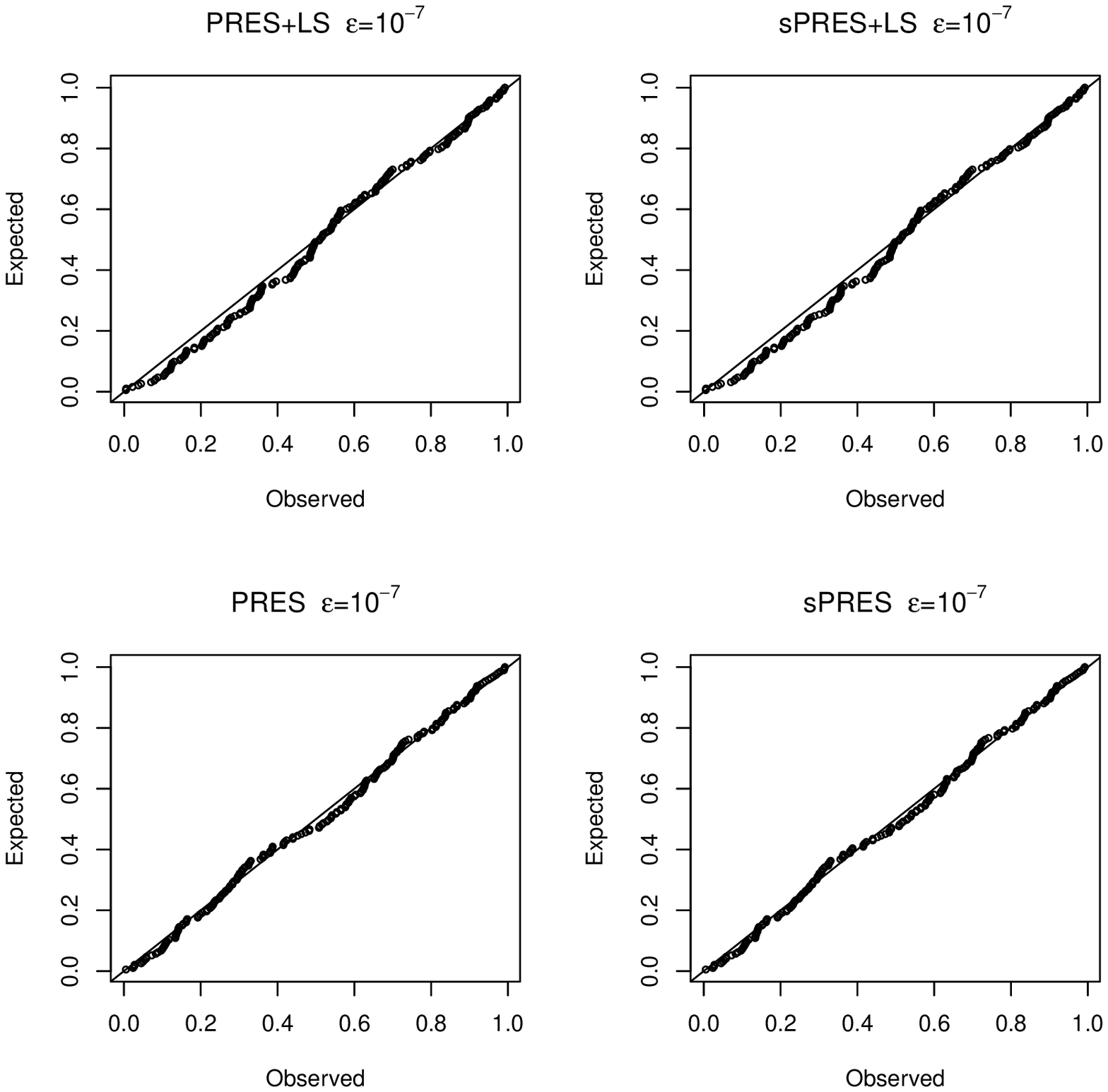}}
  \caption{Uniform Q-Q plots of the null $p$-values for $\lambda=Cn^{1/2}$ and $\epsilon=10^{-7}$ under $\beta=1$.}\label{fig:1.7.fixed}
\end{figure}

\begin{figure}[htb]
\centering
  \subfloat[$n=200$]{%
    \includegraphics[scale = 0.6]{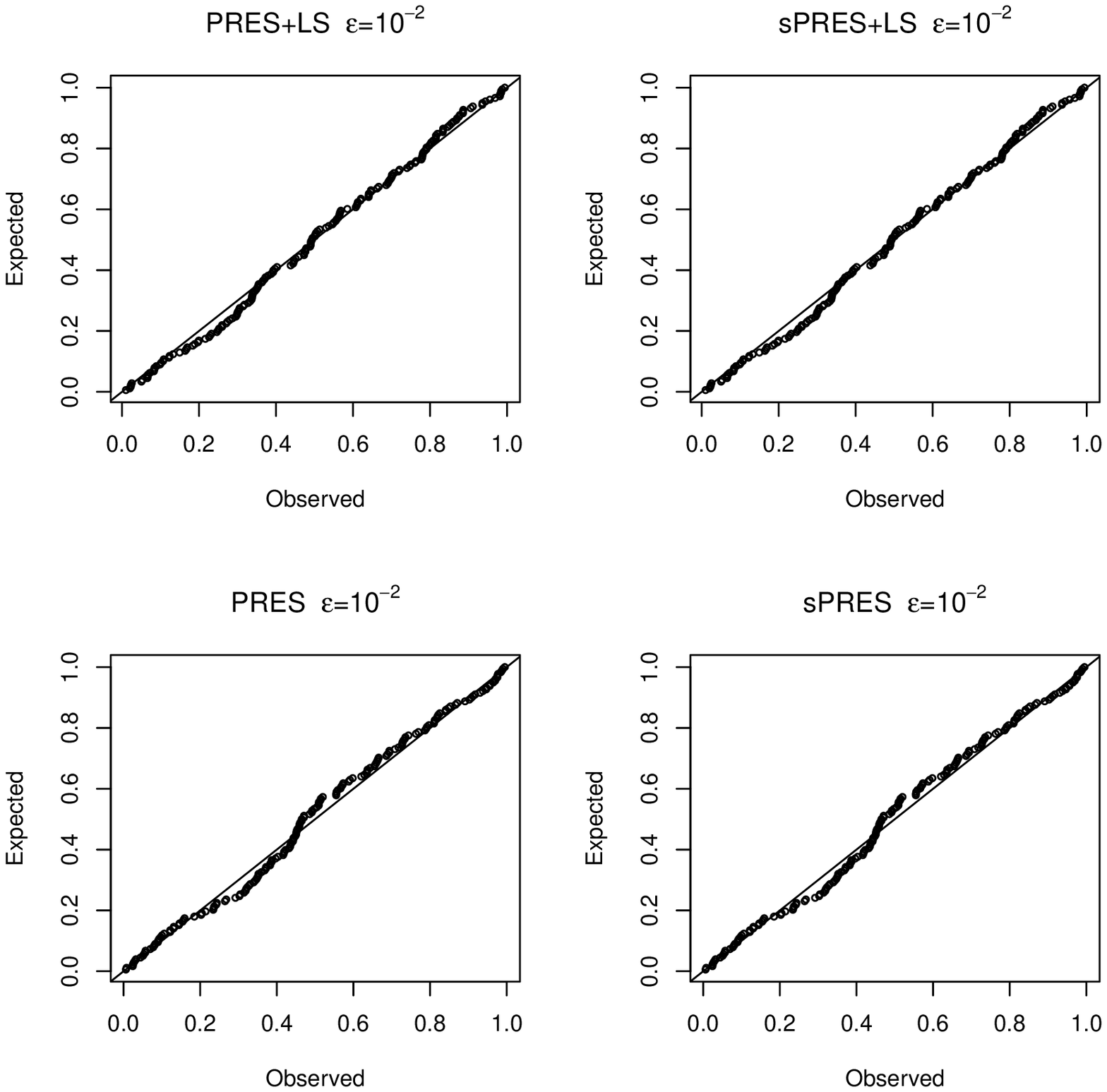}}\\
  \subfloat[$n=400$]{%
    \includegraphics[scale = 0.6]{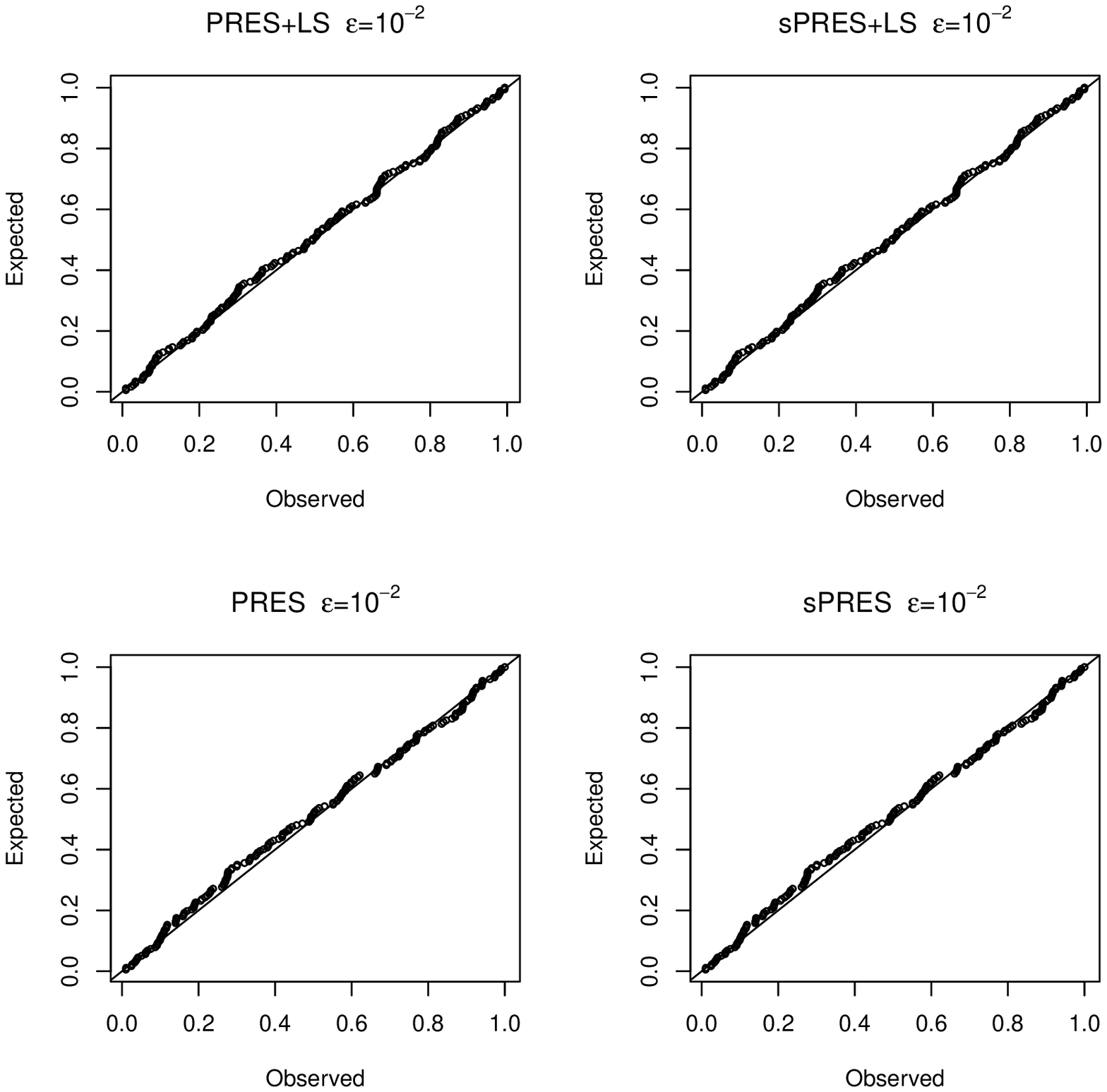}}
  \caption{Uniform Q-Q plots of the null $p$-values for $\lambda=Cn^{1/2}$ and $\epsilon=10^{-2}$ under $\beta=0.5$.}\label{fig:05.2.fixed}
\end{figure}

\begin{figure}[htb]
\centering
  \subfloat[$n=200$]{%
    \includegraphics[scale = 0.6]{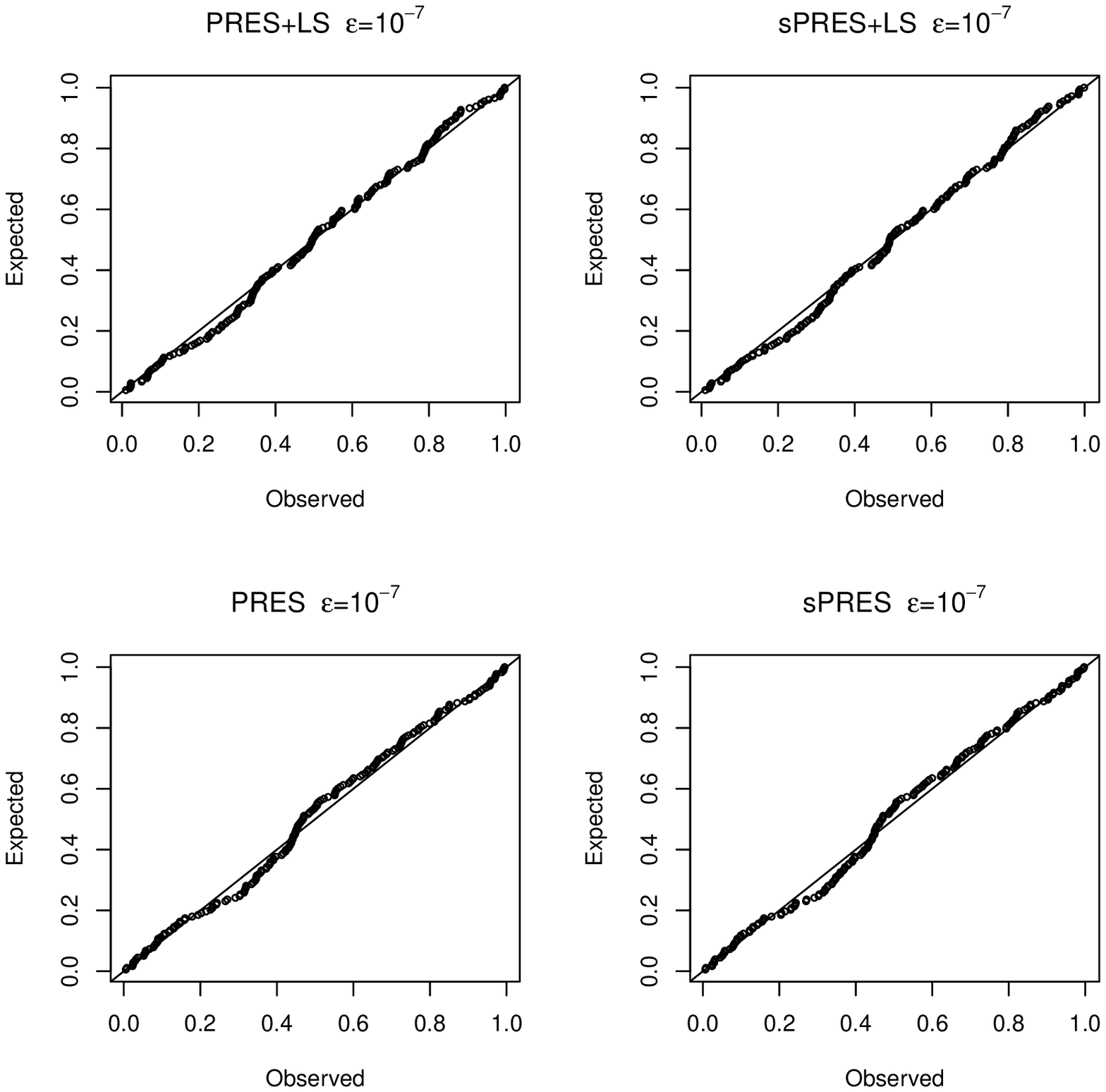}}\\
  \subfloat[$n=400$]{%
    \includegraphics[scale = 0.6]{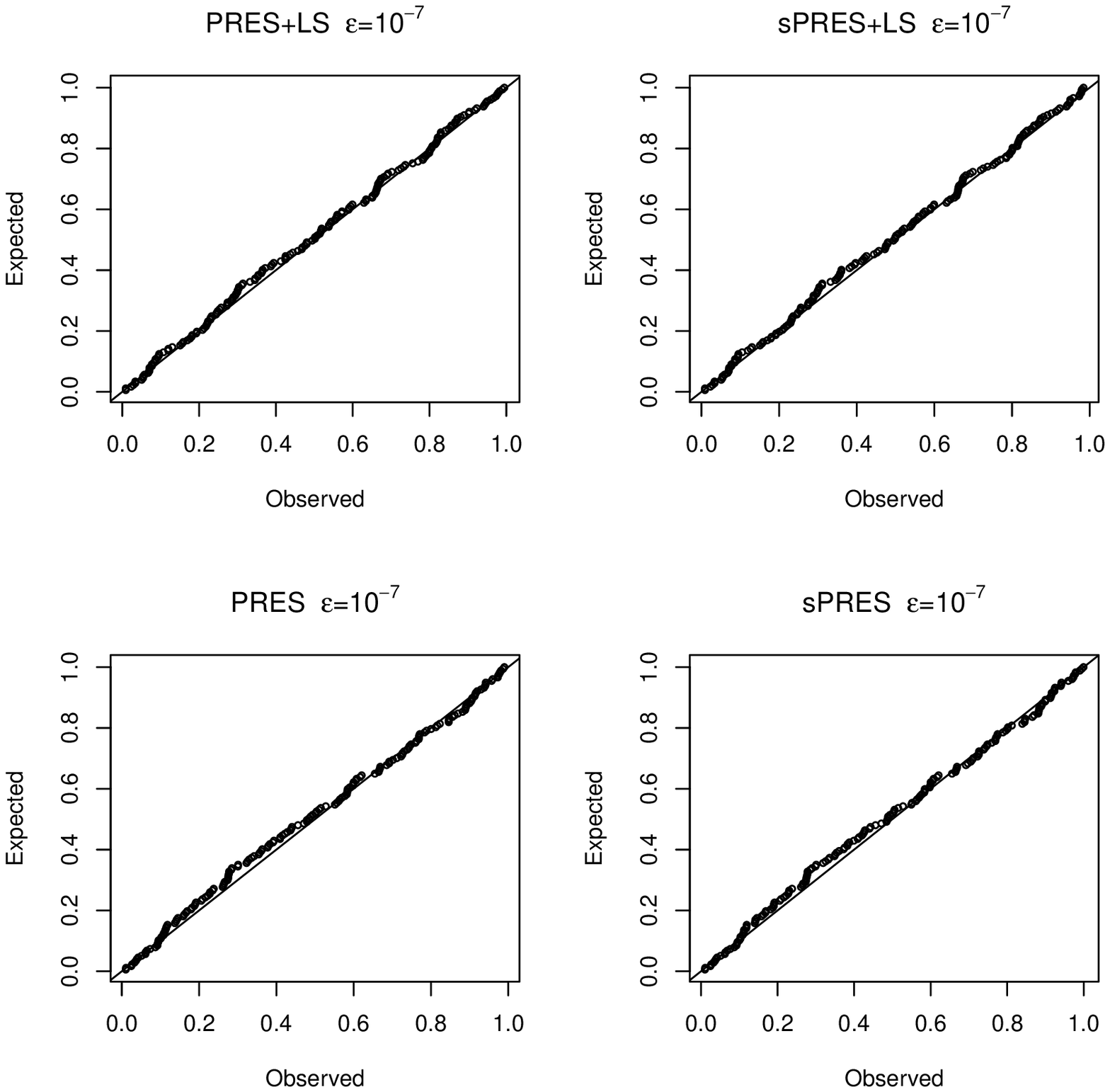}}
  \caption{Uniform Q-Q plots of the null $p$-values for $\lambda=Cn^{1/2}$ and $\epsilon=10^{-7}$ under $\beta=0.5$.}\label{fig:05.7.fixed}
\end{figure}

\begin{figure}[htb]
\centering
  \subfloat[$n=200$]{%
    \includegraphics[scale = 0.6]{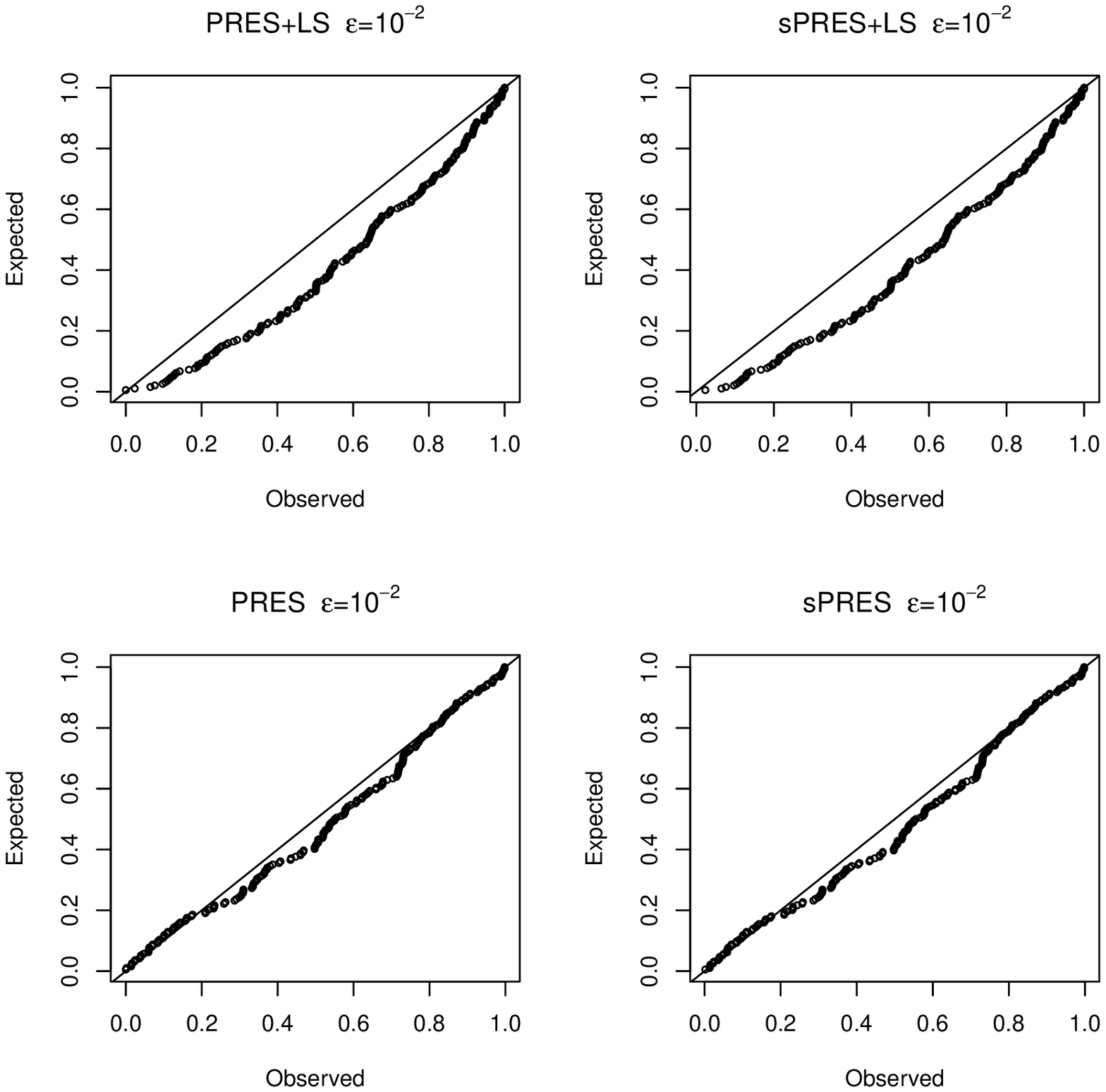}}\\
  \subfloat[$n=400$]{%
    \includegraphics[scale = 0.6]{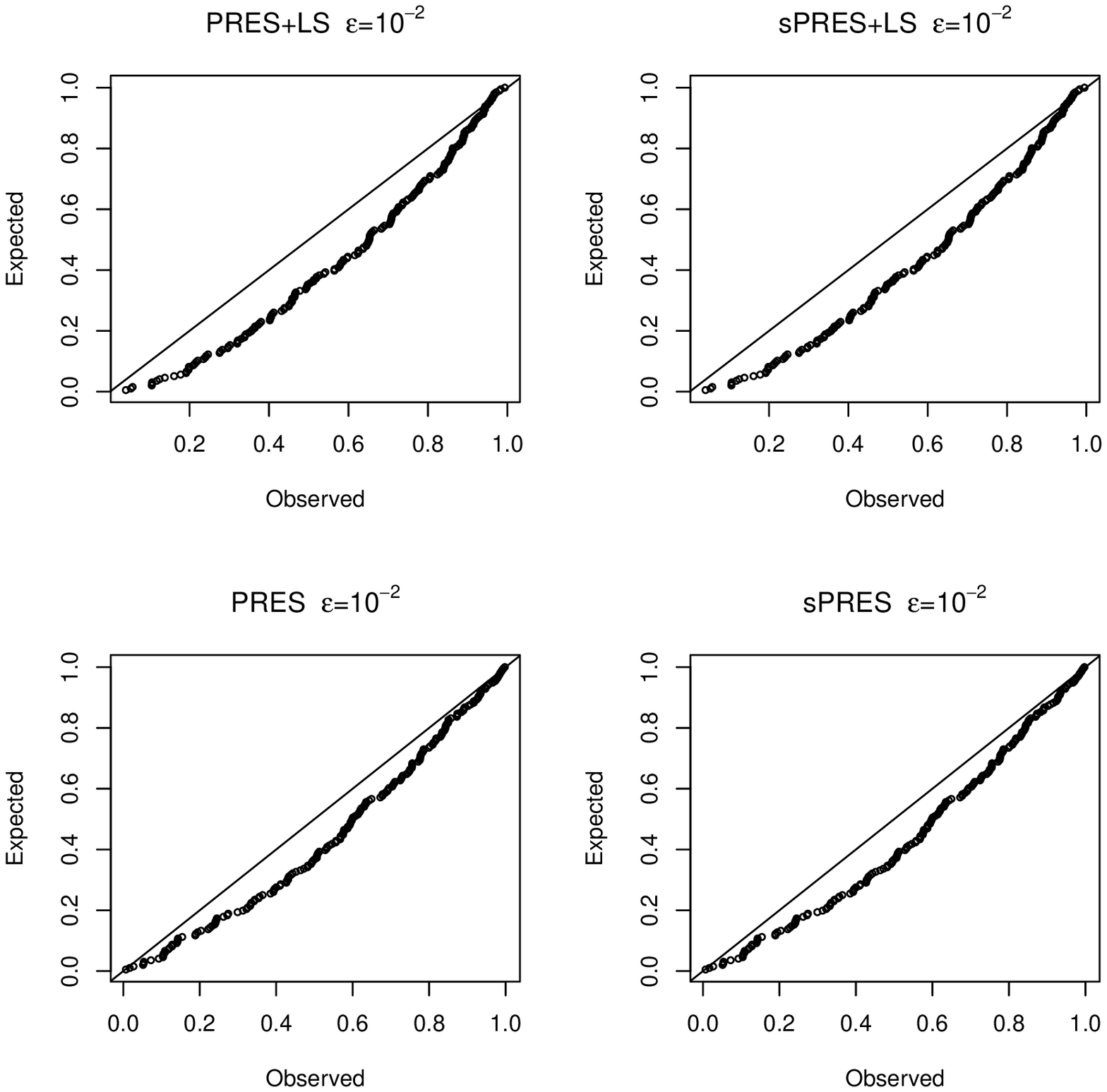}}
  \caption{Uniform Q-Q plots of the null $p$-values with $\lambda$ selected via AIC and $\epsilon=10^{-2}$ under $\beta=1$.}\label{fig:1.2.aic}
\end{figure}

\begin{figure}[htb]
\centering
  \subfloat[$n=200$]{%
    \includegraphics[scale = 0.6]{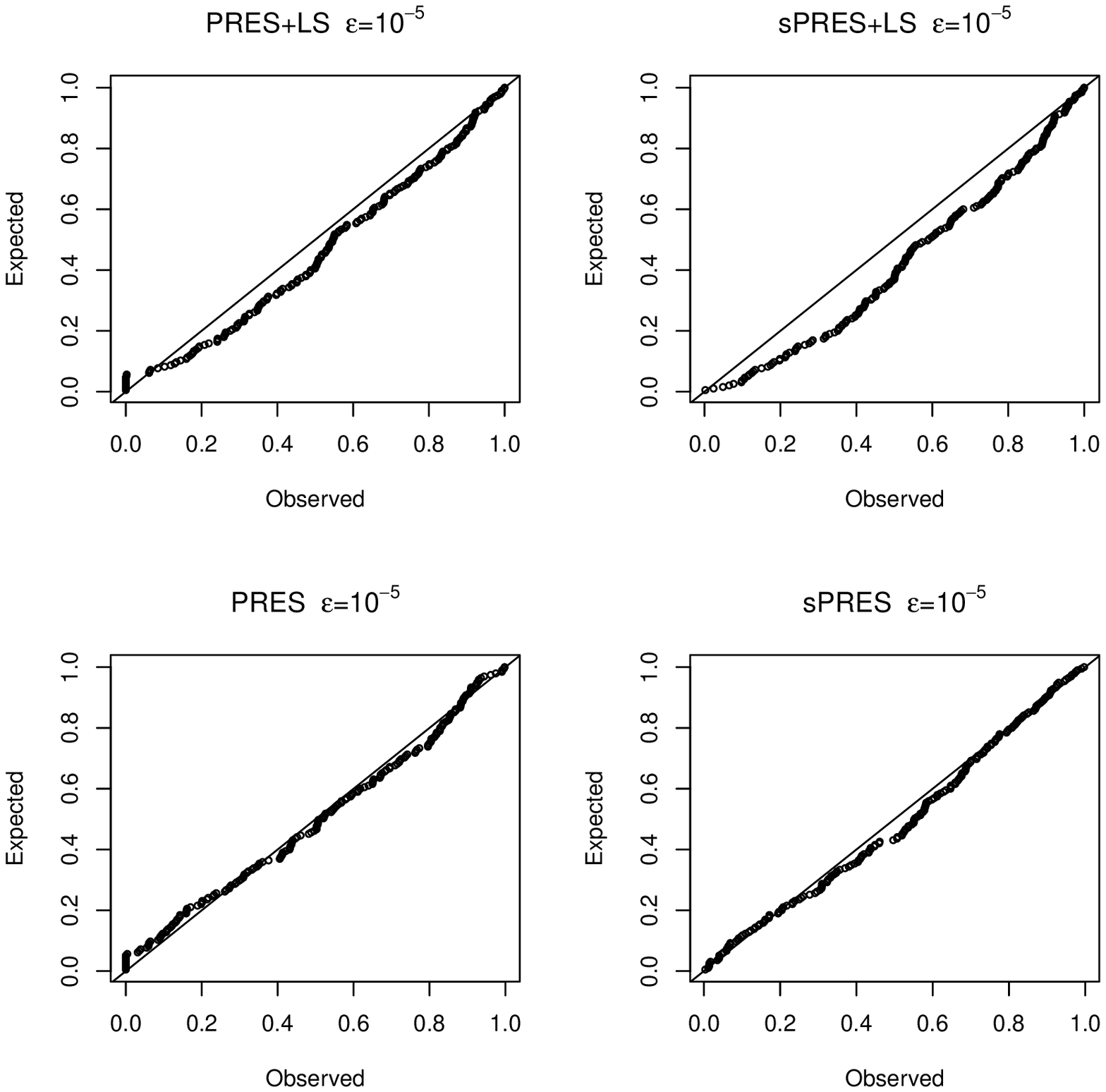}}\\
  \subfloat[$n=400$]{%
    \includegraphics[scale = 0.6]{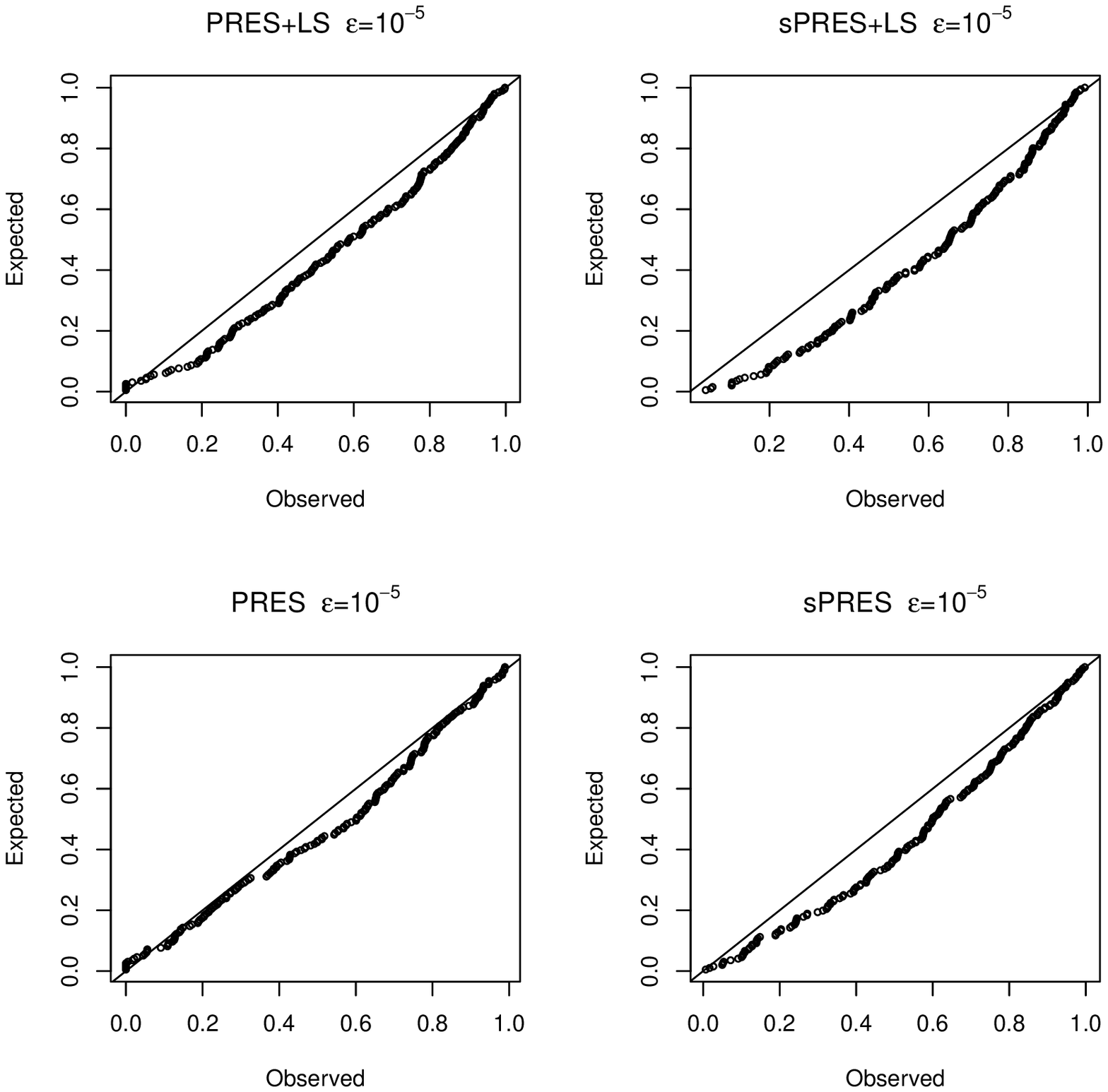}}
  \caption{Uniform Q-Q plots of the null $p$-values with $\lambda$ selected via AIC and $\epsilon=10^{-5}$ under $\beta=1$.}\label{fig:1.5.aic}
\end{figure}

\begin{figure}[htb]
\centering
  \subfloat[$n=200$]{%
    \includegraphics[scale = 0.6]{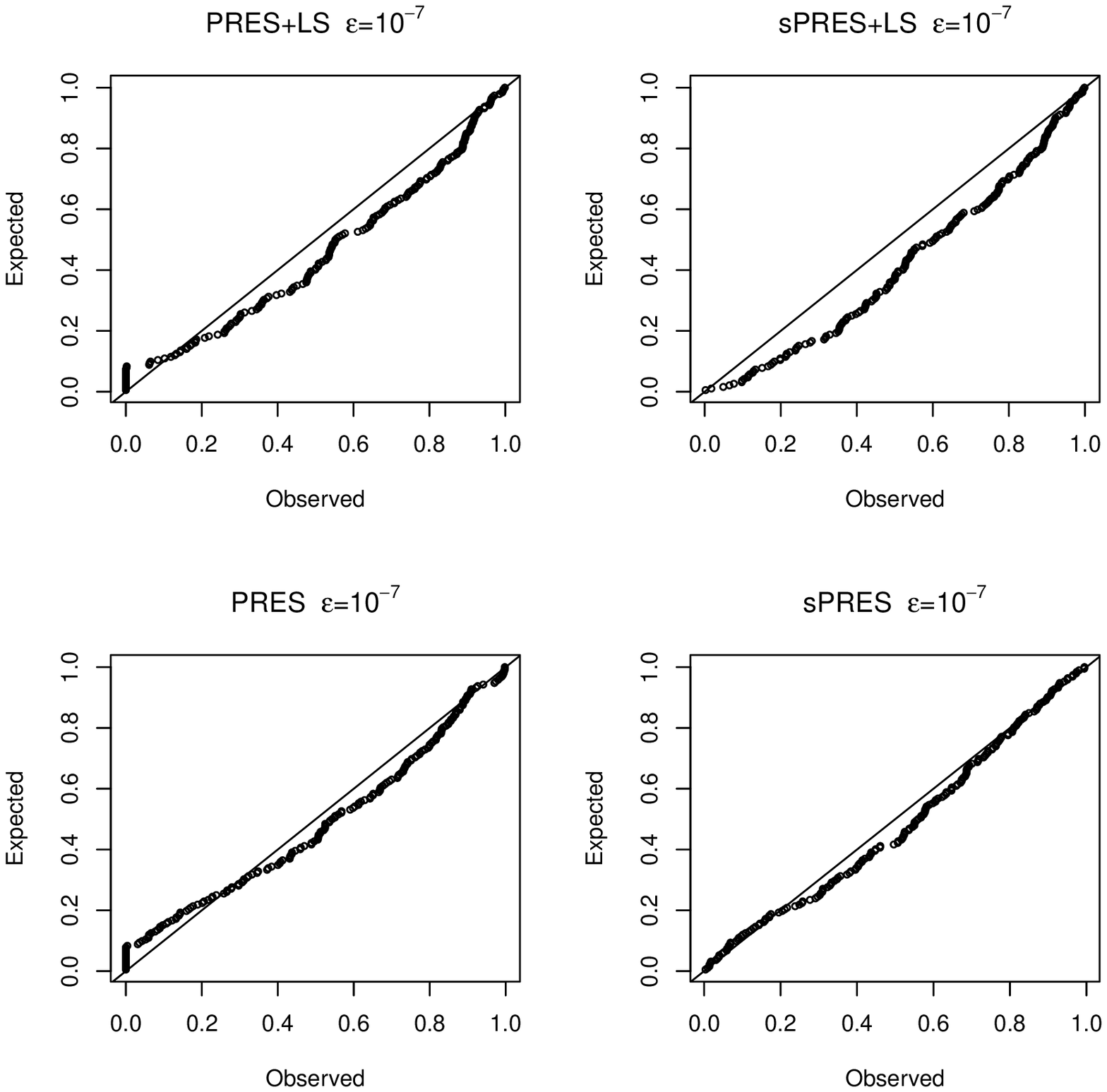}}\\
  \subfloat[$n=400$]{%
    \includegraphics[scale = 0.6]{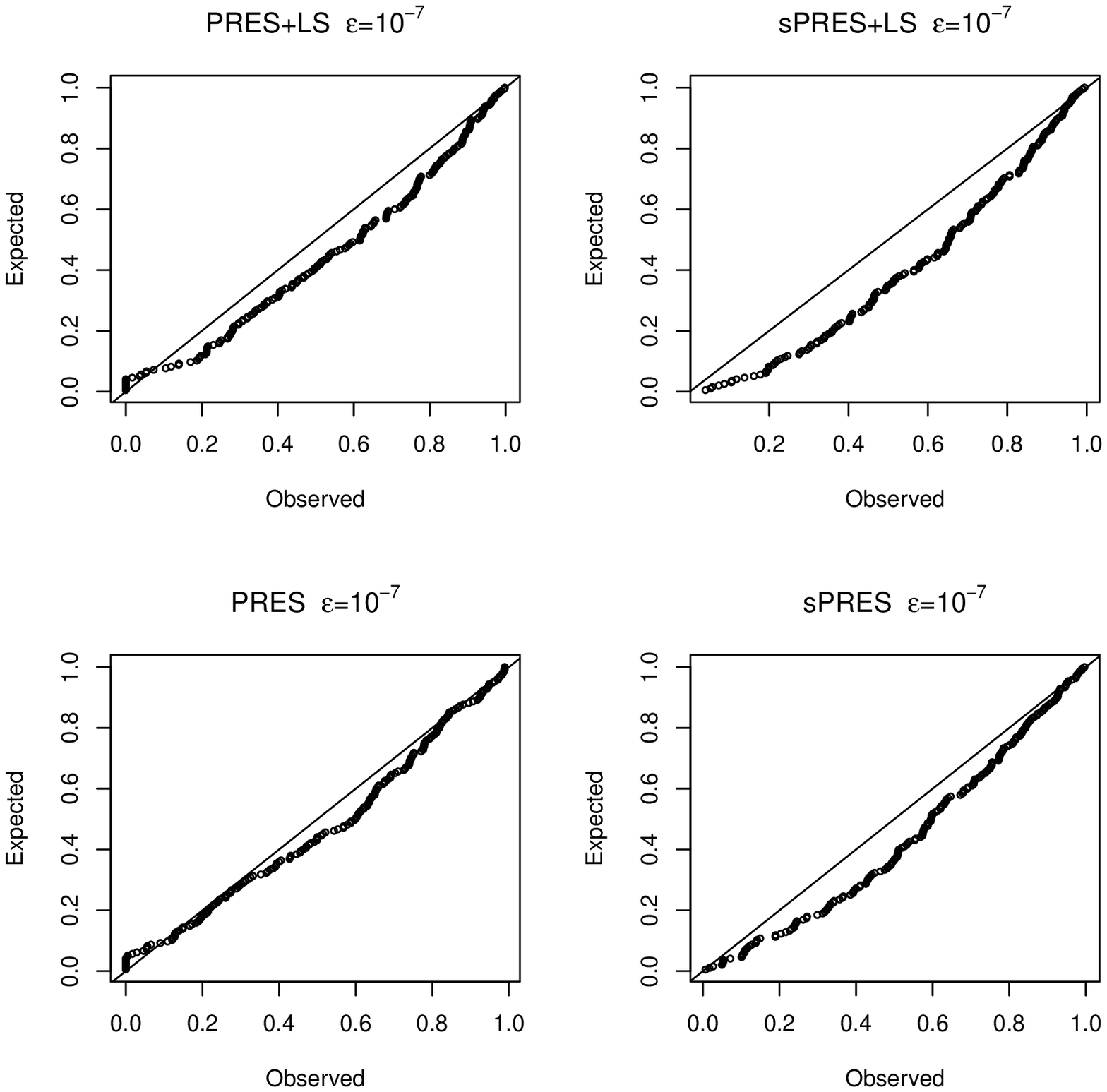}}
  \caption{Uniform Q-Q plots of the null $p$-values with $\lambda$ selected via AIC and $\epsilon=10^{-7}$ under $\beta=1$.}\label{fig:1.7.aic}
\end{figure}

\begin{figure}[htb]
\centering
  \subfloat[$n=200$]{%
    \includegraphics[scale = 0.6]{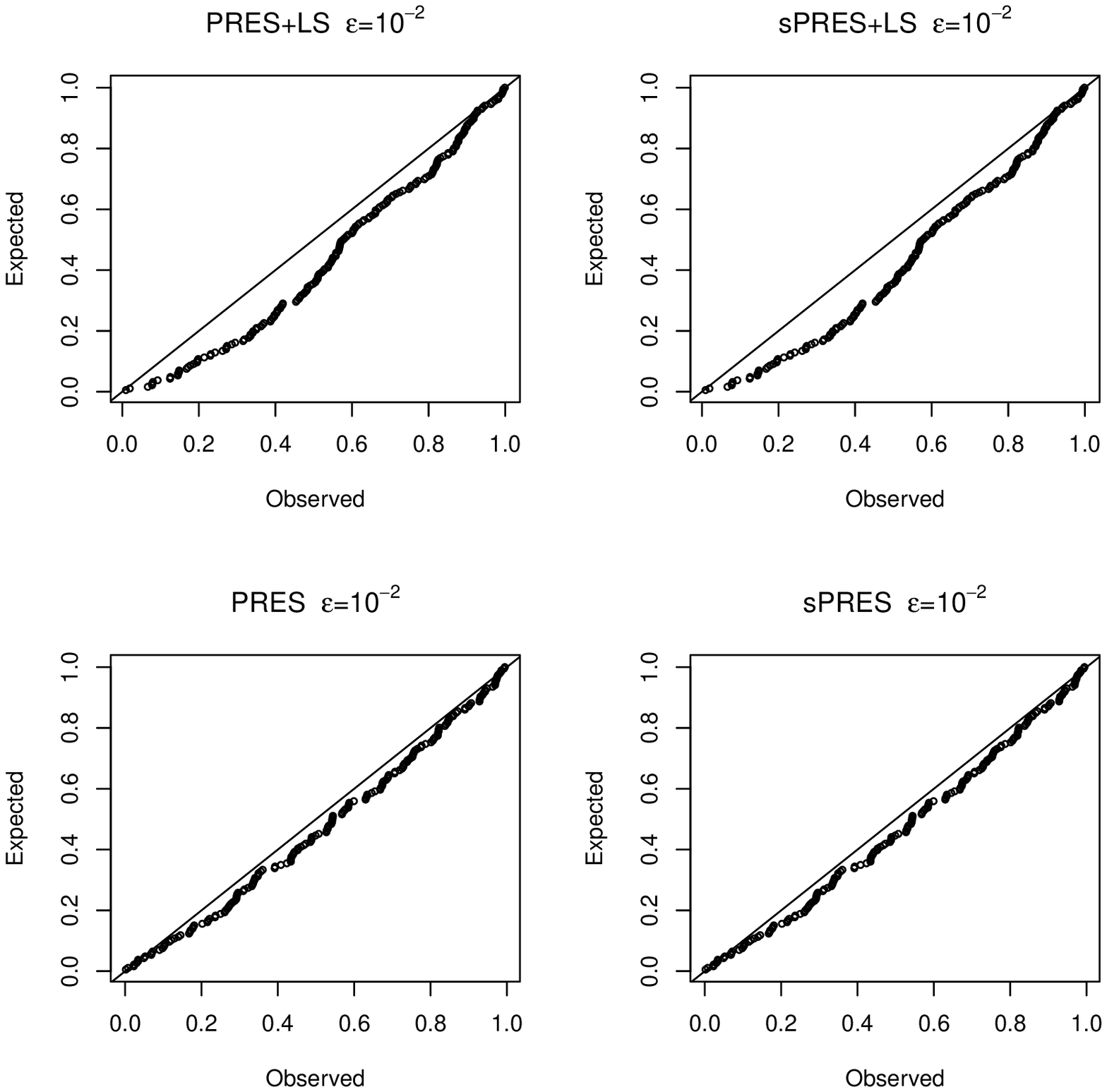}}\\
  \subfloat[$n=400$]{%
    \includegraphics[scale = 0.6]{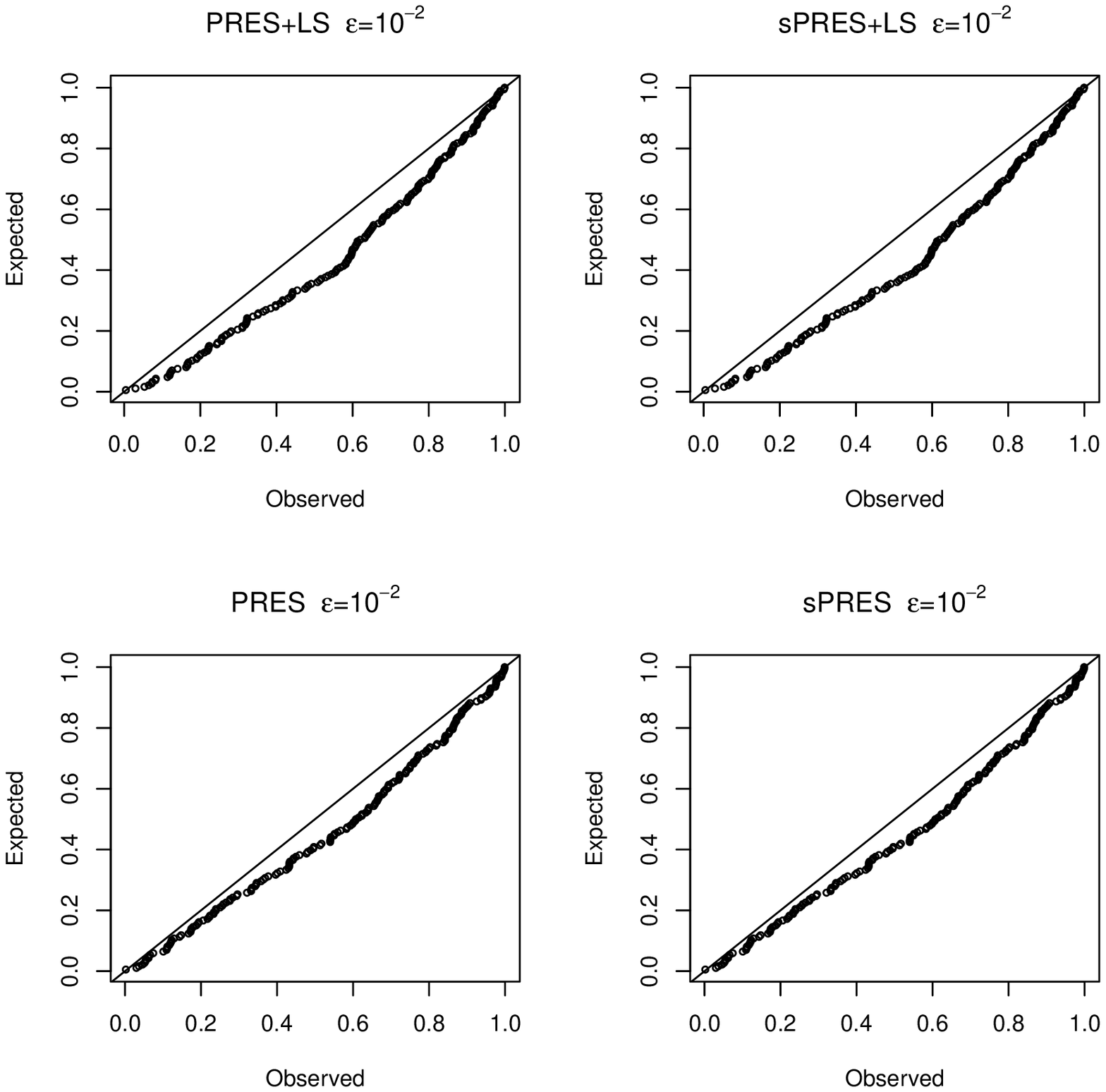}}
  \caption{Uniform Q-Q plots of the null $p$-values with $\lambda$ selected via AIC and $\epsilon=10^{-2}$ under $\beta=0.5$.}\label{fig:05.2.aic}
\end{figure}

\begin{figure}[htb]
\centering
  \subfloat[$n=200$]{%
    \includegraphics[scale = 0.6]{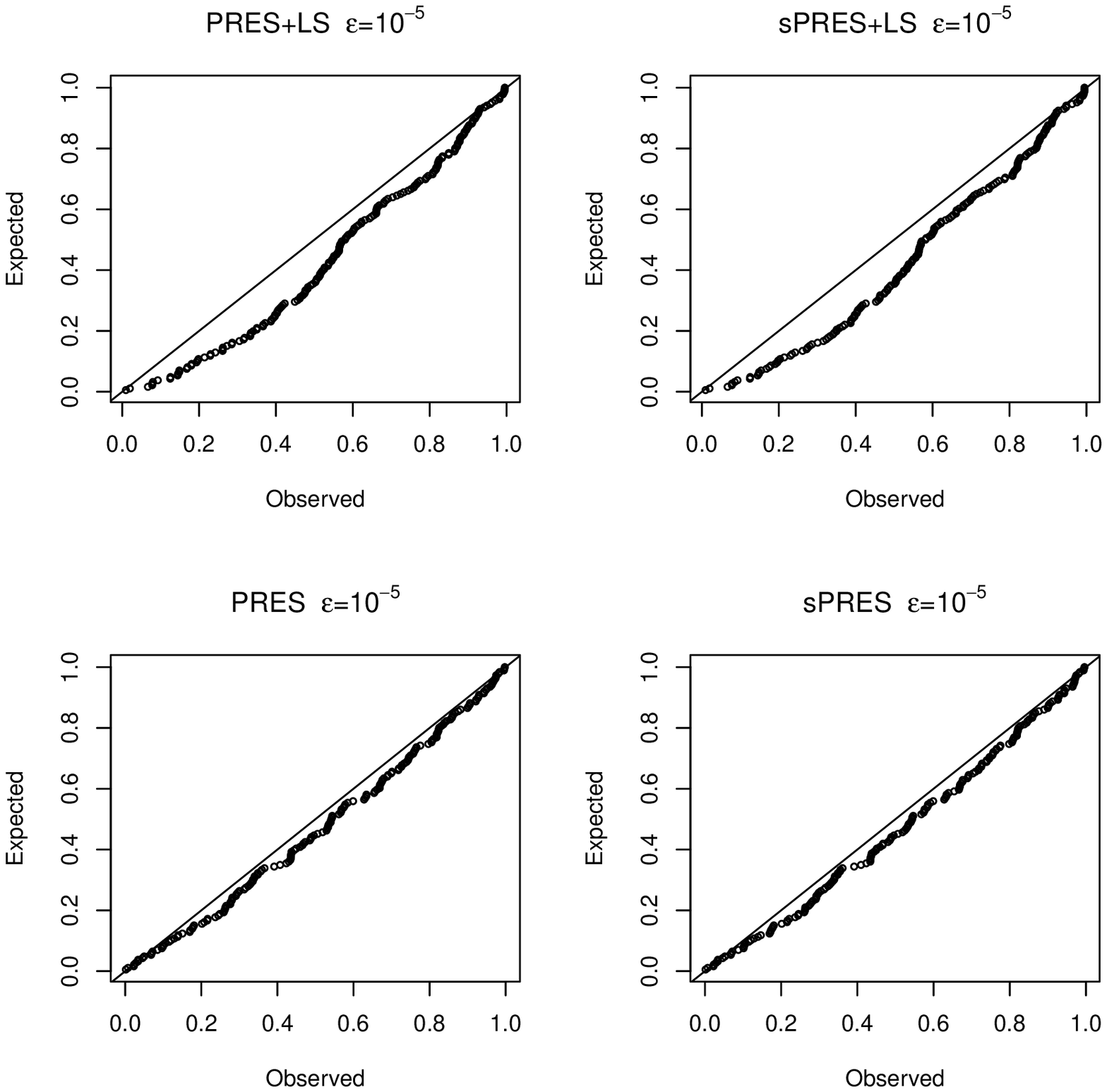}}\\
  \subfloat[$n=400$]{%
    \includegraphics[scale = 0.6]{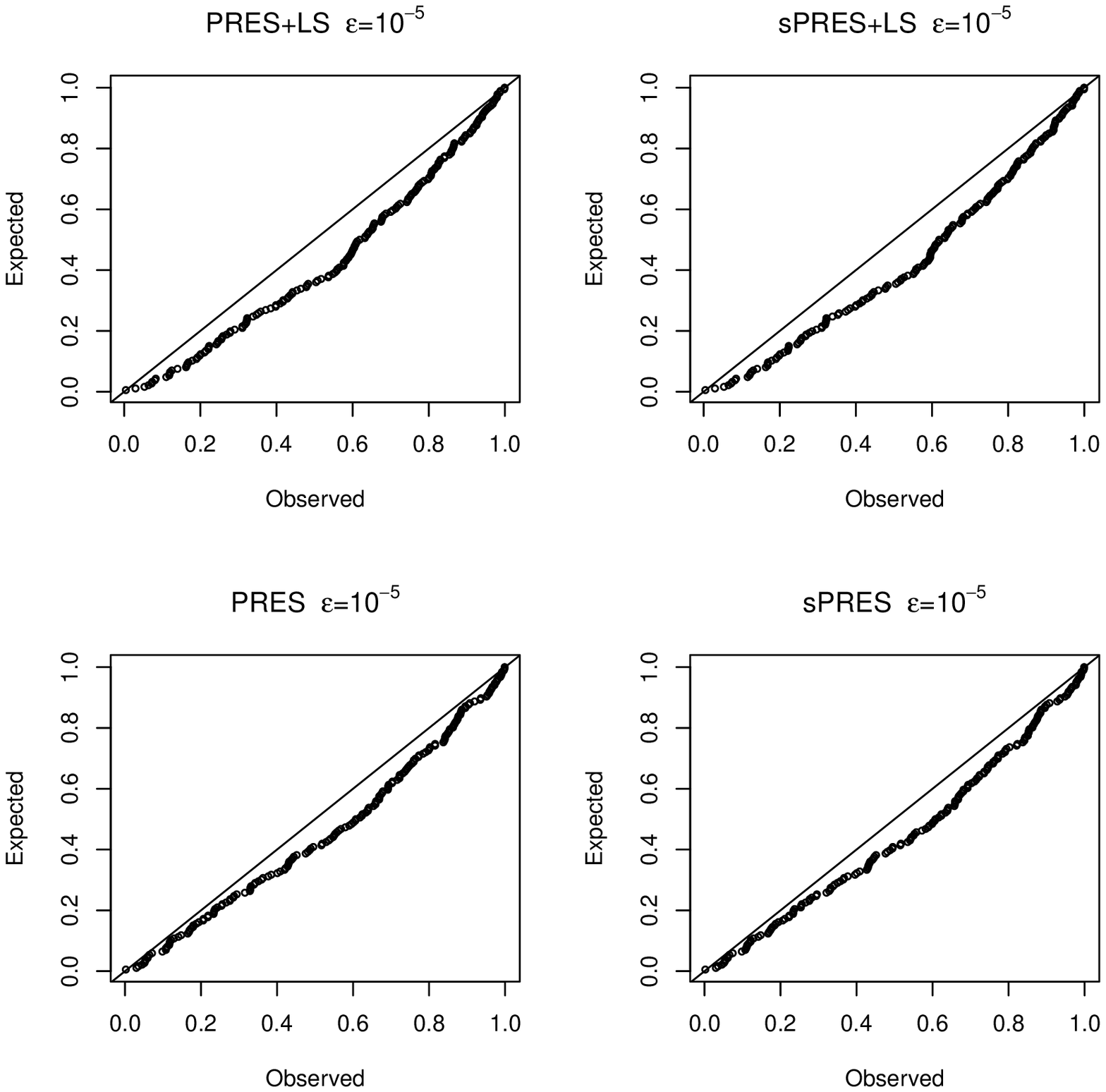}}
  \caption{Uniform Q-Q plots of the null $p$-values with $\lambda$ selected via AIC and $\epsilon=10^{-5}$ under $\beta=0.5$.}\label{fig:05.5.aic}
\end{figure}

\begin{figure}[htb]
\centering
  \subfloat[$n=200$]{%
    \includegraphics[scale = 0.6]{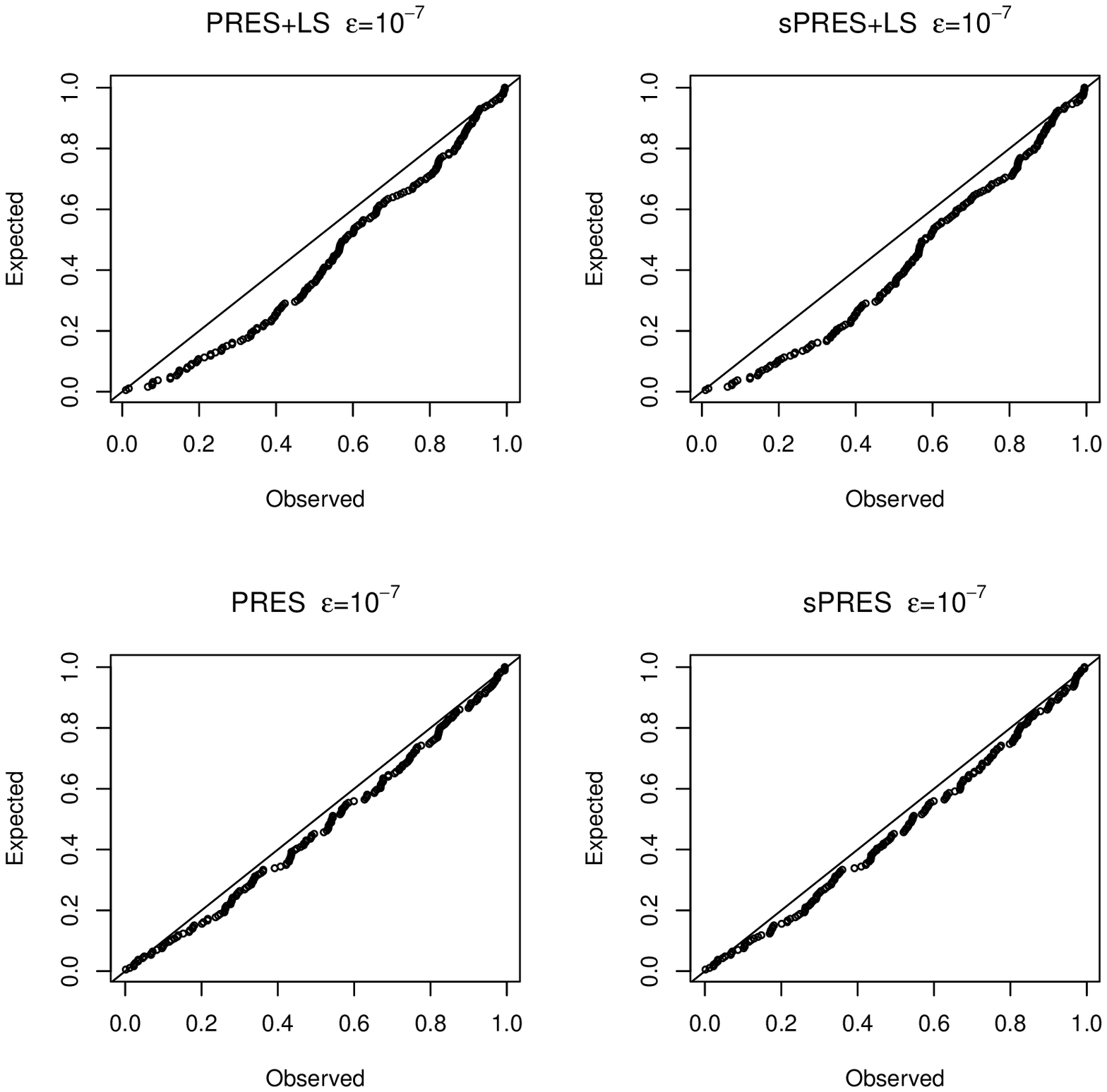}}\\
  \subfloat[$n=400$]{%
    \includegraphics[scale = 0.6]{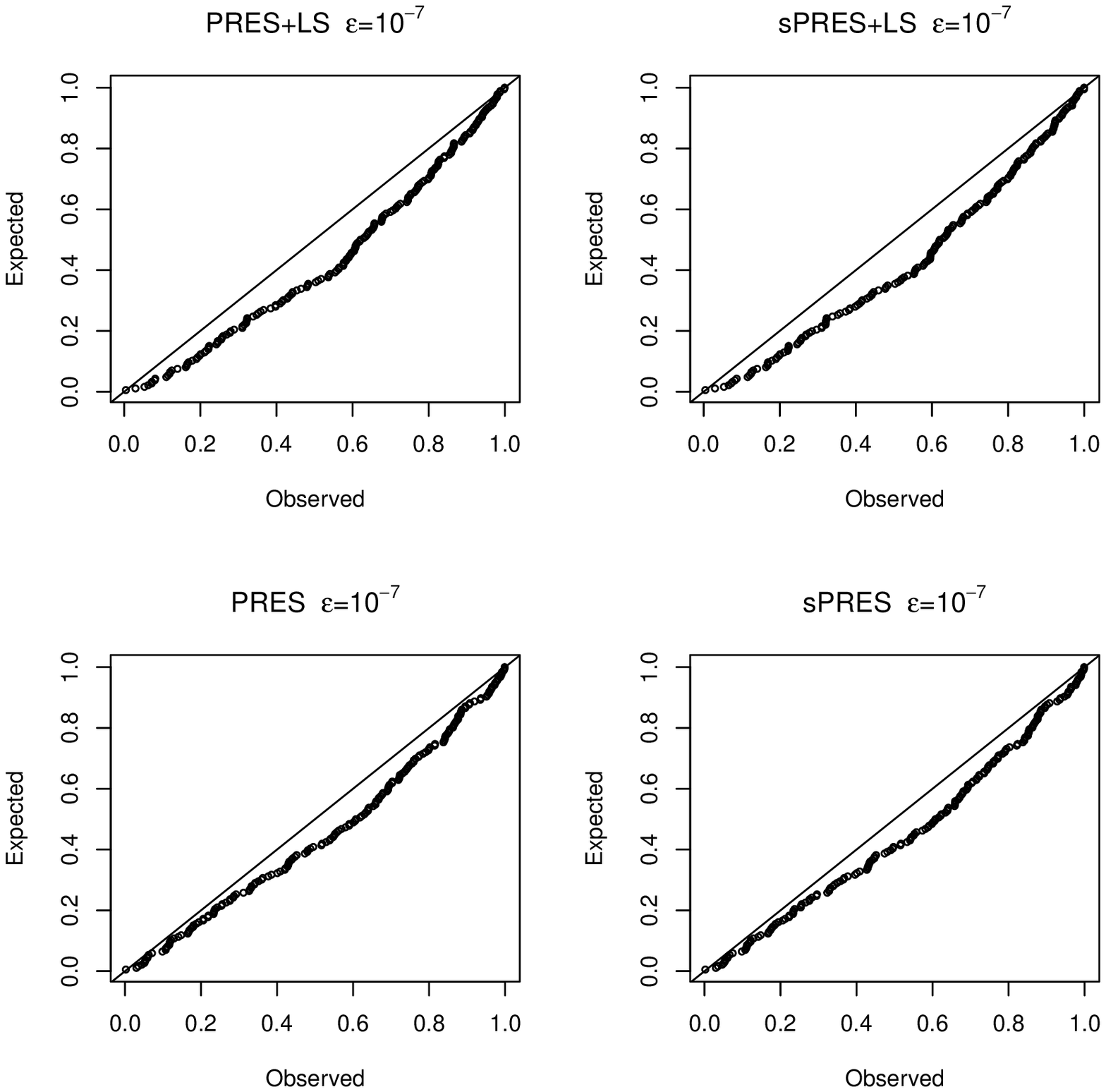}}
  \caption{Uniform Q-Q plots of the null $p$-values with $\lambda$ selected via AIC and $\epsilon=10^{-7}$ under $\beta=0.5$.}\label{fig:05.7.aic}
\end{figure}

\FloatBarrier

\bibliographystyle{biometrika}
\bibliography{mainNov18}
\end{document}